\newcommand{\pdflatex}[1]{#1}
\newcommand{\concat}{\cdot}
\newcommand{\acomp}{\succeq}
\newcommand{\bouns}{\mathbb B}
\newcommand{\ubouns}{\mathbb U}
\newcommand{\qabove}{\mathbb A}
\newcommand{\ign}[1]{}
\newcommand{\zj}{\set{0,1}}
\newcommand{\intro}[1]	{\emph{#1}}
\newcommand{\itl}	{\Join}
\newcommand{\nats}	{{\mathbb{N}}}
\newcommand{\e}		{\varepsilon}
\newcommand{\inc}{\mathrm i}
\newcommand{\reset}{\mathrm r}
\newcommand{\ralph}[1] {{[a:{#1}]}}
\newcommand{\ombs}	{\ensuremath{\omega BS}}
\newcommand{\comp}	{\preceq}
\newcommand{\notcomp}	{\succ}
\newcommand{\fepsilon} {\bar\varepsilon}
\newcommand{\val}	{\mathit{val}}
\newcommand{\lab}	{\mathcal L}
\definecolor{bleuclair}{rgb}{0.7, 0.7, 1.0}
\definecolor{jaune}{rgb}{1.0, 1.0, 0.0}
\definecolor{white}{rgb}{1.0,1.0,1.0}
\newcommand{\oldhl}[1]{#1}
\newcommand{\ttc}[1]{#1}
\newcommand{\hhl}[1]{#1}
\begin{document}

\title{Boundedness in languages of infinite words}

\author{Miko{\l}aj Boja{\'n}czyk}	
\address{Warsaw University}	
\email{bojan@mimuw.edu.pl}  

\author{Thomas Colcombet}	
\address{Cnrs/Liafa/Universit\'e Paris Diderot, Paris 7}	
\email{thomas.colcombet@liafa.jussieu.fr}  

\thanks{Work supported by the EU-TNR
  network GAMES. The first author is also supported by Polish government
  grant no. N206 008 32/0810.}

\begin{abstract}
  We define a new class of languages of $\omega$-words, strictly
  extending $\omega$-regular languages.

  One way to present this new class is by a type of regular
  expressions. The new expressions are an extension of
$\omega$-regular expressions where two new variants of the Kleene star
$L^*$ are added: $L^B$ and $L^S$.  These new exponents are used to say
that parts of the input word have bounded size, and that parts of the
input can have arbitrarily large sizes, respectively.  For instance,
the expression $(a^Bb)^\omega$ represents the language of infinite
words over the letters~$a,b$ where there is a common bound on the
number of consecutive letters~$a$.  The expression~$(a^Sb)^\omega$
represents a similar language, but this time the distance between
consecutive $b$'s is required to tend toward the infinite.

We develop a theory for these languages, with a focus on decidability
and closure. We define an equivalent automaton model, extending
B\"uchi automata. The main technical result is a complementation lemma
that works for languages where only one type of exponent---either
$L^B$ or $L^S$---is used.

We use the closure and decidability results to obtain partial
decidability results for the logic MSOLB, a logic obtained by
extending monadic second-order logic with new quantifiers that
speak about the size of sets.
\end{abstract}

\maketitle

\section{Introduction}
In this paper we introduce a new class of languages of infinite words.
The new languages of this kind---called $\omega BS$-regular
languages---are defined using an extended form of $\omega$-regular
expressions.  The extended expressions can define properties such as
``words of the form $(a^*b)^\omega$ for which
  there is an upper bound on the number of consecutive letters $a$''.
Note that this bound depends upon the word, and for this reason the
language is not $\omega$-regular.  This witnesses that $\omega
BS$-regular languages are a proper extension of $\omega$-regular
languages.

The expressions for $\omega BS$-regular languages are obtained by
extending the usual $\omega$-regular expressions with two new variants
of the Kleene star $L^*$. These are called the bounded exponent $L^B$
and the strongly unbounded exponent $L^S$. The idea behind $B$ is that
the language $L$ in the expression $L^B$ must be iterated a bounded
number of times, the bound being fixed for the whole word.  For
instance, the language given in the first paragraph is described by
the expression $(a^Bb)^\omega$. The idea behind $S$ is that the number of
iterated concatenations of the language $L$ must tend toward infinity (i.e., have
no bounded subsequence).

This paper is devoted to developing a theory for the new languages.
There are different motivations for such a study. The first one is the
interest of the model itself: it extends naturally the standard
framework of $\omega$-regular languages, while retaining some closure
and decidability properties. We also show that, \oldhl{just as
  $\omega$-regular expressions define the same languages of infinite
  words as} \ttc{the ones definable in} \oldhl{monadic second-order logic, the class of~$\omega
  BS$-regular languages also admits a logical counterpart}.  The
relevance of the model is also \oldhl{quite} natural: the use of bounding
arguments in proofs is very common, and the family of~$\omega
BS$-regular languages provides a unified framework for developing such
arguments.  \oldhl{A notable example is the famous star-height
  problem, to which we will return later in the introduction.} \oldhl{Another application of our results,
which is presented in this paper,  is an algorithm deciding} if an $\omega$-automatic graph has
bounded out-degree.  We believe that more problems are related to this
notion of regularity with bounds.  In this paper, we concentrate on a
basic theoretical study of the model.

The first step in this study is \ttc{the introduction of} a
new family of automata over infinite words, \oldhl{extending B\"uchi
automata, which we call
bounding automata}.
Bounding automata have the same expressive power as 
\oldhl{$\omega BS$-regular expressions}. \oldhl{(However, the
translations between bounding automata and $\omega BS$-regular
expressions are more involved than in the case of }\ttc{$\omega$-}\oldhl{regular languages.)}
A bounding automaton is a finite
automaton equipped with a \ttc{finite} number of counters.
\hhl{During a run,} these counters can be incremented
and reset, but not read.  The counter values are used in the
acceptance condition, which depends on their asymptotic values
(whether counter values are bounded or tend toward infinity).
Thanks to \ttc{the equivalence between automata and expressions, and using} simple
automata constructions, we obtain \ttc{the} closure of $\omega BS$-regular
languages under union, intersection and projection.

Unfortunately, $\omega BS$-regular automata cannot be determinized.
Even more problematic, $\omega BS$-regular languages are not closed under
complement.
However, we are able to identify two fragments of
$\omega BS$-regular languages that complement each other. We show that
the complement of a language that only talks about bounded sequences
is a language that only talks about sequences tending toward infinity;
and vice versa.  The difficult proof of this complementation result is
the technical core of the paper.

Finally, we present a logic that \oldhl{captures exactly the  $\omega BS$-regular
languages, i.e.~is equivalent to both the $\omega BS$-regular
expressions and automata.}
As \ttc{it} is well known, languages defined by $\omega$-regular
expressions are exactly the ones definable in monadic second-order
logic. What extension of this logic corresponds to $\omega BS$-regular
expressions? Our approach is to add a new quantifier, called the
existential unbounding quantifier $\ubouns$.
A formula $\ubouns X.\phi(X)$ is true
if it is possible to find  sets satisfying $\phi(X)$ of
arbitrarily large size.
Every $\omega BS$-regular language can be defined in monadic second-order logic extended with $\ubouns$.
However, due to the failure of the closure under complementation, the
converse does not hold.
By restricting the quantification patterns,
we identify  fragments of the logic that correspond to the various types of
$\omega BS$-regular expressions introduced in this paper.

\medskip
\noindent{\bf Related work.}
This work tries to continue the long lasting tradition of
logic/automata correspondences initiated by
  B\"uchi~\cite{buchi60,buchi62} and continued by
  Rabin~\cite{rabin69}, only to mention the most famous names
  (see~\cite{thomas} for a survey).
We believe that bounding
properties extend the \ttc{standard} notion of regularity \ttc{in a meaningful way,}
and that languages defined by our extended expressions have every right to be called
regular, even though they are not captured by B\"uchi automata. For
instance, every $\omega BS$-regular language $L$ has a finite number
of quotients~$w^{-1}L$ and $Lw^{-1}$. (Moreover, the right quotients
$Lw^{-1}$ are regular languages of finite words.)  \hhl{Unfortunately, }we do not have a
full complementation result.

The quantifier $\ubouns$ in the logic that describes $\omega
BS$-regular languages was already introduced in \cite{bojanboun}.
More precisely, the quantifier studied in \cite{bojanboun} is~$\bouns$:
a formula~$\bouns X.\phi$ \ttc{expresses that}  there is  a bound on
the size of sets~$X$ satisfying property~$\phi$.
This formula is semantically equivalent to~$\neg(\ubouns X.\phi)$.
Although \cite{bojanboun} went beyond words and considered infinite
trees, the proposed satisfiability algorithm  worked for a very restricted
fragment of the logic with no (not even partial) complementation.
Furthermore, no notion of automata or regular expression was proposed.

Boundedness properties have been considered in model-checking. For
instance, \cite{boundedstack} considered systems
described by pushdown automata whose stack size is unbounded.

Our work on bounds can also be related to cardinality restrictions.
In~\cite{parikhaut}, Klaedtke and Ruess considered an extension of
monadic second-order logic, which allowed  cardinality constraints of the form
\begin{equation*}
  |X_1| + \cdots + |X_n| \le |Y_1| + \cdots + |Y_m| \ .
\end{equation*}
In general, such cardinality constraints (even $|X| = |Y|$) lead to
undecidability of the satisfiability problem.
Even though cardinality constraints can express all
$\omega BS$-regular languages, the decidable fragments considered in
\cite{parikhaut} are insufficient for our purposes: those fragments
\oldhl{capture only a small fragment of}   $\omega BS$-regular languages.

Finally, the objects we manipulate are related to the (restricted)
star-height problem. This problem  is,
given a natural number~$k$ and a regular language of finite
words~$L$,  to decide if~$L$ can be defined by a regular expression  where the
nesting depth
 of Kleene-stars is at most~$k$. This problem,  first raised by Eggan
 in 1963 \cite{eggan63}, has a central
role in the theory of regular languages. It has been first shown decidable
by Hashiguchi \cite{hashiguchi88} via a  difficult \hhl{proof. The
correctness of this proof  is still unclear.}
A new proof, due to Kirsten \cite{kirsten05},  reduces the star-height problem to the limitedness problem 
for nested distance desert automata. The latter are finite state
automata, which assign a natural number to each word.
The limitedness problem is the question whether there is a bound on
the numbers produced by a given automaton.
Nested distance desert automata---we were not aware of their existence
when developing the theory of $\omega BS$-regular languages---happen
to be syntactically equivalent to the hierarchical $B$-automata that we use
as an intermediate object in the present work. The semantics of the
two models are also tightly connected, and it is possible to derive from
the result of our paper the decidability of the limitedness problem
for nested distance desert automata (though using our more general results,
we obtain a non-elementary complexity, which is far beyond the optimal PSPACE
algorithm of Kirsten).

\medskip
\noindent{\bf Structure of the paper.} In
Section~\ref{section:regular-expressions}, we formally define the
$\omega BS$-regular expressions that are the subject of this paper. We
introduce two restricted types of expressions (where the $B$ and $S$
exponents are prohibited, respectively) and \hhl{give an overview of} the closure
properties of the respective expressions. In
Section~\ref{section:automata}, we introduce our automata models and
show that they are equivalent to the \hhl{ $\omega BS$-regular} expressions.  In
Section~\ref{section:msob}, we show how our results can be applied to
obtain a decision procedure for satisfiability in an extension of
monadic second-order logic.  The main technical result, which concerns
closure under complementation, \hhl{is given at} the end of the paper, in
Section~\ref{section:complementation}.

\medskip
We would like to thank the anonymous referees, who contributed enormously to this paper, through their many thoughtful and helpful comments.


\section{\oldhl{Regular Expressions with Bounds}}
\label{section:regular-expressions}
In this section we define $\omega BS$-regular expressions.  The
expressions are the first of three means of defining languages
with bounds. The other two---automata and logic---are presented in the
next sections.

\subsection{Definition}

In a nutshell, to the standard operations used in~$\omega$-regular
expressions, we add two variants of the Kleene star $*$: the $B$ and
$S$ exponents.
These are used to constrain the number of iterations,
or more precisely the asymptotic behavior of the number of iterations
(this makes sense since the new exponents are used in the context of
an $\omega$ exponent.
When the $B$ exponent is used, the number of iterations has to be
bounded by a bound which depends on the word.
When the $S$ exponent is used, it has to tend toward
infinity.  For instance, the expression~$(a^Bb)^\omega$ represents the
words in $(a^*b)^\omega$ where the size of sequences of
consecutive~$a$'s is bounded.
Similarly, the
expression~$(a^Sb)^\omega$ requires the size of \oldhl{(maximal)} sequences of
consecutive~$a$'s to tend toward infinity.  These new expressions are
called~$\omega BS$-regular expressions. \oldhl{A more detailed definition is
presented below.}

Just as an
$\omega$-regular expression uses regular expressions of finite words
as building blocks, for $\omega BS$-regular expressions one also
needs first to define a finitary variant, called $BS$-regular
expressions.
Subsection~\ref{subsubsection:BSexpressions} presents
this \oldhl{finitary variant}, while
Subsection~\ref{subsubsection:omegaBSexpressions} introduces $\omega
BS$-regular expressions.

\subsubsection{$BS$-regular expressions.}
\label{subsubsection:BSexpressions}
In the following we will write that an infinite sequence of natural numbers~$g\in\nats^\omega$ is
\intro{bounded} if there exists a global bound on the~$g(i)$'s, i.e.,
if~$\limsup_ig(i)<+\infty$. This behavior is denoted by the letter~$B$.
The infinite sequence is \intro{strongly unbounded} if it tends toward infinity,
i.e., $\liminf_i g(i)=+\infty$.
This behavior is denoted by the letter~$S$.
Let us remark that an infinite sequence is bounded iff it has no strongly unbounded infinite subsequence,
and that an infinite sequence is strongly unbounded iff it has no bounded infinite subsequence.

A \intro{natural number sequence}~$g$ is a finite or infinite sequence
of natural numbers that we write in a functional way:
$g(0),g(1),\dots$ Its length is~$|g|$, \oldhl{which may possibly be
  $\infty$}. We denote by~$\nats^\infty$ the set of sequences of
natural numbers, both finite and infinite.  A sequence of natural
numbers~$g$ is \intro{non-decreasing} if $g(i)\leq g(j)$ holds for
all~$i\leq j<|g|$.  \oldhl{We write~$g\leq n$ to say that~$g(i)\leq n$ holds} for
all~$i<|g|$.  For~$g$ a non-decreasing sequence of natural numbers,
\oldhl{we define~$g'$ to be $g'(i)=g(i+1)-g(i)$ for all~$i$}
\ttc{such that~$i+1<|g|$}.  The sequence $g$ is of \intro{bounded
  difference} if $g'$ is either finite or bounded; it is of
\intro{strongly unbounded difference} if the sequence $g'$ is either
finite or strongly unbounded.

A \intro{word sequence} $\vec u$ over \oldhl{an} alphabet~$\Sigma$
is a finite or infinite sequence of finite words over~$\Sigma$,
i.e., an element of~$(\Sigma^*)^\infty$.
The components of the word sequence~$\vec u$ are the finite words $u_0,u_1,\dots$;
we also write~$\vec u=\langle u_0,u_1,\dots\rangle$.
We denote by~$\varepsilon$ the finite word of length~$0$, which is
different from the word sequence of length~$0$ denoted~$\langle\rangle$.
We denote by~$|\vec u|$ the length of the word sequence~$\vec u$.

A \intro{language of word sequences} is a set of word
sequences. \oldhl{The finitary variant of $\omega BS$-regular
  expressions will describe
  languages of word sequences. Note that the finiteness concerns only
  the individual words in the sequences, while the sequences
  themselves will sometimes be infinitely long. }

\oldhl{We define the following operations, which take as parameters languages
of word sequences~$K,L$.}

\noindent
The \intro{concatenation} of~$K$ and~$L$ is defined by
  \begin{equation*}
    K \concat L = \{\langle u_0v_0,u_1v_1\dots\rangle~:~\vec u\in K,
	~\vec v\in L,~|\vec u|=|\vec v|\}\ .
  \end{equation*}

\noindent
The \intro{mix} of~$K$ and~$L$ (which is \emph{not} the union) is
defined by
  \begin{equation*}
    K+L = \set{\vec w~:~\vec u\in K,~\vec v\in L,~\forall i <
      \max(|\vec w|,|\vec u|). w_i\in\{u_i\}\cup\{v_i\}}\ ,
  \end{equation*}
  with the convention that~$\set{u_i}$ (\oldhl{respectively,}
  $\set{v_i}$) is~$\emptyset$ if~$i$\ttc{$\geq$}$|\vec u|$ (\oldhl{respectively,} if $i$\ttc{$\geq$}$|\vec v|$).

\noindent
The \intro{$*$ exponent} of~$L$ is defined by grouping words into blocks:
    \begin{multline*}
          L^* = \{\langle u_0\dots u_{g(0)-1},u_{g(0)}\dots u_{g(1)-1},\dots\rangle~:~
	\vec u\in L,~g\in\nats^\infty,~g\leq|u|,~g~\text{non-decreasing}\}\ .
    \end{multline*}

\noindent
The \intro{bounded exponent $L^B$} is defined similarly:
    \begin{multline*}
          L^B = \{\langle u_0\dots u_{g(0)-1},u_{g(0)}\dots u_{g(1)-1},\dots\rangle~: \\
	\vec u\in L,~g\in\nats^\infty,~g\leq|u|,~g~\text{non-decreasing of bounded difference}\}\ .
    \end{multline*}

\noindent
And the \intro{\hhl{strongly} unbounded exponent $L^S$} is defined by:
    \begin{multline*}
          L^S = \{\langle u_0\dots u_{g(0)-1},u_{g(0)}\dots u_{g(1)-1},\dots\rangle~: \\
	\vec u\in L,~g\in\nats^\infty,~g\leq|u|,~g~\text{non-decreasing
          of  {strongly} unbounded difference}\}\ .
    \end{multline*}

\newcommand{\sem}[1]	{[\![#1]\!]}

We now define $BS$-regular expressions using the above operators.
\begin{defi}
A \intro{$BS$-regular expression} has the following syntax ($a$ being
some letter of the given finite alphabet~$\Sigma$):\
\begin{align*}
e &=~~\emptyset~~|~~\varepsilon~~|~~\fepsilon~~|~~a~~|~~e\concat e~~|~~e+e~~|~~e^*~~|~~e^B~~|~~e^S\ .
\end{align*}
As usual, we often omit the $\concat$ symbols in expressions.

The semantic~$\sem{e}$ of a~$BS$-regular expression~$e$ is the language of word sequences defined inductively by the
following rules:
\begin{itemize}
\item $\sem{\emptyset}=\{\langle\rangle\}$.
\item $\sem{\fepsilon}=\{\langle\rangle,\langle\varepsilon\rangle,\langle\varepsilon,\varepsilon\rangle,\dots\}$,
	i.e., all finite word sequences built with~$\varepsilon$.
\item $\sem{\varepsilon}=\sem{\fepsilon}\cup\{\langle\varepsilon,\varepsilon,\dots\rangle\}$,
	i.e., all word sequences built with~$\varepsilon$.
\item $\sem a=\{\langle\rangle,\langle a\rangle,\langle a,a\rangle,\dots\}\cup\{\langle a,a,\dots\rangle\}$,
	i.e.,  all word sequences built with~$a$.
\item $\sem{e\concat f}=\sem e\concat \sem f$,
	$\sem{e+f}=\sem e+\sem f$, $\sem{e^*}={\sem e}^*$,
	$\sem{e^B}={\sem e}^B$, and $\sem{e^S}={\sem e}^S$.
\end{itemize}
A language of word sequences is called \intro{$BS$-regular}
if it is obtained by evaluating \hhl{a $BS$-regular expression.}
The \intro{$B$-regular} \oldhl{(respectively, \emph{$S$-regular})}
languages correspond to the particular case where the expression
does not use the exponent~$S$ \oldhl{(respectively, $B$)}.
\end{defi}

\begin{exa}
\hhl{The $B$-regular  expression~$a^B$
represents the finite or infinite  word sequences  that consist of  words from $a^*$ where the number of $a$'s
is bounded:}
\begin{equation*}
  \sem{a^B} = \set{ \langle a^{f(0)},a^{f(1)},\ldots\rangle~:~~f\in\nats^\infty~\text{is bounded}}\ .
\end{equation*}
The \oldhl{language of  word sequences} $\sem{a^B \concat (b \concat a^B)^S}$
consists of word sequences where the
number of consecutive $a$'s is bounded, while the number of~$b$'s in
each word of the word sequence is strongly unbounded.
\end{exa}

Except for the two extra exponents $B$ and $S$ and the constant $\fepsilon$,
these expressions coincide syntactically with the standard regular expressions. 
\oldhl{Therefore, one may ask, how do our expressions correspond to
  standard regular expressions on their common syntactic fragment, which includes
  the Kleene star $*$, concatenation $\cdot$ and union $+$? The new expressions are
  a 
conservative extension of standard regular expressions in the following sense.} If one takes a standard regular
expression~$e$ defining a language of finite words~$L$ and evaluates it as
a $BS$-regular expression, the resulting language of word sequences is
\oldhl{the set of word sequences where every component belongs to $L$, i.e.~}
$\sem e = \{\vec u~:~\forall i$\ttc{$<$}$|u|.\,u_i\in L\}$.

In the fact below we present two important
  closure properties of languages defined by $BS$-regular
  expressions.
 We will often use this  fact, sometimes without
explicitly invoking it.

\begin{fact}\label{fact:basic-BS-closure}
  For every $BS$-regular language~$L$, the following \hhl{properties} hold:
  \begin{enumerate}
  \item $L+L = L$,
  \item $L = \{\langle u_{f(0)},u_{f(1)},\dots\rangle~:~\vec u\in L,~f\in\nats^\infty,~f\leq|u|,
	\text{\ttc{$f$ strongly unbounded}}
\}$.
	\label{item:basic-subsequence}
  \end{enumerate}
\end{fact}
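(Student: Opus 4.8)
Fix a $BS$-regular expression $e$ with $L=\sem e$; the plan is to prove by induction on $e$ that $\sem e$ enjoys both properties. It is cleanest to prove the second property first, on its own, because the induction for the first property will invoke it for the sub-expressions; it also helps to record beforehand three routine closure properties of $BS$-regular languages, each obtained by a short separate structural induction: $\sem e$ is closed under taking prefixes of word sequences, under taking suffixes of word sequences, and under forming, from a \emph{finite} $\vec s\in\sem e$ and an arbitrary $\vec t\in\sem e$, the word sequence that lists the words of $\vec s$ followed by the words of $\vec t$.

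\emph{Second property.} The inclusion of $L$ in the set on the right-hand side is immediate, using $f$ the identity and the convention that a finite sequence is strongly unbounded; the content is that every strongly unbounded reindexing $\langle u_{f(0)},u_{f(1)},\dots\rangle$ of some $\vec u\in L$ lies in $L$. For the constants, for $\concat$ and for $+$ this holds because reindexing commutes with these operations position by position. For $L=K^*$: write $\vec u\in K^*$ as a re-grouping $\langle u_0\dots u_{g(0)-1},\dots\rangle$ of some $\vec w\in K$ along a non-decreasing $g$; then $\langle u_{f(0)},u_{f(1)},\dots\rangle$ is again a re-grouping, this time of the word sequence $\vec r$ obtained by concatenating, for $k=0,1,\dots$, the block of $\vec w$ that forms $u_{f(k)}$. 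This $\vec r$ is a reindexing of $\vec w$ along a map $\phi$ which is finite or strongly unbounded: since $f(k)\to\infty$, the left endpoints of the relevant blocks of $\vec w$ tend to $\sup_i g(i)$, so $\phi\to\infty$ unless $g$ is bounded, in which case almost all of those blocks are empty and $\vec r$ is finite. Hence $\vec r\in K$ by the induction hypothesis, and the re-grouping map producing $\langle u_{f(0)},\dots\rangle$ from $\vec r$ has successive increments that are successive increments of $g$, so it inherits bounded difference (respectively strongly unbounded difference) from $g$; this settles $K^*$, $K^B$ and $K^S$ at once.

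\emph{First property.} The inclusion $L\subseteq L+L$ is immediate, so the task is $L+L\subseteq L$, and in the induction one may use the (already proved) second property for the sub-expressions. The constants are trivial; for $\concat$, a mix of two concatenations is a concatenation of two mixes — pick the left and right factor position by position according to which of the two sequences the mix selected — and for $+$ one re-expresses a mix of two elements of $\sem f+\sem g$ as a mix of an element of $\sem f+\sem f$ and an element of $\sem g+\sem g$, again position by position; both cases then use the induction hypothesis. The substantial case is $L=K^*$. Let $\vec x$ be a mix of $\vec u,\vec v\in K^*$, with $\vec u,\vec v$ re-grouping $\vec p,\vec q\in K$ along $g,g'$. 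Block by block, $\vec x$ picks a block of $\vec u$ or of $\vec v$; unfolding each picked block into its constituent words of $\vec p$ or $\vec q$ shows that $\vec x$ is a re-grouping of a word sequence $\vec r$ which is an interleaving (each entry appearing exactly once, the internal orders preserved) of a reindexing $\vec p'$ of $\vec p$ and a reindexing $\vec q'$ of $\vec q$. These two reindexing maps are strictly increasing, possibly skipping values, hence finite or strongly unbounded, so $\vec p',\vec q'\in K$ by the second property. It remains to put $\vec r$ in $K$. If $\vec p'$ and $\vec q'$ are both finite or both infinite, then so is $\vec r$, and one realises $\vec r$ as a mix of two ``stretched-out'' sequences, one built from $\vec p'$ and one from $\vec q'$ by repeating the last-seen entry at the positions belonging to the other sequence; each is a non-decreasing reindexing of $\vec p'$ or $\vec q'$ that is finite or strongly unbounded, so it is in $K$ by the second property, whence $\vec r\in K+K=K$ by the induction hypothesis. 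If exactly one of them is finite, say $\vec p'$, then only finitely many blocks of $\vec x$ contribute any word of $\vec p$ to $\vec r$ (block $k$ contributes $g(k)-g(k-1)$ such words, and since $\vec p'$ is finite their sum is finite), so all words of $\vec p$ sit in a bounded prefix of $\vec r$; thus $\vec r$ consists of a finite element of $K$ — the prefix, itself an interleaving of two finite elements of $K$, handled as above — followed by a suffix of $\vec q'\in K$, and so $\vec r\in K$ by the suffix-closure and prepending-closure properties recorded above. Lastly, the re-grouping map producing $\vec x$ from $\vec r$ has increments drawn from the increments of $g$ and $g'$, hence it is non-decreasing and of the right difference type whenever $\vec u,\vec v$ come from $K^B$ or $K^S$; so $\vec x$ lands in $K^*$, $K^B$ or $K^S$ as required.

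\emph{Main obstacle.} Everything apart from the exponent case of the first property is bookkeeping. The real difficulty is that the block boundaries of $\vec u$ and of $\vec v$ need not be aligned, so a mix of two re-groupings is not literally a re-grouping of any single word sequence; one is forced to reconstruct an underlying sequence $\vec r$ as an interleaving and then to re-express that interleaving inside $K$, and the genuinely delicate point there is the length mismatch between the two reindexed underlying sequences, which is exactly what forces the separate treatment of interleavings of a finite and an infinite sequence and the appeal to the extra prepending-closure property.
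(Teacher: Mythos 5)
Your proof is correct and follows exactly the route the paper intends: the paper's entire proof is the single sentence ``A straightforward structural induction,'' and your argument is that induction carried out in full, with property~(2) established first so that it can be invoked in the exponent cases of property~(1). The details you supply --- the reindexing of the underlying sequence for the exponents, the ``stretched-out'' mix realising an interleaving inside $K+K$, and the separate treatment of the finite/infinite mismatch via prefix-, suffix- and prepending-closure --- all check out.
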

\begin{proof}
\oldhl{A straightforward structural induction.}
\end{proof}
\oldhl{Item}~\ref{item:basic-subsequence} implies that \oldhl{$BS$-regular}
languages are closed under taking subsequences. \oldhl{That is, every $BS$-regular language of word sequences  is closed under removing, possibly
infinitely many, component words  from the sequence.  In particular every
$BS$-regular language is 
closed under taking the prefixes of word sequences.}

\subsubsection{$\omega BS$-regular expressions.}
\label{subsubsection:omegaBSexpressions}
We are now ready to introduce the $\omega BS$-regular expressions.
These describe languages of~$\omega$-words.  From a word sequence we
can construct an $\omega$-word by concatenating all the words in the
word sequence:
\begin{align*}
\langle u_0,u_1,\dots\rangle^\omega &= u_0u_1\dots
\end{align*}
This operation---called the \intro{$\omega$-power}---is only defined if
  the word sequence has nonempty words on infinitely many coordinates.
  The $\omega$-power is naturally extended to languages of word
  sequences by taking the $\omega$-power of every word sequence in the
  language (where it is defined).
\begin{defi}
  An~\intro{$\omega BS$-regular expression} is an expression of the form:
  $$
  e=\sum_{i=1}^n e_i\concat f_i^\omega
  $$
  in which each~$e_i$ is a regular expression, and each~$f_i$
  is a~$BS$-regular expression.
  The expression is \oldhl{called}~\intro{$\omega B$-regular}
  (\oldhl{respectively,} $\omega S$-regular) if all the expressions~$f_i$
  are~$B$-regular (\oldhl{respectively,} $S$-regular).

  The semantic interpretation~$\sem e$ is the language of $\omega$-words
  $$
  \bigcup\limits_{i=1\dots n} \sem{e_i}_F\concat\sem{f_i}^\omega,
  $$
  in which~\ttc{$\sem{\cdot}_F$ denotes the standard semantic of regular expressions},
  and $\concat$ denotes the concatenation of a language
  of finite words with a language of~$\omega$-words.  \oldhl{Following a
  similar tradition for regular expressions, we will often identify an
  expression with its semantic interpretation, writing, for instance,
  $w \in (a^Bb)^\omega$ instead of $w \in
  \sem{(a^Bb)^\omega}$}.
  
\oldhl{A language of~$\omega$-words is called \emph{$\omega BS$-regular}
  (respectively, $\omega B$-regular, $\omega S$-regular) if it is
  equal to~$\sem e$ for some} \ttc{$\omega BS$-regular expression $e$
  (respectively, $\omega B$-regular expression $e$,  $\omega S$-regular expression $e$).}
\end{defi}
In \hhl{$\omega BS$-regular expressions}, the language of finite words can be $\set
\varepsilon$; to avoid clutter we omit the language of finite words
in such a situation, e.g., we write $a^\omega$ 
for~$\set{\varepsilon}\concat a^\omega$.

This definition differs from the definition of~$\omega$-regular
expressions only in that the~$\omega$ exponent is applied to $BS$-regular
languages of word sequences instead of regular word languages.  As one may
expect, the standard class of~$\omega$-regular languages corresponds
to the case of \ombs-regular languages where neither~$B$ nor~$S$ is
used (the presence of~$\fepsilon$ does not make any difference here).

\begin{exa}
The expression $(a^Bb)^\omega$ defines the
language of $\omega$-words containing an infinite number of~$b$'s
where the possible number of consecutive~$a$'s is bounded.  The language $(a^Sb)^\omega$ corresponds to the case
where the lengths of maximal consecutive sequences of $a$'s tend
toward infinity.  The language $(a+b)^*a^\omega +
((a^*b)^*a^Sb)^\omega$ is more involved. It corresponds to
the language of words where either there are finitely many $b$'s
(left argument of the sum), or
the number of consecutive $a$'s is unbounded but not necessarily
strongly unbounded (right argument of the sum).
This is the complement of the language $(a^Bb)^\omega$.
\end{exa}

\begin{fact}\label{fact:emptiness-decidable}
  Emptiness is decidable for \ombs-regular  languages.
\end{fact}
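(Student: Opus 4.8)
The plan is to reduce emptiness of an $\omega BS$-regular language to emptiness of an ordinary $\omega$-regular language, which is well known to be decidable. The key observation is that the bounding constraints carried by the $B$ and $S$ exponents never affect whether a word *exists* in the language — they only constrain *which* words appear and how the witnessing iteration sequences behave asymptotically. So, given an $\omega BS$-regular expression $e = \sum_{i=1}^n e_i \cdot f_i^\omega$, I would first define a syntactic "erasure" map that replaces every occurrence of the exponents $B$ and $S$ in each $f_i$ by the ordinary Kleene star $*$, producing $BS$-regular expressions $\hat f_i$ that are in fact plain regular expressions over word sequences, and likewise reading $\bar\varepsilon$ simply as $\varepsilon$ where convenient. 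This yields an ordinary $\omega$-regular expression $\hat e = \sum_i e_i \cdot \hat f_i^\omega$.

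The heart of the argument is then the claim that $\sem{e} \neq \emptyset$ if and only if $\sem{\hat e} \neq \emptyset$. The forward direction is immediate, since every word sequence counted by $L^B$ or $L^S$ is also counted by $L^*$ (a non-decreasing sequence of bounded or of strongly unbounded difference is still just a non-decreasing sequence), so $\sem{e} \subseteq \sem{\hat e}$. For the converse I would argue that whenever $\sem{\hat e}$ is nonempty, it is nonempty via an *ultimately periodic* witness: an $\omega$-regular language, if nonempty, contains a word of the form $u v^\omega$. Such a periodic witness is produced by iterating a fixed finite block of the underlying word sequence forever. Feeding that periodic choice back into the original expression $e$, the grouping sequence $g$ can be taken to jump by a fixed stride at every step — so its difference sequence $g'$ is eventually constant, hence both of bounded difference and, when padded appropriately with blocks that are forced to grow, one can instead arrange a stride tending to infinity for the $S$ case. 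More carefully: for each subexpression with an $S$ exponent one repeats the periodic block with strictly increasing multiplicities, which keeps membership in $L^S$ while still producing the same ultimately periodic $\omega$-word; for a $B$ exponent a constant stride works; and these choices compose through concatenation and mix by Fact~\ref{fact:basic-BS-closure}. Thus the periodic witness for $\hat e$ lifts to a (possibly non-periodic, but still concrete) witness for $e$, giving $\sem{e} \neq \emptyset$.

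Having established the equivalence $\sem{e}\neq\emptyset \iff \sem{\hat e}\neq\emptyset$, the procedure is simply: compute $\hat e$ by structural recursion on $e$, build a Büchi automaton for the $\omega$-regular language $\sem{\hat e}$ by the standard translation, and test that automaton for emptiness (reachability of a productive accepting loop). All of these steps are effective, so emptiness of $\sem{e}$ is decidable.

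The main obstacle I expect is the converse direction of the equivalence — specifically, handling the $S$ exponent, where one must exhibit an actual word sequence whose block-lengths genuinely tend to infinity while still yielding an $\omega$-word in the language. Unlike the $B$ case, a constant stride does not witness an $S$ exponent, so one really needs the observation that repeating a fixed periodic pattern with unboundedly growing multiplicities costs nothing: it produces the same limiting $\omega$-word but makes the relevant difference sequence strongly unbounded. Getting this to interact cleanly with nested exponents, concatenation, and mix — and in particular checking that the chosen grouping sequences remain simultaneously non-decreasing and of the required difference type throughout the induction — is the one place where care is needed; everything else is routine.
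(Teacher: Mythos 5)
Your reduction rests on the claim that $\sem{e}\neq\emptyset$ iff $\sem{\hat e}\neq\emptyset$, where $\hat e$ replaces every $B$ and $S$ exponent by $*$. This claim is false, and the failure is not a peripheral technicality but exactly the point the paper's proof is organized around. The $S$-exponent, unlike $*$ and $B$, cannot manufacture an infinite word sequence out of a finite one: in $L^S$ an infinite grouping sequence $g$ must have strongly unbounded difference, hence tends to infinity itself, hence forces the underlying $\vec u\in L$ to be infinite. So if $L$ contains no infinite word sequence, neither does $L^S$ --- whereas $L^*$ and $L^B$ always contain the infinite sequence $\langle\varepsilon,\varepsilon,\dots\rangle$ (take $g$ constant). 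Concretely, $\emptyset^S=\sem{\fepsilon}$ consists only of finite word sequences, so $\sem{(\emptyset^S\concat a)^\omega}=\emptyset$, while the erased expression yields $\sem{(\emptyset^*\concat a)^\omega}=\set{a^\omega}\neq\emptyset$. (Your parenthetical ``reading $\fepsilon$ simply as $\varepsilon$'' hits the same wall even without an $S$: $(\fepsilon\concat a)^\omega=\emptyset$ but $(\varepsilon\concat a)^\omega\neq\emptyset$, as the paper itself observes in the discussion of $\fepsilon$ just after this Fact.) Your lifting argument --- repeat the periodic block with growing multiplicities --- silently assumes there is an infinite supply of blocks to repeat, and that is precisely what an $S$-exponent over a degenerate subexpression can deny you.

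The missing idea is that one must track, alongside ``does $L$ contain a word sequence with infinitely many nonempty words'', the auxiliary predicate ``does $L$ contain an infinite word sequence at all''. The paper's proof does exactly this: it computes two sets of subexpressions, $\mathcal I$ and $\mathcal N$, by mutual structural recursion, and the single rule where $S$ genuinely differs from $*$ and $B$ is that $L^S\in\mathcal I$ iff $L\in\mathcal I$, whereas $L^*$ and $L^B$ are in $\mathcal I$ unconditionally. Your erasure collapses precisely this distinction. The approach could be repaired --- for instance, first compute $\mathcal I$ and rewrite every subexpression $L^S$ with $L\notin\mathcal I$ to $\fepsilon$ before erasing --- but at that point you are already carrying out the paper's induction, and the detour through B\"uchi automata and ultimately periodic witnesses adds nothing.
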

\begin{proof}
  An \ombs-regular language is nonempty if and only if one of the
  languages $M \cdot L^\omega$ in the union defining the language is such that
  the regular language $M$ is nonempty and the $BS$-regular language $L$ contains
  a word sequence with infinitely many nonempty words. Therefore  the
  problem boils down to checking if a $BS$-regular language contains a
  word sequence with infinitely many nonempty \oldhl{words}. Let $\mathcal{N}$ be the
  set of $BS$-regular languages with this property, and let $\mathcal{I}$ be the
  set of $BS$-regular languages that contain at least one infinite
  word sequence (possibly with finitely many nonempty words). The
  following rules determine which languages belong to $\mathcal I$ and
  $\mathcal N$.
  \begin{itemize}
  \item $K+L \in \mathcal I$  iff $K \in \mathcal I$ or $L\in \mathcal I$.
  \item $K\concat L \in \mathcal I$ iff $K \in \mathcal I$ and $L\in \mathcal I$.
  \item $L^*$ and $L^B$ always belong to $\mathcal I$.
  \item $L^S  \in \mathcal I$  iff $L \in \mathcal I$.

  \item $K+L  \in \mathcal N$ iff $K \in \mathcal N$ or $L\in \mathcal N$.
  \item $K\concat L \in \mathcal N$ iff either $K \in \mathcal I$ and
    $L \in \mathcal N$, or $K \in \mathcal N$ and $L \in \mathcal I$.
  \item $L^*$, $L^B$ and $L^S$ belong to $ \mathcal N$  iff $L \in \mathcal N$.
  \qedhere
  \end{itemize}
 \end{proof}


 The constant~$\fepsilon$ will turn out to be a convenient technical
 device.  In practice, $\fepsilon\concat L$ restricts a language of
 \oldhl{word} sequences~$L$ to its finite sequences. We denote by $\bar
 L$ the language \ttc{of word sequences} $\fepsilon\concat L$.  \oldhl{This} \ttc{construction} \oldhl{will be used}
   \ttc{for instance} \oldhl{when dealing with intersections:} \ttc{as an example, the equality} \oldhl{$L^B\cap L^S={\bar
     L}^*$ holds when~$L$ does not contain~$\e$.}  \hhl{It turns out
   that  $\fepsilon$ is just syntactic sugar, as stated by:}
\begin{prop}\label{proposition:fplus-exp}
  Every $\omega BS$-regular expression (\oldhl{respectively, } $\omega B$-regular and
  $\omega S$-regular ones) is equivalent to one without the~$\fepsilon$
  constant.
\end{prop}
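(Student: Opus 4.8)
The plan is to remove the constant $\fepsilon$ from each expression that occurs under an $\omega$-power; since the prefixes $e_i$ in $e=\sum_i e_i\concat f_i^\omega$ are ordinary regular expressions, it is enough to rewrite the $BS$-regular expressions $f_i$. For the $\omega BS$- and $\omega S$-regular cases this is immediate. Indeed $\sem{\emptyset^S}=\sem{\fepsilon}$: a grouping sequence $g$ with $g\leq|\langle\rangle|=0$ is constantly $0$, and a constantly-$0$ sequence has strongly unbounded difference only if it is finite, so $\emptyset^S$ evaluates to all finite word sequences built from $\varepsilon$. Replacing every $\fepsilon$ by $\emptyset^S$ therefore produces an equivalent expression, which is still $\omega BS$-regular, and $\omega S$-regular if $e$ was, as $\emptyset^S$ is $S$-regular. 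The remaining — and only substantial — case is the $\omega B$-regular one, where the $S$ exponent must not be used.

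For that case I would first set up the calculus of $\fepsilon$, writing $\bar e$ for $\fepsilon\concat e$, which as noted before the statement denotes the set of finite word sequences of $\sem e$. From this description and the closure of $BS$-regular languages under prefixes (Fact~\ref{fact:basic-BS-closure}(\ref{item:basic-subsequence})) one verifies: $\fepsilon$ is idempotent and central for $\concat$, so within a $\concat$-product all its occurrences float to the front and merge into a single $\fepsilon$; $\overline{e^*}\equiv\overline{(\bar e)^*}$ and likewise with $B$ in place of $*$, so an occurrence of $\fepsilon$ under an exponent can be pulled out of it; and $\fepsilon$ distributes over the mix, with $\bar A+\bar B\equiv\overline{A+B}$. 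Applying these rules repeatedly rewrites each $B$-regular $f_i$ so that, after possibly dropping word sequences that have only finitely many nonempty components (which contribute nothing to $f_i^\omega$), one has $\sem{f_i}^\omega=\sem{G_i+\bar{G_i'}}^\omega$ with $G_i,G_i'$ $\fepsilon$-free $B$-regular expressions.

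It then remains to take the $\omega$-power of this normal form. A word sequence $\vec w\in\sem{G_i+\bar{G_i'}}$ that has infinitely many nonempty components coincides, past a finite prefix, with some $\vec u\in\sem{G_i}$; since deleting the first components of $\vec u$ is a subsequence operation, the suffix is again in $\sem{G_i}$, and so $\vec w^\omega\in M_i\concat\sem{G_i}^\omega$, where $M_i$ is the language of finite words obtained as $\concat$-products of finite word sequences in $\overline{G_i+G_i'}$; a matching lower bound follows the same way, so $\sem{f_i}^\omega=M_i\concat\sem{G_i}^\omega$. On finite word sequences the exponents $*$ and $B$ both coincide with the ordinary Kleene star, so $M_i$ is an ordinary regular language of finite words. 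Replacing each summand $e_i\concat f_i^\omega$ by $(e_i\concat M_i)\concat G_i^\omega$ (and deleting it when $\sem{G_i}$ contains no word sequence with infinitely many nonempty components) yields an equivalent $\omega B$-regular expression free of $\fepsilon$.

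The step I expect to be the real work is the $\omega B$-regular case, and the obstacle there is the mix. It acts coordinate-wise and is slightly asymmetric, so establishing the normal form $\sem{f_i}^\omega=\sem{G_i+\bar{G_i'}}^\omega$, and then reading off $M_i$ correctly — i.e.\ discarding exactly the $\varepsilon$-coordinates contributed by $\bar{G_i'}$ while keeping track of which finite prefixes actually survive — requires careful bookkeeping and repeated use of the closure properties in Fact~\ref{fact:basic-BS-closure} (idempotence of the mix, and closure under prefixes and under arbitrary subsequences). The $\omega BS$- and $\omega S$-regular cases, by contrast, are dispatched in one line via $\emptyset^S$.
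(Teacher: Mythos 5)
Your argument is sound, and it splits into a part that coincides with the paper and a part that is genuinely different. For the $\omega B$-regular case you are doing essentially what the paper does: its Lemma~\ref{lemma:finite-mix-top} rewrites every $T$-regular expression as $\bar M+L$ with $M$ regular and $L$ free of $\fepsilon$ (by the explicit substitutions $B,S\mapsto *$, $\fepsilon\mapsto\varepsilon$ for $M$ and $\fepsilon\mapsto\emptyset$ for $L$, rather than your rewriting calculus), and the proposition then follows by absorbing the finite-sequence summand into the regular prefix under the $\omega$-power, exactly as in your $\sem{G_i+\bar{G_i'}}^\omega=M_i\concat\sem{G_i}^\omega$. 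Where you genuinely depart from the paper is the $\omega S$- and $\omega BS$-regular cases: the identity $\sem{\emptyset^S}=\sem{\fepsilon}$ does hold under the paper's definitions (the only admissible grouping sequences are the constant-zero ones, and among these only the finite ones have strongly unbounded difference, each producing a finite tuple of empty words), so those two cases close in one line, whereas the paper treats all three uniformly through the lemma. This shortcut buys real simplicity, and it incidentally shows that the paper's later remark that ``$\fepsilon$ is necessary to capture the $BS$-regular language $\bar a$'' is accurate only for $B$-regular expressions, since $\emptyset^S\concat a$ already denotes $\bar a$. The one step you should not dismiss with ``follows the same way'' is the inclusion $M_i\concat\sem{G_i}^\omega\subseteq\sem{G_i+\bar{G_i'}}^\omega$: taking $M_i$ to be \emph{all} products of finite sequences in $\overline{G_i+G_i'}$ makes this direction depend on a splicing property --- prepending a finite member of a $B$-regular language to another member stays in the language --- which is true (a structural induction; the $B$-exponent case uses that adjoining finitely many block sizes preserves boundedness) but is not symmetric to the upper bound, which really does follow from closure under subsequences alone.
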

Note that this
proposition does not mean that $\fepsilon$ can be eliminated from
$BS$-regular expressions\ttc{. It can only} be eliminated\ttc{ }%
under the scope of the $\omega$-power in $\omega BS$-regular expressions.
For instance, $\fepsilon$ is necessary to capture the  $BS$-regular language~$\bar a$.
However, \hhl{once
   under the $\omega$ exponent, $\fepsilon$ becomes redundant; for
   instance } $(\bar a)^\omega$ is equivalent to $\emptyset$, while  $(\bar a+ b)^\omega$
is equivalent to $(a+b)^*b^\omega$.

Proposition~\ref{proposition:fplus-exp} will follow immediately once
the following lemma is established:
\begin{lem}\label{lemma:finite-mix-top}
\hhl{Let $T \in \set{BS, B, S}$. 
  For every $T$-regular  expression, there is an equivalent one of the form~$\bar M+ L$
  where~$M$ is a regular expression and~$L$ is $T$-regular and does not use
 $\fepsilon$.}
\end{lem}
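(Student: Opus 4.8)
The plan is a structural induction on the $T$-regular expression, maintaining the invariant that each subexpression is rewritten into the shape $\bar M+L$ where $M$ is an ordinary regular expression of finite words (using neither $\fepsilon$ nor the exponents $B,S$) and $L$ is $T$-regular without $\fepsilon$. The base cases are immediate: for $\emptyset,\varepsilon,a$ I take $M=\emptyset$ and $L$ the expression itself (so $\bar M=\{\langle\rangle\}$ is absorbed by the mix), and for $\fepsilon$ I take $M=\varepsilon$, $L=\emptyset$. Before the inductive cases I would record an auxiliary fact: if $L$ is $\fepsilon$-free and $T$-regular, then its restriction to finite word sequences $\bar L=\fepsilon\concat L$ is already $*$-only, i.e. $\bar L=\bar N$ for some regular expression $N$ of finite words. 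Indeed, on a finite word sequence the blocking function $g$ is finite, so $g'$ is finite and the bounded-/strongly-unbounded-difference side conditions on the exponents become vacuous; the proof is a routine induction that uses the prefix closure of $BS$-regular languages noted after Fact~\ref{fact:basic-BS-closure} to replace a witness $\vec u$ by a finite prefix.

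For the two binary operators I would use the standard laws for $\concat$ and $+$ --- associativity and commutativity of $+$, and distributivity of $\concat$ over $+$ (the latter using $P+P=P$ from Fact~\ref{fact:basic-BS-closure}) --- together with the identities $\bar M+\bar N=\overline{M+N}$ and $\bar M\concat\bar N=\overline{M\concat N}$, obtained by unwinding the definitions. Writing $e\equiv\bar M_e+L_e$ and $f\equiv\bar M_f+L_f$, the mix case then gives $e+f\equiv\overline{M_e+M_f}+(L_e+L_f)$. For concatenation one distributes into $\bar M_e\concat\bar M_f+\bar M_e\concat L_f+L_e\concat\bar M_f+L_e\concat L_f$; since $\bar M_e$ (resp.\ $\bar M_f$) contains only finite word sequences we have $\bar M_e\concat L_f=\bar M_e\concat\bar L_f$, which by the auxiliary fact equals $\overline{M_e\concat N_f}$ for a regular $N_f$ with $\bar L_f=\bar N_f$, and symmetrically for the third term; collecting the three $\overline{(\cdot)}$ summands via $\bar M+\bar N=\overline{M+N}$ and keeping $L_e\concat L_f$ (still $\fepsilon$-free and $T$-regular) as the second component yields the required shape.

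The heart of the argument is the exponent case: given $e\equiv\bar M+L$ one must bring $e^{X}$, for $X\in\{*,B,S\}$, into the form $\bar M'+L'$. My claim, with $N$ a regular expression witnessing $\bar L=\bar N$, is
\[
  (\bar M+L)^{X}\;=\;\overline{(M+N)^{*}}\;+\;(L^{X}+c_{X}),
\]
where $c_{X}$ is the constant $\varepsilon$ if $X\in\{*,B\}$ and $\emptyset$ if $X=S$; the summand $c_{X}$ only serves to capture word sequences ending in infinitely many empty words, which arise for $*$ and $B$ when the blocking function is eventually constant. The structural observation that produces the $\fepsilon$- and exponent-free language $(M+N)^{*}$ is that, by the subsequence closure of Fact~\ref{fact:basic-BS-closure}, every component of every word sequence in $L$ is itself a component word of $L$, hence lies in $\sem N_F$; therefore each block $u_{g(i-1)}\cdots u_{g(i)-1}$ occurring in $(\bar M+L)^{X}$ is a concatenation of words from $\sem M_F\cup\sem N_F$, i.e.\ a word of $(M+N)^{*}$. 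For ``$\subseteq$'', a word sequence of $\bar M+L$ is one of $L$ with a finite prefix corrupted by words of $M$; since $g$ is non-decreasing, all corrupted positions fall inside finitely many output blocks, the remaining blocks form a suffix of a word sequence of $L^{X}$ (its blocking function being a shift of $g$, hence still bounded resp.\ strongly unbounded in difference), and the finitely many corrupted blocks --- each a word of $(M+N)^{*}$ --- are absorbed into the $\overline{(M+N)^{*}}$ summand. For ``$\supseteq$'', one reconstructs a word sequence of $\bar M+L$ together with a blocking function by prepending the corrupted blocks in front of a witness for $L^{X}$; this needs that $L$ is closed under prepending a finite $\bar L$-sequence and that $L^{X}$ is closed under prepending empty words --- both routine inductions in the spirit of Fact~\ref{fact:basic-BS-closure} --- and that prepending finitely many new entries to $g'$ preserves the $B$ resp.\ $S$ condition.

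I expect the exponent case to be the only genuine obstacle: guessing the correct regular head language $(M+N)^{*}$ and then discharging the two inclusions, where the delicate point is the bookkeeping of how the finite corrupted prefix is repackaged into output blocks and the verification that the re-indexed blocking function still meets the bounded- resp.\ strongly-unbounded-difference requirement; the binary cases, the base cases, and the ``$\bar L$ is $*$-only'' fact are comparatively routine. Once every $T$-regular expression is put in the form $\bar M+L$, Proposition~\ref{proposition:fplus-exp} follows by expanding $(\bar M+L)^{\omega}$ and absorbing the regular head into the leading finite-word expression of the $\omega BS$-regular expression.
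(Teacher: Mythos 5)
Your proof is correct and takes essentially the same route as the paper's: a structural induction establishing $e=\bar M+L$, where your compositionally built head $M$ coincides with the paper's uniform recipe (replace the exponents $B,S$ by $*$ and $\fepsilon$ by $\varepsilon$) and your $\fepsilon$-free residue $L$ with its recipe (replace $\fepsilon$ by $\emptyset$). The paper merely states these two substitutions and leaves the induction implicit; your writeup supplies the case analysis and the auxiliary closure facts (finite restrictions being $*$-only, prepending finite $\bar L$-prefixes) that the paper's one-line proof silently relies on via Fact~\ref{fact:basic-BS-closure}.
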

\begin{proof}
  By structural induction, we prove that for  every~$BS$-regular
  expression~$L$ \oldhl{can be equivalently expressed as}
  \begin{align*}
  L & = \bar M + K\ ,
  \end{align*}
\oldhl{where} $M$ is obtained from $L$ by replacing exponents $B$ and~$S$
  by Kleene stars and replacing $\fepsilon$ by $\varepsilon$,
  and $K$ is obtained from~$L$ by replacing~$\fepsilon$
  by~$\emptyset$.

  \hhl{Note that  Fact~\ref{fact:basic-BS-closure} is used  implicitly
    above.}
\end{proof}


\subsection{\oldhl{Summary: The Diamond}}
\label{subsection:diamond}
In this section we present  Figure~\ref{figure:diamond}, which
summarizes the technical contributions of this paper. We call this
figure \intro{the diamond}.  Though not all the material necessary to
understand this figure has been provided yet, we give it here as a
reference and guide to what follows.

\pdflatex
{
\begin{figure*}[ht]
\begin{center}
\begin{tabular}{ccc}
\\ \\ \\
	&\rnode{oBS}{\begin{tabular}{c}
	\ombs-regular expressions	\\
	hierarchical \ombs-automata	\\
	\ombs-automata
	\end{tabular}}					&\\
		\\ \\ \\
\rnode{oS}{\begin{tabular}{c}
$\omega S$-regular expressions		\\
hierarchical $\omega S$-automata	\\
$\omega S$-automata
\end{tabular}}\hspace{-1.2cm}
	&	&\hspace{-1.2cm}\rnode{oB}{\begin{tabular}{c}
		$\omega B$-regular expressions		\\
		hierarchical $\omega B$-automata	\\
		$\omega B$-automata
		\end{tabular}}				\\
		\\ \\ \\
	&\rnode{o}{\begin{tabular}{c}
	$\omega$-regular expressions\\
	B\"uchi automata
	\end{tabular}}					&\\
\\ \\
\end{tabular}
{\psset{linewidth=0}
\ncline{-}{o}{oB}\lput*{:U}{$\subsetneq$}
\ncline{-}{oS}{o}\lput*{:U}{$\supsetneq$}
\ncline{-}{oBS}{oB}\lput*{:U}{$\supsetneq$}
\ncline{-}{oS}{oBS}\lput*{:U}{$\subsetneq$}
}
\ncline{<->}{oS}{oB}\Aput*{$\mathcal{C}$}
\psset{linearc=.3}
\ncloop[angleA=180,angleB=0,loopsize=1,arm=.5]{->}{o}{o}
\Bput*{$\cap,\cup,\mathcal{C},\pi$}
\ncloop[angleA=270,angleB=90,loopsize=2.5,arm=.5]{->}{oS}{oS}
\aput*{0}(3.5){$\cup,\cap,\pi$}
\ncloop[angleA=90,angleB=270,loopsize=2.5,arm=.5]{->}{oB}{oB}
\aput*{0}(1.5){$\cup,\cap,\pi$}
\ncloop[angleB=180,loopsize=1,arm=.5]{->}{oBS}{oBS}
\Bput*{$\cup,\cap,\pi$}
\end{center}
\vspace{0.5cm}
\caption{The diamond}\label{figure:diamond}
\end{figure*}
}

The diamond illustrates the four variants of languages
of~$\omega$-words that we  consider: $\omega$-regular, $\omega B$-regular,
$\omega S$-regular and \ombs-regular languages.  The inclusions
between the four classes give a diamond shape.
We show in Section~\ref{subsection:limits} \oldhl{ below} that the
inclusions in the diamond are indeed strict.

To each class of languages corresponds a family of automata. The
automata come in two variants: ``normal automata'', and the
equally expressive
``hierarchical automata''. The exact definition of these automata as
well as the corresponding equivalences are the subject of \hhl{Section~\ref{section:automata}.}

All the classes are closed under union, \oldhl{since $\omega BS$-regular
  expressions have finite union built into the syntax.}  It is also easy
to show that the classes are closed under projection, i.e., images
under a letter to letter morphism (the operation denoted by~$\pi$ in
the figure), and more generally \oldhl{images under a} homomorphism.
From the automata characterizations we obtain closure under intersection for
the four classes; see Corollary~\ref{corollary:intersection}.  For 
closure under complement, things are not so easy.  Indeed\oldhl{, } in
Section~\ref{subsection:limits} we show that~$\omega BS$-regular
\oldhl{languages} are not closed under complement.  However, some
complementation results are still possible.  Namely
Theorem~\ref{theorem:complementation} establishes that complementing
an $\omega B$-regular language gives an $\omega S$-regular language, and
vice versa.  This is the most involved result of the paper, and
Section~\ref{section:complementation} is dedicated to its proof.


\subsection{Limits of the diamond}
\label{subsection:limits}

In this section we show that all the inclusions depicted in the
diamond are strict. Moreover, we show that there exists
an $\ombs$-regular language whose complement is not $\ombs$-regular.

\begin{lem}\label{lemma:B-pump}
 \hhl{ Every~$\omega B$-regular language over the alphabet~$\set{a,b}$, 
which contains a word with an infinite number of~$b$'s, also
  contains a word in~$(a^Bb)^\omega$.}
\end{lem}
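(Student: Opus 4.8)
The plan is to argue by structural induction on the $\omega B$-regular expression $e = \sum_{i=1}^n e_i \cdot f_i^\omega$ defining the language, after first reducing to a single summand. If $L = \sem{e}$ contains a word $w$ with infinitely many $b$'s, then $w$ lies in one of the components $\sem{e_i}_F \cdot \sem{f_i}^\omega$; since $e_i$ contributes only a finite prefix, $\sem{f_i}$ must contain a word sequence $\vec u$ whose $\omega$-power is defined and whose concatenation has infinitely many $b$'s. So it suffices to show: every $B$-regular expression $f$ (without the $S$ exponent), such that $\sem{f}$ contains a word sequence $\vec u$ with $\vec u^\omega$ defined and $\vec u^\omega$ having infinitely many $b$'s, also contains a word sequence $\vec v$ with $\vec v^\omega \in (a^Bb)^\omega$. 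Equivalently, $\vec v$ should be an infinite word sequence, with infinitely many components containing a $b$, and with a uniform bound on the number of consecutive $a$'s appearing anywhere in the concatenation $\vec v^\omega$.

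I would prove this by structural induction on $f$, tracking a slightly stronger invariant: from $\vec u \in \sem{f}$ I want to extract $\vec v \in \sem{f}$ such that the concatenation $\vec v^\omega$ has a uniform bound $N$ on its maximal blocks of consecutive $a$'s, infinitely many components of $\vec v$ contain a $b$, and (crucially) $\vec v^\omega$ is also eventually of the form $(a^{\le N} b)^\omega$ so that membership in $\sem{(a^Bb)^\omega}$ follows. The base cases $\emptyset, \varepsilon, \fepsilon, a, b$ are immediate (note $b$ itself already witnesses $(a^Bb)^\omega$ by taking $\vec v = \langle b, b, \ldots\rangle$, while $a$-only sequences never produce infinitely many $b$'s, so that case is vacuous). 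For $\cdot$ and $+$ (the mix), one combines the witnesses coordinatewise, using that bounded plus bounded is bounded and that mixing cannot destroy a uniform bound. For the $*$ and $B$ exponents, the key point is that regrouping a word sequence $\vec u$ into blocks according to a non-decreasing $g$ can only concatenate finitely many original components into each new component, so any uniform bound on $a$-blocks within the original components, together with the $B$-condition guaranteeing bounded block sizes, yields a uniform bound on $a$-blocks in the regrouped sequence; the fact that no $S$ exponent is available is exactly what prevents block sizes from blowing up. Here I would also invoke Fact~\ref{fact:basic-BS-closure}(\ref{item:basic-subsequence}) freely to pass to subsequences and thin out the word sequence to a convenient normal form.

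The main obstacle I anticipate is the $B$-exponent case, specifically reconciling two sources of $a$'s in $f^B$: those internal to the original components $u_j$, and those created at the boundaries when several $u_j$'s are concatenated into one regrouped component. A naive argument bounds each separately but must then ensure the total is still bounded; since the $B$-condition on $g$ forces $g'$ to be bounded, only boundedly many $u_j$'s are glued together, so a component of the form $u_{g(k)}\cdots u_{g(k+1)-1}$ contains at most (bounded number of components) $\times$ (bound on $a$'s per component, from the inductive hypothesis applied to the sub-sequence $\langle u_{g(k)},\ldots,u_{g(k+1)-1}\rangle$ lying in $\sem{f}$) consecutive $a$'s — but one has to be careful that the inductive hypothesis is applied to genuine members of $\sem{f}$ and that the $\omega$-power is still defined after thinning. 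Making this bookkeeping precise, rather than the underlying idea, is where the real work lies; the rest is routine once the invariant is chosen correctly.
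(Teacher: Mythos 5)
Your overall strategy coincides with the paper's: reduce to the $B$-regular expression under the $\omega$-power and run a structural induction that replaces the given word sequence by one with a uniform bound on the blocks of consecutive $a$'s. The gap is in your induction invariant. You condition it on the hypothesis that the sequence has a defined $\omega$-power containing infinitely many $b$'s, and accordingly dismiss the pure-$a$ case as vacuous. But that hypothesis is not inherited by the pieces arising in the inductive step, so the induction does not close. Concretely, for $f = f_1 \cdot f_2$ the factor contributed by $f_2$ may consist entirely of $a$'s; your invariant then provides no witness for $f_2$ at all, yet you must still replace that factor by one whose components have uniformly bounded length, since otherwise the concatenated components $u_i v_i$ acquire arbitrarily long $a$-blocks. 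Likewise, in the exponent cases you propose to apply the induction hypothesis to the finite sub-sequences $\langle u_{g(k)},\dots,u_{g(k+1)-1}\rangle$, which do not satisfy the hypothesis of your invariant (their $\omega$-power is not even defined), so the step ``bound on $a$'s per component, from the inductive hypothesis'' is not licensed as stated.

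The repair is the paper's two-part, unconditional invariant: for every $B$-regular $L$, (i) if $L$ contains a sequence all of whose components lie in $a^*$, then it contains one in $a^B$; (ii) if $L$ contains a sequence all of whose components lie in $(a^*b)^+a^*$, then it contains one in $(a^Bb)^+a^B$. With both clauses available, concatenation and mix go through (clause (i) handles a $b$-free factor, clause (ii) the other), and the exponent cases become easy: since the claim is existential, you may choose the trivial regrouping $g(i)=i+1$, which has bounded difference, so no counting of how many components get glued together is needed. Incidentally, your remark that the absence of the $S$ exponent is what keeps block sizes bounded is off the mark: the plain $*$ exponent already permits arbitrarily large blocks; what saves you is the freedom to choose the regrouping, not any bound imposed by the syntax.
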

\begin{proof}
Using a straightforward structural induction, one can show  that a~$B$-regular language
  of word sequences~$L$ satisfies the following properties.
\begin{itemize}
\item If~$L$ contains a sequence in~$a^*$, then it contains a sequence in~$a^B$.
\item If~$L$ contains a sequence in~$(a^*b)^+a^*$, then it contains
a sequence in~$(a^Bb)^+a^B$.
\end{itemize}
The statement of the lemma follows.
\end{proof}

\begin{cor}\label{corollary:notBnotS}
The language~$(a^Sb)^\omega$ is not~$\omega B$-regular.
The language~$(a^Bb)^\omega$ is not~$\omega S$-regular.
\end{cor}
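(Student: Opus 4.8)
The plan is to derive both statements by combining Lemma~\ref{lemma:B-pump} (together with a mirror of it) with the elementary fact that $(a^Bb)^\omega$ and $(a^Sb)^\omega$ are \emph{disjoint}: a word of $(a^Sb)^\omega$ has the form $a^{k_0}ba^{k_1}b\cdots$ with $(k_i)$ strongly unbounded, a word of $(a^Bb)^\omega$ has the same form with $(k_i)$ bounded, and since either word contains infinitely many $b$'s the sequence $(k_i)$ is infinite and cannot be both bounded and strongly unbounded. For the first statement, suppose toward a contradiction that $(a^Sb)^\omega$ is $\omega B$-regular. It contains words with infinitely many $b$'s, for instance $aba^2ba^3b\cdots$, so Lemma~\ref{lemma:B-pump} yields a word of $(a^Bb)^\omega$ inside it; this contradicts disjointness.

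For the second statement, the naive mirror of Lemma~\ref{lemma:B-pump} --- ``an $\omega S$-regular language containing a word with infinitely many $b$'s contains a word of $(a^Sb)^\omega$'' --- would suffice, but it is false ($\set{b^\omega}$ is an $\omega S$-regular counterexample), precisely because $(a^Bb)^\omega$ contains no word with unbounded blocks of $a$'s at all. What I would prove instead is: \emph{every $\omega S$-regular language over $\set{a,b}$ that contains, for every $c\in\nats$, a word with infinitely many $b$'s all of whose maximal blocks of consecutive $a$'s have length at least $c$, also contains a word with infinitely many $b$'s whose maximal blocks of $a$'s tend to infinity.} Granting this, suppose $(a^Bb)^\omega$ were $\omega S$-regular; since it contains $(a^cb)^\omega$ for every $c$, it would contain a word whose blocks of $a$'s tend to infinity, hence are unbounded --- contradicting the fact that every word of $(a^Bb)^\omega$ has bounded blocks of $a$'s.

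To prove that statement one proceeds as in Lemma~\ref{lemma:B-pump}. Writing the language as $\sum_i e_i\cdot f_i^\omega$ with each $f_i$ an $S$-regular expression and applying the pigeonhole principle, one reduces to the following claim, to be proved by structural induction on $S$-regular expressions: if an $S$-regular language of word sequences contains, for arbitrarily large $c$, word sequences whose components consist of blocks of $a$'s of length $\ge c$ separated by $b$'s, then it contains a word sequence whose blocks of $a$'s tend to infinity; prepending an $e_i$-word to the $\omega$-power of the latter then gives the desired $\omega$-word (and Fact~\ref{fact:basic-BS-closure} is used to discard components without a $b$). The induction is the main obstacle. The atomic cases, union and the plain Kleene star are routine; the work is in concatenation $K\cdot M$ and in the exponent $(\,\cdot\,)^S$. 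For concatenation, a long block of $a$'s in a component of $\vec u\in K\cdot M$ may straddle the junction between the $K$-part and the $M$-part, so one must carry an invariant that distinguishes the ``internal'' blocks of $a$'s (those flanked by $b$'s inside a component) from the ``boundary'' blocks at the ends of components --- together with the corresponding junction bookkeeping already present in the proof of Lemma~\ref{lemma:B-pump} --- and then use a pigeonhole over the infinitely many values of $c$ to attribute the growing blocks to one of the two factors. The case of $(\,\cdot\,)^S$ is the counterpart of the step ``a $B$-regular language containing a sequence in $a^*$ contains one in $a^B$'': one re-groups a word sequence whose block lengths grow with $c$ so that the successive lengths become genuinely strongly unbounded, and this is exactly where the $S$ exponent --- rather than the plain star or the $B$ exponent --- is what makes the argument go through.
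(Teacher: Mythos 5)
Your proof of the first statement is exactly the paper's: Lemma~\ref{lemma:B-pump} applied to $(a^Sb)^\omega$, plus the disjointness of $(a^Sb)^\omega$ and $(a^Bb)^\omega$. For the second statement you take a genuinely different route. The paper assumes $(a^Bb)^\omega$ is $\omega S$-regular, adjoins $(a+b)^*a^\omega$, and invokes Theorem~\ref{theorem:complementation} to conclude that the complement $((a^*b)^*a^Sb)^\omega$ would be $\omega B$-regular, which the Lemma~\ref{lemma:B-pump} argument then rules out; this is a two-line deduction, but it leans on the paper's hardest theorem, proved only in Section~\ref{section:complementation}. You instead carry out the direct argument that the paper only claims in passing can be given. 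Your key observations are sound: the naive mirror of Lemma~\ref{lemma:B-pump} is indeed false (the counterexample $\set{b^\omega}$ is apt), and the corrected statement---quantifying over the witnesses $(a^cb)^\omega\in(a^Bb)^\omega$ for all $c$---is the right fix. What this buys is a self-contained, elementary proof with no forward dependency on the complementation theorem; what it costs is a new structural induction that is noticeably more delicate than the one behind Lemma~\ref{lemma:B-pump}, and which you only sketch (though at roughly the level of detail the paper itself uses for its own pumping lemmas). If you write it out, two points deserve care: first, the conclusion ``maximal $a$-blocks tend to infinity'' is stronger than the induction will comfortably deliver once boundary and empty blocks are accounted for, and stronger than you need---unboundedness of the internal blocks already contradicts membership in $(a^Bb)^\omega$, so the inductive invariant should be stated in that weaker form; second, in the exponent cases the delicate direction is extracting the hypothesis for $K$ from the hypothesis on $K^S$ or $K^*$, not the regrouping afterwards, and what makes the induction go through is the \emph{absence} of the $B$ exponent (for which $a^B$ falsifies the base case) rather than the presence of $S$---the plain Kleene star behaves just as well as $S$ here.
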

\begin{proof}
Being~$\omega B$-regular for $(a^Sb)^\omega$ would contradict
Lemma~\ref{lemma:B-pump}, since it contains a word with an infinite number
of~$b$'s but does not intersect $(a^Bb)^\omega$.

For the second part, assume that the language~$(a^Bb)^\omega$
is~$\omega S$-regular. Consequently, so is the
language~$(a^Bb)^\omega+(a+b)^*a^\omega$.  Using
Theorem~\ref{theorem:complementation}, its complement
$((a^*b)^*a^Sb)^\omega$ would be~$\omega B$-regular.  But this is not
possible, by the same argument as above. A proof that does not use
complementation---along the same lines as in the first part---can
also be given.
\end{proof}

We now proceed to show that ~$\ombs$-regular languages are not closed
under complement. We start with a lemma similar to Lemma~\ref{lemma:B-pump}.
\begin{lem}\label{lemma:BS-pump}
  Every~$\ombs$-regular language over the alphabet~$\set{a,b}$ that
  contains a word with an infinite number of~$b$'s also contains a
  word in~$(a^Bb+a^Sb)^\omega$.
\end{lem}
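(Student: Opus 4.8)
The plan is to mimic the structure of the proof of Lemma~\ref{lemma:B-pump}, pushing a structural induction through the $BS$-regular expressions (rather than just the $B$-regular ones). Concretely, I would prove by induction on a $BS$-regular expression $e$ over $\set{a,b}$ the following two \emph{pumping} properties of the language $L = \sem e$ of word sequences:
\begin{itemize}
\item if $L$ contains a word sequence whose $\omega$-power lies in $a^*a^\omega$ (i.e.\ a sequence of words over $\set a$ with infinitely many nonempty components), then $L$ contains such a sequence in $(a^B + a^S)^\omega$; in fact, since a sequence over $\set a$ is either bounded or has a strongly unbounded subsequence, and $BS$-regular languages are closed under subsequences by Fact~\ref{fact:basic-BS-closure}(\ref{item:basic-subsequence}), it suffices to note $L$ contains a sequence in $a^B$ or in $a^S$;
\item if $L$ contains a word sequence in $(a^*b)^+a^* \cup ((a^*b)^+a^*)^\omega$ whose $\omega$-power has infinitely many $b$'s, then $L$ contains one of the form $(a^Bb + a^Sb)^+ (a^B + a^S)$ (respectively its $\omega$-iterate), i.e.\ one where every maximal block of $a$'s between consecutive $b$'s is $B$-like or $S$-like.
\end{itemize}
The statement of the lemma then follows by applying the second property to each summand $e_i \cdot f_i^\omega$ of the given $\omega BS$-regular expression, using the fact that a word with infinitely many $b$'s must, inside $f_i^\omega$, decompose into infinitely many $(a^*b)$-blocks so that $f_i$ contributes word sequences of the required shape.

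The inductive cases are the routine part. For $\emptyset, \fepsilon, \varepsilon, a, b$ the properties are immediate (a $\set b$-letter contributes no $a$-block to worry about). For $K \cdot L$ and $K + L$ (mix), one splits a witnessing sequence according to where its components come from; the subsequence-closure from Fact~\ref{fact:basic-BS-closure} and the idempotence $L+L=L$ are used to clean up, exactly as the proof of Lemma~\ref{lemma:B-pump} does implicitly. For the exponents: $L^B$ groups components into blocks with bounded-difference boundaries, so an $a$-block coming from such a grouping is bounded, hence of type $B$; symmetrically $L^S$ yields type $S$; and $L^*$ yields a grouping that is at least a subsequence-closed superset of what $L^B$ or $L^S$ would give, so one can pass to a $B$-subgrouping. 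The key observation making all the exponent cases work is the dichotomy recalled in the paper: a sequence of natural numbers is either bounded or has a strongly unbounded infinite subsequence, and $BS$-regular languages are closed under taking subsequences.

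The main obstacle — and the reason this lemma is stated separately rather than folded into Lemma~\ref{lemma:B-pump} — is the interaction of the $S$ exponent with concatenation and with the outer $\omega$-power. When an $a$-block straddles the boundary between what two concatenated sub-expressions contribute, or when an $a$-block is assembled across several iterations of the outer $f_i^\omega$, one must argue that the \emph{lengths} of these straddling blocks, taken over all of the infinitely many $b$'s, still form a sequence that is (after passing to a subsequence allowed by Fact~\ref{fact:basic-BS-closure}(\ref{item:basic-subsequence})) either bounded or strongly unbounded — so that the whole word can be certified as a member of $(a^Bb + a^Sb)^\omega$ rather than something in between. I would handle this by first normalizing, via Lemma~\ref{lemma:finite-mix-top} and the diamond's closure properties, to a form where each $f_i$ is already presented as a mix/concatenation of sub-expressions each of which is ``purely $B$'' or ``purely $S$'' on its $a$-blocks, and then observing that within a single such pure component the block-length sequence is automatically of one type; the finitely many components then combine into a finite mix, and $(a^Bb+a^Sb)^\omega$ is closed under exactly such finite mixing of per-block types because each individual $b$-block only ever gets its $a$-prefix from one component.
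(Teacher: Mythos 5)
Your proposal follows essentially the same route as the paper's (very terse) proof: a structural induction establishing the two pumping properties --- that a witness in $a^*$ can be replaced by one in $a^B+a^S$, and a witness in $(a^*b)^+a^*$ by one in $(a^Bb+a^Sb)^+(a^B+a^S)$ --- using the bounded-or-strongly-unbounded-subsequence dichotomy together with Fact~\ref{fact:basic-BS-closure}. The extra normalization detour via Lemma~\ref{lemma:finite-mix-top} is not needed (the straddling-block issue is already absorbed by the trailing $(a^B+a^S)$ factor and the observation that a sum of two blocks each of type $B$ or $S$ is again of one of these types), but it does no harm.
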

\begin{proof}
  As in Lemma~\ref{lemma:B-pump}, one can  show the following properties of
  a~$BS$-regular language of word sequences~$L$.
\begin{itemize}
\item If~$L$ contains a word sequence in~$a^*$, then it contains one in~$a^B+a^S$.
\item If~$L$ contains a word sequence in~$(a^*b)^+a^*$, then it
  contains one in~$(a^Bb+a^Sb)^+(a^B+a^S)$.
\end{itemize}
The result directly follows.
\end{proof}
\begin{cor}
The complement of~$L=(a^Bb+a^Sb)^\omega$ is not~$\ombs$-regular.
\end{cor}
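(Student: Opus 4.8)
The plan is to argue by contradiction, exploiting the same pumping phenomenon used for Lemma~\ref{lemma:BS-pump} together with the structural closure properties of $BS$-regular languages. Suppose $\Sigma^\omega \setminus L$ were $\ombs$-regular, where $\Sigma = \set{a,b}$. The complement of $L=(a^Bb+a^Sb)^\omega$ contains every word with only finitely many $b$'s, and also contains every word in $(a^*b)^\omega$ whose maximal $a$-blocks form a sequence that is neither bounded nor strongly unbounded — for instance $a b\, a^2 b\, a b\, a^3 b\, a b\, a^4 b\cdots$, which lies in $(a^*b)^\omega$ but in neither $(a^Bb)^\omega$ nor $(a^Sb)^\omega$ nor any mixture, since any infinite subsequence of its block-lengths that is unbounded has a bounded subsequence and vice versa. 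In particular $\Sigma^\omega\setminus L$ contains a word with infinitely many $b$'s.

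First I would apply Lemma~\ref{lemma:BS-pump} to the hypothetical $\ombs$-regular expression for $\Sigma^\omega\setminus L$: since this language contains a word with infinitely many $b$'s, it must contain a word in $(a^Bb+a^Sb)^\omega$. But $(a^Bb+a^Sb)^\omega = L$ by definition, so $\Sigma^\omega\setminus L$ and $L$ share a common element — contradicting that they are complements. This immediately yields the corollary. The only thing one must double‑check is that Lemma~\ref{lemma:BS-pump} is literally applicable: it is stated for arbitrary $\ombs$-regular languages over $\set{a,b}$ containing a word with infinitely many $b$'s, which is exactly our situation, so no adaptation is needed.

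An alternative, slightly more hands‑on route avoids even invoking Lemma~\ref{lemma:BS-pump} and instead reproves the needed instance inline: one exhibits the explicit word $w = (ab)\,(a^2b)\,(ab)\,(a^3b)\cdots$ above, observes $w\notin L$ (its block-length sequence is neither bounded nor strongly unbounded, and the mix $a^Bb+a^Sb$ cannot produce it), hence $w\in\Sigma^\omega\setminus L$; then, by the structural induction underlying Lemma~\ref{lemma:BS-pump}, from a word sequence in $(a^*b)^+a^*$ witnessing $w$ inside the assumed expression one extracts a word sequence in $(a^Bb+a^Sb)^+(a^B+a^S)$, and threading these through the outer $\omega$-power produces a word of $L$ in $\Sigma^\omega\setminus L$. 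I would present the short version via Lemma~\ref{lemma:BS-pump} as the main proof.

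The only genuine obstacle is conceptual rather than technical: one must be confident that $(a^Bb+a^Sb)^\omega$ really is a proper subset of $(a^*b)^\omega$ and that the "oscillating" block-length sequences are excluded — this is precisely the observation (made in the excerpt when discussing the complement of $(a^Bb)^\omega$) that a sequence can be unbounded without being strongly unbounded. Once that point is granted, the contradiction is immediate from the self-referential shape $L=(a^Bb+a^Sb)^\omega$ matched against the pumping lemma, and there are no calculations to grind through.
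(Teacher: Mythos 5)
Your overall strategy is exactly the paper's: exhibit a word with infinitely many $b$'s in the complement of $L$, then apply Lemma~\ref{lemma:BS-pump} to conclude that the complement must meet $L=(a^Bb+a^Sb)^\omega$, a contradiction. The gap is in your witness: the word $ab\,a^2b\,ab\,a^3b\,ab\,a^4b\cdots$ in fact belongs to $L$. Membership in $(a^Bb+a^Sb)^\omega$ does not require the block-length sequence to be bounded or strongly unbounded; by the definition of the mix it only requires that the positions can be split into one part on which the lengths come from a bounded sequence and another part on which they come from a strongly unbounded one. For your word the block lengths are $1,2,1,3,1,4,\dots$; the positions holding the $1$'s form the bounded part, and the remaining positions carry $2,3,4,\dots$, which is strongly unbounded, so the word sequence $\langle ab,a^2b,ab,a^3b,\dots\rangle$ lies in $a^Bb+a^Sb$ and its $\omega$-power lies in $L$. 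Your supporting claim that ``any infinite subsequence of its block-lengths that is unbounded has a bounded subsequence'' is false here: the subsequence $2,3,4,\dots$ is unbounded and has no bounded infinite subsequence. More generally, your description of the complement as the words whose block lengths are ``neither bounded nor strongly unbounded'' mischaracterizes the mix.

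To repair the proof you need a block-length sequence admitting no such two-way split. The paper's witness is
\begin{equation*}
a^1\,b\,a^1ba^2\,b\,a^1ba^2ba^3\,b\,a^1ba^2ba^3ba^4\,b\cdots\,,
\end{equation*}
with block lengths $1,\,1,2,\,1,2,3,\,1,2,3,4,\dots$\ : if the bounded part were bounded by $N$, then the infinitely many blocks of length exactly $N+1$ would all have to be assigned to the strongly unbounded part, which would then contain a constant infinite subsequence and so could not tend to infinity. With that witness the rest of your argument --- including the appeal to Lemma~\ref{lemma:BS-pump}, which is indeed literally applicable --- goes through and coincides with the paper's proof.
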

\begin{proof}
  The complement of~$L$ contains the word
  \begin{equation*}
    a^1~ba^1ba^2~ba^1ba^2ba^3~ba^1ba^2ba^3ba^4b~\dots\ ,
  \end{equation*}
  and consequently, assuming it is~$\omega BS$-regular, one can apply
  Lemma~\ref{lemma:BS-pump} to it.  It follows that the complement
  of~$L$ should intersect~$L$, a contradiction.
\end{proof}


\section{Automata}
\label{section:automata}
In this section we introduce a new \oldhl{automaton model for}
infinite words, which we call \ombs-automata. We show that 
these automata have the same expressive power
as $\ombs$-regular expressions.

We will actually define two \oldhl{models}: \ombs-automata, and
hierarchical \ombs-automata. We present them\oldhl, successively\oldhl, in
Sections~\ref{subsection:normal-automata}
and~\ref{subsection:hierarchical-automata}. 
  Theorem~\ref{theorem:equivalence}, which shows that both models have
  the same expressive power as $\omega BS$-regular expressions, is
  given in Section~\ref{subsection:statement-equivalences}.
  Section~\ref{subsection:statement-equivalences} also contains
  Corollary~\ref{corollary:intersection}, which \ttc{states} that $\omega
  BS$-regular languages are closed \ttc{under} intersection. 
  The proof of
 Theorem~\ref{theorem:equivalence} is presented in
Sections~\ref{section:expression-to-automata}
and~\ref{section:automata-to-expressions}\ttc{.}

\subsection{General form of \ombs-automata}
\label{subsection:normal-automata}

An \emph{\ombs-automaton} is a tuple
$(Q,\Sigma,q_I,\Gamma_B,\Gamma_S,\delta)$,
in which~$Q$ is a finite set of \intro{states}, $\Sigma$ is the
  input alphabet,
$q_I\in Q$ is the \intro{initial state}, \oldhl{and }
$\Gamma_B$ and~$\Gamma_S$ are two disjoint finite sets of
\intro{counters}. We set~$\Gamma=\Gamma_B\cup\Gamma_S$. The
mapping $\delta$ associates with each
letter~$a\in\Sigma$ its \intro{transition relation}
\begin{equation*}
  \delta_a \subseteq Q
  \times \set{\inc, \reset, \epsilon}^\Gamma \times Q\ .
\end{equation*}
 The counters in~$\Gamma_B$
are called \emph{bounding counters}, or \intro{$B$-counters} or
\intro{counters of type~$B$}. The counters in $\Gamma_S$ are called \emph{unbounding
  counters}, or \intro{$S$-counters} or \intro{counters of
  type~$S$}.
Given a counter~$\alpha$ and a
transition~$(p,v,q)$, the transition is called a \intro{reset}
of~$\alpha$ if~$v(\alpha)=\reset$; it is an \intro{increment of~$\alpha$}
if~$v(\alpha)=\inc$.

When the automaton only has
counters of type~$B$, i.e., if $\Gamma_S=\emptyset$
(\oldhl{respectively, } of type~$S$, i.e., if $\Gamma_B=\emptyset$),
then the automaton is called an \intro{$\omega B$-automaton}
(\oldhl{respectively, } an \intro{$\omega S$-automaton}).

A \emph{run} $\rho$ of an \ombs-automaton over a finite or infinite
word  $a_1
a_2 \cdots$ is a sequence of transitions $\rho=t_1 t_2 \cdots$
of same length such
that for all~$i$, $t_i$ belongs to $\delta_{a_i}$,
and the target state of the transition $t_i$ is the same as the source
state of the transition $t_{i+1}$.

Given a counter~$\alpha$, every run~$\rho$ \oldhl{can be uniquely decomposed as}
$$
{\rho=\rho_1t_{n_1}\rho_2 t_{n_2} \dots}
$$
in which, for every $i$, $t_{n_i}$ is a transition that does a reset
  on~$\alpha$, and $\rho_i$ is a subrun (sequence of transitions) that
  does no reset on~$\alpha$.  We denote by~$\alpha(\rho)$ the
 sequence of natural numbers, which on its
$i$-th position has the number of occurrences of increments
of~$\alpha$ in~$\rho_i$.  This sequence can be finite if the counter
is reset only a finite number of times\ttc{, otherwise it is} infinite.  A
\emph{run} $\rho$ over an $\omega$-word is \emph{accepting} if the
source of its first transition is the initial \ttc{state}~$q_I$, and for every
counter $\alpha$, the sequence $\alpha(\rho)$ is infinite and
furthermore, if~$\alpha$ is of type~$B$ then $\alpha(\rho)$ is bounded
and if~$\alpha$ is of type~$S$ then $\alpha(\rho)$ is strongly
unbounded.

\begin{exa}
Consider the following automaton with a single counter
(the counter action is in the parentheses):
\begin{center}
    \begin{picture}(50,28)(0,0)
      \gasset{curvedepth=3,Nw=6,Nh=6}
      \node[Nframe=n](in)(-4,10){}
      \node(0)(10,10){$q$}
      \node(1)(40,10){$p$}
      \drawedge(0,1){$b(\epsilon)$}
      \drawedge(1,0){$b(r)$}

      \drawloop(0){$a( i),b( r)$}
      \drawloop(1){$a(\epsilon),b(\epsilon)$}
      \drawedge[curvedepth=0](in,0){}
    \end{picture}
\end{center}
\oldhl{Assume now that the unique counter in this automaton is of
  type~$B$. We claim that this automaton} recognizes the
language~$L=(a^Bb(a^*b)^*)^\omega$, i.e., the \oldhl{set} of
$\omega$-words of the form~$a^{n_0}ba^{n_1}b\dots$ such
that~$\liminf_i n_i<+\infty$. We only show that the automaton
  accepts all words in $L$, the converse inclusion is shown by using
  similar arguments. Let  $w=a^{n_0}ba^{n_1}b\dots$ be a word in
  $L$. There exists an infinite set~$K\subseteq\nats$ and a natural
number~$N$ such that~$n_{k}<N$ holds for all~$k\in
K$. Without loss of generality we assume that~$0\in K$ (by
replacing~$N$ by~$\max(n_0,N)$). \oldhl{We now construct an accepting run
  of the automaton on the word $w$. This run uses state $q$ in} \ttc{each}
  \oldhl{position $i \in \Nat$} \ttc{such that the distance between the
  last occurrence of $b$ before
  position $i$ and the first occurrence of $b$ after position $i$ is at most $N$.
  For other positions the state $p$ is used and the value of the counter is
  left unchanged.  
  In this way, the value
  of the  counter will never exceed $N$.
  Since furthermore $K$ is infinite, the counter will be reset infinitely many times.
  This proves that the run is accepting.}
  If the counter is of type~$S$, then the
  automaton recognizes the language~$(a^Sb(a^*b)^*)^\omega$.
\end{exa}

Although this is not directly included in the definition, an
\ombs-automaton can simulate a B\"uchi automaton, and this in two
ways: by using either unbounding, or bounding counters \oldhl{(and
  therefore B\"uchi automata are captured by both $\omega S$-automata
  and $\omega B$-automata).} Consider \hhl{then}
a B\"uchi automaton, with final states $F$. One way to simulate this
automaton is to have an \hhl{$\omega B$-automaton} with the same state space, and one
bounding counter, which is reset every time a state from $F$ is
visited, and never incremented.  In this case, the accepting condition
for $\ombs$-automata collapses to visiting $F$ infinitely often, since
a bounded value is assured by the assumption on not doing any
increments. Another way to simulate the B\"uchi automaton is to use an
unbounding counter. Whenever the simulating automaton\hhl{ visits a state in $F$}, it nondeterministically decides to either increment
the unbounding counter, or to reset it. It is easy to see that the
accepting state is seen infinitely often if and only if there is a
policy of incrementing and resetting that satisfies the accepting
condition for unbounding counters.

\medskip

\subsection{Hierarchical automata}
\label{subsection:hierarchical-automata}
\hyphenation{auto-mata}

Hierarchical \ombs-automata are a special case of {$\ombs$}\nobreakdash-\hspace{0pt}{auto}mata
  where a  \hhl{stack-like} discipline is imposed on the counter
  operations. 

An \ombs-automaton is called \emph{hierarchical} if its set of
counters is $\Gamma=\set{1,\ldots,n}$ and whenever a counter $i>1$ is
incremented or reset, the \hhl{counters~$1, \dots, i-1$} are reset.  
\oldhl{Therefore}, in a hierarchical automaton, a transition~$(q,v,r)$ can be of three
forms:
\begin{itemize}
\item $v(1)=\cdots=v(n)=\epsilon$, i.e., no counter is \oldhl{affected} by the transition.  In
      this case we write $\epsilon$ for~$v$.
\item $v(1)=\cdots=v(k)=r$ and $v(k+1)=\cdots=v(n)=\epsilon$, i.e.,
      the maximal \oldhl{affected} counter is $k$, and it is reset by the transition.
      In this case we write $R_k$ for $v$.
\item $v(1)=\cdots=v(k-1)=r$, $v(k)=i$ and $v(k+1)=\cdots=v(n)=\epsilon$., i.e.,
      the maximal \oldhl{affected} counter is $k$, and it is incremented by the transition.
      In this case we write $I_k$ for $v$.
\end{itemize}
It is
convenient to define for a hierarchical automaton its \intro{counter
type}, defined as a word in~$\set{B+S}^*$.
The length of this word is the number of counters; its $i$-th letter
is the type of counter~$i$.

\begin{exa} Consider the following hierarchical $\omega BS$-automaton:
\begin{center}
    \begin{picture}(30,22)(0,6)
      \gasset{curvedepth=3,Nw=6,Nh=6}
      \node[Nframe=n](in)(-4,10){}
      \node(0)(10,10){$q$}
      \drawloop(0){$a({I_1}),b({R_1}),c({I_2}),d({R_2})$}
      \drawedge[curvedepth=0](in,0){}
    \end{picture}
\end{center}
In this picture, we use once more the convention that the resets and increments
are in parentheses. If this automaton has counter type $T_1T_2$
with~$T_1,T_2\in\{B,S\}$, then it recognizes the language
\begin{center}
\ttc{$\left(\left((a^{T_1}b)^*a^{T_1}c\right)^{T_2}(a^{T_1}b)^*a^{T_1}d\right)^\omega\ .$}
\end{center}
\end{exa}

\subsection{Equivalence.}
\label{subsection:statement-equivalences}

The key result concerning the automata is that the hierarchical
  ones are equivalent to the non-hierarchical ones, and that both are
  equivalent to the \hhl{$\omega BS$-regular expressions}. Furthermore, this equivalence also
  holds for the fragments where only $B$-counters or $S$-counters are
  allowed.
\begin{thm}\label{theorem:equivalence}
\hhl{Let $T \in \set{BS, B, S}$. 
  The following are equivalent for  a  language $L \subseteq \Sigma^\omega$.
\begin{enumerate}
\item $L$ is $\omega T$-regular.
\item $L$ is recognized by a hierarchical $\omega T$-automaton.
\item $L$ is recognized by an $\omega T$-automaton.
\end{enumerate}}
\end{thm}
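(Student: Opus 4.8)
The plan is to prove the chain of implications $(1)\Rightarrow(2)\Rightarrow(3)\Rightarrow(1)$, keeping track of the type parameter $T$ throughout so that the fragments are handled uniformly. The implication $(2)\Rightarrow(3)$ is immediate, since a hierarchical $\omega T$-automaton is by definition a special kind of $\omega T$-automaton. So the real content is in $(1)\Rightarrow(2)$, turning expressions into hierarchical automata, and $(3)\Rightarrow(1)$, turning general automata back into expressions.

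For $(1)\Rightarrow(2)$, I would proceed by structural induction on the $BS$-regular expression defining each $f_i$, and then assemble the outer expression $\sum_i e_i f_i^\omega$ using a standard $\omega$-automaton-style construction for the finite parts $e_i$ and the $\omega$-power. The atomic cases ($\emptyset,\varepsilon,\fepsilon,a$) are direct. For concatenation, mix, and the three exponents, the key idea is that the hierarchical discipline mirrors the nesting of exponents: an outer $*$, $B$, or $S$ around a subexpression corresponds to introducing one new top counter whose increments count iterations of that subexpression and whose resets mark block boundaries, while the inner automaton's counters become the lower-numbered counters that get reset whenever the new counter is touched. The type of the new counter is $B$ for $e^B$, $S$ for $e^S$, and for $e^*$ the iteration count need not be constrained — which requires care, since a hierarchical automaton's counters are all constrained; here one uses that $e^*$ can be simulated via a B\"uchi-style subautomaton (as in the simulation of B\"uchi automata by $\omega B$- or $\omega S$-automata described in the excerpt) or absorbed because the relevant fact (Fact~\ref{fact:basic-BS-closure}, closure under subsequences) lets one replace a $*$ by a $B$ or $S$ without changing the language once it sits under an $\omega$. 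The mix operator is handled by a product construction on states that lets each position independently follow one of the two sub-automata. Throughout, one must verify that the resulting counter operations obey the stack discipline: when building $e^B$ from an automaton for $e$ whose counters are $1,\dots,n$, the new counter becomes $n+1$, and every transition of the old automaton is left as-is while the new reset/increment transitions at block boundaries also reset $1,\dots,n$; this is exactly the hierarchical format.

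For $(3)\Rightarrow(1)$, the natural route is to first reduce a general $\omega T$-automaton to a hierarchical one — i.e., to prove $(3)\Rightarrow(2)$ — and then extract an expression from the hierarchical automaton by a state-elimination / Kleene-theorem argument adapted to the counter structure. Flattening a general automaton into a hierarchical one is the delicate part: counters of an arbitrary $\omega T$-automaton are not organized in a stack, so one has to linearly order the counters and, using the acceptance semantics (each counter sequence must be infinite and either bounded or strongly unbounded, depending on type), argue that one may restructure resets so that resetting counter $i$ forces a reset of all lower counters, without changing the accepted language. This typically uses a pumping/Ramsey-type argument on accepting runs together with the observation that inserting extra resets on lower counters can only make their associated sequences "more bounded" and keeping an $S$-counter strongly unbounded is preserved because we never insert extra increments. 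Once the automaton is hierarchical, reading off a $BS$-regular expression is a structural induction on the number of counters: the top counter of type $T_n$ partitions an accepting run into blocks between its resets, each block being a run of the sub-automaton on counters $1,\dots,n-1$, which by induction is described by a $T$-regular expression $g$; then the whole is captured by an expression built from $g$ with the exponent corresponding to $T_n$, wrapped appropriately, and finally the $\omega$ over the coarsest block structure gives the $\omega T$-regular expression. The base case (zero counters) is just B\"uchi's theorem.

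The main obstacle I expect is the flattening step $(3)\Rightarrow(2)$: proving that an arbitrary configuration of unordered $B$- and $S$-counters can be simulated by a hierarchical automaton. One has to be careful that the restructuring does not mix $B$- and $S$-behaviour incorrectly and that it preserves both the "infinitely many resets" requirement and the asymptotic condition for each counter. I would isolate this as a separate lemma, prove it by a product construction that tracks, for each counter, how long since its last reset relative to the ordering, and then invoke a Ramsey/Simon-factorization argument on accepting runs to show the hierarchical restructuring is sound and complete; the expression-extraction and the expression-to-automaton directions, while laborious, are essentially classical Kleene-theorem bookkeeping once the counter discipline is in hierarchical form.
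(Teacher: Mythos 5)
Your outline of $(1)\Rightarrow(2)\Rightarrow(3)\Rightarrow(1)$ identifies the right trivial implication, and your $(1)\Rightarrow(2)$ is broadly the paper's route: expressions are compiled into hierarchical word sequence automata by induction. But the paper first rewrites every $BS$-regular expression into a normal form (a mix of ``pure'' expressions built from letters, concatenation and the exponents $+,B+,S+$) precisely because two hierarchical automata need not have compatible counter types under concatenation or mix; it then proves a counter-type expansion lemma (Lemma~\ref{lemma:expand-counter-type} and Corollary~\ref{cor:counter-type-prefix}) so that all automata involved can be assumed to have prefix-comparable counter types. Your product construction for mix runs straight into this incompatibility, which the paper explicitly lists as one of the two main difficulties, so some such device is not optional. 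That said, this direction is repairable.

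The genuine gap is in $(3)\Rightarrow(1)$. You propose to first flatten a general $\omega T$-automaton into a hierarchical one, justified by the claim that inserting extra resets on lower counters is harmless. It is not. For an $S$-counter, acceptance requires the number of increments between consecutive resets to tend to infinity; inserting a reset inside a long increment block creates a block with few increments and can turn an accepting run into a rejecting one, so the language may shrink. For a $B$-counter the opposite failure occurs: extra resets only decrease the per-block counts, so a run that was rejecting because the counter grew unboundedly may become accepting, and the language may grow. Neither direction of the required language equality survives the naive restructuring, and the appeal to a Ramsey or factorization argument is not developed enough to repair it. The paper deliberately avoids this step: it extracts an expression directly from the \emph{non-hierarchical} automaton by writing regular expressions $\mathit{Pref}_q$ and $\mathit{Loop}_q$ for partial runs and intersecting with languages such as $(I^BR)^+I^B \itl A^*$ and $I^eR(I^SR)^*I^f \itl A^*$ that encode the acceptance condition counter by counter (Lemmas~\ref{lemma:interleaving} and~\ref{lemma:accepting-runs}), eliminating the resulting external constraints by structural induction. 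Hierarchical automata are then obtained only by going back through expressions via $(1)\Rightarrow(2)$; indeed, the paper's conclusion poses a direct non-hierarchical-to-hierarchical construction as an open problem, so the step you treat as a routine lemma is exactly the part that cannot be taken for granted.
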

Before establishing this result, we mention an important application. 
\begin{cor}\label{corollary:intersection}
  The classes of $\ombs$-regular, $\omega B$-regular and $\omega
  S$-regular languages are closed under intersection.
\end{cor}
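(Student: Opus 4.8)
The plan is to pass from expressions to automata, take a synchronous product, and come back. By Theorem~\ref{theorem:equivalence} it suffices to prove: for $T\in\set{BS,B,S}$, if $L_1$ and $L_2$ are recognized by $\omega T$-automata, then so is $L_1\cap L_2$; applying the theorem once more (in the direction ``$\omega T$-automaton $\Rightarrow$ $\omega T$-regular'') then yields the statement for $\omega T$-regular languages. Concretely, I would fix $\omega T$-automata $A_j=(Q_j,\Sigma,q_I^j,\Gamma_B^j,\Gamma_S^j,\delta^j)$ recognizing $L_j$ for $j=1,2$, after renaming counters so that the counter sets $\Gamma^j=\Gamma_B^j\cup\Gamma_S^j$ are disjoint. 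I would then build the product automaton $A$ with state set $Q_1\times Q_2$, initial state $(q_I^1,q_I^2)$, bounding counters $\Gamma_B^1\cup\Gamma_B^2$, unbounding counters $\Gamma_S^1\cup\Gamma_S^2$, and, for each letter $a$, the transition $((p_1,p_2),v,(q_1,q_2))$ in $\delta_a$ exactly when the restriction of $v$ to $\Gamma^j$ is the counter action of some transition of $\delta^j_a$ from $p_j$ to $q_j$, for both $j=1,2$. Note $A$ is again an $\omega T$-automaton: if $T=B$ then $\Gamma_S^1=\Gamma_S^2=\emptyset$, so $A$ has no $S$-counter, and symmetrically for $T=S$.

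The remaining work is the routine verification that $A$ recognizes $L_1\cap L_2$. A run $\rho$ of $A$ over an $\omega$-word $w$ projects to a run $\rho_1$ of $A_1$ over $w$ and a run $\rho_2$ of $A_2$ over $w$, and conversely every pair of runs of $A_1$ and $A_2$ over $w$ (both necessarily of length $\omega$) arises from a unique run of $A$; this is immediate from the definition of $\delta_a$ and from the fact that a product transition composes with the next one iff each of its two projections does. Moreover, by construction a transition of $A$ resets, respectively increments, a counter $\alpha\in\Gamma^j$ precisely when its $A_j$-projection does (the $\epsilon$-actions being irrelevant); hence the decomposition $\rho=\rho'_1 t_{n_1}\rho'_2 t_{n_2}\cdots$ along $\alpha$ projects to the corresponding decomposition of $\rho_j$, and $\alpha(\rho)$ equals $\alpha$ applied to the $j$-th projection of $\rho$. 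Consequently $\rho$ satisfies the acceptance condition of $A$ iff $\rho_1$ and $\rho_2$ satisfy the acceptance conditions of $A_1$ and $A_2$, so $w$ is accepted by $A$ iff $w\in L_1$ and $w\in L_2$.

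I do not expect a genuine obstacle here; the only point worth flagging is that this product construction does \emph{not} preserve the hierarchical shape, since interleaving the counter operations coming from the two factors destroys the stack discipline on $\set{1,\dots,n}$. That is precisely why the argument is carried out with general $\omega T$-automata and then transported back to expressions through Theorem~\ref{theorem:equivalence}, rather than attempting to stay within the hierarchical model.
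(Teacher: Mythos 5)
Your proposal is correct and is essentially the paper's own proof: the paper likewise invokes Theorem~\ref{theorem:equivalence} and performs the standard product construction on the automata in their non-hierarchical form, which is exactly the construction you spell out (including the reason for avoiding the hierarchical model). No issues.
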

\begin{proof}
  The corresponding automata (in their non-hierarchical form)
  are closed under intersection using the standard product construction.
\end{proof}

The implication from~(1) to~(2) is straightforward since hierarchical
\oldhl{automata are a special case of general automata.}
In the following sections, we show the two
difficult implications in Theorem~\ref{theorem:equivalence}: that
expressions are captured by hierarchical automata
(Section~\ref{section:expression-to-automata}) and that automata are
captured by expressions
(Section~\ref{section:automata-to-expressions}).
First, we introduce in Section~\ref{subsection:sequence-automata}
the notion of word sequence automata.

\subsection{Word sequence automata.}
\label{subsection:sequence-automata}

In proving the equivalence of hierarchical \ombs-automata and
\ombs-regular languages, we will use a form of automaton that runs
over word sequences.  A \emph{word sequence $BS$-automaton} $\Aa$ is
\hhl{defined as an \ombs-automaton}, except that we add a set of
accepting states, i.e., it is a tuple
$(Q,\Sigma,q_I,F,\Gamma_B,\Gamma_S,\delta)$ in
which~$Q,\Sigma,q_I,\Gamma_B,\Gamma_S,\delta$ are as for~$\omega
BS$-automata, and~$F\subseteq Q$ is \hhl{a} set of \intro{accepting states}.

A word sequence $BS$-automaton~$\Aa$ \intro{accepts}  
an infinite sequence of finite words
$\vec u$ if there is a sequence $\vec \rho$ of finite runs
of~$\Aa$ such that:
\begin{itemize}
\item \oldhl{f}or all $i \in \Nat$, the run $\rho_i$ is a run over the word
  $u_i$ that begins in the initial state and ends in an accepting
  state;
\item for every counter~$\alpha$ of type~$B$, the sequence of natural numbers
	$$\max(\alpha(\rho_0)),\max(\alpha(\rho_1)),\dots$$
	(in which~$\max$ is applied to a finite sequence of natural
	numbers with the obvious meaning) is bounded;
\item for every counter~$\alpha$ of type~$S$, the sequence of natural numbers
	$$\min(\alpha(\rho_0)),\min(\alpha(\rho_1)),\dots$$
	(in which~$\min$ is applied to a sequence of natural numbers
	with the obvious meaning) is strongly unbounded.
\end{itemize}
\hhl{The  variants
of \intro{ word sequence $B$-automata}
and \intro{ word sequence $S$-automata} are defined as expected.} \hhl{The same goes for
the hierarchical automata.}

\hyphenation{auto-maton}

An equivalent way for describing the acceptance of
a word sequence by a $BS$\nobreakdash-\hspace{0pt}automaton~$\Aa$ is \oldhl{as follows.}
Consider the $\omega BS$-automaton~$\Aa'$ obtained from~$\Aa$
by a) removing the set of accepting states~$F$, b) adding a new
symbol~$\square$ to the alphabet, and c) setting~$\delta_\square$
to contain the transitions $(q,R,q_I)$ for~$q\in F$ with
$R(\alpha)=\reset$ for every counter~$\alpha$.
Then~$\Aa$ accepts the word sequence~$\langle v_0,v_1,\dots\rangle$
iff~$\Aa'$ accepts the $\omega$-word~\oldhl{$v_0\square v_1 \square\dots$}


\newcommand{\purify}[1]	{\mathit{purify}({#1})}
\newcommand{\clean}[1]	{\mathit{clean}({#1})}

\subsection{From expressions to hierarchical automata}
\label{section:expression-to-automata}

This section is devoted to showing one of the implications in
Theorem~\ref{theorem:equivalence}:
\begin{lem}\label{lemma:expression-to-automata}
  Every \ombs-regular
  (\oldhl{respectively, } $\omega B$-regular, $\omega S$-regular)
  language can be recognized by a hierarchical \ombs-automaton
  (\oldhl{respectively, } $\omega B$-regular,  $\omega S$-regular).
\end{lem}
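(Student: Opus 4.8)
The plan is a structural induction on the $\omega BS$-regular expression $e = \sum_{i=1}^n e_i \cdot f_i^\omega$. By the closure of hierarchical automata under union (which follows from the non-hierarchical case plus the standard disjoint-union construction, preserving the hierarchical discipline by placing the counter sets side by side in any fixed order), it suffices to handle a single term $e_0 \cdot f_0^\omega$. The finite prefix $e_0$ is a classical regular expression and is handled by a counter-free (hence trivially hierarchical) B\"uchi-style prefix gadget, concatenated in front; so the real content is to build, from a $BS$-regular expression $f$, a hierarchical word sequence $BS$-automaton for $\sem f$, and then to turn the $\omega$-power into an $\omega BS$-automaton via the $\square$-encoding described in Section~\ref{subsection:sequence-automata} (reading $v_0 \square v_1 \square \cdots$, with $\square$-transitions resetting every counter and returning to the initial state). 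Since each $f_i$ is $T$-regular for $T \in \set{B,S,BS}$, all counters introduced will have the appropriate type, giving the ``respectively'' clauses for free.

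So the heart of the argument is: \emph{every $T$-regular expression $f$ is recognized by a hierarchical word sequence $T$-automaton.} I would prove this by induction on $f$, maintaining as an invariant that the automaton is hierarchical with counter type in $\set{B,S}^*$ matching the multiset of exponents used. The atomic cases $\emptyset, \varepsilon, \fepsilon, a$ are immediate (counter-free automata with one or two states). For $f \cdot g$, take the automata for $f$ and $g$ and compose them; the counters of $g$ must be pushed \emph{below} those of $f$ in the hierarchical order (renumbered $1,\dots,k_g,k_g+1,\dots,k_g+k_f$), and since concatenation of word sequences is componentwise, glueing accepting states of the $f$-part to the initial state of the $g$-part does the job. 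For the mix $f + g$, one uses the product construction on a single component word, taking the union of the two transition-label sets letterwise (this is where Fact~\ref{fact:basic-BS-closure}, $L+L=L$, is implicitly convenient), again interleaving the two disjoint counter blocks in a fixed total order.

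The genuinely delicate case is the exponent $e^\#$ for $\# \in \set{*, B, S}$: here one must group a word sequence $\vec u \in \sem e$ into blocks according to a non-decreasing $g$, and a new counter of type $\#$ must be added that counts the block sizes. Concretely, take the word sequence automaton $\Aa_e$ for $\sem e$, add a fresh \emph{top} counter $n+1$; running on a single block $u_{g(j)}\cdots u_{g(j+1)-1}$ means iterating $\Aa_e$ through several of its own accepting passes (one per original component word), and on each such internal ``restart'' one increments the new counter and resets all of $\Aa_e$'s counters (which is exactly the hierarchical stack discipline, since the new counter is the maximal one). The block boundary corresponds to this block's own accepting state. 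The asymptotic acceptance condition on the new counter—bounded for $B$, strongly unbounded for $S$, unconstrained for $*$—captures precisely ``$g$ of bounded difference'', ``$g'$ strongly unbounded'', ``$g$ arbitrary''; and one must check the corner cases where $g$ is finite (finitely many blocks) or where components of $\vec u$ are empty, which is where the $\fepsilon$-handling and Fact~\ref{fact:basic-BS-closure}(\ref{item:basic-subsequence}) are used. I expect this exponent case—getting the bookkeeping of ``inner iteration of $\Aa_e$ versus the new top counter'' exactly right, including the empty-component and finite-length subtleties, and verifying that the new counter's $\max$/$\min$ semantics over blocks matches the $g'$-behavior demanded by the definition of $L^B$, $L^S$—to be the main obstacle; the other cases are routine automata glueing that merely requires care with the hierarchical renumbering.
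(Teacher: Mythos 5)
Your overall architecture (reduce to a single term, build a hierarchical word sequence automaton by structural induction on the $BS$-expression, then pass to the $\omega$-power via the $\square$-encoding) is sensible, but the induction breaks on exactly the point the paper flags as the main obstacle: you cannot combine two hierarchical automata by juxtaposing their \emph{disjoint} counter blocks in a fixed order. In a hierarchical automaton every increment or reset of counter $i$ forcibly resets counters $1,\dots,i-1$. So in your concatenation construction, while the $f$-part of a component word is read, every counter operation of the $f$-block resets all of $g$'s (lower-numbered) counters, appending spurious zeros to their value sequences; for an $S$-counter of $g$ this forces $\min(\alpha(\rho_i))=0$ and the automaton rejects word sequences that belong to $\sem{f\concat g}$ (reversing the order of the blocks just mirrors the problem). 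The mix case is worse: under the per-component-choice reading of your construction, a component taken from the $g$-side leaves $f$'s counters idle, so again $\min(\alpha(\rho_i))=0$ for $f$'s $S$-counters; concretely, your automaton for $a^S+b^S$ would reject $\langle a,b,a^2,b^2,a^3,b^3,\dots\rangle$, which is in the language. Spurious or missing resets are harmless for $B$-counters but fatal for $S$-counters, so the argument fails precisely in the $S$ and $BS$ cases.

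The paper's proof is organized around avoiding both problems. First, it never builds a word sequence automaton for a mix at all: Lemma~\ref{lemma:pure-decompositon} rewrites every (cleaned) $BS$-expression into a normal form $L_1+\dots+L_n$ with the mix hoisted to the top level and each $L_i$ \emph{pure} (built only from $a$, $\bar a$, $\concat$ and the non-empty exponents $+,B+,S+$, which also disposes of the empty-word corner cases you defer); the top-level mix is then handled only under the $\omega$-power, by a different argument (Lemma~\ref{lemma:squeeze-the-mix}). Second, for concatenation the paper does not stack disjoint counter blocks but makes the two automata \emph{share} counters after arranging that one counter type is a prefix of the other (Corollary~\ref{cor:counter-type-prefix}). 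That corollary rests on Lemma~\ref{lemma:expand-counter-type}, which is the technical content your sketch is missing: a fresh $B$-counter can be inserted anywhere (reset it constantly), but a fresh $S$-counter cannot be inserted --- one can only split an existing $S$-counter into two nested copies, resolved by the $\lceil\sqrt{n}\rceil$ spacing argument. Without a substitute for these two devices, the inductive cases for $\concat$ and $+$ do not go through; your exponent case, by contrast, is essentially the paper's construction for $+$, $B+$ and $S+$ and is fine once emptiness has been dealt with.
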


\hhl{ There are two main difficulties:}
\begin{itemize}
\item Our word sequence automata do not have
  $\varepsilon$-transitions, which are used in the proof for finite
  words. Instead of introducing a notion of $\varepsilon$-transition
  and establishing that such transitions can be eliminated from automata,
  we directly work on word sequence automata
  without $\varepsilon$-transitions.
\item When taking the mix or the concatenation of two languages $L,K$ defined by
  hierarchical word sequence automata, there are technical
  difficulties with combining the counter types of the automata for
  $L$ and $K$.
\end{itemize}
\hhl{We overcome  these difficulties by first rewriting an expression into
normal  form before compiling it into a hierarchical word sequence
automaton}.  The basic idea is that we move the mix $+$ to the top
level of the expression, and also remove empty words and empty
iterations.  To enforce that no empty
words occur, we use the exponents~$L^+$, $L^{S+}$
and~$L^{B+}$ which correspond to~$L\concat L^*$,
$L\concat L^S$ and~$L\concat L^{B}$, respectively.

We say that a $BS$-regular expression is \intro{pure} if it is
constructed solely with the single letter constants~$a$ and~$\bar a$,
concatenation $\concat$, and the exponents $+$, $B+$ and $S+$.
We say a $BS$-regular expression is in \emph{normal form} if
it is a mix $e_1 + \cdots + e_n$ of pure $BS$-regular expressions $e_1,\ldots,e_n$.
An $\omega BS$-regular expression is in \emph{normal form} if
all the $BS$-regular expressions in it are in normal form.

In Section~\ref{sec:conv-an-expr}, we show that every $BS$-regular
language (with no occurrence of $\varepsilon$) can be described
by a $BS$-regular expression in normal form.
Then, in Section~\ref{sec:from-normal-form}, we show that every
$\omega BS$-regular expression in normal form can be compiled into a hierarchical word sequence automaton.

\subsubsection{Converting an expression into normal form}
\label{sec:conv-an-expr}

Given a $BS$-regular language~$L$, let us define~$\clean L$ to be the
set of word sequences in $L$ that have nonempty words on all
coordinates. Remark that, thanks to Fact~\ref{fact:basic-BS-closure},
$L^\omega$ is the same as $(\clean L)^\omega$.
Therefore, we only need to work on sequence languages of the form
$\clean L$.
Using
Fact~\ref{fact:basic-BS-closure} one obtains without difficulty:
\begin{fact}\label{fact:clean}
\hhl{For every $BS$-regular language~$L$, either $L=\clean L$, 
  $L=\fepsilon+\clean L$, or $L=\varepsilon+\clean L$.}
\end{fact}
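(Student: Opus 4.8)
The plan is to prove the claim of Fact~\ref{fact:clean} by a straightforward structural induction on the $BS$-regular expression $e$ defining $L$, using Fact~\ref{fact:basic-BS-closure} throughout. The statement to maintain is exactly the trichotomy: $L = \clean L$, or $L = \fepsilon + \clean L$, or $L = \varepsilon + \clean L$. Intuitively, the first case says $L$ contains no word sequence with an empty coordinate; the second says the ``empty-coordinate behaviour'' of $L$ is finite (one may freely delete coordinates or replace them by $\varepsilon$, but only finitely often, and sequences stay finite); the third says it is fully unbounded (one may do so on infinitely many coordinates, including the all-$\varepsilon$ infinite sequence). Note these three cases are not mutually exclusive when $L = \emptyset$-ish degenerates, but for the induction it suffices to produce \emph{one} of the three representations.

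For the base cases: $\sem{\emptyset} = \{\langle\rangle\}$, which equals $\fepsilon + \clean{\emptyset}$ since $\clean{\{\langle\rangle\}} = \{\langle\rangle\}$ and $\fepsilon + \{\langle\rangle\} = \sem{\fepsilon}$ --- wait, more carefully: $\sem{\emptyset}$ is the singleton $\{\langle\rangle\}$ and is of the first form $L = \clean L$ (it has no nonempty coordinates to lose). For $a$, every coordinate is the nonempty word $a$, so $\sem a = \clean{\sem a}$; first form. For $\bar a = \fepsilon \concat \sem a$, the nonempty-coordinate sequences give $\clean{\sem{\bar a}} = \{\langle\rangle, \langle a\rangle, \langle a,a\rangle,\dots\}$ and one checks $\sem{\bar a} = \fepsilon + \clean{\sem{\bar a}}$; second form. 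For $\fepsilon$: $\sem\fepsilon = \fepsilon + \{\langle\rangle\}$ is the second form with $\clean{\sem\fepsilon} = \{\langle\rangle\}$. For $\varepsilon$: $\sem\varepsilon = \varepsilon + \{\langle\rangle\}$ is the third form.

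For the inductive step I would case on the outermost operator and on which of the three forms each argument takes, using that $\clean{\cdot}$ interacts well with the operators: $\clean{K \concat L} = \clean K \concat \clean L$ (a concatenated coordinate $u_iv_i$ is empty iff both $u_i,v_i$ are), $\clean{L^*} = \clean{L^+} $-type identities, etc. Combined with the absorptions $\fepsilon + \fepsilon = \fepsilon$, $\varepsilon + \varepsilon = \varepsilon$, $\fepsilon + \varepsilon = \varepsilon$ (which all follow from Fact~\ref{fact:basic-BS-closure}(1) applied to the $BS$-regular languages $\sem\fepsilon, \sem\varepsilon$) and the distributivity of mix over the operators, each of the finitely many cases collapses to one of the three target forms; for instance $(\fepsilon + K)(\fepsilon + L) = \fepsilon + \clean K \clean L$ after expanding and absorbing, while if either factor is of the $\varepsilon$-form the product is of the $\varepsilon$-form. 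The mix $K + L$ is handled by the same absorption table. For the exponents $L^*, L^B, L^S$: if $L = \clean L$ then the result may still acquire empty coordinates (a block $u_{g(i)}\dots u_{g(i+1)-1}$ can be empty when $g(i) = g(i+1)$), and one checks that $L^*$ then has the $\varepsilon$-form while $L^B, L^S$ have the $\varepsilon$-form or the first form according to whether arbitrarily long runs of empty blocks are allowed; these are precisely the facts already used implicitly in Fact~\ref{fact:basic-BS-closure}, so they go through.

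The main obstacle is not any single step but the bookkeeping: the case analysis is quadratic (three forms $\times$ three forms) for each binary operator, and one must be careful that in forms like $\fepsilon + \clean L$ the language $\clean L$ is genuinely ``clean'' after the operation, i.e.\ that $\clean{\,\cdot\,}$ commutes with the operator as claimed rather than merely being contained in it. Verifying $\clean{K\concat L} = \clean K \concat \clean L$ and the analogous identity for the exponents --- where one has to argue that a clean witness on the outside can always be realized by a clean witness on the inside, which relies again on Fact~\ref{fact:basic-BS-closure}(2) (closure under subsequences) to discard the empty inner words --- is the only place requiring genuine care. Everything else is mechanical, which is why the authors relegate it to ``one obtains without difficulty''.
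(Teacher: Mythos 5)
Your overall strategy (structural induction maintaining the trichotomy) is the natural one, and the paper itself offers no proof beyond ``one obtains without difficulty from Fact~\ref{fact:basic-BS-closure}''. However, your inductive step for concatenation rests on an identity that is false, and it is precisely the one you single out as ``the only place requiring genuine care''. It is not true that $\clean{K\concat L}=\clean K\concat\clean L$: a coordinate $u_iv_i$ of the concatenation is nonempty as soon as \emph{one} of $u_i,v_i$ is, so a clean sequence of $K\concat L$ need not arise from clean sequences of $K$ and $L$. Concretely, take $K=L=\sem{a+\fepsilon}$, the word sequences whose coordinates lie in $\set{a,\varepsilon}$ with only finitely many $\varepsilon$'s. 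Then $\clean K=\clean L=\sem a$, so $\clean K\concat\clean L$ consists only of all-$aa$ sequences; but $K\concat L$ contains the clean sequence $\langle a,aa,aa,\dots\rangle$, obtained by concatenating $\langle \varepsilon,a,a,\dots\rangle$ with $\langle a,a,a,\dots\rangle$. The same example refutes your derived rule $(\fepsilon+\clean K)\concat(\fepsilon+\clean L)=\fepsilon+\clean K\concat\clean L$, since $\langle a,aa,aa,\dots\rangle$ does not belong to the right-hand side. The paper itself signals the failure: in the proof of Lemma~\ref{lemma:pure-decompositon}, already the case ``$K$ clean, $M=\fepsilon+\clean M$'' yields $\clean{K\concat M}=\overline{\clean K}+\clean K\concat\clean M$, with an extra term that your commutation rule cannot produce.

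The trichotomy itself is true and your induction can be repaired, but not by computing $\clean{K\concat L}$ explicitly. The inductive step only needs to show that $K\concat L$ equals one of $\clean{K\concat L}$, $\fepsilon+\clean{K\concat L}$, $\varepsilon+\clean{K\concat L}$, with $\clean{K\concat L}$ treated as an abstract object. For that you must argue, for each combination of forms of $K$ and $L$, that every sequence of $K\concat L$ with empty coordinates is obtained from some clean sequence \emph{of $K\concat L$ itself} by blanking the appropriate (finitely or infinitely many) coordinates: given $\vec w=\vec u\concat\vec v$ with $\vec u$ arising from $\vec u'\in\clean K$ by blanking some coordinates, and likewise $\vec v$ from $\vec v'$, replace $w_i$ by $u_i'v_i'$ only where $w_i=\varepsilon$, and check that the result is clean and still in $K\concat L$ because the mix allows un-blanking any subset of the blanked coordinates of $\vec u$ and $\vec v$ independently. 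A similar adjustment is needed for the exponents, where an empty block of $L^*$ must be re-filled by a nonempty one. None of this is in your write-up, so as it stands the proof does not go through.
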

 The following lemma concludes the conversion into normal
form:
\begin{lem}\label{lemma:pure-decompositon}
  For every $BS$-regular language $L$, $\clean L$ can be rewritten as
  $L_1+\dots+L_n$, where each~$L_i$ is pure.
\end{lem}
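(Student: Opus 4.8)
The proof will be by structural induction on the $BS$-regular expression defining $L$. The key invariant to maintain is not merely that $\clean L = L_1 + \cdots + L_n$ with each $L_i$ pure, but something slightly stronger that allows the induction to go through: namely that $\clean L$ can be written as a mix of pure expressions. The subtlety is that ``pure'' expressions are built only from the constants $a$, $\bar a$, concatenation, and the exponents $+$, $B+$, $S+$ --- so in particular a pure expression never produces the empty word in any coordinate, and never produces an empty block under an exponent. This is exactly why we pass to $\clean L$ first, and why Fact~\ref{fact:clean} is needed: it tells us that the only ways $\varepsilon$ or $\fepsilon$ can enter are as an additive (mix) summand, which we can peel off.

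\medskip
\noindent\textbf{Key steps.} First I would handle the base cases: for $L = \sem{a}$ we have $\clean L = \sem{a^+} \cup \sem{\text{(infinite sequence of }a\text{'s)}}$, which is described by the pure expression $a^+$ together with the bookkeeping that a pure expression's semantics already includes infinite sequences (one rereads the semantics: $\sem{a}$ contains all word sequences built with $a$, so $\clean{\sem a}$ is all sequences of nonempty words over $a$, i.e.\ $\sem{a^+}$). The cases $\emptyset$, $\varepsilon$, $\fepsilon$ are trivial since cleaning them yields $\{\langle\rangle\}$, which is (the empty mix, or) handled degenerately. For the inductive step, suppose $\clean K = K_1 + \cdots + K_m$ and $\clean L = L_1 + \cdots + L_n$ with all summands pure.
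For the \emph{mix} $K + L$: using Fact~\ref{fact:basic-BS-closure}(1) and distributivity of $+$ over $+$ one checks $\clean{K+L}$ is obtained by mixing $\clean K$ and $\clean L$ (a short argument handling the coordinates where one sequence is shorter, via Fact~\ref{fact:clean} applied to $K$ and $L$), giving the mix $K_1 + \cdots + K_m + L_1 + \cdots + L_n$.
For \emph{concatenation} $K \concat L$: the issue is that $K$ and $L$ may individually contribute empty words, but $\clean{K \concat L}$ only needs the \emph{product} $u_iv_i$ to be nonempty. Using Fact~\ref{fact:clean}, write $K = (\text{possibly } \fepsilon \text{ or } \varepsilon) + \clean K$ and similarly for $L$, distribute the concatenation over the mix, and observe that each resulting term is either of the form $\clean K \concat \clean L$, or $\clean K \concat \sem\varepsilon$-type terms which by Fact~\ref{fact:basic-BS-closure}(2) reduce to (sub)sequence languages already covered by $\clean K$, etc.; finally $\clean{K_i \concat L_j}$ for pure $K_i, L_j$ is just $K_i \concat L_j$, which is pure.
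For the exponents $e^*$, $e^B$, $e^S$: here the block-grouping definition means a block $u_{g(i)}\cdots u_{g(i+1)-1}$ can be empty only if it is the concatenation of zero words or of empty words; restricting to $\clean{L^*}$ forces every block nonempty. Write $\clean L = L_1 + \cdots + L_n$; then a sequence in $\clean{L^*}$ is obtained by concatenating consecutive blocks each of which is itself a nonempty concatenation of members of $\clean L$, i.e.\ a member of $\clean L \concat \clean L^*$-style iteration. The point is that $(\text{mix of pures})^+$ and $(\text{mix of pures})^{B+}$, $(\text{mix of pures})^{S+}$ can themselves be pushed into a mix of pure expressions --- this requires distributing the exponent over the mix, which is the content of a sublemma: $(L_1 + \cdots + L_n)^{B+}$ equals a mix of expressions of the form $L_{i_1}\concat\cdots$, more precisely one shows $(\sum_i L_i)^{B+} = \sum_{\text{nonempty } f} (\text{pure term built from the } L_i)$, exploiting Fact~\ref{fact:basic-BS-closure} once more.

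\medskip
\noindent\textbf{Main obstacle.} The hard part is precisely this last point: commuting an exponent ($*$, $B$, or $S$) past a mix, i.e.\ showing $(L_1 + \cdots + L_n)^{B}$ (and the $*$, $S$ variants) can be re-expressed as a mix of pure expressions. Naively, iterating a mix lets different iterations draw from different $L_i$'s \emph{and} lets the mix operate within a single word; disentangling these, while keeping track of the ``bounded difference'' / ``strongly unbounded difference'' conditions on the grouping sequence $g$ (which interact delicately with the $B$ and $S$ exponents --- one must verify that the asymptotic condition is preserved when a block is split across several pure summands), is where the real combinatorial work lies. I expect this to be isolated as an auxiliary claim, proved by a further structural argument, and the rest of Lemma~\ref{lemma:pure-decompositon} is then a routine induction assembling the pieces, invoking Fact~\ref{fact:basic-BS-closure} and Fact~\ref{fact:clean} repeatedly and mostly without comment.
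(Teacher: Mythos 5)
Your plan is the same as the paper's proof: a structural induction, using Fact~\ref{fact:clean} to peel off the $\varepsilon$ and $\fepsilon$ summands, rewriting $\clean{K^*}$, $\clean{K^B}$, $\clean{K^S}$ as $(\clean K)^+$, $(\clean K)^{B+}$, $(\clean K)^{S+}$, and then distributing the exponent over the mix of pure expressions supplied by the induction hypothesis. The problem is that this last step --- which you correctly single out as ``where the real combinatorial work lies'' --- is exactly the point at which you stop: you assert that $(L_1+\cdots+L_n)^{+}$, $(L_1+\cdots+L_n)^{B+}$ and $(L_1+\cdots+L_n)^{S+}$ ``can themselves be pushed into a mix of pure expressions'' and defer this to an auxiliary claim that is never proved. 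That claim \emph{is} the lemma; everything surrounding it (the base cases, the mix case, the nine subcases of concatenation governed by Fact~\ref{fact:clean}) really is routine. So as written the proposal has a genuine gap rather than a proof. (A small side remark: your reading of the base case is off --- every component of a sequence in $\sem a$ is the single letter $a$, so $\clean{\sem a}=\sem a$ is already the pure expression $a$; no $a^+$ is needed. This is harmless here.)

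What is missing are the explicit decomposition identities; the general $n$ then follows by iterating the binary case. For $+$ one splits each block into maximal runs drawn from $L_1$ and from $L_2$ and classifies by the first and last run:
\begin{align*}
(L_1+L_2)^+ = {}& L_1^+ +(L_2^+\concat L_1^+)^+ +(L_2^+\concat L_1^+)^+\concat L_2^+\\
& +(L_1^+\concat L_2^+)^+ +(L_1^+\concat L_2^+)^+\concat L_1^+ +L_2^+\ ,
\end{align*}
and the $B+$ case is the same with $B+$ in place of $+$ throughout. The genuinely delicate case is $S+$: a strongly unbounded block size must be charged to one of three causes --- a single $L_1$-run grows, a single $L_2$-run grows, or the number of alternations grows --- which yields
\begin{align*}
(L_1+L_2)^{S+} = {}&(L_1+L_2)^*\concat L_1^{S+}\concat(L_1+L_2)^*\\
&+(L_1+L_2)^*\concat L_2^{S+}\concat(L_1+L_2)^*\\
&+L_2^*\concat (L_1^+\concat L_2^+)^{S+}\concat L_1^*\ ,
\end{align*}
after which the right-hand side is not yet a mix of pure expressions and must be renormalized by reapplying the concatenation, $*$ and $+$ cases (the paper notes this produces a mix of $102$ pure expressions). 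Verifying these identities --- in particular that the three-way disjunction in the $S+$ case is exhaustive, which uses the closure under subsequences from Fact~\ref{fact:basic-BS-closure}(\ref{item:basic-subsequence}) to split the sequence of coordinates according to which of the three quantities is extremal --- is the content your proposal would still need to supply for the induction to close.
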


\hhl{The proof of this lemma has a similar flavor as the proof of the
analogous result for finite words, which says that union can be
shifted to the topmost level in a regular expression.}

\begin{proof}

The proof is by induction on the structure.
\begin{itemize}
\item For $L=\emptyset,\varepsilon,\fepsilon,a$, the claim is obvious.
\item Case~$L=K \concat M$. \hhl{There} are nine subcases
	(according to Fact~\ref{fact:clean}):
	\begin{itemize}
	\item If~$K$ is $\clean K$ and~$M$ is $\clean M$ then
          \begin{equation*}
            \clean{K \cdot M}=\clean K\concat\clean M\ .
          \end{equation*}
          \oldhl{We then use the induction assumption on $\clean K$ and
          $\clean M$, and
          concatenate the two unions of pure expressions. This
          concatenation is also a union of pure expressions, since}
                \begin{equation*}
              \sum_i K_i \cdot \sum_j M_j =
                  \sum_{i,j} K_i \cdot M_j\ .
                \end{equation*}
        \item If~$K$ is $\clean K$ and~$M$ is
                $\fepsilon+\clean M$
                then
                \begin{equation*}
                  \clean{K\concat M}=\overline{\clean K}+\clean
                K\concat\clean M\ .
                \end{equation*}
                In the above, the language $\overline{\clean K}$
                consists of finite prefixes of sequences in $\clean
                K$. A pure expression for this language is obtained
                from $\clean K$ by replacing every exponent with $+$ and
                every letter $a$ with $\bar a$.
        \item If~$K$ is $\clean K$ and~$M$ is
                $\varepsilon+\clean M$ then
                \begin{equation*}
                 \clean{K\concat M}=\clean
                K+\clean K\concat\clean M\ .
                \end{equation*}
	\item The six other cases are similar.
	\end{itemize}
\item Case~$L=K+M$. We have $\clean{K+M}=\clean K+\clean M$.
\item Case~$L=K^*$. We have $\clean{K^*}=(\clean K)^+$.  By induction
      hypothesis, this becomes $(L_1 + \cdots + L_n)^+$, for pure
      expressions $L_i$. We need to show how the mix $+$ can be moved to
      the top level.  For $n=2$, we use:
	\begin{align*}
	(L_1+L_2)^+ = &L_1^+ 	+(L_2^+\concat L_1^+)^+
			+(L_2^+\concat L_1^+)^+\cdot L_2^+ + \\ &
			+(L_1^+\concat L_2^+)^+
			+(L_1^+\concat L_2^+)^+\cdot L_1^+
			+L_2^+\ .
	\end{align*}
	The general case is obtained by an inductive use of this equivalence.
      \item Case~$L=K^S$. This time, we use~$\clean{K^S}=(\clean K)^{S+}$,
	and get by induction an expression of the form $(L_1 + \cdots + L_n)^{S+}$,
	for pure expressions $L_i$. 	\oldhl{We only do the case of
          $n=2$, the general case is obtained by induction on~$n$}:
	\begin{align*}
	(L_1+L_2)^{S+}&=(L_1+L_2)^*\concat L_1^{S+}\concat(L_1+L_2)^*\\
			&+(L_1+L_2)^*\concat L_2^{S+}\concat(L_1+L_2)^*\\
			&+L_2^*\concat (L_1^+\concat L_2^+)^{S+}\concat L_1^*\ .
	\end{align*}
	The right \oldhl{side} of the equation is not yet in the correct form, i.e., it is not
	a mix of pure expressions, but it can be made so using the mix, the concatenation
	and the $*$ exponent cases described above (resulting in a mix of 102 pure expressions).

  \item Case~$L=K^B$. Same as for case~$K^*$, in which the exponent~$*$ is replaced by~$B$
	and the exponent~$+$ is replaced by~$B+$.
    \qedhere
\end{itemize}
\end{proof}

\subsubsection{From normal form to automata}
 \label{sec:from-normal-form}
 Here we show that every expression in normal form can be compiled
 into a hierarchical automaton. Furthermore, if the expression is
 $\omega B$-regular (respectively, $\omega S$-regular), then the
 automaton has the appropriate counter type.

We begin by showing how to expand counter types:
\begin{lem}\label{lemma:expand-counter-type}
  A hierarchical $BS$-automaton of counter type $tt' \in \set {B,S}^*$
  can be transformed into an equivalent one of counter type $tBt'$.
  A hierarchical $BS$-automaton of counter type $tSt' \in \set {B,S}^*$
  can be transformed into an equivalent one of counter type $tSSt'$.
\end{lem}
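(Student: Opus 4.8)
The plan is to derive both statements by inserting a single fresh counter into the given automaton. For the first statement the new counter is a $B$-counter placed at position $|t|+1$; for the second it is an $S$-counter placed at position $|t|+2$, i.e.\ directly above the distinguished $S$-counter $c$, which stays at position $|t|+1$. In both cases the original counters lying above the insertion point are renumbered by a shift of one. The first insertion is essentially bookkeeping; the second is where the real work lies, and there the new counter has to be accompanied by new transitions rather than merely spliced in.

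For the $B$-statement, add a $B$-counter $b$ at position $|t|+1$, never incremented, and translate transitions so that $b$ is reset exactly on those transitions that already reset every original counter $1,\dots,|t|$: concretely, $\epsilon$ stays $\epsilon$, the transition $R_k$ becomes $R_k$ for $k<|t|$ and $R_{k+1}$ for $k\ge|t|$, and $I_k$ becomes $I_k$ for $k\le|t|$ and $I_{k+1}$ for $k>|t|$. One checks that this obeys the stack discipline and yields a bijection between runs of the two automata preserving the behaviour of every original counter, hence the recognized language. The new counter contributes the all-zero sequence, so it is bounded, and it is reset infinitely often in any accepting run: if $t'\neq\varepsilon$ because the lowest $t'$-counter is, and if $t'=\varepsilon$ because the topmost original counter is (and if there are no original counters at all, $b$ is reset on every transition). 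For word-sequence automata the ``reset infinitely often'' requirement is vacuous, so even this last point is unnecessary.

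For the $S$-statement, keep $c$ unchanged and add an $S$-counter $c'$ just above it. The intention is that $c'$ counts the resets of $c$ — it is incremented once per reset of $c$ and is itself reset only along a sub-sequence of those resets whose gaps tend to infinity, the sub-sequence being chosen nondeterministically. Then in any accepting computation $\alpha_c$ is left untouched, while $c'$ is reset infinitely often with block values tending to infinity, so $\alpha_{c'}$ is strongly unbounded; and there is always room for so sparse a choice because $c$ is reset infinitely often in any accepting computation (on a word-sequence automaton it is simply restarted on every word).

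The obstacle — and the step I expect to be genuinely delicate — is that the stack discipline forbids carrying this out by relabelling alone. Incrementing $c'$ (at position $|t|+2$) forces a reset of all of $1,\dots,|t|+1$, in particular of $c$, so $c'$ can only be incremented at a reset of $c$; moreover a reset of $c$ that also touches a $t'$-counter forces $c'$ to be reset rather than incremented. One therefore wants, on each transition that resets $c$, to branch nondeterministically between a faithful translation (which additionally resets $c'$) and a transition $I_{|t|+2}$ that increments $c'$; but the latter leaves the $t'$-counters untouched, so making it legitimate — and available often enough to send $c'$ to infinity — will require handling the $t'$-counters with care, e.g.\ via a preliminary normalization of the automaton (preserving its language) that arranges for $c$ to be reset sufficiently often without disturbing the $t'$-counters. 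Proving that such a normalization exists and that the subsequent insertion preserves the recognized language in both directions, checking the three acceptance clauses counter by counter, is the bulk of the proof; the $B$-case above is the warm-up for the attendant bookkeeping.
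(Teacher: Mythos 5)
Your $B$-insertion is essentially the paper's own construction: a fresh, never-incremented $B$-counter spliced in at position $|t|+1$, indices above it shifted, reset whenever counter $|t|$ is reset or a $t'$-counter is touched, with a run bijection giving language equivalence. (The paper additionally resets the new counter on every $\epsilon$-transition when $|t|=0$, which is the corner case you patch parenthetically.) That half is fine.

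The $S$-half is where you diverge, and the step you flag as ``genuinely delicate'' is not merely delicate --- as proposed, it fails, and the deferred ``preliminary normalization'' is precisely the missing content of the proof rather than a routine preprocessing step. Your recipe increments $c'$ once per reset of $c$ and resets it along a sparse subsequence; but, as you note, the hierarchy forces $c'$ to be \emph{reset} on every transition touching a $t'$-counter, so each block of $c'$ contains at most as many increments as there are explicit $R_{|t|+1}$-transitions between two consecutive $t'$-operations. The acceptance condition of the original automaton puts no lower bound on that quantity. Concretely, take the counter-type-$SS$ automaton of Section~\ref{subsection:hierarchical-automata} with transitions $a(I_1)$, $b(R_1)$, $c(I_2)$, $d(R_2)$ and distinguish counter $1$ (so $t=\varepsilon$, $t'=S$): the recognized language contains words with no $b$ at all, built from blocks $a^n c$ with $n$ growing, on which counter $1$ is reset only implicitly by $I_2$ and $R_2$; in your new automaton $c'$ is then never incremented yet reset infinitely often, $\alpha_{c'}$ is the all-zero sequence, and every run rejects a word of the language. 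No language-preserving normalization that merely ``arranges for $c$ to be reset sufficiently often'' can be taken for granted here, because each extra reset of $c$ threatens the strong unboundedness of $c$ itself. The resource that is guaranteed to tend to infinity is not the number of resets of $c$ between $t'$-events but the number of \emph{increments} of $c$ between its own resets, and that is what the paper's proof taps: it splits $c$ into two nested copies at positions $|t|+1$ and $|t|+2$, maps $R_{|t|+1}$ to $R_{|t|+2}$ so both copies inherit the original reset structure, and nondeterministically routes each increment to one of the two copies; in a block with $n$ increments one sends every $\lceil\sqrt{n}\rceil$-th increment to the upper copy, so both copies get on the order of $\sqrt{n}$ increments per block, which still tends to infinity. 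Your scheme does work in the special case $t'=\varepsilon$, but the lemma is needed (in Corollary~\ref{cor:counter-type-prefix}) for arbitrary $t'$.
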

\begin{proof}
  Let~$\Aa$ be the automaton.
  For the first construction, we insert a new counter of type~$B$ at
  the correct position, i.e., between counter~$|t|$ \hhl{and~$|t|+1$}, and reset it as often as possible,
  i.e., we construct a new automaton~$\Aa'$ of counter type~$tBt'$
  which is similar to~$\Aa$ in all respects but
  every transition~$(p,v,q)$ of~$\Aa$ becomes a transition~$(p,v',q)$ in~$\Aa'$ with:
  \begin{equation*}
  v'=   \begin{cases}
	R_1&\text{if}~v=\epsilon~\text{and}~|t|=0\\
	\epsilon&\text{if}~v=\epsilon~\text{and}~|t|>0\\
	I_k	&\text{if}~v=I_k~\text{and}~k\leq|t|\\
	R_k	&\text{if}~v=R_k~\text{and}~k<|t|\\
	R_{|t|+1}&\text{if}~v=R_{|t|}\\
	I_{k+1}&\text{if}~v=I_k~\text{and}~k>|t|\\
	R_{k+1}&\text{if}~v=R_k~\text{and}~k>|t|
	\end{cases}
  \end{equation*}
  For every $\omega$-word~$u$, this translation \hhl{gives a natural
  bijection between
  the runs of~$\Aa$ over~$u$
 and the runs of~$\Aa'$ over~$u$}. This translation
  of runs preserves the accepting condition. Hence the language accepted by~$\Aa$
  and~$\Aa'$ are the same.

  For the second construction, we split the $S$ counter into two
  nested copies. The automaton chooses nondeterministically which
  one to increment. Formally, we transform every transition~$(p,v,q)$ of~$\Aa$
  into possibly multiple transitions in~$\Aa'$, namely the transitions $(p,v',q)$
  with~$v'\in V$ in which:
  \begin{equation*}
  V=   \begin{cases}
	\set\epsilon	&\text{if}~v=\epsilon\\
	\set{I_k}	&\text{if}~v=I_k~\text{and}~k\leq|t|\\
	\set{R_k}	&\text{if}~v=R_k~\text{and}~k\leq|t|\\
	\set{I_{|t|+1},I_{|t|+2}}&\text{if}~v=I_{|t|+1}\\
	\set{R_{|t|+2}}	&\text{if}~v=R_{|t|+1}\\
	\set{I_{k+1}}&\text{if}~v=I_k~\text{and}~k>|t|+1\\
	\set{R_{k+1}}&\text{if}~v=R_k~\text{and}~k>|t|+1
	\end{cases}
  \end{equation*}
  For every \oldhl{input} $\omega$-word~$u$, this transformation induces a natural surjective
  mapping \oldhl{from runs of~$\Aa'$ onto runs of~$\Aa$}.
  Accepting runs are \hhl{mapped} by this translation to accepting runs of~$\Aa$. Hence
  the language accepted by~$\Aa'$ is a subset of the one accepted by~$\Aa$.
  For the converse inclusion, one needs to transform an accepting run~$\rho$
  of~$\Aa$ over~$u$
  into an accepting run of~$\Aa'$ over~$u$. For this one needs to decide each time
  a transition $(p,I_{|t|+1},q)$ is used by~$\rho$, whether to use the transition
  $(p,I_{|t|+1},q)$ or the transition $(p,I_{|t|+2},q)$ of~$\Aa'$.
  For this, for all maximal subruns of~$\rho$ of the form
  $$\rho'_0(p_1,I_{|t|+1},q_1)\rho'_1\cdots(p_n,I_{|t|+1},q_n)\rho'_n$$
  in which the counter~$|t|+1$ is never reset, and such that the counter~$|t|+1$
  is never incremented in the runs~$\rho'_i$,
  one replaces it by the run of~$\Aa'$:
  $$\rho''_0(p_1,I_{x_1},q_1)\rho''_1\cdots(p_n,I_{x_n},q_n)\rho''_n$$
\hhl{where
\begin{eqnarray*}
  x_i&=  \begin{cases}
			|t|+2&\text{if~$i$~is a multiple of~$\lceil\sqrt{n}\rceil$}\\
			|t|+1&\text{otherwise,}
			\end{cases}
\end{eqnarray*}
and  $\rho''_i$ is~$\rho'_i$ in which each counter~$k>|t|+1$ is replaced by counter~$k+1$.}
  This operation transforms an accepting run of~$\Aa$ into an accepting \hhl{run of~$\Aa'$.}
  \end{proof}
We will use the following corollary of
Lemma~\ref{lemma:expand-counter-type}, which says that any number of
hierarchical $BS$-automata can be transformed into equivalent ones
that have comparable counter types.
\begin{cor}\label{cor:counter-type-prefix}
  Given hierarchical
  $BS$-automata~$\mathcal{A}_1,\ldots,\mathcal{A}_n$, there exist
  \oldhl{(respectively) equivalent}
  hierarchical $BS$-automata $\mathcal{A}'_1,\ldots,\mathcal{A}'_n$, such that
  for all~$i,j=1\dots n$
  the counter type of $\mathcal{A}'_i$ is a prefix of the counter type
  of $\mathcal{A}_j'$ or vice versa. Furthermore\oldhl{,} if
  $\mathcal{A}_1,\ldots,\mathcal{A}_n$ are B-automata (\oldhl{respectively, }
  S-automata), then so are $\mathcal{A}'_1,\ldots,\mathcal{A}'_n$.
\end{cor}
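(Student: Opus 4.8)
The plan is to make all the counter types coincide with a single target type $T$, which is of course a special case of the required prefix property. To fix $T$, let $m$ be the largest number of $S$-counters occurring among $\mathcal{A}_1,\dots,\mathcal{A}_n$ and let $C$ be the largest number of $B$-counters occurring among them; set $T = (B^C S)^m B^C \in \set{B,S}^*$, i.e.\ the counter type consisting of $m$ counters of type $S$, preceded by, separated by, and followed by blocks of exactly $C$ counters of type $B$. The counter type of each $\mathcal{A}_i$ can be written $B^{c_0}SB^{c_1}S\cdots SB^{c_{m_i}}$ with $m_i\le m$ and $c_0+\dots+c_{m_i}\le C$, so in particular every block satisfies $c_j\le C$.

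I would then transform each $\mathcal{A}_i$ into an equivalent automaton $\mathcal{A}'_i$ of counter type $T$ in two phases, using Lemma~\ref{lemma:expand-counter-type}. First, apply the second construction of that lemma repeatedly, each application duplicating one $S$-counter (passing from $tSt'$ to $tSSt'$), until there are exactly $m$ of them; crucially, such a duplication only introduces a new empty block of $B$-counters and leaves the existing $B$-blocks untouched, so all $B$-blocks still have size $\le C$ afterwards. Second, apply the first construction of the lemma repeatedly, each application inserting one $B$-counter at a chosen position, until every $B$-block has size exactly $C$; this is possible precisely because insertion can only grow the blocks and they already have size $\le C$. Both constructions preserve the recognized language, so $\mathcal{A}'_i$ is equivalent to $\mathcal{A}_i$, and all the $\mathcal{A}'_i$ now share the counter type $T$, whence the prefix condition holds (in fact with equality).

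For the supplementary statement, observe that if all $\mathcal{A}_i$ are $B$-automata then $m=0$ and $T=B^C$, so the construction only uses the first part of Lemma~\ref{lemma:expand-counter-type}, which adds a $B$-counter and hence keeps the automaton a $B$-automaton; symmetrically, if all $\mathcal{A}_i$ are $S$-automata then $C=0$ and $T=S^m$, and only the second part of the lemma is used, which keeps the automaton an $S$-automaton. I do not expect a genuine obstacle here: the whole argument is elementary bookkeeping on words over $\set{B,S}$, and the only step that needs a moment's care is the observation that duplicating an $S$-counter does not enlarge any block of consecutive $B$-counters — this is immediate from the exact form $tSt'\mapsto tSSt'$ of the second construction, and it is what makes the second phase applicable.
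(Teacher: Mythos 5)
Your overall strategy---pad every counter type up to a single common word over $\set{B,S}$ using the two constructions of Lemma~\ref{lemma:expand-counter-type}---is exactly what the paper intends (the corollary is stated there without proof, as an immediate consequence of that lemma). However, your first phase contains a step that fails. The second construction of Lemma~\ref{lemma:expand-counter-type} turns a type $tSt'$ into $tSSt'$: it can only \emph{duplicate} an $S$-counter that is already present, and there is no operation that introduces an $S$-counter into an automaton that has none. So if some $\mathcal{A}_i$ has $m_i=0$ while $m>0$---a case that genuinely arises in the paper's applications, e.g.\ when Lemma~\ref{lemma:equiv-expression-word sequence} concatenates an automaton for $a^{B+}$ (counter type $B$) with one for $b^{S+}$ (counter type $S$)---your plan of bringing every automaton to the single type $T=(B^CS)^mB^C$ cannot be carried out for that $\mathcal{A}_i$.

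The repair is immediate, precisely because the corollary asks only for pairwise prefix-comparability rather than equality of types. For an automaton with no $S$-counters, skip your first phase and pad its type to $B^C$ using the first construction alone; $B^C$ is a prefix of $T$, and any two automata treated this way receive the identical type $B^C$, so the whole family remains a chain under the prefix order. With this one amendment the argument is correct, including the ``Furthermore'' clause: when all the $\mathcal{A}_i$ are $B$-automata only $B$-insertions are used, and when all are $S$-automata one has $C=0$ and only $S$-duplications are used (an $S$-automaton that happens to have no counters at all keeps the empty type, which is a prefix of $S^m$).
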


Recall that we want to compile a normal form expression
\begin{eqnarray*}
  M\concat(L_1+ \cdots + L_n)^\omega
\end{eqnarray*}
into a hierarchical $\omega BS$-automaton (or $\omega B$, or $\omega
S$ automaton, as the case may be). This is done in the next two
lemmas. First, Lemma~\ref{lemma:equiv-expression-word sequence}
translates each $L_i$ into a hierarchical word sequence automaton, and
then Lemma~\ref{lemma:squeeze-the-mix} combines these word sequence
automata into an automaton for the infinite words $(L_1 + \cdots +
L_n)^\omega$. \oldhl{Since  prefixing the language $M$ is  a trivial operation for the
automata, we thus  obtain the desired
Lemma}~\ref{lemma:expression-to-automata}\oldhl{, which says that expressions
can be compiled into hierarchical automata.}

\begin{lem}\label{lemma:equiv-expression-word sequence}
\oldhl{The language of word sequences described by a} pure $BS$-regular (\oldhl{respectively, } $B$-regular, \oldhl{respectively, } $S$-regular) expression
   \oldhl{can be recognized by} a hierarchical word sequence $BS$-automaton (\oldhl{respectively, } $B$-automaton,
  \oldhl{respectively, }  $S$-automaton).
\end{lem}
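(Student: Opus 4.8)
The plan is to proceed by structural induction on the pure $BS$-regular expression, building a hierarchical word sequence automaton whose counter type respects the nesting of the $B+$ and $S+$ exponents. The base cases are the single-letter constants $a$ and $\bar a$: for $a$ we take a two-state automaton over $\{a\}$ that reads any finite word in $a^*$ and accepts (no counters); for $\bar a$ we do the same but must remember (via a dead sink, or by simply recording acceptance) that only finitely many words in the sequence are nonempty — in fact since $\bar a$ gives word sequences of finite length, one can record this with a Büchi-like structure at the level of the $\omega BS$-automaton $\Aa'$ obtained by the $\square$-encoding from Section~\ref{subsection:sequence-automata}. Concretely, I would work throughout with the $\square$-encoding: an automaton for the language of word sequences $\sem e$ is the same as a (non-hierarchical-looking, but in fact hierarchical) $\omega BS$-automaton over $\Sigma \cup \{\square\}$, and I build that.

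For the inductive step, concatenation $K \concat L$ of two pure expressions is handled by placing the word sequence automaton for $K$ before the one for $L$, gluing the accepting states of the first to the initial state of the second \emph{within each component word} (since concatenation of word sequences is componentwise). The subtlety the excerpt already flagged is combining counter types: by Corollary~\ref{cor:counter-type-prefix} we may first rename/expand so that the counter type of $\Aa_K$ is a prefix of that of $\Aa_L$ (or vice versa), after which the two counter hierarchies are compatible and can be overlaid. For the exponents $L^+$, $L^{B+}$, $L^{S+}$: given a hierarchical automaton $\Aa_L$ of counter type $t$, the automaton for $L^{B+}$ adds one fresh counter of type $B$ \emph{above} all of $t$ — so its type is $tB$ — which is incremented once per block boundary (i.e., reset at the start of each new group in the $*$-grouping and incremented... actually incremented once each time we pass from one word of $\vec u$ into the next within the same block, and reset when a new block starts), so that the counter value on block $i$ equals the number of $\vec u$-components merged into that block minus one; the bounded-difference condition on the grouping sequence $g$ then translates exactly to this $B$-counter being bounded. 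Similarly $L^{S+}$ adds a fresh top $S$-counter, and $L^+$ adds no counter (plain Kleene iteration with a fresh reset of all of $t$'s counters at each block boundary). Because the new counter sits strictly above all existing ones, whenever it is touched all lower counters are reset, so the hierarchical discipline is preserved, and the counter type grows by one letter on the right, matching the nesting structure of the expression. In the $B$-regular (respectively $S$-regular) case only $B+$ (respectively $S+$) is used, so the resulting counter type lies in $\{B\}^*$ (respectively $\{S\}^*$), giving the refined statement.

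The main obstacle is the concatenation case together with the bookkeeping forced by the absence of $\varepsilon$-transitions. When we splice $\Aa_K$ into $\Aa_L$ componentwise, we must route the ``end of $u_i$, start of $v_i$'' transition through the counter actions of both automata without an intermediate silent move; this means the transition that leaves an accepting state of $\Aa_K$ and enters the initial state of $\Aa_L$ must carry a combined counter operation, and to keep it hierarchical we need $\Aa_K$'s counters to be exactly a bottom segment (or top segment, depending on orientation) of $\Aa_L$'s — which is precisely why Corollary~\ref{cor:counter-type-prefix} is invoked first. A secondary nuisance is correctly propagating the $\bar{\cdot}$ / finiteness information through concatenation and the exponents so that the $\square$-encoded $\omega$-automaton enforces ``infinitely many nonempty components'' exactly when the pure expression denotes only infinite-length sequences with infinitely many nonempty words; this is routine given Fact~\ref{fact:clean} and Fact~\ref{fact:emptiness-decidable}-style case analysis, but it is the place where careless handling would break correctness. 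Once these are in place, a straightforward check that the acceptance conditions match (bounded $B$-counters $\leftrightarrow$ bounded-difference groupings, strongly unbounded $S$-counters $\leftrightarrow$ strongly-unbounded-difference groupings) completes the induction.
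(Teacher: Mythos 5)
Your proposal is essentially the paper's proof: the same structural induction, the same appeal to Corollary~\ref{cor:counter-type-prefix} to align counter types before concatenating, and the same treatment of the exponents by a fresh counter of the appropriate type placed above all existing ones and incremented at each loop from a final state back to the initial one; your accounting of why the new counter's value on a block equals the block size minus one, and why this matches the bounded/strongly-unbounded-difference condition on the grouping sequence, is exactly the intended argument (working through the $\square$-encoding rather than directly with word sequence automata is cosmetic). Two details are, however, wrong or missing as written. First, the base case: $\sem{a}$ consists of word sequences every component of which is the \emph{single letter} $a$, so the automaton must accept exactly the one-letter word $a$ per component, not ``any finite word in $a^*$''; in particular it must reject $\varepsilon$ (the paper is careful throughout that initial states are never accepting, so that all components stay nonempty).

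Second, at the concatenation splice you correctly observe that the transition from a final state of $\Aa_K$ to the initial state of $\Aa_L$ must carry some counter action, but you never say which one. The paper's choice is to reset the highest counter available to $\Aa_K$ (hence, by the hierarchical discipline, all of $\Aa_K$'s counters), and this is not mere bookkeeping: after the prefix alignment the two automata share their low-order counters, and without a reset at the splice the last reset-segment of the $K$-part of a component merges with the first reset-segment of its $L$-part. For a $B$-counter this is harmless (a sum of two bounded quantities is bounded), but for an $S$-counter the minimum over segments of the merged run can be strictly larger than the minimum over the segments of either factor, so the combined automaton could accept word sequences whose $K$-part or $L$-part violates the strong-unboundedness requirement, i.e.\ sequences outside $K \concat L$. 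Inserting that reset closes the gap, and the rest of your argument goes through.
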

\begin{proof}
By induction \oldhl{on the operations that appear in a pure expression.}
\begin{itemize}
\item \oldhl{Languages accepted by hierarchical word sequence automata are closed
  under concatenation. }Let us compute\oldhl{ an automaton recognizing}~$L \concat L'$
  where~$L$, $L'$ are  \oldhl{languages recognized by} 
      hierarchical word sequence automata~$\Aa$, $\Aa'$ respectively.
      Using Corollary~\ref{cor:counter-type-prefix}, we can assume without
      loss of generality that the type of~$\Aa$ is a prefix of the
      type of~$\Aa'$ (\oldhl{or the other way round, which is a
        symmetric case}).
      Remark that since $L$ and $L'$ are pure,
      \oldhl{no state  in~$\Aa$ or $\Aa'$ }is both initial and final.

      We do the standard
      concatenation construction for finite automata (by passing from
      a final state of $\Aa$ to the initial state of $\Aa'$), except
      that when passing from $\Aa$ to $\Aa'$ we reset the highest
      counter available to $\Aa$.
\item Languages accepted by hierarchical word sequence automata are closed
  under the $+$ exponent.  We use the standard construction: linking
  all final states to the initial state while resetting all counters.
  In order to have non-empty words on all coordinates, the initial
  state cannot be accepting (if it is accepting, we add a new initial state).
\item Languages accepted by hierarchical word sequence automata are closed
  under the $S+$ exponent.  We add a new counter of type~$S$ of rank
  higher than all others.  We then proceed to the \hhl{construction
    for the $+$ exponent} as
  above, except that we increment the new counter whenever looping
  from a final state to the initial one.
\item For the $B+$ exponent, we proceed as above, except
  that the new counter is of type~$B$ instead of being of type~$S$.
  \qedhere
\end{itemize}
\end{proof}

The compilation of normal form expressions into automata is concluded by the following lemma:
\begin{lem}\label{lemma:squeeze-the-mix}
  Let $L_1,\ldots,L_n$ be sequence languages recognized by
  the hierarchical $BS$-automata $\Aa_1,\ldots,\Aa_n$ respectively. The language $(L_1
  + \cdots + L_n)^\omega$ is recognized by an $\omega BS$-automaton.
  Likewise for $\omega B$-automata and $\omega S$-automata.
\end{lem}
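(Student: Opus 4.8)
The plan is to combine the hierarchical word sequence automata $\Aa_1,\ldots,\Aa_n$ into a single hierarchical word sequence automaton $\Aa$ recognizing $L_1 + \cdots + L_n$, and then feed the result into the ``$\square$-trick'' described after the definition of word sequence automata in Section~\ref{subsection:sequence-automata}, turning a word sequence automaton for a language $L$ into an $\omega BS$-automaton for $L^\omega$. So the real content is: \emph{languages of word sequences recognized by hierarchical $BS$-automata (respectively $B$-, $S$-automata) are closed under the mix $+$}.

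First I would apply Corollary~\ref{cor:counter-type-prefix} to replace $\Aa_1,\ldots,\Aa_n$ by equivalent hierarchical automata whose counter types are pairwise prefix-comparable; since the types are linearly preordered by the prefix relation, there is a common extension, i.e.\ after applying Lemma~\ref{lemma:expand-counter-type} a few more times we may assume all $\Aa_i$ share \emph{the same} counter type $T \in \set{B,S}^*$, hence the same counter set $\Gamma = \set{1,\ldots,|T|}$ with the same type assignment. (This uses that the constructions of Lemma~\ref{lemma:expand-counter-type} only ever insert counters, never delete them, so a common upper bound always exists; and the $B$/$S$-fragment is preserved.) Now the mix $K + L$ of word sequences $\vec u \in L(\Aa_K)$, $\vec v \in L(\Aa_L)$ lets each coordinate $w_i$ independently be either $u_i$ or $v_i$. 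Since the two automata now have identical counter structure, the natural construction is a product-like automaton that, \emph{coordinate by coordinate}, nondeterministically commits to simulating either $\Aa_K$ or $\Aa_L$ on that word $w_i$. Concretely I would take the disjoint union of $\Aa_K$ and $\Aa_L$, keep both initial states reachable (adding a fresh initial state if needed so that it is not accepting, as required for the $\square$-trick later), and keep both sets of accepting states; a run on a coordinate word $w_i$ stays entirely inside one of the two copies, so each $\rho_i$ is a run of $\Aa_K$ or of $\Aa_L$ on $w_i$.

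The point to check is that the acceptance condition of the combined automaton is exactly the mix of the two conditions. For a $B$-counter $\alpha$, acceptance of $\Aa_K$ (resp.\ $\Aa_L$) requires $\max(\alpha(\rho_i))$ to stay bounded over the coordinates it handles; for the union automaton we require $\max(\alpha(\rho_i))$ bounded over \emph{all} coordinates, which holds iff it is bounded on the $\Aa_K$-coordinates and on the $\Aa_L$-coordinates separately — so a word sequence is accepted by the union automaton iff it decomposes as a mix of one accepted by $\Aa_K$ and one accepted by $\Aa_L$. The only subtlety is the $S$-counters: $\liminf$ over all coordinates being $+\infty$ is equivalent to $\liminf$ over each of the two sub-sequences of coordinates being $+\infty$ \emph{only because} the two index sets together cover all of $\nats$ and $\liminf$ of a merge of finitely many sequences is the min of their $\liminf$s — this is where I expect to have to be slightly careful, but it is exactly the remark made in Subsection~\ref{subsubsection:BSexpressions} that a sequence is strongly unbounded iff it has no bounded infinite subsequence, together with the fact that one of the two coordinate-sets is infinite if the whole sequence is. Iterating the binary construction $n-1$ times handles $L_1+\cdots+L_n$, and the $B$/$S$-only cases go through verbatim since no counter ever changes type. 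Finally, applying the $\square$-construction to the resulting hierarchical word sequence $BS$-automaton for $L_1+\cdots+L_n$ yields an $\omega BS$-automaton for $(L_1+\cdots+L_n)^\omega$ (respectively an $\omega B$- or $\omega S$-automaton), which is the claim. The main obstacle, such as it is, is the bookkeeping in the first paragraph — forcing all the $\Aa_i$ onto a common counter type via Lemma~\ref{lemma:expand-counter-type} and Corollary~\ref{cor:counter-type-prefix} while tracking that the hierarchical discipline and the $B$/$S$-fragment survive; the mix construction itself is then routine.
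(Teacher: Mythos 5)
There is a genuine gap, and it sits exactly where you yourself flagged ``the bookkeeping'': the reduction to a common counter type is impossible. Lemma~\ref{lemma:expand-counter-type} only lets you \emph{insert a $B$-counter} or \emph{duplicate an existing $S$-counter}; it never introduces an $S$-counter into an automaton that has none. So from counter type $B$ (or the empty type) you can only ever reach types in $B^*$, while from $BS$ you can never reach an all-$B$ type: these have no common upper bound, and ``the constructions only insert counters, so a common upper bound exists'' is not a valid inference (monotonicity does not give directedness). No other construction can fix this either: $\sem{b}$, all of whose component words have length $1$, cannot be recognized with an $S$-counter, since that counter would need strongly unbounded increment counts inside one-letter words. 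This is why Corollary~\ref{cor:counter-type-prefix} stops at prefix-comparability. Without equal types your disjoint-union automaton for the mix is wrong: take $L_1=\sem{a^{S+}}$ (one $S$-counter) and $L_2=\sem{b}$ (no counters) and share the counter as you propose. The sequence $\langle b,b,\dots\rangle$ belongs to $L_1+L_2$ (take $\vec v=\langle b,b,\dots\rangle$ and any $\vec u\in L_1$), but every coordinate is necessarily handled by the $L_2$-copy, which never increments the $S$-counter, so $\min(\alpha(\rho_0)),\min(\alpha(\rho_1)),\dots$ is identically $0$ and the sequence is rejected. The obstruction is that word-sequence acceptance evaluates each coordinate \emph{separately}, so an $S$-counter must earn its increments inside every single coordinate.

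This is precisely why the paper does \emph{not} build a word-sequence automaton for $L_1+\cdots+L_n$ and then take an $\omega$-power (see the remark at the start of Section~\ref{section:expression-to-automata} about the ``technical difficulties with combining the counter types''): it builds the $\omega$-automaton for $(L_1+\cdots+L_n)^\omega$ in one step, keeping only the prefix-comparability of types. At the level of $\omega$-words the obstruction disappears, because a counter unused during the factors simulated by a smaller $\Aa_i$ need not be reset at those factor boundaries; its increments can be accumulated \emph{across} many factors (say one increment per boundary, with resets scheduled ever more sparsely), which is enough to meet strong unboundedness. Two further, more minor, points: the $\square$-trick yields an automaton over the alphabet extended with $\square$, so you still have to erase the separators, i.e.\ guess the factorization; and the mix requires the witnesses $\vec u,\vec v$ to be \emph{index-aligned} with $\vec w$, so even with identical counter types the forward direction of your correctness claim needs an argument that the subsequence of $\vec w$ read by the $\Aa_K$-copy extends to a full sequence of $L(\Aa_K)$ at the original positions, which does not follow from closure under subsequences.
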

\begin{proof}
  Thanks to Corollary~\ref{cor:counter-type-prefix}, we can assume
  that the counter type of $\Aa_i$ is a prefix of the counter type of
  $\Aa_j$, for $i \le j$.  We use this prefix assumption to share the
  counters between the automata: we assume that the counters in
  $\Aa_i$ are a subset of the counters in $\Aa_j$, for $i \le j$.
  Under this assumption, the hierarchical automaton for infinite words
  can nondeterministically guess a factorization of the infinite word
  in finite words, and nondeterministically choose one of the automata
  $\Aa_1,\ldots,\Aa_n$ for each factor.
\end{proof}

\subsection{From automata to expressions}

\label{section:automata-to-expressions}
This section is devoted to showing the remaining implication in
Theorem~\ref{theorem:equivalence}:
\begin{lem}\label{lemma:automata-to-expressions}
  Every language recognized by an \ombs-automaton
  (\oldhl{respectively, } an $\omega B$-automaton,  an $\omega S$-automaton)
  can be defined using an \ombs-regular
  (\oldhl{respectively, } $\omega B$-regular,  $\omega S$-regular)
  expression.
\end{lem}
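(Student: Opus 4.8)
The plan is to follow the classical route for translating B\"uchi automata into $\omega$-regular expressions, but adapted to keep track of the counter behaviour along runs. Given an $\omega T$-automaton $\Aa$ with states $Q$, by the usual decomposition of accepting runs we would like to write the recognized language as a finite union $\bigcup M_i \cdot L_i^\omega$, where each $M_i$ collects the finite prefixes leading from the initial state to some state $q$, and $L_i$ describes the word sequences formed by cutting an infinite run returning to $q$ into finite pieces. The essential point is that the acceptance condition of $\Aa$ is \emph{asymptotic}: a $B$-counter must stay bounded, an $S$-counter must tend to infinity, and every counter must be reset infinitely often. This is exactly the kind of condition that the $B$ and $S$ exponents of $BS$-regular expressions were designed to capture, since those exponents were defined via cutting a word sequence into blocks of (non-decreasing) bounded or strongly unbounded difference. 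So the first step is to reduce to hierarchical automata via Theorem~\ref{theorem:equivalence} (we may assume $\Aa$ is hierarchical with counter type $T_1\cdots T_k$, $T_j\in\set{B,S}$), since the stack-like counter discipline makes the bookkeeping manageable: at any moment the only ``active'' counter is the topmost one that has not been reset, and a reset of counter $j$ automatically resets all lower counters.

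The heart of the argument is to compute, for each pair of states $p,q$ and each counter level $j$, a $BS$-regular expression (over $\Sigma$, of word sequences) describing the set of finite words $w$ such that $\Aa$ has a run from $p$ to $q$ on $w$ in which counter $j$ is reset exactly at the block boundaries and the sequence of increment-counts of counter $j$ between consecutive resets has the appropriate asymptotic type ($B$ if $T_j=B$, $S$ if $T_j=S$), while all counters $>j$ are untouched. This is done by induction on $j$: for the innermost counter (level $1$) the runs between resets are just B\"uchi-style runs over a sub-automaton, and the number of increments is the length of a word in an ordinary regular language, so wrapping it in $e^B$ or $e^S$ gives exactly the bounded/strongly-unbounded block structure we need; for higher levels, a block of counter $j$ is itself a word sequence built from the level-$(j-1)$ expression, so one applies another $B$ or $S$ exponent on top. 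The outermost level then produces a $BS$-regular expression $L$ such that $\Aa$, restricted to infinite runs cycling through a fixed state $q$, accepts $w^\omega$ iff $w$ is in the image of the $\omega$-power of $L$; finally one takes a finite union over the choice of $q$ (and prefixes $M$ reaching $q$) to obtain an $\omega BS$-regular expression. When $\Aa$ is an $\omega B$-automaton all the exponents introduced are $B$ (never $S$), so $L$ is $B$-regular and the result is $\omega B$-regular; symmetrically for $\omega S$.

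To manage the combinatorics cleanly I would phrase the induction on $j$ in terms of a Kleene-style elimination: fix the level $j$, consider the sub-automaton on the states reachable without resetting counters $\ge j$, use the ordinary finite-word Kleene construction to get regular expressions $R_{p,q}$ for the reset-free runs (with increments of counter $j$ recorded as a distinguished letter or via the word length), and then form the word sequence obtained by iterating ``reach an accepting configuration, reset counter $j$'' under the exponent $T_j{+}$, much as in the proof of Lemma~\ref{lemma:equiv-expression-word sequence} but run in reverse. The mix operation $+$ and concatenation are available to combine the contributions of different states, and Fact~\ref{fact:basic-BS-closure} lets us freely pass to subsequences, which is convenient when an infinite run resets a counter only along an infinite but ``sparse'' set of positions. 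The main obstacle I expect is precisely this asymptotic bookkeeping at the higher counter levels: one must be careful that applying $e^B$ (resp.\ $e^S$) to the level-$(j-1)$ expression yields the \emph{exact} condition on counter $j$ — i.e.\ that the block-difference sequence being bounded/strongly-unbounded corresponds faithfully to the increment-count sequence of counter $j$ being bounded/strongly-unbounded — and that nesting these exponents does not accidentally conflate the conditions for different counters. Getting the $\fepsilon$/empty-word edge cases right (runs with no increments, counters reset on every step) and invoking Proposition~\ref{proposition:fplus-exp} to clean up afterwards is routine but must be done.
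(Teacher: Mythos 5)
Your first step---``reduce to hierarchical automata via Theorem~\ref{theorem:equivalence}''---is circular. In the paper's development the passage from general \ombs-automata to hierarchical ones is obtained only by going through expressions: the proof of Theorem~\ref{theorem:equivalence} consists of (1)$\Rightarrow$(2) (Lemma~\ref{lemma:expression-to-automata}), (2)$\Rightarrow$(3) (trivial, since hierarchical automata are a special case of general ones), and (3)$\Rightarrow$(1), which is exactly the lemma you are trying to prove. No direct hierarchization construction is available (the paper's conclusion even lists finding one as an open problem), so you cannot assume the counters obey a stack discipline. The paper's proof therefore works with a completely general \ombs-automaton and handles each counter independently: it first shows closure of $BS$-regular languages under ``external constraints'' (Lemma~\ref{lemma:external-constraints-closure}), uses this to prove (Lemma~\ref{lemma:interleaving}) that a $BS$-regular language of runs can be intersected with languages such as $(I^BR)^+I^B\itl A^*$ and $I^eR(I^SR)^*I^f\itl A^*$ constraining the increment/reset pattern of a single counter, and then intersects over all counters.

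The second gap is the one you yourself flag as ``the main obstacle I expect'': the bookkeeping when counter resets do not line up with the cut points of the $\omega$-power decomposition. This is not a routine edge case but the crux of the argument for $S$-counters. When an accepting run is cut at the positions where it revisits a state $q$, an increment block of an $S$-counter may straddle a cut point, and only the whole block---not either of its two halves separately---is guaranteed to be large. The paper resolves this in Lemma~\ref{lemma:accepting-runs} by extracting a subsequence of cut points and partitioning $\Gamma_S$ into counters whose straddling block is strongly unbounded to the left of the cut ($\Gamma_{*,S}$) and those for which it is strongly unbounded to the right ($\Gamma_{S,*}$); this is precisely why the constraint languages carry the two exponents $e,f\in\set{*,S}$ at their ends. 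Without this (or an equivalent device), ``wrapping the level-$j$ expression in $e^B$ or $e^S$'' does not faithfully capture the acceptance condition. If the input automaton were already hierarchical, your Kleene-style induction on counter levels would be a plausible proof of that special case; as a proof of the stated lemma it does not go through.
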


Before we continue, we would like to remark that although long, the
proof does not require any substantially original ideas. Basically, it
consists of observations of the type: ``if a word contains many $a$'s
and a $b$, then it either has many $a$'s before the $b$ or many $a$'s
after the $b$''. Using such observations and Kleene's theorem for
finite words, we obtain the desired result.

We begin by introducing a technical tool called external constraints.
These constraints are then used in
Section~\ref{section:controlling-counters} to prove
Lemma~\ref{section:automata-to-expressions}.

\subsubsection{External constraints}
\label{section:external-constraints}

External constraints provide a convenient way of constructing
\ombs-regular languages.  Let $e$ be one of $0,+,S,B$. Given a symbol
$a \not\in \Sigma$ and a word sequence language $L$ over $\Sigma\cup\set a$,
we denote by $L\ralph e$ the word sequence language
\begin{equation*}
  L\cap ((\Sigma^*\cdot a)^e\cdot\Sigma^*)
\end{equation*}
with the standard convention that~$L^0=\sem{\varepsilon}$.
This corresponds to restricting the word sequences in~$L$ to ones where the
number of occurrences of~$a$ satisfies the constraint~$e$.

\begin{figure}
\begin{align*}
\begin{array}{rl}
  \e\ralph +	& =	\emptyset			\\
  \e\ralph e	& =	\e		\qquad\text{for}~e\in\set{0,B}	\\
  \e\ralph S	& =	\fepsilon			\\
  \fepsilon\ralph e& =	\fepsilon	\qquad\text{for}~e\in\set{0,B,S}	\\
  \fepsilon\ralph +& =	\emptyset
  \end{array}&&
  \begin{array}{rl}
  b\ralph +	& =	\emptyset	\qquad\text{for}~b \neq	a	\\
  b\ralph0	&= b	\qquad\text{for}~b \neq	a			\\
  b\ralph S	& =	\bar b		\qquad\text{for}~b \neq	a	\\
  b\ralph B	& =	b			\qquad\text{for } b \neq a	\\
  \\
  a\ralph 0	& =	\emptyset							\\
  a\ralph S	& =	\bar a								\\
  a\ralph e	& =	a		\qquad\text{for}~e\in\set{B,+}	\\
  \end{array}
  \end{align*}
  \begin{align*}
  (L+L')\ralph e & = L\ralph e + L'\ralph e
  					\qquad\text{for}~e\in\set{0,+,B,S}	\\
  \\
  (L\concat L')\ralph e	& =	L\ralph e\concat L'\ralph e
  					\qquad\text{for}~e\in\set{B,0}	\\
  (L\concat L')\ralph e	& =	L\ralph e\concat L'+L\concat L'\ralph e
  					\qquad\text{for}~e\in\set{+,S}	\\
  \\
  L^*\ralph 0	& =	(L\ralph 0)^*\\
  L^*\ralph +	& =	L^*\concat L\ralph +\concat L^* \\
  L^*\ralph B	& =	(L\ralph B\concat (L\ralph 0)^*)^B \\
  L^*\ralph S	& =	L^*\concat(L\ralph +\concat L^*)^S + L^*\concat L\ralph S\concat L^*\\
 \\
  L^B\ralph 0	& =	(L\ralph0)^B			\\
  L^B\ralph +	& =	L^B\concat L\ralph +\concat L^B	\\
  L^B\ralph B	& =	(L\ralph B)^B			\\
  L^B\ralph S	& =	\fepsilon+L^B\concat L\ralph S\concat L^B	\\
  \\
  L^S\ralph 0 & = (L\ralph 0)^S								\\
  L^S\ralph + & = L^S\concat L\ralph +\concat L^* +L^*\concat L\ralph +\concat L^S	\\
  L^S\ralph B	& =	(L\ralph B\concat(L\ralph 0)^*)^B\concat
  		(L\ralph0)^S\concat (L\ralph B\concat (L\ralph 0)^*)^B	\\
  L^S\ralph S & =  L^S\concat L\ralph S\concat L^* + L^*\concat L\ralph S\concat L^S+ (L^*\concat L\ralph +)^S\concat L^*\\
\end{align*}
\caption{Elimination of external constraints.}\label{figure:elimination-ec}
\end{figure}

Here we show that external constraints can be eliminated:
\begin{lem}\label{lemma:external-constraints-closure}
 $BS$-regular languages of word sequences  are closed under external constraints.
  \ttc{In
   other words, if $L$ is a $BS$-regular language over $\Sigma$ and~$a$
   is a letter in~$\Sigma$,
   then $L \ralph e$ is a $BS$-regular language over $\Sigma$.}
 $B$-regular languages are closed under external constraints
of type $0,+,B$.
 $S$-regular languages are closed under external constraints
of type $0,+,S$.
\end{lem}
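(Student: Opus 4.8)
The plan is to prove Lemma~\ref{lemma:external-constraints-closure} by structural induction on the $BS$-regular expression defining $L$, following exactly the equations laid out in Figure~\ref{figure:elimination-ec}. For each expression-forming operation and each constraint type $e \in \{0,+,B,S\}$, I would verify that the displayed identity holds as an equality of word sequence languages, and that the right-hand side is again $BS$-regular (using closure under concatenation, mix, and the three exponents, which is built into the definition). The base cases are the single-letter constants $a$, $\bar a$ (for the distinguished symbol), $b$, $\bar b$ (for $b \neq a$), together with $\varepsilon$, $\fepsilon$ and $\emptyset$; these are immediate from unwinding the definition of $L\ralph e = L \cap ((\Sigma^* a)^e \Sigma^*)$, keeping in mind the convention $L^0 = \sem{\varepsilon}$ and the semantics of $\bar{\,\cdot\,}$ (restriction to finite word sequences).

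First I would dispatch the easy inductive cases: $+$ distributes over external constraints for every $e$ because intersection distributes over the mix (here Fact~\ref{fact:basic-BS-closure}(1), $L+L=L$, is implicitly needed to see the mix behaves like a union on the relevant sublanguages); concatenation with $e \in \{0,B\}$ is handled componentwise since a bounded (or zero) total count of $a$'s splits into a bounded count on each side; and concatenation with $e \in \{+,S\}$ uses the observation that if a word sequence has unboundedly many (respectively, at least one) $a$'s overall, then along a strongly unbounded subsequence this happens on the left factor or on the right factor — which is precisely the shape $L\ralph e \cdot L' + L \cdot L'\ralph e$, invoking Fact~\ref{fact:basic-BS-closure}(2) to pass to subsequences. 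The $*$ and $L^B$ cases with constraint $0$ or $B$ are again routine: a bounded number of $a$'s distributed over iterated blocks means boundedly many per block and the blocks carrying an $a$ form a sub-iteration, which is what $(L\ralph B \cdot (L\ralph 0)^*)^B$ expresses.

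The genuinely delicate cases are the ones mixing an exponent with the orthogonal constraint: $L^*\ralph S$, $L^B\ralph S$, $L^S\ralph +$, $L^S\ralph B$, and especially $L^S\ralph S$. Here one must argue carefully about the interaction of two asymptotic conditions — the number of iterations and the number of $a$'s — and show that the seemingly ad hoc decompositions in the figure are both sound and complete. For instance, for $L^S\ralph S$ the claim is that having strongly unbounded $a$-counts across an $S$-iteration of $L$ decomposes into three alternatives: the $a$'s concentrate (asymptotically) in a single distinguished block whose own $a$-count is strongly unbounded (the two terms $L^S \cdot L\ralph S \cdot L^* $ and $L^* \cdot L\ralph S \cdot L^S$), or else they are spread over infinitely many blocks each contributing at least one $a$, giving $(L^* \cdot L\ralph +)^S \cdot L^*$; proving the converse containment requires extracting, from a given word sequence, the right grouping witnessing one of these forms, which is where a pigeonhole-style argument (bounded vs. unbounded subsequences, as in the remark following the definition of bounded/strongly unbounded sequences) is used. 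I expect these "cross-type" cases — and in particular checking that the extracted witness respects the \emph{non-decreasing} and \emph{$g \le |u|$} side conditions of the exponent definitions — to be the main obstacle; the rest is bookkeeping. Finally, for the last two sentences of the lemma I would simply observe that all the right-hand sides used in the $B$-regular ($e \in \{0,+,B\}$) fragment introduce no $S$ exponent, and symmetrically for the $S$-regular fragment, so the stated refinements follow from the same equations without extra work.
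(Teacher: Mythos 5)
Your proposal takes essentially the same approach as the paper: the paper's entire proof is ``structural induction; the necessary steps are shown in Figure~\ref{figure:elimination-ec}; for some of the equivalences we use closure properties of Fact~\ref{fact:basic-BS-closure}.'' Your elaboration of the base cases, the role of Fact~\ref{fact:basic-BS-closure} in the subsequence arguments, the identification of the cross-type cases (such as $L^S\ralph S$) as the delicate ones, and the observation that the $e\in\set{0,+,B}$ (respectively $e\in\set{0,+,S}$) equations stay within the $B$-regular (respectively $S$-regular) fragment is all consistent with what the figure records.
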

\begin{proof}
  Structural induction.
  The necessary steps are shown in Figure~\ref{figure:elimination-ec}.
  For some of the equivalences,
  we use closure properties of Fact~\ref{fact:basic-BS-closure}.
\end{proof}

\subsubsection{Controlling counters in BS-regular expressions}
\label{section:controlling-counters}

In this section we show how $BS$-regular languages can be intersected
with languages of specific forms. We use
this to write an \ombs-regular expression that describes successful runs of an
\ombs-regular automaton, thus completing the proof of
Lemma~\ref{lemma:automata-to-expressions}.

In the following lemma, the languages $L$ should be thought of as
describing runs of a $BS$-automaton.
The idea is that the language $K$ in the lemma constrains
  runs that are good from the point of view of \oldhl{one of the  counters}. The
  set of labels $I$ can be thought of \oldhl{
as representing the transitions  which increment the counter, $R$ as
representing the transitions which reset the  counter,  and $A$ as
representing the transitions which do not modify
it.}
The intersection in the lemma forces the counter operations
to be consistent with the acceptance condition.

Given a word sequence language $L$ and a subset $A$ of the alphabet, denote by $L \itl A^*$ the set of
word sequences that give a word sequence in $L$ if all letters from
$A$ are erased (i.e., a very restricted form of  \hhl{the} shuffle operator).  For
instance $B^* \itl A^*$ is the same as $(A+B)^*$.

\begin{lem}\label{lemma:interleaving}
  Let $\Sigma$ be an alphabet partitioned into sets $A,I,R$. Let
  $L$ be a $BS$-regular word sequence language over $\Sigma$.
  Then $K \cap L$ is also $BS$-regular, for  $K$ being one of:
\begin{multline*}
   (I^BR)^+I^B \itl A^*, \quad   (I^SR)^+I^* \itl A^*\ , \quad
   I^*(RI^S)^+ \itl A^*, \quad  I^S(RI^S)^+ \itl A^* \ .
\end{multline*}
  Similarly, $B$-regular languages are closed under intersection
  with the first language, and $S$-regular languages are closed
  under intersection with the three other languages.
\end{lem}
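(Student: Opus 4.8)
The plan is to prove the lemma by structural induction on a $BS$-regular expression defining $L$, treating all four patterns $K$ in parallel and producing, exactly as in the elimination of external constraints (Lemma~\ref{lemma:external-constraints-closure} and Figure~\ref{figure:elimination-ec}), a table of rewriting rules that push the intersection with $K$ inward through the expression. It helps to first reformulate membership in $K$: a word sequence $\vec w$ lies in $(I^B R)^+ I^B \itl A^*$ iff, after deleting every letter of $A$, its total content $w_0 w_1\cdots$ has the shape (word over $I$)(letter of $R$)(word over $I$)(letter of $R$)$\cdots$ with at least one letter of $R$, ending in a word over $I$, and with the sequence of lengths of these maximal $I$-factors bounded; the other three patterns are the same with bounded replaced by strongly unbounded, except that $(I^S R)^+ I^*$ leaves the last $I$-factor unconstrained and $I^*(R I^S)^+$ leaves the first one unconstrained. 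Since the letters of $A$ never affect this shape, the shuffle $\itl A^*$ can be folded into the patterns — allowing $A^*$ to be interleaved everywhere, e.g.\ rewriting $(I^B R)^+ I^B \itl A^*$ as $((A^* I)^B A^* R)^+ (A^* I)^B A^*$ — so that the induction never has to look inside $A$; equivalently, $A$-letters are disposed of once and for all in the base cases.

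For the base cases $L\in\set{\emptyset,\varepsilon,\fepsilon}$ and $L=b$ a single letter, $L\cap K$ is read off directly, with three subcases according to whether $b$ lies in $A$, $I$ or $R$; many of these intersections are empty (a one-letter language cannot meet a pattern that insists on a letter of $R$), and the rest are short explicit expressions, just as in the top rows of Figure~\ref{figure:elimination-ec}. For $L=L_1+L_2$ and $L=L_1\concat L_2$ the pattern is split at the boundary. In the concatenation case every letter of $R$ occurring in a word of $(L_1\concat L_2)\cap K$ sits either wholly inside the $L_1$-part, wholly inside the $L_2$-part, or astride the boundary, so $(L_1\concat L_2)\cap K$ is a finite mix of languages $(L_1\cap K')\concat(L_2\cap K'')$ with shorter patterns $K',K''$ of the same family, together with boundary terms in which a single $I$-factor is cut in two; the latter are handled by external constraints (Lemma~\ref{lemma:external-constraints-closure}), using the elementary observation that if a word has many (respectively, along the sequence, strongly unboundedly many) letters of $I$ between two letters of $R$ with a marked point in between, then it has this property on one side of the marked point. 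The mix case is handled the same way, with the help of Fact~\ref{fact:basic-BS-closure}.

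The exponent cases $L=L_0^*$, $L=L_0^B$, $L=L_0^S$ are \emph{the heart of the proof}. Here the letters of $R$ contributed by successive iterations further subdivide the $I$-factors counted by $K$, so one introduces a fresh marker at the iteration boundaries, expresses the combined requirement — the bounded-difference or strongly-unbounded-difference condition of the outer exponent of $L$, together with the bounded or strongly unbounded $I$-factor condition of $K$ — by means of external constraints on this marker and the induction hypothesis applied to $L_0$, and then erases the marker with a letter-to-letter homomorphism; both operations preserve $BS$-regularity. Throughout, whenever $L$ is $B$-regular and $K=(I^B R)^+ I^B\itl A^*$ no $S$ exponent is ever introduced, and dually for the three $S$-patterns, which yields the stated refinements. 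The step I expect to be the main obstacle is precisely this last one, in the cases where the asymptotics demanded by $K$ conflict with the outermost exponent of $L$ — for instance $L_0^S\cap((I^B R)^+ I^B\itl A^*)$, where the number of iterations is forced to infinity while every $I$-factor must stay bounded, or $L_0^B$ against one of the $S$-patterns: checking that the intersection is nonetheless $BS$-regular (possibly empty) demands a careful case analysis of how many letters of $I$ and of $R$ one iteration can contribute and of which iterations are large, and this is where the bulk of the rewriting table lies.
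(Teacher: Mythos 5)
Your plan follows essentially the same route as the paper's proof: a structural induction that pushes the intersection inward through a table of rewriting rules, generalizing the three $S$-patterns to a family $K_{e,f}=I^eR(I^SR)^*I^f\itl A^*$ with $e,f\in\set{*,S}$ so that concatenation can be split at the boundary (your ``large on one side of the marked point'' observation is exactly the paper's $e'\sqcup f'$ bookkeeping, sound thanks to the mix semantics and Fact~\ref{fact:basic-BS-closure}), and handling exponents by marking iteration boundaries with a fresh letter, imposing an external constraint on it, and erasing it by a homomorphism. The only calibration worth noting is that the step you single out as the main obstacle --- conflicting asymptotics such as $L_0^S$ against the $B$-pattern --- dissolves once $L^B$ and $L^S$ are rewritten as $(La)^*\ralph{B}$ and $(La)^*\ralph{S}$ and the intersection is commuted past the external constraint; the residual work sits in $K_{e,f}\cap L_0^*$ for the $S$-patterns, which the paper expresses as an infinite mix indexed by the number of iterations of $L_0$ in which a reset falls.
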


\begin{proof}
  In the proof we rewrite $K \cap L$ into an expression with external
  constraints, which will use and constrain letters from some new
  alphabet $\Sigma'$ disjoint with $\Sigma$.  In the proof, we will
  consider equality up to the removal of letters from $\Sigma'$. We
  then eliminate the external constraints using
  Lemma~\ref{lemma:external-constraints-closure}, and then erase the
  letters from $\Sigma'$ \oldhl{(erasing letters is allowed, since languages described by our
  expressions are closed under homomorphic images)}.

  We begin by showing that $(I^e \itl A^*)\cap L$ is $BS$-regular, for
  $e=0,+,B,S$. Let~$a \in \Sigma'$ be a new symbol.  The transformation is
  simple: replace everywhere in the expression the letters~$i$ in~$I$
  by~$i\concat a$, and constrain the resulting language by~$\ralph e$.

  For $K\ =(I^BR)^+I^B \itl A^*$, the construction is by induction on
  the size of the expression defining $L$. We use the following equalities
  (in which~$K'\ =I^B \itl A^*$):
\begin{align*}
  K \cap b		&= b\quad\text{if $b\in R$ and $\es$ otherwise} \\
  K \cap (L+L')	&= (K \cap L) + (K \cap L')   				\\
  K \cap (L\concat L')	&= (K \cap L)\concat ((K \cap  L')+(K'\cap L'))
  			\,+\, (K'\cap L)\concat (K \cap  L')	\\
  K \cap  L^*	&= ((K' \cap L^*)\concat(K \cap  L))^+\concat(K'\cap L^*)\ .
\end{align*}
The remaining cases, namely $K \cap L^B$ and $K \cap L^S$,
can be reduced to the $L^*$ case as follows.
First one rewrites $L^B$ (\oldhl{respectively, }~$L^S$) as~$(La)^*\ralph B$
(\oldhl{respectively, }~$(La)^*\ralph S$), where $a \in \Sigma'$ is a new letter
(recall that we consider here equality of languages up to removal of
letters from $\Sigma'$).
Second, we use the associativity of intersection, i.e.,
\begin{align*}
  K\cap((La)^* \ralph e) & = (K\cap (La)^*)\ralph e\,,\qquad{\text{for
    }}~e=B,S\,.
\end{align*}

For the case where $K$ is either $((I^SR)^+I^*) \itl A^*$ or
$(I^*(RI^S)^+)\itl A^*$, a slightly more tedious transformation is
involved.  This is also done by induction. To make the induction pass,
we generalize the result to languages $K$ of the form
\begin{equation*}
  K_{e,f}=I^eR(I^SR)^*I^f \itl A^*\,, \qquad \text{where } e,f\in\set{*,S}\ .
\end{equation*}
The transformations for $L=b$ and $L+L'$ are as follows:
\begin{align*}
K_{e,f} \cap b & = \begin{cases}
		b&\text{ if $b\in R$ and $e=f=*$,}\\
		\es&\text{ otherwise,}
		\end{cases}\\
K_{e,f} \cap (L+L')& =  (K_{e,f}\cap L)+(K_{e,f}\cap L')\ .
\end{align*}
For sequential composition $L\concat L'$, we use the convenient
operation~$\sqcup$ over the exponents
$\set{*,S}$, defined by
\begin{equation*}
  * \sqcup *=*,\qquad \text{and}~ e \sqcup f = S \text{ otherwise.}
\end{equation*}
The transformation for sequential composition $L\concat L'$ is then the
following:
\begin{align*}
K_{e,f}\cap(L\concat L')&=
	\sum\limits_{e' \sqcup f'=S} (K_{e,e'}\cap L)\concat(K_{f',f}\cap L')	\\
	&+\sum\limits_{e' \sqcup f'=e} (I^{e'}\cap L)\concat(K_{f',f}\cap L')	\\
	&+\sum\limits_{e' \sqcup f'=f} (K_{e,e'}\cap L)\concat(I^{f'}\cap L')
\end{align*}
The rule for $L^*$ is the most complex one. For conceptual simplicity
we use an infinite sum. This can be transformed into a less readable
but correct expression using standard methods for regular languages.

For $n \ge 1$, let $L_n$ be the mix of all languages of the form
\begin{align*}
  (I^{e_0}\cap L^*)\concat (K_{f_1,g_1}\cap L)\concat
  (I^{e_1}\cap L^*)\cdots
  (K_{f_n,g_n}\cap L)\concat (I^{e_n}\cap L^*)\ ,
\end{align*}
where the exponents $e_i,f_i,g_i\in\set{*,S}$ satisfy
\begin{equation*}
   g_i \sqcup e_i \sqcup f_{i+1}=S\,, \qquad e_0 \sqcup f_1=e\,,
   \qquad\text{and}~~g_n
   \sqcup e_n=f\ .
\end{equation*}
The language $L_n$ corresponds to those word sequences, where the reset is
done in $n$ separate iterations of $L$. The language  $K_{e,f}\cap
L^*$ is then  equal to the infinite mix $L_1 + L_2 + \cdots$
\end{proof}

We now use the above lemma to complete the proof of
Lemma~\ref{lemma:automata-to-expressions}. \hhl{Consider} an
\ombs-automaton $\Aa$.  We will present an expression not for the
recognized words, but the accepting runs. The result then follows by
projecting each transition onto the letter it reads. \hhl{Without loss of
generality we assume that a transition uniquely determines this
letter.}

Given a counter $\alpha$, let $I_\alpha$ represent the transitions that
increment this counter, let $R_\alpha$ represent the transitions that reset
it and let $A_\alpha$ be the remaining transitions.  Let $\Gamma_B$ be the
set of bounded counters of $\Aa$ and let $\Gamma_S$ be its unbounded
counters.  Given a state~$q$, we define~$\mathit{Pref}_q$ to be the
language of finite partial runs starting in the initial state
and ending in state~$q$, and~$\mathit{Loop}_q$ to be the language of
nonempty finite partial runs starting and ending in state~$q$.
Those languages are regular languages of finite words,
and hence can be described via regular
expressions. We use those expressions as word sequence expressions.

The following lemma concludes
the proof of Lemma~\ref{lemma:automata-to-expressions}, by showing
how the operations from Lemma~\ref{lemma:interleaving} can be used
to check that a run is accepting.

\begin{lem}\label{lemma:accepting-runs}
  A run $\rho$ visiting infinitely often a state~$q$ is accepting iff
  there is a partition of~$\Gamma_S$ into two sets $\Gamma_{S,*}$ and
  $\Gamma_{*,S}$ (either of which may be empty) such that
  \begin{equation*}
    \rho \in \mathit{Pref_q}\concat(\mathit{Loop}_q \cap L_B \cap L_{S,*} \cap L_{*,S})^\omega\ ,
  \end{equation*}
  where the languages $L_B$, $L_{S,*}$ and $L_{*,S}$ are defined as follows:
  \begin{align*}
    L_B		&=  \bigcap_{\alpha \in \Gamma_B}	((I_\alpha^BR_\alpha)^+I_\alpha^B) \itl A_\alpha^*\ ,\\
    L_{e,f}	&=  \bigcap_{\alpha \in \Gamma_{e,f}}
    	(I_\alpha^e R_\alpha(I^S_\alpha R_\alpha)^*I_\alpha^f) \itl A_\alpha^*\,,
	\qquad\text{for}~(e,f)=(*,S),(S,*)\ .
  \end{align*}
\end{lem}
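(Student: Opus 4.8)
The plan is to characterise when an infinite run $\rho$ visiting a fixed state $q$ infinitely often is accepting, counter by counter, and then to repackage the per-counter conditions as intersections with the languages from Lemma~\ref{lemma:interleaving}. First I would fix a decomposition of $\rho$ according to the visits to $q$: writing $\rho = \sigma_0 \tau_1 \tau_2 \cdots$, where $\sigma_0 \in \mathit{Pref}_q$ is the prefix up to the first visit to $q$ and each $\tau_i \in \mathit{Loop}_q$ is the (nonempty) segment between two consecutive visits. Thus $\rho \in \mathit{Pref}_q \cdot (\mathit{Loop}_q)^\omega$ always holds; what remains is to understand, for each counter $\alpha$, which word sequences $\langle \tau_1, \tau_2, \ldots\rangle$ over the transition alphabet correspond to accepting behaviour for $\alpha$, and to see that the relevant set is exactly $\mathit{Loop}_q$ intersected with one of the shuffle languages.

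The key observation is that the acceptance condition looks at $\alpha(\rho)$, the sequence of increment-counts between consecutive resets of $\alpha$, and requires it to be infinite and bounded (if $\alpha \in \Gamma_B$) or infinite and strongly unbounded (if $\alpha \in \Gamma_S$). For a $B$-counter $\alpha$ this is equivalent to saying that $\rho$, read through the projection onto $\{I_\alpha, R_\alpha, A_\alpha\}$ and with the $A_\alpha$-letters erased, lies in $(I_\alpha^B R_\alpha)^+ I_\alpha^B$: the trailing $I_\alpha^B$ absorbs the increments after the last reset we care about, the $(\cdot)^+$ forces infinitely many resets so $\alpha(\rho)$ is infinite, and the $B$-exponents express exactly that the blocks are of bounded size. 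This is precisely $L_B$ restricted to one counter, so $L_B = \bigcap_{\alpha \in \Gamma_B} ((I_\alpha^B R_\alpha)^+ I_\alpha^B) \itl A_\alpha^*$ captures the conjunction of the $B$-counter conditions. For an $S$-counter $\alpha$ the situation is subtler: $\alpha(\rho)$ must be infinite and strongly unbounded, i.e.\ $\liminf = +\infty$. Here the first block (increments before the first reset) and the last ``block'' behave differently from the middle ones — the run might start already inside a long stretch of increments, or the tail after the last visit is irrelevant — which is why one needs the four shapes in Lemma~\ref{lemma:interleaving} rather than a single one. Concretely, the increment-count sequence of $\alpha$ along $\rho$ is strongly unbounded iff $\rho$ projected onto $\{I_\alpha,R_\alpha\}$ (with $A_\alpha$ erased) matches $I_\alpha^e R_\alpha (I_\alpha^S R_\alpha)^* I_\alpha^f$ for a suitable choice of $e,f \in \{*,S\}$: the middle starred part with the $S$-exponent says the inner blocks tend to infinity, and the choice of $e$ (resp.\ $f$) records whether the very first (resp.\ last) block is required to participate in the strong-unboundedness or may be an arbitrary finite exception. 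Since this choice is made independently for each $\alpha \in \Gamma_S$, one groups the counters into those using $e=*$ and those using $e=S$ — this is the partition $\Gamma_S = \Gamma_{S,*} \cup \Gamma_{*,S}$ in the statement (with $f$ fixed to $S$ in one family and $*$ in the other, matching the asymmetric shapes in the lemma) — and writes the conjunction as $L_{S,*} \cap L_{*,S}$.

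Putting these together: $\rho$ visiting $q$ infinitely often is accepting iff for every $B$-counter and every $S$-counter the respective increment-block sequence has the required asymptotics, iff (by the above, and since the per-counter conditions only constrain the projection onto that counter's $I/R/A$ letters) there is a partition of $\Gamma_S$ for which $\langle \tau_1, \tau_2, \ldots\rangle$ lies in each of the shuffle languages simultaneously, iff $\rho \in \mathit{Pref}_q \cdot (\mathit{Loop}_q \cap L_B \cap L_{S,*} \cap L_{*,S})^\omega$. One direction — that an accepting run admits such a partition — is proved by, for each $S$-counter, choosing $e=*$ if the first block is ``short'' relative to the strongly-unbounded tail and $e=S$ otherwise; the other direction is immediate from the semantics of the $B$- and $S$-exponents. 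The main obstacle I anticipate is the bookkeeping around the first and last blocks of each $S$-counter: one must be careful that ``$\mathit{Loop}_q$'' segments glue correctly across the $\omega$-power so that a reset of $\alpha$ inside some $\tau_i$ and the next reset inside some $\tau_j$ delimit a single block of the combined sequence, and that the choice of $e,f$ is globally consistent across all $S$-counters. Beyond that the argument is a routine translation of $\limsup/\liminf$ conditions into the block-structure vocabulary of Section~\ref{subsubsection:BSexpressions}.
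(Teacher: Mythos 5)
Your high-level picture matches the paper's: factor $\rho$ at visits to $q$, treat each counter through its projection onto $\{I_\alpha,R_\alpha,A_\alpha\}$, and read the partition $\Gamma_S=\Gamma_{S,*}\cup\Gamma_{*,S}$ as recording, for each $S$-counter, whether the head or the tail of each loop block carries the strong unboundedness. But two steps that you either get wrong or explicitly defer are exactly where the content of the proof lies. First, the factorization: you cut at \emph{every} consecutive visit to $q$, but then some block $\tau_i$ may contain no reset of some counter $\alpha$, and such a $\tau_i$ cannot match $(I_\alpha^BR_\alpha)^+I_\alpha^B$ or $I_\alpha^eR_\alpha(I_\alpha^SR_\alpha)^*I_\alpha^f$, each of which forces at least one $R_\alpha$ per component word. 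The paper instead chooses cut positions $u_1<u_2<\cdots$ at which the run is in state $q$ \emph{and} such that every counter is reset between $u_n$ and $u_{n+1}$ (possible because every counter is reset infinitely often in an accepting run).

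Second, and more seriously, the choice of $(e,f)$ must be \emph{uniform over all cut points} for each $S$-counter $\alpha$, and you do not supply the argument that makes it so: you flag the ``gluing'' of blocks across the $\omega$-power as an anticipated obstacle, and your proposal to choose $e$ by looking at ``the first block'' is a decision about a single block that cannot yield a choice valid at every boundary. The paper's argument is as follows. Let $b^\alpha_n$ be the number of increments of $\alpha$ between the last reset of $\alpha$ before $u_n$ and $u_n$, and let $a^\alpha_n$ be the number between $u_n$ and the next reset after $u_n$. The sum $a^\alpha_n+b^\alpha_n$ is a term of the sequence $\alpha(\rho)$ and hence tends to infinity; by the infinite pigeonhole principle one extracts a subsequence of the $u_n$ on which, say, $b^\alpha_n\ge a^\alpha_n$ always holds, whence $b^\alpha_n\ge(a^\alpha_n+b^\alpha_n)/2$ is strongly unbounded and $\alpha$ is put into $\Gamma_{*,S}$ (symmetrically into $\Gamma_{S,*}$ in the other case); one then iterates over the finitely many $S$-counters, refining the subsequence each time. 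Coarsening the cut in this way is harmless, since merged blocks remain in $\mathit{Loop}_q$ and still contain a reset of every counter. This pigeonhole step is also the reason only the two couplings $(*,S)$ and $(S,*)$ appear: requiring both sides of a boundary to be strongly unbounded is in general unachievable, while requiring neither leaves the boundary-spanning increment stretches unconstrained and breaks soundness. Without this step your proof of the ``only if'' direction does not go through.
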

\begin{proof}
\hhl{It is not difficult to show  that membership in $\mathit{Pref_q}(\mathit{Loop}_q \cap L_B \cap L_{S,*} \cap L_{*,S})^\omega$
  is sufficient for $\rho$ to be accepting.}

  For the other direction, consider an accepting run~$\rho$.  Let us
  consider an increasing infinite sequence of
  positions~$u_1,u_2,\dots$ in the run $\rho$ such that for each $n$,
  all counters are reset between~$u_n$ and~$u_{n+1}$ and the run
  assumes  state~$q$ at position~$u_n$. \oldhl{Such a sequence can
    be found since each counter is reset infinitely often.} Consider now an unbounded
  counter $\alpha$. For each~$n$ we define~$b^\alpha_n$ to be the number of
  increments of~$\alpha$ in~$\rho$ happening between~$u_n$ and the last
  reset \oldhl{before}~$u_n$, \oldhl{likewise, we define~$a^\alpha_n$ to be the} number of increments of~$\alpha$
  in~$\rho$ happening between~$u_n$ and the next reset of~$\alpha$
  after~$u_n$. By extracting a subsequence, we may assume that either
  $b^\alpha_n$ is always greater than $a^\alpha_n$, or $a^\alpha_n$ is always greater
  than $b^\alpha_n$. In the first case $b^\alpha_n$ is strongly unbounded and we
  put $\alpha$ into $\Gamma_{*,S}$; in the second \hhl{case} $a^\alpha_n$ is strongly
  unbounded and we put $\alpha$ into $\Gamma_{*,S}$. We iterate this
  process for all unbounded counters.
\end{proof}


\section{\oldhl{Monadic Second-Order Logic with Bounds}}
\label{section:msob}

In this section, we introduce the logic MSOLB.  This is a strict
extension of monadic second-order logic (MSOL), where a new
quantifier $\ubouns$ is added. This quantifier
expresses that a property is satisfied by arbitrarily large sets.
We are interested
in the \hhl{satisfiability  problem: given a formula of MSOLB, decide
  if it is satisfied in some
$\omega$-word.} We are not able to
solve this problem in its full generality. However, the diamond
properties \hhl{from the previous sections, together with the
  complementation result from Section~\ref{section:complementation}}, give an interesting partial
solution \oldhl{to the satisfiability problem}.

In Section~\ref{subsection:msob} we introduce the logic MSOLB.  In
Section~\ref{subsection:decidableMSOLB} we present some decidable
fragments of the logic MSOLB, and restate the diamond picture in this
new framework.  In Section~\ref{subsection:unbounding-quant} \ttc{we }\oldhl{show
  how the unbounding quantifier can be captured by our automaton
  model.}  In Section~\ref{subsection:automatic} we \oldhl{present an
  application of our logical results}\ttc{; namely we provide }\oldhl{an algorithm that
  decides if}\ttc{ }{an $\omega$-automatic graph has bounded degree.}

\subsection{The logic}
\label{subsection:msob}

Recall that monadic second-order logic (MSOL for short) is an
extension of first-order logic where \hhl{ quantification over sets} is allowed.
Hence a formula of this logic is made of atomic predicates, boolean
connectives ($\wedge,\vee,\neg$), first-order quantification ($\exists
x.\varphi$ and $\forall x.\varphi$) and set \ttc{quantification
(also called monadic second-order quantification)}
($\exists X.  \varphi$ and~$\forall X. \varphi$)
together with the membership predicate $x \in X$. \oldhl{A formula of
  MSOL can be evaluated in an $\omega$-word. In this case, the
  universe of the structure is the set $\Nat$ of word positions. The
  formula can also use the following  atomic
  predicates: a} binary predicate $x\leq y$ for order on
  positions, and for each letter $a$ of the alphabet, a unary
  predicate $a(x)$ that tests if a position~$x$ has the label
  $a$. \oldhl{This way, a formula that uses the above predicates defines
    a language of $\omega$-words: this is the set of those
    $\omega$-words}\ttc{ for which }\oldhl{it is satisfied.}

In the logic MSOLB we add a new quantifier\oldhl, the \intro{existential
  unbounding quantifier $\ubouns$}\oldhl, which \oldhl{can be defined as the
  following infinite }\ttc{conjunction}:
\begin{align*}
  \ubouns X. \varphi := \bigwedge_{N \in \Nat}\exists X.~(\varphi \land
  |X| \ge N) \ .
\end{align*}
The quantified variable $X$ is a set variable and~$|X|$ denotes
its cardinality.  Informally speaking, $\ubouns X. \varphi(X)$ says
that the formula $\varphi(X)$ is true for sets $X$ of arbitrarily
large cardinality. If $\varphi(X)$ is true for some infinite set $X$,
then $\ubouns X. \varphi(X)$ is immediately true. Note that $\varphi$ may contain other free variables than just $X$.

From this quantifier, we can construct other meaningful quantifiers:
\begin{itemize}
\item The \intro{universal above
    quantifier} $\qabove$ is the dual of~$\ubouns$, i.e., $\qabove X.\varphi$ is
  a shortcut for~$\neg\ubouns X.\neg\varphi$.  It is satisfied if all the
  sets~$X$ above some threshold of cardinality satisfy
  property~$\varphi$.
\item Finally, the \intro{bounding quantifier}~$\bouns$ is
  syntactically equivalent to the negation of the~$\ubouns$
  quantifier. \oldhl{Historically, this was the first quantifier to be studied,} in~\cite{bojanboun}.  It says that a formula~$\bouns X.\varphi$
  holds if there is a bound on the cardinality of sets satisfying
  property~$\varphi$.
\end{itemize}

Over finite structures, MSOLB and MSOL are equivalent: 
a subformula $\ubouns X.\varphi$ can never be satisfied \oldhl{in}
a finite structure, and consequently can
be removed from a formula.
Over infinite words, MSOLB defines strictly more
languages than MSOL. For instance the formula
\begin{gather*}
\bouns X.~[\forall x {\in} X.~a(x)] \land [\forall x \le  y \le  z.~
	(x,z\in X) \rightarrow(y \in X)]
\end{gather*}
\ttc{expresses that }\oldhl{there is a bound on
the size of contiguous segments made of~$a$'s. Over the
alphabet~$\set{a,b}$, this corresponds  to the language~$(a^Bb)^\omega$.}
As mentioned previously (recall Corollary~\ref{corollary:notBnotS}), this
language is not regular. Hence, this formula is not equivalent to any
MSOL formula.  This motivates the following decision problem:
\begin{quote}
  Is a given formula of MSOLB satisfied over some infinite word?
\end{quote}
We do not know the answer to this question in its full generality
\oldhl{(this problem may yet}\ttc{ be proved }\oldhl{to be  undecidable)}.
However, using the diamond (Figure~\ref{figure:diamond}), we can solve
this question for a certain class of formulas. This is the subject of
Sections~\ref{subsection:decidableMSOLB}
  and~\ref{subsection:unbounding-quant}. In
\oldhl {Section}~\ref{subsection:automatic}, we use the logic MSOLB to
  decide if a graph has bounded outdegree, for graphs interpreted in
  the natural numbers via monadic formulas.

\subsection{A decidable fragment of MSOLB}
\label{subsection:decidableMSOLB}

A classical approach for solving satisfiability of monadic
second-order logic is to translate formulas into automata
(this is the original approach of B\"uchi~\cite{buchi60,buchi62}
for finite and infinite words\ttc{, which has been later extended by
Rabin to infinite trees}~\cite{rabin69},
see  \cite{thomas} for a survey). To every
operation in the logic corresponds a language operation. As
\oldhl{languages recognized by} automata
are \oldhl{effectively} closed under those operations, and emptiness is decidable for
automata, the satisfaction problem is decidable for MSOL.  We use the
same approach for MSOLB. Unfortunately, our automata are not closed
under complement, hence we \oldhl{cannot use them to prove satisfiability
  for the whole logic, which is closed under complement.}

\pdflatex{
\newcommand{\temptab}[1]  {\begin{tabular}c#1\end{tabular}}
\begin{figure*}[ht]
\begin{center}
\begin{tabular}{ccc}
\\ \\ \\
	&\rnode{oBS}{\temptab{\ombs-regular languages\\$BS$-formulas}}			&\\
\\ \\ \\
\rnode{oS}{\temptab{$\omega S$-regular languages\\$S$-formulas}}\hspace{-1.2cm}
	&&\hspace{-1.2cm}\rnode{oB}{\temptab{$\omega B$-regular languages\\$B$-formulas}}\\
\\ \\ \\
	&\rnode{o}{{\temptab{$\omega$-regular languages\\MSOL-formulas}}}		&\\
\\ \\
\end{tabular}
{\psset{linewidth=0}
\ncline{-}{o}{oB}\lput*{:U}{$\subset$}
\ncline{-}{o}{oS}\lput*{:U}{$\subset$}
\ncline{-}{oB}{oBS}\lput*{:U}{$\subset$}
\ncline{-}{oS}{oBS}\lput*{:U}{$\subset$}
}
\ncline{<->}{oS}{oB}\Aput*{$\neg$}
\psset{linearc=.3}
\ncloop[angleA=180,angleB=0,loopsize=1,arm=.5]{->}{o}{o}
\Bput*{$\vee,\wedge,\neg,\exists,\forall$}
\ncloop[angleA=270,angleB=90,loopsize=2,arm=.5]{->}{oS}{oS}
\aput*{0}(3.5){$\vee,\wedge,\exists,\forall,\ubouns$}
\ncloop[angleA=90,angleB=270,loopsize=2,arm=.5]{->}{oB}{oB}
\aput*{0}(1.5){$\vee,\wedge,\exists,\forall,\qabove$}
\ncloop[angleB=180,loopsize=1,arm=.5]{->}{oBS}{oBS}
\Bput*{$\vee,\wedge,\exists,\ubouns$}
\end{center}
\vspace{2mm}
\caption{Logical view of the diamond}\label{figure:logical-diamond}
\end{figure*}
}

For this reason, we consider the following fragments
of the logic MSOLB, \oldhl{which are not, in general, closed under complementation.}
\begin{defi}\label{definition:fragments}
We distinguish the following syntactic subsets of MSOLB formulas:
  \begin{itemize}
  \item The \emph{$B$-formulas} include all of MSOL and are closed under
    $\lor, \land, \forall, \exists$ and $\qabove$. 
  \item The \emph{$S$-formulas} include all of MSOL and are closed under
    $\lor, \land, \forall, \exists$ and $\ubouns$. 
  \item The \emph{$BS$-formulas} include all $B$-formulas and
    $S$-formulas, and are closed under $\lor,\land,\exists$ and $\ubouns$.
  \end{itemize}
\end{defi}
\hhl{Note} that in this definition, $B$-formulas and~$S$-formulas are dual
in the sense that the negation of an $S$-formula  is logically equivalent
to a B-formula, and vice versa.
The above fragments are tightly connected to $\omega BS$-regular languages
according to the following fact:
\begin{fact}\label{fact:logic-is-automata}
  $BS$-formulas define exactly the \ombs-regular languages. Likewise
  for $B$-formulas, and $S$-formulas, with the corresponding languages
  being $\omega B$-regular and $\omega S$-regular.
\end{fact}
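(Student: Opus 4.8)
The plan is to prove the equivalence between the three fragments of MSOLB and the three classes of $\omega BS$-regular languages by establishing two inclusions in each case, and to leverage the diamond (Figure~\ref{figure:diamond}) together with the translation results already proved (Theorem~\ref{theorem:equivalence}) rather than re-deriving closure properties from scratch. I will handle the $BS$ case as the central one and indicate how the $B$ and $S$ cases specialize.

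\emph{From formulas to languages.} First I would show, by structural induction on the formula, that every $BS$-formula defines an $\omega BS$-regular language (and likewise for $B$- and $S$-formulas). The base case consists of atomic MSOL formulas and, more generally, of arbitrary MSOL formulas, which by B\"uchi's theorem define $\omega$-regular languages; these are $\omega BS$-regular by the diamond. Here one follows the usual encoding: a formula with free variables $x_1,\dots,x_k,X_1,\dots,X_m$ defines a language over the product alphabet $\Sigma \times \{0,1\}^{k+m}$, where the extra components mark the positions assigned to the variables, with the standard first-order consistency restrictions (exactly one position per first-order variable) imposed by intersecting with an $\omega$-regular language. For the inductive step, $\lor$ corresponds to union and $\land$ to intersection, and both $\omega B$-, $\omega S$- and $\omega BS$-regular languages are closed under these by the diamond (union is built into the syntax; intersection is Corollary~\ref{corollary:intersection}). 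Existential quantification $\exists x.\varphi$ and $\exists X.\varphi$ corresponds to projection (erasing the corresponding $\{0,1\}$ component of the alphabet), under which all three classes are closed. The only genuinely new case is $\ubouns X.\varphi$: by definition $\ubouns X.\varphi$ says that sets satisfying $\varphi$ of arbitrarily large cardinality exist, and this is exactly expressed by taking the language $L_\varphi$ of $\varphi$ over the extended alphabet, marking the positions of $X$ with a fresh letter, and requiring the number of marked positions to be strongly unbounded along some suitable factorization---formally, an application of the external-constraint / interleaving machinery (Lemma~\ref{lemma:external-constraints-closure}, Lemma~\ref{lemma:interleaving}) or, more directly, an automaton construction as in Section~\ref{subsection:unbounding-quant} that puts an $S$-counter on the marked positions. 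For $B$-formulas one uses $\qabove$ instead, which is dual and so corresponds symmetrically to a $B$-counter, yielding an $\omega B$-regular language; the closure of $B$-formulas under $\qabove$ and of $S$-formulas under $\ubouns$ stays inside the respective sides of the diamond because no $S$-exponent is introduced by a $B$-counter and vice versa.

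\emph{From languages to formulas.} Conversely, I would show that every $\omega BS$-regular language is defined by a $BS$-formula, and that every $\omega B$-regular (resp.\ $\omega S$-regular) language is defined by a $B$-formula (resp.\ $S$-formula). By Theorem~\ref{theorem:equivalence} it suffices to translate a hierarchical $\omega BS$-automaton into a formula. The formula asserts the existence of a run: one existentially quantifies (using second-order $\exists$, allowed in all three fragments) a partition of the positions into the transitions taken, writes an MSOL formula stating that this guess is a valid run starting in $q_I$ and that every counter is reset infinitely often, and then for each $B$-counter uses a $\qabove$-style statement to say that the number of increments between consecutive resets is bounded, and for each $S$-counter a $\ubouns$-style statement to say those numbers tend to infinity. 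The acceptance condition ``$\alpha(\rho)$ is bounded'' is literally ``$\bouns$ applied to segment-increment-counts'', i.e.\ the negation of a $\ubouns$-statement, hence a $B$-formula; ``$\alpha(\rho)$ is strongly unbounded'' is a $\ubouns$-statement, hence an $S$-formula. In the $\omega B$-automaton case only the first kind occurs, so the whole formula is a $B$-formula (closed under $\qabove$ and the MSOL connectives/quantifiers as required by Definition~\ref{definition:fragments}); symmetrically for $\omega S$. For the general $\omega BS$-automaton the formula is a conjunction of $B$-formulas and $S$-formulas under second-order $\exists$, which is exactly a $BS$-formula. A small technical point: ``strongly unbounded'' needs $\liminf = \infty$, i.e.\ for every $N$ all but finitely many segments have at least $N$ increments; expressing ``all but finitely many'' in MSOL is routine, and ``$\ge N$ for arbitrarily large $N$'' is where the $\ubouns$ quantifier is used.

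\emph{Main obstacle.} The delicate point is the faithful handling of the $\ubouns$/$\qabove$ quantifiers across the translation---specifically, making sure that when $\ubouns X.\varphi$ is translated, the strongly-unbounded condition is placed on the cardinality of $X$ rather than on the number of blocks in some factorization, since $|X|$ is a global quantity whereas $S$-exponents naturally speak about asymptotics along a sequence. The clean way is to route this through the $S$-counter acceptance condition of $\omega S$-automata (or through Lemma~\ref{lemma:interleaving}), which is exactly designed to capture ``the total count is unbounded'' via counters that are reset at the end of a single finite word in a word-sequence. Once one notes that $\ubouns X.\varphi$ over a single $\omega$-word can equivalently be phrased as ``for every $N$ there is a marking of $\ge N$ positions satisfying $\varphi$,'' which is an external constraint of type $S$ composed with an $\omega$-regular (or $\omega BS$-regular) base language, the rest is bookkeeping; this is the step I would write out in most detail. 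Everything else is a direct transcription of the classical B\"uchi translation, combined with the diamond's closure properties and Theorem~\ref{theorem:equivalence}.
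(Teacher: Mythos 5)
Your overall architecture matches the paper's: a structural induction translating formulas into languages via the closure properties, and a converse that existentially guesses a run and expresses the acceptance condition with the new quantifiers. However, there is a genuine gap in the forward direction. By Definition~\ref{definition:fragments}, $B$-formulas and $S$-formulas are closed under universal quantification $\forall$ (first-order and set), and your induction never treats this case: you only handle $\lor$, $\land$, $\exists$ and $\ubouns$/$\qabove$. Since a $B$-formula may have the form $\forall X.\,\psi$ with $\psi$ a genuine $B$-formula (not plain MSOL), the $\forall$ case cannot be absorbed into the base case. The paper handles it by duality, $\forall X.\varphi \equiv \neg\exists X.\neg\varphi$, which requires complementing an $\omega B$-regular language into an $\omega S$-regular one and back --- i.e.\ two applications of Theorem~\ref{theorem:complementation}. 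You never invoke the complementation theorem in the formulas-to-languages direction, yet it is indispensable there.

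A related problem is your treatment of $\qabove$. You assert that it ``corresponds symmetrically to a $B$-counter,'' suggesting a direct automaton construction. But $\qabove X.\varphi$ is a universal statement over sets (``all sufficiently large sets satisfy $\varphi$''), and a nondeterministic automaton guessing a marking of $X$ cannot verify a universally quantified condition this way; the symmetry with $\ubouns$ is not at the level of counters. The correct route, which is the one the paper takes, is again duality: $\qabove X.\varphi \equiv \neg\ubouns X.\neg\varphi$, so one complements the $\omega B$-regular language of $\varphi$ into an $\omega S$-regular one (Theorem~\ref{theorem:complementation}), applies closure under $\ubouns$ (Proposition~\ref{proposition:U}), and complements back. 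Finally, your gloss of $\ubouns X.\varphi$ as ``an external constraint of type $S$ composed with a base language'' understates what is needed: $\ubouns$ is a $\forall\exists$ statement whose witnessing sets for different $N$ need not be compatible, and reducing it to a single existentially guessed object is exactly the content of the sequential-witness and good-witness lemmas behind Proposition~\ref{proposition:U}. You do cite the right section and flag this as the main obstacle, so that part is acceptable as a deferral, but the $\forall$ and $\qabove$ cases as written would not go through.
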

\begin{proof}(Sketch.)
\hhl{Thanks to the closure properties of $\omega BS$-regular
  languages, each $BS$-formula can be translated into an $\omega
  BS$-regular language.}
  We use here the standard coding of the valuation of free variables in the alphabet
  of the word, see, e.g. \cite{thomas}.
  Closure under $\vee$ and $\wedge$ is a direct consequence
  of closure under~$\cup$ and~$\cap$.
  Closure under~$\exists$ corresponds to closure under projection, which
  is straightforward for non-deterministic automata. Closure under
  universal quantification follows as dual of the existential
  quantifications (using the complementation result,
  Theorem~\ref{theorem:complementation}).
  Closure under~$\ubouns$ of $\omega S$-regular
  and $\omega BS$-regular
  languages is the subject of Section~\ref{subsection:unbounding-quant},
  and more precisely Proposition~\ref{proposition:U}.
  Closure under~$\qabove$ of $\omega B$-regular languages
  is obtained by duality (once more using Theorem~\ref{theorem:complementation}).

  For the converse implication, a formula of the
  logic can \oldhl{use existential }\ttc{set quantifiers in order to }guess
  a run of the automaton, and \ttc{then} check that this run is
  accepting using the new quantifiers.
\end{proof}

These fragments are summarized in the logical view of the diamond
presented in Figure~\ref{figure:logical-diamond}.
Also, from this fact we  get that $BS$-formulas are not expressively
complete for MSOLB, since MSOLB is closed under negation, while $\ombs$-regular
languages are not closed under complementation.

Since emptiness for $BS$-regular languages is decidable by
Fact~\ref{fact:emptiness-decidable}, we obtain the following decidability
result:
\begin{thm}\label{thm:dec-msob}
  The problem of satisfiability over $(\omega,<)$ is decidable for $BS$-formulas.
\end{thm}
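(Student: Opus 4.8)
The plan is to reduce satisfiability of $BS$-formulas to the emptiness problem for $\omega BS$-regular languages, which is decidable by Fact~\ref{fact:emptiness-decidable}. First I would use the standard encoding of free monadic variables in an enlarged alphabet: for a formula $\varphi(X_1,\dots,X_k)$ over the signature $(\omega,<)$ with letters from $\Sigma$, a valuation is encoded as an $\omega$-word over $\Sigma \times \{0,1\}^k$, the $i$-th bit track recording the characteristic function of $X_i$ (see \cite{thomas}). The language of $\omega$-words encoding models of $\varphi$ is exactly the language associated with $\varphi$ in Fact~\ref{fact:logic-is-automata}; in particular, for a $BS$-formula this language is $\omega BS$-regular, and the proof of that fact is effective, so from $\varphi$ we can compute an $\omega BS$-regular expression (equivalently, by Theorem~\ref{theorem:equivalence}, an $\omega BS$-automaton) defining it.

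Next I would carry out the reduction itself. A $BS$-formula $\varphi$ (with no free variables, say --- free variables can be closed off by existential quantification, which keeps us inside the $BS$-fragment) is satisfiable over $(\omega,<)$ if and only if the $\omega BS$-regular language $L_\varphi$ it defines is nonempty. By Fact~\ref{fact:emptiness-decidable}, emptiness of $L_\varphi$ is decidable: one inspects the finite union $\bigcup_i M_i \cdot L_i^\omega$ defining $L_\varphi$ and checks, for each $i$, whether $M_i$ is a nonempty regular language and whether $L_i$ contains a word sequence with infinitely many nonempty words, the latter being settled by the inductive membership rules for the classes $\mathcal I$ and $\mathcal N$ given in the proof of Fact~\ref{fact:emptiness-decidable}. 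Composing the effective translation of the previous paragraph with this decision procedure yields a decision procedure for satisfiability of $BS$-formulas.

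The main point to be careful about --- and the step I would expect to need the most attention --- is verifying that the translation from a $BS$-formula to an $\omega BS$-regular expression is genuinely \emph{effective}, i.e.\ that every closure operation invoked in the proof of Fact~\ref{fact:logic-is-automata} (closure under $\vee$, $\wedge$, $\exists$, and $\ubouns$) comes with an algorithm, not merely an existence statement. Closure under $\cup$ is immediate from the syntax of $\omega BS$-regular expressions; closure under $\cap$ is effective via the product construction on automata (Corollary~\ref{corollary:intersection}, using Theorem~\ref{theorem:equivalence} to pass between expressions and automata); closure under projection is a straightforward automaton construction; and closure under $\ubouns$ is effective by Proposition~\ref{proposition:U} from Section~\ref{subsection:unbounding-quant}. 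Since the $BS$-fragment is, by Definition~\ref{definition:fragments}, built up using only these operations (crucially, it is \emph{not} closed under negation, so the complementation result, which is effective but non-elementary, is not needed here), an easy structural induction on $\varphi$ produces the automaton. The theorem follows.
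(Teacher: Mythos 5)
Your overall reduction --- effective compilation of a $BS$-formula into an $\omega BS$-regular language (Fact~\ref{fact:logic-is-automata}), followed by the emptiness test of Fact~\ref{fact:emptiness-decidable} --- is exactly the paper's argument. However, your parenthetical claim that the complementation result is not needed is incorrect, and it leaves a real hole in the structural induction. By Definition~\ref{definition:fragments}, the $BS$-formulas \emph{include all $B$-formulas and all $S$-formulas}, and those sub-fragments are closed under $\forall$ and under $\qabove$ (respectively $\ubouns$). Hence a $BS$-formula may contain subformulas of the form $\forall X.\psi$ or $\qabove X.\psi$, and your induction, which only treats $\vee,\wedge,\exists$ and $\ubouns$, has no case for them.

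The paper discharges these cases by duality: $\forall X.\psi \equiv \neg\exists X.\neg\psi$ and $\qabove X.\psi \equiv \neg\ubouns X.\neg\psi$, and since the negation of a $B$-formula is logically equivalent to an $S$-formula (and vice versa), one complements the $\omega B$-regular language into an $\omega S$-regular one (or conversely), applies projection or the $\ubouns$ operation, and complements back. This is precisely where Theorem~\ref{theorem:complementation} enters the proof of Fact~\ref{fact:logic-is-automata}; the construction there is effective (Proposition~\ref{prop:our-main-compl} produces the languages and specifications effectively), so the translation remains algorithmic --- but without it the $\forall$ and $\qabove$ cases simply cannot be handled. Once you add these two cases, invoking the (effective, if non-elementary) complementation theorem, your proof coincides with the paper's.
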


In the \oldhl{proof of Fact}~\ref{fact:logic-is-automata}, we left out 
   the proof of closure under~$\ubouns$.
We present it in the next section.

\subsection{Closure under existential unbounding quantification $(\,\ubouns\,)$}
\label{subsection:unbounding-quant}

Here we show that the classes of~$\omega S$- and \ombs-regular
languages are closed under application of the quantifier~$\ubouns$.
This closure is settled by Proposition~\ref{proposition:U}.

\oldhl{In order to show this closure, we need to} describe the quantifier $\ubouns$ as a language
operation, in the same way existential quantification corresponds
to projection.  Let $\Sigma$ be an alphabet, and consider a language
$L \subseteq (\Sigma \times \set{0,1})^\omega$. Given a word $w \in
\Sigma^\omega$ and a set $X \subseteq \Nat$, let $w[X] \in (\Sigma
\times \set{0,1})^\omega$ be the word obtained from $w$ by setting the
second coordinate to $1$ at the positions from $X$ and to $0$ at the
other positions. We define $\ubouns(L)$ to be the set of those words
$w \in \Sigma^\omega$ such that for every $N \in \Nat$ there is a set
$X \subseteq \Nat$ of at least $N$ elements such that $w[X]$ belongs
to $L$.

Restated in terms of this operation, closure under unbounding quantification becomes:
\begin{prop}\label{proposition:U}
  Both~$\omega S$ and \ombs-regular languages are closed under $\ubouns$.
\end{prop}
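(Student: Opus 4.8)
The plan is to realize the operation $\ubouns(L)$ at the level of automata, using an $\omega S$-automaton (or $\omega BS$-automaton) for $L$ over the alphabet $\Sigma \times \set{0,1}$ and producing one for $\ubouns(L)$ over $\Sigma$. The naive idea — project away the second coordinate and add a new $S$-counter that is incremented on every transition reading a $1$ — almost works, but it asks for the distance marked by $X$ to tend to infinity, whereas $\ubouns(L)$ only requires, for each $N$, \emph{some} witness set $X$ of size $\ge N$, possibly using a \emph{different} witness for each $N$. The fix is to exploit the fact that an $\omega S$-automaton already quantifies over runs: for a fixed input $w$, the automaton for $\ubouns(L)$ will guess, simultaneously and independently in consecutive blocks of the word, an infinite sequence of witness sets $X_1, X_2, \dots$, one per block, and accept if the sizes of these guessed sets form a strongly unbounded sequence. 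Making the blocks long enough is unproblematic because strong unboundedness is insensitive to how the guesses are distributed.

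Concretely, first I would take a hierarchical $\omega S$-automaton $\Aa$ for $L$ (available by Theorem~\ref{theorem:equivalence}). The key observation, in the spirit of the word-sequence automata of Section~\ref{subsection:sequence-automata}, is that $\ubouns(L)$ has the following characterization: $w \in \ubouns(L)$ iff there is a factorization $w = w_0 w_1 w_2 \cdots$ into finite nonempty words and, for each $i$, a finite set $X_i$ of positions of $w_i$, such that (i) each ``partial witness'' $w_i[X_i]$ extends consistently along $\Aa$ — more precisely, the run of $\Aa$ on $w[X_0 \cup X_1 \cup \cdots]$ where $X_i \subseteq \mathrm{positions}(w_i)$ is accepting for the whole word — and (ii) the sequence $|X_0|, |X_1|, \dots$ is strongly unbounded. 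The direction ``$\Leftarrow$'' is clear since an accepting run on $w[X]$ for $X = \bigcup_i X_i$ witnesses $w[X] \in L$, and $|X| \ge |X_i| \to \infty$, so witnesses of every size exist. For ``$\Rightarrow$'': given $w \in \ubouns(L)$, one extracts, for each $N$, a set $X^{(N)}$ with $|X^{(N)}| \ge N$ and $w[X^{(N)}] \in L$; by a pigeonhole/compactness argument one chops $w$ into blocks so that, after extracting a subsequence of the $X^{(N)}$'s, the $N$-th block contains at least (roughly) $N$ positions contributing to $X^{(N)}$, and one threads the corresponding accepting runs together — this is exactly the kind of "many $a$'s before or after the $b$" bookkeeping the paper advertises.

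Given this characterization, the automaton for $\ubouns(L)$ is built by adding to $\Aa$ one fresh counter of type $S$, of rank strictly higher than all existing counters (so the hierarchical discipline is preserved), which is reset at every block boundary and incremented each time the automaton nondeterministically decides that the current position lies in the current block's witness set; simultaneously the second coordinate of the alphabet is projected away, with the automaton choosing nondeterministically, on each letter, whether to treat that position as "in $X$" (taking a $1$-transition of $\Aa$ and incrementing the new counter) or "not in $X$" (taking a $0$-transition). The resulting automaton is an $\omega S$-automaton (respectively $\omega BS$-automaton) recognizing $\ubouns(L)$, and by Theorem~\ref{theorem:equivalence} the language is $\omega S$-regular (respectively $\omega BS$-regular). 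I expect the main obstacle to be the "$\Rightarrow$" direction of the characterization: one must turn the \emph{a priori unrelated} family of witness sets $\set{X^{(N)}}$ into a \emph{single} block decomposition with \emph{per-block} growing witnesses together with \emph{one} accepting run threading through all of them, which requires a careful diagonal extraction — choosing block boundaries and a subsequence of $N$'s together so that each later block inherits enough witness mass — and is where all the real work lies.
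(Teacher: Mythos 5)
Your construction is sound but not complete, and the problem is not just the ``$\Rightarrow$'' direction of your characterization --- the characterization itself is false. Take $L \subseteq (\Sigma\times\set{0,1})^\omega$ to be the ($\omega$-regular, hence $\omega S$-regular) language of words with finitely many $1$'s on the second coordinate. Then $\ubouns(L)=\Sigma^\omega$, since for every $N$ any finite $X$ with $|X|\ge N$ satisfies $w[X]\in L$. But your characterization requires per-block sets $X_i$ with $|X_i|$ strongly unbounded and $w[\bigcup_i X_i]\in L$; strong unboundedness forces $\bigcup_i X_i$ to be infinite, so no such family exists, and your automaton accepts nothing (the fresh $S$-counter forces infinitely many positions to be read via $1$-transitions, which the underlying automaton for $L$ must reject). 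The root cause is that you insist on a \emph{single} combined witness $\bigcup_i X_i$ lying in $L$, whereas $\ubouns(L)$ only requires a family of pairwise unrelated witnesses of unbounded size, which need not accumulate to any set in $L$.

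The paper's proof is organized around exactly the two ideas needed to repair this. First, it replaces $L$ by its downward closure $K=\set{w[X] :  w[Y]\in L \text{ for some } Y\supseteq X}$, which is still recognizable by the same kind of automaton and satisfies $\ubouns(K)=\ubouns(L)$; downward closure is what permits restricting each witness to a block, yielding a \emph{sequential} witness $X_1,X_2,\ldots$ with $X_i$ in the $i$-th block and $w[X_i]\in K$ \emph{separately for each $i$} --- no union is ever formed. Second, to make this checkable by one automaton (which cannot track infinitely many independent accepting runs), the paper proves via a Ramsey-type lemma on partial sequences (Lemma~\ref{lemma:infmeet}) that one may assume the accepting runs $\rho_i$ over the words $w[X_i]$ pairwise agree on almost all positions (a ``good witness''); the automaton then guesses a single backbone run $\rho$ together with the blocks and the sets $X_i$, checks that $\rho$ is accepting and that the $|X_i|$ are unbounded, and verifies as a plain regular condition that for each $i$ some accepting run over $w[X_i]$ coincides with $\rho$ from some point on. Your ``thread the runs together'' step would have to reprove both of these facts, and without the downward closure it cannot even get started.
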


We begin with a simple auxiliary result.  A \emph{partial sequence} over
an alphabet $\Sigma$ is a word in $\bot^*\Sigma^\omega$, in which~$\bot\not\in\Sigma$
is a fresh symbol.  A partial
sequence is \emph{defined} at the positions where it does not have
value~$\bot$, it is \intro{undefined} at  positions of value~$\bot$.
We say that two partial sequences \emph{meet} if there is some
position where they are both defined and have the same value.

\begin{lem}\label{lemma:infmeet}
  Let $I$ be an infinite set of partial sequences over a finite
  alphabet~$\Sigma$.  There is a partial sequence in $I$ that meets infinitely
  many partial sequences from $I$.
\end{lem}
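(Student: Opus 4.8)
The plan is a short pigeonhole argument followed by a greedy extraction. Write $m = |\Sigma|$ for the size of the alphabet. The first step is the key observation: \emph{among any $m+1$ partial sequences, two of them meet.} Indeed, each partial sequence lies in $\bot^*\Sigma^\omega$, so it is undefined only on a finite initial segment of positions. Given $m+1$ partial sequences $s_0,\dots,s_m$, pick a position $i$ larger than all of these initial segments. Then every $s_j$ is defined at $i$, so the values $s_0(i),\dots,s_m(i)$ all lie in $\Sigma$; since there are $m+1$ of them and $|\Sigma| = m$, pigeonhole forces $s_j(i) = s_{j'}(i)$ for some $j \ne j'$, and then $s_j$ and $s_{j'}$ meet at position $i$. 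In particular, any set of \emph{pairwise non-meeting} partial sequences has at most $m$ elements.

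Now suppose, towards a contradiction, that no partial sequence in $I$ meets infinitely many partial sequences from $I$; that is, every $s \in I$ meets only finitely many elements of $I$. The plan is to construct $m+1$ pairwise non-meeting elements of $I$, contradicting the observation above. Build $s_1, s_2, \dots, s_{m+1}$ greedily: having chosen $s_1, \dots, s_n$ with $n \le m$, the set of elements of $I$ that either coincide with one of $s_1,\dots,s_n$ or meet one of them is finite, being a finite union of finite sets; since $I$ is infinite, we may choose $s_{n+1} \in I$ outside this set, so that $s_{n+1}$ is distinct from, and does not meet, any of $s_1, \dots, s_n$. After $m+1$ steps we obtain pairwise non-meeting partial sequences $s_1, \dots, s_{m+1}$, which is impossible. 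This contradiction proves the lemma.

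I expect no serious obstacle here. The only point requiring a little care is noticing that the finiteness of the $\bot$-prefix is exactly what guarantees that any finite family of partial sequences shares a position at which all of them are defined, which is what powers the pigeonhole step. (Equivalently, one could phrase the argument via the ``meeting'' graph on $I$: all its vertices have finite degree, so it contains arbitrarily large independent sets, contradicting the bound $m$ on the size of independent sets established in the first step.)
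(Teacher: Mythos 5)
Your proof is correct. The key step---that among any $|\Sigma|+1$ partial sequences two must meet, because the $\bot$-prefixes are finite and so all of them are simultaneously defined at some position, where pigeonhole applies---is sound, and the greedy extraction of a pairwise non-meeting family of size $|\Sigma|+1$ from the assumed local finiteness of the ``meeting'' relation is a standard and valid argument (a locally finite graph on an infinite vertex set has arbitrarily large independent sets). The paper takes a different route: it introduces the notion of a \emph{constrainer}, an infinite word over $\mathcal{P}(\Sigma)$ whose $i$-th letter contains the $i$-th value of every defined sequence, and inducts on the maximal size of a set the constrainer uses infinitely often. The base case (size $1$) is exactly your observation that everything meets everything; the inductive step picks some $s\in I$, and if $s$ fails to meet infinitely many sequences, removes $s$'s values from the constrainer and recurses on the non-meeters. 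Both arguments ultimately rest on the same bound---a pairwise non-meeting family has at most $|\Sigma|$ elements---but you establish that bound in one shot by pigeonhole at a single common defined position, whereas the paper reaches it by peeling off one letter of the alphabet per induction step. Your version is shorter and avoids introducing the auxiliary constrainer machinery; the paper's version has the mild advantage of not needing a position where all members of a finite family are simultaneously defined, only that each sequence is undefined at finitely many positions. Since partial sequences here lie in $\bot^*\Sigma^\omega$, both hypotheses hold, so nothing is lost either way.
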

\begin{proof}
  A \emph{constrainer} for $I$
  is an infinite word $c$ over $P(\Sigma)$ such that \hhl{for each $i
    \in \Nat$},  the $i$-th
  position of every sequence in $I$ is either undefined or belongs to
  $c_i$.  The size of a constrainer is the maximal size of a set it
  uses infinitely often.

  We prove the statement of the lemma
  by induction over the size of a constrainer for $I$.
  Since every $I$ admits a constrainer of size $|\Sigma|$,
  this concludes the proof.

  The base case is when $I$ admits a constrainer of size $1$;
  in this case, every two partial sequences in~$I$ meet,
  so any partial sequence in $I$\hhl{satisfies the statement of the lemma}.
  Consider now a set $I$ with a constrainer $c$ of
  size $n$. Take some sequence $s$ in $I$.  If $s$ meets infinitely
  many sequences from $I$, then we are done.  Otherwise let $J
  \subseteq I$ be the (infinite) set of sequences that do not meet
  $s$. \ttc{One} can verify that $d$ is a constrainer for $J$, where
  $d$ is defined by $d_i=c_i \setminus \set {s_i}$. Moreover, $d$ is
  of size $n-1$ (since $s_i = \bot$ can hold only for finitely many $i$). We then apply the induction hypothesis.
\end{proof}

Let $L$ be a language of infinite words over $\Sigma \times \zj$
recognized by an $\ombs$-automaton. We want to show that the language
$\ubouns(L)$ is also recognized by a bounding automaton.  Consider the
following language:
\begin{equation*}
  K = \set{w[X] :  \mbox{ for some $Y\supseteq X$, } w[Y] \in L}\ .
\end{equation*}
This language is downward closed in the sense that if $w[X]$ belongs
to $K$, then $w[Y]$ belongs to $K$ for every $Y \subseteq X$.
Furthermore, clearly $\ubouns(L)=\ubouns(K)$.  Moreover, if $L$ is
recognized by an $\ombs$-automaton (resp. $\omega S$-automaton), then
so is~$K$. Let \oldhl{then} $\Aa$ be an $\ombs$-automaton recognizing $K$. We will
construct an $\ombs$-automaton recognizing $\ubouns(K)$.

Given a word $w \in \Sigma^\omega$, \oldhl{we say that a sequence of sets $X_1,X_2,\ldots
\subseteq \nats$ is an \emph{an unbounding witness}} for $K$ if for
every $i$, the word $w[X_i]$ belongs to $K$ and the sizes of the \hhl{sets $X_i$}
are unbounded. An unbounding witness is \emph{sequential} if there is
a sequence of numbers $a_1<a_2<\cdots$  such that for \oldhl{each~$i$,
  all elements of the set  $X_i$ are between $a_i$ and $a_{i+1}-1$.}

The following lemma is a \oldhl{consequence of $K$ being downward closed. }
\begin{lem}\label{lemma:unboundig-sequential}
  A word \hhl{that has an unbounding witness for~$K$ also has a sequential one.}
\end{lem}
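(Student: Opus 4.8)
The plan is to build a sequential unbounding witness greedily, using the fact that $K$ is downward closed to discard, at each stage, the initial part of a large witnessing set $X_i$ — the part that would collide with the blocks produced so far — while keeping enough of its tail. In particular, Lemma~\ref{lemma:infmeet} is not needed for this step.

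Fix the word $w$ and an unbounding witness $X_1, X_2, \dots$ for $K$, so that each $w[X_i]$ lies in $K$ and $\sup_i |X_i| = \infty$. The single observation driving the argument is the following: for every integer $a$ and every $n$ there is an index $i$ with $|X_i \cap \{a+1, a+2, \dots\}| \ge n$. Indeed, choosing $i$ with $|X_i| \ge a + 1 + n$ (possible since the sizes are unbounded, and automatic when $X_i$ is infinite), the window $\{0, 1, \dots, a\}$ contains at most $a+1$ positions, so $X_i$ has at least $n$ elements above $a$. Since $K$ is downward closed, $w$ restricted to any subset of $X_i \cap \{a+1, a+2, \dots\}$ still belongs to $K$.

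Using this, I would define integers $a_1 < a_2 < \cdots$ and finite sets $Z_1, Z_2, \dots$ by induction on $n$. Set $a_1 = 0$. Having chosen $a_n$, apply the observation with $a = a_n$ to obtain an index $i$ and pick an $n$-element subset $Z_n \subseteq X_i \cap \{a_n+1, a_n+2, \dots\}$; then $w[Z_n] \in K$ by downward closure. Put $a_{n+1} = \max Z_n + 1$, which is a well-defined natural number since $Z_n$ is finite. Then $a_{n+1} > a_n$, the sizes $|Z_n| = n$ tend to infinity, every $w[Z_n]$ lies in $K$, and by construction all elements of $Z_n$ lie strictly between $a_n$ and $a_{n+1}$, hence certainly between $a_n$ and $a_{n+1} - 1$. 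Therefore $Z_1, Z_2, \dots$, together with the sequence $a_1 < a_2 < \cdots$, is a sequential unbounding witness for $K$.

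There is no serious obstacle here: the construction is a one-line induction once the window-size observation is isolated. The only points requiring a little care are that some $X_i$ may be infinite — handled uniformly by the observation, since then $X_i \cap \{a+1, a+2, \dots\}$ is infinite as well — and that one must finitize each chosen block (taking exactly $n$ elements) so that the next boundary $a_{n+1}$ is finite and the blocks genuinely occur in disjoint, increasing windows.
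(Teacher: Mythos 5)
Your proof is correct, and it is exactly the argument the paper intends: the paper omits the proof of Lemma~\ref{lemma:unboundig-sequential}, remarking only that it is "a consequence of $K$ being downward closed," and your greedy construction --- repeatedly discarding the initial segment of a sufficiently large $X_i$ and keeping an $n$-element tail, which stays in $K$ precisely by downward closure --- is the natural way to spell that out. You are also right that Lemma~\ref{lemma:infmeet} plays no role here; it is only needed for the subsequent lemma about good witnesses.
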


Let $X_1,X_2,\ldots$ be a sequential unbounding witness and let $a_1 <
a_2 \cdots$ be the appropriate sequence of numbers.  Let
$\rho_1,\rho_2,\ldots$ be accepting runs of the automaton $\Aa$ over the
words $w[X_1],w[X_2],\ldots$ Such runs exist by definition of
the unbounding witness. \oldhl{A  sequential witness  $X_1,X_2,\ldots$ is
  called a \emph{good
  witness}} if every two runs $\rho_i$ and $\rho_j$ agree on almost all
positions.

\begin{lem}
  A word belongs to $\ubouns(K)$ if and only if it admits a good witness.
\end{lem}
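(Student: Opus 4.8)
The plan is to handle the two directions separately: the ``good witness $\Rightarrow\ubouns(K)$'' direction is a one-line unwinding of definitions, while the converse is where the work lies, and I would base it on the combinatorial Lemma~\ref{lemma:infmeet} (the ``meeting partial sequences'' lemma). For the easy direction: if $X_1,X_2,\dots$ is a good witness then each $w[X_i]$ is accepted by $\Aa$, hence lies in $K$, and the $|X_i|$ are unbounded, so $X_1,X_2,\dots$ is already an unbounding witness and $w\in\ubouns(K)$.

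For the converse, suppose $w\in\ubouns(K)$. By Lemma~\ref{lemma:unboundig-sequential} I may fix a sequential unbounding witness $X_1,X_2,\dots$ with separating positions $a_1<a_2<\cdots$, so that $X_i$ is contained in the window $W_i:=[a_i,a_{i+1})$; passing to a subsequence of the indices (which is again a sequential unbounding witness) I may also assume $|X_i|\to\infty$. The windows are then pairwise disjoint, tend to the right, and for every $i$ the word $w[X_i]$ agrees off $W_i$ with $w[\emptyset]$, the word $w$ with every second coordinate set to $0$. Fix an accepting run $\rho_i$ of $\Aa$ over $w[X_i]$ for each $i$. The key observation is that to the right of $W_i$ the run $\rho_i$ reads exactly the letters of $w[\emptyset]$, so I may form the partial sequence $s_i$ over the finite state set $Q$ of $\Aa$ that is undefined on $W_i$ and everything to its left and that, to the right of $W_i$, records the states visited by $\rho_i$. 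Since the windows move off to infinity, the $s_i$ have strictly growing numbers of leading undefined positions, hence are pairwise distinct and form an infinite set of partial sequences.

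Now apply Lemma~\ref{lemma:infmeet} to $\{s_i\}$: it yields an index $i_0$ and an infinite set $J$ of indices, all larger than $i_0$, such that for every $j\in J$ there is a position $m_j$ lying to the right of $W_j$ (hence also of $W_{i_0}$) at which $\rho_{i_0}$ and $\rho_j$ visit the same state. For $j\in J$ define $\rho'_j$ to follow $\rho_j$ strictly before $m_j$ and to follow $\rho_{i_0}$ from $m_j$ on; this is a legal run since the two agree on the state at $m_j$, and it is a run over $w[X_j]$, because before $m_j$ it reads $w[X_j]$ and from $m_j$ on — being past $W_j$ — one has $w[X_j]=w[\emptyset]$, which is what $\rho_{i_0}$ reads there. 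The family $(\rho'_j)_{j\in J}$ then witnesses that $(X_j)_{j\in J}$ is a good witness, since the $\rho'_j$ pairwise agree on a cofinite set of positions: each equals $\rho_{i_0}$ from $m_j$ onwards, so any two coincide from the larger of their splice points on.

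The delicate point — and essentially the only content of the argument — is checking that each spliced run $\rho'_j$ is still \emph{accepting}. This is where the asymptotic nature of the $\omega BS$-acceptance condition is used: $\rho'_j$ starts in $q_I$ (inherited from $\rho_j$, as $m_j>0$), and for every counter $\alpha$ the sequence $\alpha(\rho'_j)$ coincides with the accepting sequence $\alpha(\rho_{i_0})$ from some point on (up to discarding a finite prefix of each), so it is infinite and has the same $\limsup$/$\liminf$ behaviour — bounded when $\alpha$ has type $B$, strongly unbounded when $\alpha$ has type $S$. Everything else is routine bookkeeping: that passing to the index subsequence preserves being a sequential unbounding witness, that $m_j$ can be taken past the relevant window, and that the $s_i$ genuinely form an infinite set so that Lemma~\ref{lemma:infmeet} applies.
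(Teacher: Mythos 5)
Your proof is correct and follows essentially the same route as the paper's: reduce to a sequential witness, encode the tails of the accepting runs as partial sequences, apply Lemma~\ref{lemma:infmeet} to find one run meeting infinitely many others, splice at the meeting points, and observe that a finite modification cannot affect the asymptotic acceptance condition. Your version is in fact slightly more careful than the paper's on two minor points (legality of the spliced run as a run over $w[X_j]$, and preservation of unboundedness of $|X_j|$ when passing to the subsequence indexed by $J$).
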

\begin{proof}
  By Lemma~\ref{lemma:unboundig-sequential}, a word belongs to
  $\ubouns(K)$, 
  if and only if it admits a sequential witness.  \oldhl{Let $X_i$, $a_i$
    and $\rho_i$ be as above.}
  For $i$, let $s_i$ be
  the partial sequence that has $\bot$ at positions before $a_{i+1}$
  and agrees with $\rho_i$ after $a_{i+1}$.  By applying
  Lemma~\ref{lemma:infmeet} to the set $\set{s_1,s_2\ldots}$, we can find
  a run $\rho_i$ and a set $J \subseteq \Nat$ such that for every $j
  \in J$, the runs $\rho_i$ and $\rho_j$ agree on some position $x_j$
  after $a_{j+1}$. For $j \in J$, let $\rho'_j$ be a run that is
  defined as $\rho_j$ at positions before $x_j$ and is defined as
  $\rho_i$ at positions after $x_j$. Since modifying the counter
  values over a finite set of positions does not violate the
  acceptance condition, the run $\rho'_j$ is also an accepting run
  over the word $w[X_j]$.  For every $j,k \in J$, the runs $\rho'_j$
  and $\rho'_k$ agree on almost all positions (i.e., positions after
  both $x_j$ and $x_k$).  Therefore the \oldhl{sequential} witness obtained by using only
  the sets $X_j$ with $j \in J$ is a good witness.
\end{proof}

\begin{lem}
  Words admitting a good witness can be recognized by a bounding
  automaton.
\end{lem}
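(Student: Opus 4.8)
The plan is to build an $\omega BS$-automaton $\Bb$ that nondeterministically guesses a good witness $X_1, X_2, \ldots$ together with the common ``tail run'' that almost all of the runs $\rho_i$ eventually follow, and verifies the acceptance condition of $\Aa$ on that tail while counting, with an $S$-counter, the sizes of the guessed sets $X_i$. The key observation to exploit is the definition of a good witness: the runs $\rho_1, \rho_2, \ldots$ over $w[X_1], w[X_2], \ldots$ pairwise agree on almost all positions, so there is a single infinite run $\sigma$ of $\Aa$ such that each $\rho_i$ coincides with $\sigma$ from some point on. Since $K$ is downward closed and the $X_i$ can be taken sequential (Lemma~\ref{lemma:unboundig-sequential}), with $X_i$ living in the block $[a_i, a_{i+1}-1]$, the automaton only needs to, within each block, nondeterministically mark a subset $X_i$ of positions, split off a local run $\rho_i$ that agrees with the global run $\sigma$ outside a finite prefix of the block, check that $\rho_i$ witnesses membership $w[X_i] \in K$, and feed the number $|X_i|$ to a fresh $S$-counter (incrementing it once per marked position in the block, resetting it at block boundaries).

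Concretely, I would first replace $L$ by the downward-closed language $K$ of the excerpt (legitimate since $\ubouns(L) = \ubouns(K)$ and $K$ is recognized by an $\omega BS$-automaton, resp.\ $\omega S$-automaton, whenever $L$ is), and fix an $\omega BS$-automaton $\Aa = (Q, \Sigma \times \zj, q_I, \Gamma_B, \Gamma_S, \delta)$ for $K$. The new automaton $\Bb$ runs over $\Sigma^\omega$. Its state will carry: (i) the current state of the ``global'' run $\sigma$ of $\Aa$ — reading an input letter $a \in \Sigma$, $\Bb$ simulates a transition of $\Aa$ on $(a,0)$ (this is the run that almost all $\rho_i$ follow, hence ignores the markings eventually), together with the corresponding counter actions on $\Gamma_B \cup \Gamma_S$; (ii) optionally, when $\Bb$ decides it is inside the $i$-th block and currently in the initial finite ``private'' segment of $\rho_i$, a second copy of an $\Aa$-state tracking $\rho_i$, which reads $(a, 1)$ on the positions $\Bb$ chooses to put into $X_i$ and $(a,0)$ elsewhere; this private segment must be joined back to the global state before the block ends. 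To count, $\Bb$ has one extra $S$-counter $\gamma$: it is incremented each time $\Bb$ marks a position into the current $X_i$ and reset at each block boundary $a_{i+1}$. The counters of $\Aa$ are copied into $\Bb$ with the same types. Acceptance of $\Bb$: the copied $\Aa$-counters satisfy $\Aa$'s condition on the global run $\sigma$ (this guarantees $\sigma$, hence each $\rho_i$ up to a finite modification, is accepting, using that finite changes to counter values do not affect acceptance), and the new $S$-counter $\gamma$ is strongly unbounded (this guarantees $|X_i| \to \infty$, i.e., we really have an unbounding witness), and there are infinitely many block boundaries.

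For correctness: if $w$ admits a good witness $X_1, X_2, \ldots$ with runs $\rho_i$ agreeing on almost all positions, take $\sigma$ to be the eventual common run and choose block boundaries $a_i$ large enough that $\rho_i$ agrees with $\sigma$ from the start of block $i+1$ onward except for a finite prefix contained in block $i$; then the described run of $\Bb$ is accepting. Conversely, an accepting run of $\Bb$ exhibits, block by block, sets $X_i$ with $w[X_i] \in K$ (the private-segment-plus-global-tail is an accepting run of $\Aa$ on $w[X_i]$, after the harmless finite counter modification) of unbounded size, i.e.\ a sequential unbounding witness, so $w \in \ubouns(K) = \ubouns(L)$. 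In the $\omega S$ case, $\Aa$ has no $B$-counters and the only counter $\Bb$ adds is the $S$-counter $\gamma$, so $\Bb$ is an $\omega S$-automaton, as required.

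The main obstacle, and the point needing the most care, is the bookkeeping that keeps the private run $\rho_i$ and the global run $\sigma$ consistent within a block: $\Bb$ must guess, inside block $i$, a position at which to ``merge'' the private copy into the global state, and it must ensure both that the private run reaches that merge point in the right $\Aa$-state and that $w[X_i]$ is genuinely accepted by $\Aa$. The cleanest way is to have the private copy carry a guessed target state (the state of $\sigma$ at the merge position) and to verify, when merging, that the private run has reached exactly that state; acceptance of $\Aa$ on $w[X_i]$ then follows from acceptance of $\sigma$ plus the fact that the tail of $\rho_i$ is literally the tail of $\sigma$, invoking the robustness of the acceptance condition under finite counter modifications. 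The finitely-many-blocks-versus-infinitely-many-blocks distinction, and ensuring $\gamma$ genuinely tends to infinity rather than merely being reset infinitely often, is handled by the $S$-acceptance condition on $\gamma$ exactly as in the basic simulation of B\"uchi automata by $\omega S$-automata described earlier in the paper.
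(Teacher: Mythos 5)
Your overall strategy is the same as the paper's: guess a sequential witness together with a ``global'' run of $\Aa$, check that the global run is accepting, count the sizes $|X_i|$ with one fresh $S$-counter, and verify blockwise that each $w[X_i]$ is witnessed by a run that eventually merges with the global one. The paper's proof is exactly this, phrased as an intersection of three properties followed by a projection. However, your implementation of the third property has a genuine gap in the completeness direction. A good witness only guarantees that the runs $\rho_i$ pairwise agree on \emph{almost all} positions; it does not guarantee that $\rho_i$ coincides with the common run $\sigma$ before block $i$, nor that it rejoins $\sigma$ from a state lying on $\sigma$. Concretely, $\rho_i$ may follow, from position $0$ up to its merge point, a path through states that $\sigma$ never visits (for instance if the transitions reading the marked letters are only reachable along such a path); this is still a finite disagreement with every other $\rho_j$, hence still a good witness. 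Your automaton, which spawns the private copy from the \emph{current global state} inside block $i$ and requires it to merge back before the block ends, cannot simulate such a witness, and the step ``choose block boundaries large enough that $\rho_i$ agrees with $\sigma$ except for a finite prefix contained in block $i$'' does not follow from the definition. A second, related issue: you fix the global run to read $(a,0)$ everywhere, but the run that almost all $\rho_i$ eventually agree with is itself a run over some $w[X_{i_0}]$, and its tail need not be extendable backwards to a run over $w[\emptyset]$ from $q_I$.

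Both problems are repairable, and the repair is essentially what the paper's terser argument relies on. The condition to check is the weaker one: ``for every $i$, \emph{some} run over $w[X_i]$ agrees with $\rho$ on almost all positions,'' where the witnessing run may deviate from $\rho$ on an arbitrary finite set. This is an $\omega$-regular property of the word, the guessed markings and the guessed run, but verifying it requires tracking \emph{sets} of reachable states (a subset construction seeded at $q_I$, updated by reading $(w,0)$ outside the current block and $(w,X_i)$ inside it, with the obligation ``$\rho$ eventually enters the tracked set'' discharged via a breakpoint construction), rather than a single private copy branching off the global state. Likewise the global run should be guessed together with its own (eventually-zero) marking, which is projected away at the end. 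With those two changes your construction matches the paper's; as written, the completeness claim does not hold.
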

\begin{proof}
  Given a word $w$, the automaton is going to guess a sequential
  witness
  \begin{equation*}
    a_1 < a_2 < \cdots \qquad X_1 \subseteq [a_1,a_2-1], X_2
    \subseteq [a_2,a_3-1]  \ldots
  \end{equation*}
  and a run $\rho$ of $\Aa$ over $w$ and verify the following
  properties:
  \begin{itemize}
  \item The run $\rho$ is accepting;
  \item There is no bound on the size of the $X_i$'s;
  \item For every $i$, some run over $w[X_i]$ agrees with $\rho$ on
    almost all positions.
  \end{itemize}
  The first property can be obviously verified by an 
  \ombs-automaton. For the second property, the automaton
  nondeterministically chooses a subsequence of $X_1,X_2,\ldots$ where
  the sizes are strongly unbounded. The third property is a regular
  property. The statement of the lemma then follows by closure of
  bounding automata under projection and intersection.
\end{proof}

\subsection{Bounds on the out-degree of a graph interpreted on sets}
\label{subsection:automatic}

In this section, somewhat disconnected from the rest of the paper,
we show how to use the logic MSOLB for solving a non-trivial
question concerning $\omega$-automatic structures.
An \emph{$\omega$-automatic (directed) graph (of injective
  presentation)} is a graph 
described by two formulas of MSOL, which
are interpreted in the natural numbers (hence the term
$\omega$-automatic).
The vertexes of the $\omega$-automatic graph are sets of natural numbers.
The first formula $\delta(X)$ has one free set
variable, and says which sets of natural numbers will be used as vertexes
of the graph. The second formula $\varphi(X,Y)$ has two free set
variables, and says which vertexes of the graph are connected by an edge
(if $\varphi(X,Y)$ holds, then both $\delta(X)$ and $\delta(Y)$ must
hold).  The original idea of automaticity has been proposed by Hodgson
\cite{hodgson83} via an automata theoretic presentation.
The more general approach that we use here of logically defining a structure
in the powerset of another structure is developed
in~\cite{colcombetloeding07}.
We show in this section that it is possible to decide, given the
formulas \oldhl{$\delta(X)$ and $\varphi(X,Y)$} of MSOL
defining an~$\omega$-automatic graph, whether this graph
has bounded out-degree or not.

Let $\varphi(X,Y)$ be a formula of MSOLB with two free set variables.
This formula can be seen as an edge relation on sets, i.e., it defines
a directed graph with sets as vertexes. We show here
that MSOLB can be used to say that this edge relation has unbounded
out-degree. The formula presented below will work on any
structure---not just $(\omega,<)$---but the decision
procedure will be limited to infinite words, since it requires
testing satisfiability of MSOLB formulas.

\oldhl{In the following,
we say that a set $Y$ is a \emph{successor} of a set $X$ if
$\varphi(X,Y)$ holds. } We begin by defining the notion of an $X$-witness. This is a
set witnessing that the set $X$ has many successors. (The actual successors of $X$ form a set of sets,
something MSOLB cannot talk about directly.) An \intro{$X$-witness} is a set
$Y$ such that every two elements $x,y \in Y$ can be separated by a
successor of $X$, that is:
\begin{equation}\label{eq:witness-formula}
  \forall x,y \in Y.~x\neq y \rightarrow \  \exists Z. \varphi(X,Z) \land (x \in Z \leftrightarrow y \not \in Z)\ .
\end{equation}
We claim that the graph of $\varphi$ has unbounded out-degree if and
only if there are $X$-witnesses of arbitrarily large cardinality.
This claim follows from the following fact:
\begin{fact}
  If $X$ has more than $2^n$ successors, then it has an $X$-witness of
  size at least $n$. If $X$ has~$n$ successors, then all $X$-witnesses
  have size at most $2^n$.
\end{fact}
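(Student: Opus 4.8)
The plan is to prove the two implications separately, both by elementary counting arguments relating successors of $X$ to subsets of an $X$-witness.

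First I would prove: if $X$ has more than $2^n$ successors, then there is an $X$-witness of size at least $n$. The idea is to build the witness greedily. Maintain a set $Y$ and the collection $\mathcal S$ of successors of $X$ that are still ``active'' in the sense that, restricted to $Y$, they have not yet been used to separate everything — more precisely, think of each successor $Z$ as determining a function $Y \to \{0,1\}$ by $y \mapsto [y \in Z]$. Starting with $Y = \emptyset$ and $\mathcal S$ the full set of successors, repeatedly try to add a new element $y$ to $Y$ such that the map $\mathcal S \to \{0,1\}^Y$ (restriction) stays injective; equivalently, such that adding $y$ strictly refines the partition of $\mathcal S$ induced by $Y$. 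Each element added at most doubles the number of classes, so after $k$ successful additions the number of successors is at most $2^k$ (if the restriction map is injective) — contrapositively, as long as we have more than $2^{|Y|}$ successors, the restriction map is non-injective, so two successors $Z, Z'$ agree on all of $Y$ but differ somewhere, at a point $y \notin Y$; adding that $y$ is a successful step. Hence we can keep going until $|Y| = n$, and the resulting $Y$ is an $X$-witness because any two distinct $x, y \in Y$ were, at the stage one of them was added, separated by the two successors witnessing the refinement. (One has to be slightly careful that separation is preserved: it is, since the separating successor is fixed once chosen.) Actually it is cleaner to phrase it as: as long as $|\mathcal S| > 2^{|Y|}$ there exist two successors agreeing on $Y$, pick a point where they differ, add it, and record that pair as the separator for that point versus everything currently in $Y$; iterate $n$ times.

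Second I would prove: if $X$ has exactly $n$ successors, then every $X$-witness has size at most $2^n$. Let $Z_1, \dots, Z_n$ be the successors. For each element $y$ of a purported $X$-witness $Y$, associate its ``type'' $\tau(y) = (\,[y \in Z_1], \dots, [y \in Z_n]\,) \in \{0,1\}^n$. If two elements $x, y \in Y$ had the same type, then no successor $Z_i$ separates them ($x \in Z_i \Leftrightarrow y \in Z_i$ for all $i$), contradicting the defining property of an $X$-witness (applied to this pair). Hence $\tau$ is injective on $Y$, so $|Y| \le 2^n$.

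Combining the two halves with the claim's setup: if the out-degree of $\varphi$ is unbounded, then for every $n$ there is some $X$ with more than $2^n$ successors, hence an $X$-witness of size $\ge n$, so $X$-witnesses of arbitrarily large cardinality exist; conversely, if every $X$ has at most some fixed number $N$ of successors, then all $X$-witnesses (over all $X$) have size at most $2^N$, so they are bounded. I do not expect a serious obstacle here — the only mild subtlety is making the greedy construction in the first implication precise enough that the separation property is genuinely maintained for all pairs simultaneously, which the bookkeeping above handles. The whole argument is a two-line application of the pigeonhole principle in each direction.
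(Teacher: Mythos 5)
Your proof is correct. The second half is exactly the paper's argument: map each element of the witness to its characteristic vector over the $n$ successors, observe that the witness property forces this map to be injective, and conclude $|Y|\le 2^n$. For the first half you take a genuinely different route. The paper's sketch factors through an intermediate notion: it first extracts from the $>2^n$ successors a subfamily of $n$ \emph{boolean independent} successors (none a boolean combination of the others) and then builds a witness of size $n$ from those by induction. You instead run a direct greedy pigeonhole argument: as long as $|Y|<n$, the restriction map from the $>2^n\ge 2^{|Y|}$ successors to $\{0,1\}^Y$ cannot be injective, so some $Z\neq Z'$ agree on $Y$ and differ at a point $y\notin Y$; adding $y$ preserves the witness property because for each $x\in Y$ exactly one of $Z,Z'$ contains $y$ while both treat $x$ identically, so one of them separates $x$ from $y$, and previously recorded separators remain valid. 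This is a clean and fully self-contained argument that avoids defining or justifying boolean independence; the paper's version buys a reusable structural fact (independent successors) at the cost of an extra lemma whose proof it omits. Your closing paragraph tying the two bounds to the unbounded-out-degree equivalence matches the surrounding claim in the paper and is fine.
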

\begin{proof}[Proof sketch]
  For the first statement, \oldhl{one first shows} that $X$ has at least $n$
  successors that are boolean independent, i.e., there exists~$X_1,\dots,X_n$
  successors of~$X$ such that for all~$i=1\dots n$, $X_i$ is not a boolean
  combination of~$X_1,\dots,X_{i-1},X_{i+1},\dots,X_n$.
  From $n$ boolean independent successors
  one can then construct by induction an $X$-witness of size $n$.

  For the second statement, consider $X$ with~$n$ successors
  as well as an $X$-witness.
  To each element~$w$ of the~$X$-witness, associate the
  characteristic function of~`$w\in Y$' for~$Y$ ranging over the
  successors of~$X$. If the $X$-witness had more than~$2^n$ elements,
  then at least two would give the same characteristic function,
  contradicting the definition of an~$X$-witness.
\end{proof}

\oldhl{As witnessed by}~\eqref{eq:witness-formula}\oldhl{,  being an $X$-witness can be defined by an }\ttc{MSOL}\oldhl{ formula.
Therefore, the} existence of arbitrarily large $X$-witnesses is expressible by an
MSOLB formula with a single $\ubouns$ operator at an outermost
position.  This formula belongs to one of the classes with decidable
satisfiability in Theorem~\ref{thm:dec-msob} below. This shows:
\begin{prop}
  It is decidable if an $\omega$-automatic graph has unbounded out-degree.
\end{prop}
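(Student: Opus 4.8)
The plan is to express ``the graph has unbounded out-degree'' by an MSOLB formula that lies in the decidable $BS$-fragment, and then invoke Theorem~\ref{thm:dec-msob}. First I would write $\mathit{wit}(X,Y)$ for the conjunction of $\delta(X)$ with formula~\eqref{eq:witness-formula}, which asserts that $Y$ is an $X$-witness; since $\delta$ and $\varphi$ are MSOL formulas and~\eqref{eq:witness-formula} only adds first-order quantifiers together with a single set quantifier $\exists Z$, the formula $\mathit{wit}(X,Y)$ is an ordinary MSOL formula. Then I would consider
\begin{equation*}
  \Phi\ :=\ \ubouns Y.\ \exists X.\ \mathit{wit}(X,Y)\ .
\end{equation*}
Here $\exists X.\,\mathit{wit}(X,Y)$ is still MSOL, hence a $BS$-formula, and $BS$-formulas are closed under $\ubouns$ by Definition~\ref{definition:fragments}; thus $\Phi$ is a $BS$-formula. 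By Theorem~\ref{thm:dec-msob} its satisfiability over $(\omega,<)$ is decidable, and since an $\omega$-automatic graph is by definition interpreted in $(\omega,<)$, it remains only to prove that the graph has unbounded out-degree if and only if $\Phi$ is satisfied in $(\omega,<)$.

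For this equivalence I would use the Fact stated immediately above the proposition. If the graph has unbounded out-degree, then for every $N$ some vertex $X$ has more than $2^N$ successors, so by the first half of the Fact it has an $X$-witness $Y$ with $|Y|\ge N$; as $N$ is arbitrary, $\Phi$ holds. Conversely, if $\Phi$ holds, then for every $N$ there are $X$ and $Y$ with $\mathit{wit}(X,Y)$ and $|Y|\ge 2^N$; the second half of the Fact then rules out $X$ having at most $N$ successors, so the out-degree of $X$ exceeds $N$, and since $N$ is arbitrary the out-degree of the graph is unbounded.

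Essentially all of the work has already been done: the combinatorial Fact tying out-degree to witness size, the closure of the $BS$-fragment under $\ubouns$ (Proposition~\ref{proposition:U} via Fact~\ref{fact:logic-is-automata}), and the decidability of $BS$-satisfiability (Theorem~\ref{thm:dec-msob}). The single point that genuinely needs care --- and which I regard as the crux --- is the order of quantifiers in $\Phi$: the $\ubouns$ must sit \emph{outside} $\exists X$. Unbounded out-degree of the graph does not assert that one vertex has infinite out-degree, only that there are vertices of finite but unboundedly large out-degree, and this is captured precisely by asking for $X$-witnesses of unbounded size while letting the witnessed vertex $X$ vary. The alternative prefix $\exists X.\,\ubouns Y$ would express the strictly stronger (and incorrect) statement that a single vertex has unbounded out-degree; getting the order right is what makes $\Phi$ at once correct and a member of the decidable fragment.
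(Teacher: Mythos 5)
Your proof is correct and follows essentially the same route as the paper: the paper likewise expresses the existence of arbitrarily large $X$-witnesses by an MSOLB formula with a single $\ubouns$ at the outermost position, places it in the decidable fragment of Theorem~\ref{thm:dec-msob}, and relies on the same combinatorial fact relating out-degree to witness size. Your write-up merely makes explicit the quantifier prefix $\ubouns Y.\,\exists X$ and the two directions of the equivalence, which the paper leaves implicit.
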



\section{Complementation}
\label{section:complementation}

\subsection{The complementation result}

The main technical result of this paper is the following
 theorem:
\begin{thm}\label{theorem:complementation}
  The complement of an $\omega S$-regular language is~$\omega
  B$-regular, while the complement of an $\omega B$-regular language
  is~$\omega S$-regular.
\end{thm}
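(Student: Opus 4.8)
The plan is to prove the two directions symmetrically by exhibiting, for every $\omega B$-automaton, an $\omega S$-automaton recognizing the complement of its language (and vice versa). Since the two classes are closed under finite union and an $\omega B$-regular language is a finite union of expressions $M\cdot L^\omega$ with $L$ a $B$-regular word sequence language, it suffices—using Theorem~\ref{theorem:equivalence} and Corollary~\ref{corollary:intersection}—to work with a single hierarchical $\omega B$-automaton $\Aa$ and produce a hierarchical $\omega S$-automaton for $\Sigma^\omega\setminus\sem{\Aa}$; the reverse direction is then obtained by duality, since complementing the complement and using the first direction closes the loop. The first reduction I would make is to push everything down to word sequences: using the $\square$-encoding of Section~\ref{subsection:sequence-automata}, it is enough to complement a hierarchical word sequence $B$-automaton and, separately, to handle the outer B\"uchi-style acceptance on the $\square$-markers by a standard $\omega$-regular complementation (which both classes absorb). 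So the crux becomes: given a hierarchical word sequence $B$-automaton with counter type $t\in\set B^*$, describe the complement language by a word sequence $S$-automaton.

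The core combinatorial idea is the one already hinted at in the paper for Lemma~\ref{lemma:B-pump} and in the accepting-run analysis of Lemma~\ref{lemma:accepting-runs}: a run of a $B$-automaton \emph{fails} the acceptance condition exactly when \emph{some} counter's value sequence $\alpha(\rho)$ has a strongly unbounded subsequence (or is finite). Dually, a word $w$ is \emph{rejected} by $\Aa$ iff \emph{every} run of $\Aa$ over $w$ has some counter whose increments-between-resets grow without bound along a subsequence. The $\omega S$-automaton for the complement should guess, for an input $w$, a "progress measure" certifying that all runs are bad. Concretely I would build on a determinization-by-subsets-with-counters construction: the complement automaton maintains the set of states reachable by all run-prefixes, together with, for each such partial run, bookkeeping that tracks how large each hierarchical $B$-counter has gotten since its last reset. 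One then wants an $S$-counter in the complement automaton that is incremented precisely when the construction can certify that min-over-all-surviving-runs of the relevant $B$-counter value has increased. The hierarchical structure of the counters is essential here and I would exploit it inductively on the counter type: complement the automaton "top counter first", reducing counter type $tB$ to counter type $t$ by a construction that converts the innermost boundedness requirement into a strong-unboundedness requirement on a fresh $S$-counter of the output automaton, using the external-constraint machinery of Lemma~\ref{lemma:interleaving} and Lemma~\ref{lemma:external-constraints-closure} to keep everything inside the regular-expression world at each stage.

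The main obstacle—and the reason this is the technical heart of the paper—is making the subset-with-counters construction actually track the right asymptotic quantity while remaining a finite-state object with only finitely many counters. A naive powerset keeps, for each surviving run, an unbounded amount of information (its counter value), so one cannot literally store it; the whole point is that boundedness/unboundedness is a "tail" property, so the construction must work with \emph{comparisons} and \emph{order types} of counter values across the surviving runs rather than the values themselves, in the spirit of Muller--Schupp / latest-appearance-record constructions but adapted to counters. Getting the correctness proof to go through will require a careful Ramsey-type or König-type argument: from a rejected word extract, by compactness over the finitely many states and finitely many counter-comparison configurations, an infinite "bad pattern" witnessing that the guessed $S$-counter of the complement automaton indeed tends to infinity; and conversely, from an accepting run of the complement automaton, reconstruct for every run of $\Aa$ a counter and a strongly unbounded subsequence, so that no run of $\Aa$ accepts. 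I would expect the bulk of Section~\ref{section:complementation} to be occupied by setting up the right intermediate normal form for hierarchical $B$-automata, the explicit inductive step $tB\leadsto t$, and this two-way correctness argument; the outer $\omega$-regular bookkeeping and the reductions to word sequences are comparatively routine given the earlier sections.
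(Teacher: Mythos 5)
There are two genuine gaps. First, your claim that the second direction follows ``by duality'' from the first does not work: knowing that the complement of every $\omega B$-regular language is $\omega S$-regular only gives information about those $\omega S$-regular languages that happen to arise as such complements; it says nothing about the complement of an arbitrary $\omega S$-regular language. The paper proves the two directions separately. They share a common skeleton parametrized by a type $T\in\set{B,S}$, but the quantitative arguments diverge substantially --- compare the two cases of Lemma~\ref{lemma:stability}, and Lemma~\ref{lemma:hinting-S} versus Lemma~\ref{lemma:hinting-B}, which require different functions $f,g$ and different capture conditions.

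Second, and more seriously, the core of your construction --- a powerset automaton tracking ``comparisons and order types of counter values across the surviving runs'' in the spirit of Muller--Schupp --- is exactly the step you do not supply, and it is not how the paper proceeds. The paper deliberately avoids any determinization-flavoured construction (it notes that these automata cannot be determinized) and instead follows B\"uchi's nondeterministic scheme: by Ramsey-type arguments (Lemmas~\ref{lemma:ramsey-bounds}--\ref{lemma:ramsey-C}) one extracts a cut $D$ and a \emph{description} recording, for each type of partial run between cut points, which counter events are uniformly good or bad, quantified by a separator $(f,g)$; rejection is then characterized by every loop being a rejecting loop (Lemmas~\ref{lemma:strong-rejection} and~\ref{lemma:rejection}), which reduces complementation to verifying a \emph{specification} on each factor of a guessed factorization (Proposition~\ref{prop:our-main-compl}). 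The remaining problem --- checking that \emph{every} run of a nondeterministic automaton does few (respectively many) increments --- is solved not by tracking runs but by guessing a nested clock structure (a \emph{hint}) over a Simon factorization-forest decomposition of the transition graph (Theorem~\ref{theorem:simon}); relative to a fixed hint, the condition that all equivalent runs are captured becomes a genuinely regular property of the decorated word (Lemmas~\ref{lemma:hinting-S} and~\ref{lemma:hinting-B}), and the counters of the complement automaton are spent only on certifying the tempo of the clocks, with an induction eliminating one counter event at a time (Lemma~\ref{lemma:complementation-gather}). Without this factorization-forest idea, or a worked-out substitute for it, your proposal does not get past the obstacle you yourself identify in your final paragraph.
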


The proof of this result is long, and takes up the rest of this paper.
We begin by describing some proof ideas.  For the sake of this
introduction, we only consider the case of recognizing the complement
of an $\omega S$-regular language by an $\omega B$-regular automaton.

Consider first the simple case of a language described by an~$\omega
S$-automaton~$\Aa$ which has a single counter. Furthermore, assume
that in every run, between any two resets, the increments form a
single connected segment.  In other words, between two resets of the
counter, the counter is first left \oldhl{unaffected for some time}, then
during the $n$ following transitions the counter is always
incremented, then it is not incremented any more before reaching the
second reset. Below, we use the name \intro{increment interval} to
describe an interval of word positions (a set of consecutive word
positions) corresponding to a maximal sequence of increments. We do
not, however, assume that the automaton is deterministic \hhl{(the
whole difficulty of the complementation result comes from the fact that we are dealing
with nondeterministic automata). 
This means that the automaton resulting from the complementation
construction must check  that all
possible runs of the complemented automaton are rejecting.}

The complement $\omega B$-automaton~$\Bb$ uses a single $B$-counter,
which \hhl{ticks} as a clock along the $\omega$-word (independently of any
run of~$\Aa$).  A \hhl{tick} of the clock is a reset of the
  counter. Between every two \hhl{ticks}, the counter is constantly
  incremented. Since the counter is a $B$-counter, the \hhl{ticks} have to
  be at bounded distance from each other. We say that an interval of word positions is
\intro{short} (with respect to this clock) if it contains at most one
\hhl{tick} of the clock.  If the clock \hhl{ticks} \oldhl{at most} every \hhl{$N$} steps, then short
intervals have length at most $2N-1$.  Reciprocally, if an interval
has length at most~\oldhl{$N$}, then it is short with respect to every clock
\hhl{ticking} with a tempo greater than \oldhl{$N$}. According to these remarks,
being short is a fair approximation of the length of an interval.

The complement automaton~$\Bb$ works by guessing the \hhl{ticks} of a clock
using non\nobreakdash-\hspace{0pt}determinism together with a $B$-counter, and then checks the
following property: every run of~$\Aa$ that contains an infinite number of
resets, also contains an infinite number of short increment intervals.
Once the clock is fixed, checking this is definable in monadic
second-order logic, i.e., can be checked by a finite state automaton
without bounding conditions. Using this remark it is simple to
construct~$\Bb$.

It is easy to see that if~$\Bb$ accepts an $\omega$-word, then this
word is not accepted by~$\Aa$. The converse implication is a
consequence of the following compactness property: if no run of~$\Aa$
is accepting, then there is a threshold~$N \in \Nat$ such that 
every run of~$\Aa$ either has less than $N$ increments between two resets
infinitely often, or does finitely many resets.  Such a property can
be established using  Ramsey-like arguments.

Consider now a single counter $\omega S$-automaton, but without the
constraint of increments being performed during intervals. In our
construction, we use an algebraic decomposition result, Simon's
factorization forest theorem~\cite{simon}.  Using \oldhl{Simon's} theorem,
we reduce the \hhl{complementation problem} to a bounded number of instances of the above
construction.  In this case, the complement $\omega B$-automaton uses
one counter for each level of the factorization, the result being a
structure of nested clocks. As above, once the \hhl{ticks} of the clocks are
fixed, checking if a run makes few increments can be done by a finite
state automaton without any bounding conditions.

Finally, for treating the general case of $\omega S$-automata with
more than one counter, we use automata in their hierarchical form and
do an induction on the number of counters.

\subsection{Overview the complementation proof}
\label{sec:overview-compl}
Theorem~\ref{theorem:complementation} talks about complementing two
classes, and two proofs are necessary. The two proofs share a lot of
similarities, and we will try to emphasize common points.

From now on, a \intro{hierarchical automaton}~$\Aa$ \hhl{with
  states~$Q$, input
 alphabet $\Sigma$ and  counters~$\Gamma$ is fixed}.  Either all
the counters are of type~$S$ or all the counters are of type~$B$.  We
will denote this type by~$T \in \set{B,S}$, and by~$\bar T$ we will
denote the other type.  We say $f \in  \Nat^\Nat$ is a
\emph{$B$-function} if it is bounded, and an \emph{$S$-function} if it
is strongly unbounded.  We set~$\comp$ to be~$\leq$ if~$T=S$
and~$\geq$ if~$T=B$.  The idea is that greater means better. For
instance, if $n \comp m$, then replacing $n$ increments by $m$
increments leads to a run that is more likely to be accepting.

In this
terminology, a run of an $\omega T$-automaton is accepting
if its counter values \hhl{(at the moment before they are reset)} are
greater than some $T$-function for the order $\comp$. \hhl{When}
complementing an~$\omega T$-automaton our goal is to show that all
runs are rejecting, i.e., for every run,
either a counter is reset a finite number of times,
or infinitely often the number of
increments between two resets is smaller than a $\bar T$-function
\hhl{with respect to} the order~$\comp$.

Our complementation proof follows the scheme introduced by B\"uchi in
his seminal paper \cite{buchi60} (we also refer the reader to
\cite{revisited}). B\"uchi \ttc{establishes that} languages
recognized by nondeterministic B\"uchi automata are closed under
complement.  \hhl{In this proof B\"uchi did not determinize
the automata as usual in the finite case (this is impossible
for B\"uchi automata). Instead, he used a novel technique which allows
to directly construct a nondeterministic automaton for the complement of
a recognizable language. The key idea is that---thanks to
Ramsey's theorem---}each
$\omega$-word can be cut into a prefix followed by an infinite
sequence of finite words, \ttc{which are indistinguishable in any context by the automaton
(we also say that those words have same type)}.  Whether
or not the $\omega$-word is \oldhl{accepted} by that automaton depends
only on the type of the prefix and the type appearing in the infinite
sequence.  The automaton accepting the complement guesses this cut and
checks that the two types correspond to a rejected word.  For this
reason the proof can be roughly\ttc{ }decomposed into two parts. The
first part shows that each word can be cut in the way specified.  The
second part shows that a cut can be guessed and verified by a B\"uchi
automaton.

Our proof strategy is similar. That is why, in order to help the
reader gather some intuition, we summarize below B\"uchi's
complementation proof, in terms similar to our own proof for the
bounding automata. Then, in Section~\ref{sec:compl-prelim}, we outline
our own proof.

\subsubsection{The B\"uchi proof.}
In this section, we present a high-level overview of B\"uchi's
complementation proof.

A \emph{B\"uchi specification} \hhl{describes}  properties of a finite
  run of a given B\"uchi automaton. It is a positive boolean
combination of atomic \hhl{specifications}, which have three possible forms:
\begin{enumerate}
 \item The run begins with state $p$;
\item The run ends with state $q$;
\item The run contains/does not contain an accepting state.
\end{enumerate}
We will use the letter $\tau$ to refer to specifications, be they
B\"uchi, or the more general form introduced later on for bounding
automata.

The following statement shows the B\"uchi strategy for
complementation. The statement could be simplified for B\"uchi
automata, but we choose the presentation below to \ttc{stress} the
similarities with our own strategy, as stated in
Proposition~\ref{prop:our-main-compl}.
\begin{prop}\label{prop:main-compl}
  Let $\Bb$ be a B\"uchi automaton with input alphabet $\Sigma$. One
  can compute B\"uchi specifications $\tau_1,\ldots,\tau_n$ and
  regular languages $L_1,\ldots,L_n \subseteq \Sigma^*$ such that the
  following \hhl{statements} are equivalent for every infinite word $u \in
  \Sigma^\omega$:
  \begin{itemize}[leftmargin=7mm]
\item[(A)] The automaton $\Bb$ rejects the word $u$;
  \item[(B)] There is some $i=1,\ldots,n$ such that $u$ admits a
    factorization $u=w v_1 v_2 \cdots$ where:
    \begin{itemize}
    \item[(i)] The prefix $w$ belongs to $L_i$;
    \item[(ii)] For every $j$, every run $\rho$ \hhl{over} the finite word
          $v_j$ satisfies $\tau_i$.
    \end{itemize}
  \end{itemize}
  Moreover, for each $i=1,\ldots,n$, the language $K_i \subseteq
  \Sigma^*$ of words \oldhl{where} every run satisfies $\tau_i$ is regular.
\end{prop}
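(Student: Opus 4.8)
The plan is to follow the classical Büchi complementation argument, recast so it mirrors the bounding-automaton situation. First I would define, for a Büchi automaton $\Bb$ over $\Sigma$, a \emph{type} of a finite word $w$ to be the set of triples $(p,q,b) \in Q \times Q \times \set{0,1}$ such that there is a run of $\Bb$ over $w$ from $p$ to $q$ which visits an accepting state iff $b=1$. There are finitely many types, and the set of words of a given type is regular (it is a Boolean combination of the regular languages ``some run goes from $p$ to $q$'' and ``some such run visits $F$''). Each Büchi specification $\tau$ determines a set of types — those all of whose runs (i.e., all realized $(p,q,b)$) satisfy the positive Boolean combination $\tau$; the language $K_i$ of words all of whose runs satisfy $\tau_i$ is exactly the (finite, hence regular) union of the type classes lying in this set, which establishes the ``Moreover'' clause.

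Next, the key combinatorial input is Ramsey's theorem. Given $u \in \Sigma^\omega$, colour each pair $i<j$ of positions by the type of the infill word $u[i,j)$. The multiplicative structure — composing a run over $u[i,j)$ with a run over $u[j,k)$ gives a run over $u[i,k)$ — makes this colouring \emph{multiplicative} (it is the product in the finite transition monoid extended to track $F$-visits). Ramsey yields an increasing sequence $n_0 < n_1 < \cdots$ and a single ``idempotent'' type $\sigma$ such that every infill $u[n_j, n_k)$ with $j<k$ has type $\sigma$, and the prefix $u[0,n_0)$ has some type $\pi$. I would then argue, by the standard pumping/König argument, that whether $\Bb$ accepts $u$ depends only on the pair $(\pi,\sigma)$: $\Bb$ accepts $u$ iff there is a state $p$ reachable from the start via $\pi$, a state $q$ with $(q,q,1) \in \sigma$, and $(p,q,*) \in \sigma$ — so rejection is also a condition on $(\pi,\sigma)$ alone. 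Enumerate the finitely many pairs $(\pi,\sigma)$ that cause rejection; for each such pair let $L_i$ be the (regular) set of words of type $\pi$, and let $\tau_i$ be a Büchi specification equivalent to ``the run's $(p',q',b')$ triple lies in $\sigma$'' — concretely a conjunction, over the finitely many triples \emph{not} in $\sigma$, of a disjunction of the three atomic forms ruling that triple out; because $\sigma$ is idempotent one can in fact take $\tau_i$ to pin down the single block behaviour. Then (i)--(ii) of (B) say exactly that $u$ has a prefix of type $\pi$ followed by blocks all of type $\sigma$, which by the Ramsey analysis is equivalent to $\Bb$ rejecting $u$, giving (A) $\Leftrightarrow$ (B).

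For the two directions of the equivalence: (B) $\Rightarrow$ (A) is the easy ``soundness'' direction — if $u$ factorizes as in (B), then concatenating block runs can only build runs whose behaviour matches $\sigma$ on each block, and the pair $(\pi,\sigma)$ being a rejecting pair means no such run can hit $F$ infinitely often; a short König's-lemma argument upgrades ``no single accepting run'' from finite approximations to the infinite run. (A) $\Rightarrow$ (B) is the ``completeness'' direction and is where the real work sits: one must produce the factorization, and this is precisely the Ramsey step above, together with checking that the rejecting status only depends on $(\pi,\sigma)$. I expect the main obstacle to be the bookkeeping in this last step — verifying that idempotence of $\sigma$ genuinely lets the global (infinite-word) acceptance question be read off from the finite data $(\pi,\sigma)$, and phrasing $\tau_i$ as a \emph{positive} Boolean combination of the three allowed atoms (the ``does not contain an accepting state'' atom is essential here, since $\sigma$ idempotent forces either all or none of its runs to visit $F$ within a block). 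Everything else — regularity of type classes, closure under the monoid product, the translation into the stated form — is routine once the type machinery is set up, and it is this machinery that will be reused, suitably enriched with counter data, in Proposition~\ref{prop:our-main-compl}.
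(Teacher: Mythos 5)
Your proposal is correct and follows essentially the same route the paper intends: the paper only sketches this proposition (noting that the ``Moreover'' part is straightforward and that the equivalence of (A) and (B) is an application of Ramsey's theorem), and your fleshed-out argument --- types as sets of triples $(p,q,b)$, Ramsey applied to the infill colouring, idempotency of $\sigma$ together with monotonicity of the type product for the soundness direction, and the positive encoding of ``the run's triple lies in $\sigma$'' as a B\"uchi specification --- is precisely the classical B\"uchi complementation argument the authors are recapitulating. The only cosmetic quibble is that the (B)$\Rightarrow$(A) direction is most cleanly done by contradiction (Ramsey/pigeonhole on a hypothetical accepting run) rather than via K\"onig's lemma, but this does not affect correctness.
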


\oldhl{In the above statement, the prefix $w$ could also be described in
  terms of specifications, but we stay with just  the regular
  languages $L_i$ to
  accent the similarities with Proposition}~\ref{prop:our-main-compl}.

Proposition~\ref{prop:main-compl} immediately implies closure under complementation of
B\"uchi automata, since the complement of the language recognized by
$\Bb$ is the union
\begin{equation*}
  L_1 K_1^\omega + \cdots + L_n K_n^\omega\ ,
\end{equation*}
which is clearly \hhl{recognizable} by a B\"uchi automaton.

The ``Moreover..'' part of the proposition is \hhl{straightforward}, while the
equivalence of conditions (A) and (B) requires an application of
Ramsey's Theorem.

Our proof follows similar lines, although all parts require additional
work. In particular, our specifications will be more complex, and will
require more than just a regular language to be captured. The
appropriate definitions are presented in the following section.

\subsubsection{Preliminaries}
\label{sec:compl-prelim}

\newcommand{\cs}		{{\leq}}
\newcommand{\leqM}		{{\leq\!M}}
\newcommand{\lc}		{{\text-\hspace{-0.5mm}\circ}}
\newcommand{\rc}		{{\circ\hspace{-0.4mm}{\text-}}}
\newcommand{\mc}		{{\circ\hspace{-0.4mm}{\text-}\!\circ}}
\newcommand{\cnr}		{{\text-}}
\newcommand{\actions}		{\mathrm{Act}}
\newcommand{\otheractions}[1]	{\mathrm{Act}_{\neq{#1}}}
\newcommand{\constraints}	{\mathit{events}}
\newcommand{\sM}{{\comp\!M}}

\ign{
Given a finite run~$\rho$ (i.e., a sequence of transitions) and a
counter~$\alpha$, we denote by~$\rho|_\alpha$ the word in~$\{I,R\}^*$
that contains the actions of $\rho$ on the counter $\alpha$. This word
is obtained by mapping $I_\alpha$ to $I$, $R_\alpha$ to $R$ and
erasing the other positions.}

Below we define the \ttc{type} of a \hhl{finite} run. Basically, the type
corresponds to the information stored in a B\"uchi specification,
along with some information on the counter operations. The type
contains: 1) the source and target state of the run, \oldhl{like in a
  B\"uchi specification}; and 2) some
information on the counter operations that happen in the run.
Since we want to
have a finite number of types, we can only keep limited information on
the counter operations \ttc{(in particular}\oldhl{, we cannot keep track of the
  actual number of increments)}. Formally, a \ttc{\emph{type}}~$t$ is an element of:
\begin{eqnarray*}
  Q \times \set{\emptyset,
    \set{\cnr},\{\lc,\rc\},\{\lc,\mc,\rc\}} ^\Gamma
    \times Q\ .
\end{eqnarray*}
To a given run, we associate its type in the following way.  The two
states contain respectively the source (i.e.~first) and target (i.e.~last) state
of the partial run. \hhl{(A partial run is a finite run  that does not
  necessarily begin at the beginning of the word, and does not
  necessarily end at the end of the word.)}   The middle component associates to each counter
$\alpha \in \Gamma$ a \emph{counter profile} denoted---by slight
abuse---$t(\alpha)$.  The counter profile \oldhl{is $t(\alpha)=\emptyset$} when
there is no increment nor reset \oldhl{on counter $\alpha$}.  The counter profile
\oldhl{is $t(\alpha)=\set{\cnr}$}  when  the counter  is incremented but not reset.
The counter profile  \oldhl{is $t(\alpha)=\set{\lc,\rc}$ when} the counter is reset just
once, while \oldhl{$t(\alpha)=\set{\lc,\mc,\rc}$} is used for the other cases, when the
counter is reset at least twice.

Note that the counter profiles are themselves sets, and elements of
these sets have a meaningful interpretation.  Graphically\ttc{, each symbol among}
$\cnr,\lc,\rc,\mc$ represents a possible kind of sequence of
increments of a given counter.  The circle~$\circ$ \ttc{symbolizes a
reset starting or ending the sequence, while the dash~- represents the sequence of increments
itself.}
\ttc{For instance, $\lc$ identifies the segment that starts at the beginning of the run
and end at the first occurrence of a reset of the counter.} Given a type $t$, we use the
name \emph{$t$-event} for any pair
\begin{equation*}
  (\alpha,c) \in \Gamma \times \set{\cnr,\lc,\mc,\rc}\,,
  \qquad \text{ with }c \in t(\alpha)\ .
\end{equation*}
The set of $t$-events is denoted by $\constraints(t)$.  Given a
$t$-event $(\alpha,c)$ and a finite run $\rho$ of type $t$, the value
$\val(\rho,\alpha,c)$ is the natural number defined below:

\medskip\noindent
\begin{tabular}{ll}
  $\val(\rho,\alpha,\cnr)$ &  the number of increments on counter $\alpha$ in the run.\\
  $\val(\rho,\alpha,\lc)$  & the number \oldhl{of} increments on counter $\alpha$
  before the first reset of counter $\alpha$.\\
  $\val(\rho,\alpha,\rc)$ & the number of  increments \oldhl{on} counter
  $\alpha$  after the last reset of counter $\alpha$.\\
  $\val(\rho,\alpha,\mc)$ &  the minimal (with respect to $\comp$)
  number of increments on counter $\alpha$  \\ &  between two consecutive
  resets on counter $\alpha$.
\end{tabular}
\medskip

We comment on the last value. When $T=S$, $\val(\rho,\alpha,\mc)$ is
the smallest number of increments on counter $\alpha$ that is done
between two successive resets of $\alpha$.
When $T=B$, this is the
largest number of increments. At any rate, this is the worst number
of increments, as far as the acceptance condition is concerned.

A \emph{ specification} is a property of finite runs. It is a positive
boolean combination of the following two kinds of \emph{atomic
  specifications}:
\begin{enumerate}
\item The run has type $t$. 
\item\label{item:compare} The run satisfies $\val(\rho,\alpha,c) \comp
  K$. 
\end{enumerate}
\hhl{The first atomic specification is defined by giving $t$. The
  second atomic specification is defined by giving $\alpha$ and $c$,
  but not $K$, which remains undefined. 
The  number $K$  is treated as a special parameter}, or free variable, of
the specification. This parameter is shared by all atomic specifications.  Given a value of $K \in \Nat$, we define the
notion that a run $\rho$ \intro{satisfies a specification $\tau$ under~$K$} in
the natural manner.

Unlike the B\"uchi proof, it is important that the boolean combination
in the specification is positive.  The reason is that we will use
$\bar T$-automata to complement $T$-automata, and therefore we only
talk about one type of behavior (bounded, or strongly unbounded).

\subsubsection{The decomposition result}
In this section we present the main decomposition result, which yields
Theorem~\ref{theorem:complementation}.

\begin{prop}\label{prop:our-main-compl}
  For every $\omega T$-automaton $\Aa$ \ttc{one can effectively obtain regular languages $L_1,\ldots,L_n$ 
  and specifications $\tau_1,\ldots,\tau_n$ such that}
  the following \hhl{statements} are equivalent for every $\omega$-word $u$\ttc{:}
  \begin{itemize}[leftmargin=7mm]
  \item[(A)] The automaton $\Aa$ rejects $u$;
  \item[(B)] There is some $i=1,\ldots,n$ such that $u$ admits a
    factorization $u=w v_1 v_2 \cdots$ where:
    \begin{itemize}
    \item[(i)] The prefix $w$ belongs to $L_i$, and;
    \item[(ii)] There is a $\bar T$-function $f$ such that for every
          $j$, every run $\rho$ \hhl{over} $v_j$ satisfies $\tau_i$ under
          $f(j)$.
    \end{itemize}
  \end{itemize}
  Moreover, for each $i=1,\ldots,n$, one can verify with a \hhl{word}
  sequence $\bar T$-automaton $\Bb_i$ if a sequence of
  words $v_1,v_2,\ldots$ satisfies condition (ii).
  (Equivalently, the set of \oldhl{word sequences} satisfying (ii) is $\bar T$-regular.)
\end{prop}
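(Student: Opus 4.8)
The plan is to prove Proposition~\ref{prop:our-main-compl} by induction on the number of counters of $\Aa$, following the B\"uchi scheme recalled in Proposition~\ref{prop:main-compl}. By Theorem~\ref{theorem:equivalence} we may assume that $\Aa$ is hierarchical, with counter set $\set{1,\dots,n}$. Condition~(B) has exactly the shape needed to build a $\bar T$-automaton for the complement of the language recognized by $\Aa$ --- a finite union of the languages $L_i$ prefixed to the $\omega$-power of the word sequence language recognized by $\Bb_i$, which is $\bar T$-regular by Theorem~\ref{theorem:equivalence} --- so establishing (A)$\Leftrightarrow$(B) together with the ``Moreover'' part does yield Theorem~\ref{theorem:complementation}.

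\emph{Base case ($n=0$).} A counterless $\omega T$-automaton accepts every $\omega$-word that carries an infinite run from $q_I$, so ``$\Aa$ rejects $u$'' is an $\omega$-regular condition; it is the complement of the language of a B\"uchi automaton $\Bb'$ that guesses an infinite run of $\Aa$. Apply Proposition~\ref{prop:main-compl} to $\Bb'$. In the resulting specifications every type carries the profile $\emptyset$ on every counter and no comparison atom occurs, so condition~(ii) collapses to its regular part, and the automata $K_i$ of Proposition~\ref{prop:main-compl}, read as word sequence $\bar T$-automata with no counters, are the required $\Bb_i$.

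\emph{Inductive step.} The idea is to peel off the top counter $n$. By the hierarchical discipline, any transition touching counter $n$ also resets $1,\dots,n-1$; hence, cutting $u$ at the (guessed) resets of counter $n$, every run of $\Aa$ splits into a prefix run followed by an infinite sequence of finite runs of the $(n-1)$-counter hierarchical automaton $\Aa'$ obtained by deleting counter $n$. The behaviour of counters $1,\dots,n-1$ on these finite runs is dealt with by the induction hypothesis, in a word sequence formulation that the present statement specializes to, and which supplies the lower-level parts of the $\tau_i$'s and the inner structure of $\Bb_i$. What remains is to certify the condition on counter $n$ alone: when counter $n$ is reset infinitely often, the sequence of numbers of increments of $n$ occurring between consecutive resets must be a $T$-sequence. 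This is where Simon's factorization forest theorem~\cite{simon} enters: applied to the finite semigroup whose elements are run types --- source state, target state, and counter profiles in $\set{\emptyset,\set{\cnr},\set{\lc,\rc},\set{\lc,\mc,\rc}}$ --- it refactors each block between two cuts into a forest of bounded depth whose idempotent nodes let us reduce, one nesting level at a time, to the case in which the increments of counter $n$ form a single contiguous segment. Each level adds one further nested $\bar T$-clock to the complement automaton, and being ``short'' (resp.\ ``long'') with respect to these clocks approximates, up to tempo, the bounding (resp.\ growth) statement that the specification records through a comparison atom $\val(\rho,n,c)\comp K$ with $K$ the running clock value.

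\emph{The two directions and the automaton $\Bb_i$.} The implication (B)$\Rightarrow$(A) is a direct verification: given a factorization and a $\bar T$-function $f$ as in~(B), no run of $\Aa$ over $u$ can satisfy its acceptance condition. The implication (A)$\Rightarrow$(B) is the core of the argument: assuming every run of $\Aa$ over $u$ is rejecting, one must produce a single factorization $u=wv_1v_2\cdots$, an index $i$, and \emph{one} $\bar T$-function $f$ that works uniformly for all runs over all $v_j$. This is a Ramsey-type compactness argument, in the spirit of (A)$\Leftrightarrow$(B) of Proposition~\ref{prop:main-compl}, iterated through the tower of clocks produced by Simon's theorem and through the induction on counters; positivity of the boolean combinations in specifications is what allows these pieces to be combined in the right direction. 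For the ``Moreover'' part, $\Bb_i$ is a word sequence $\bar T$-automaton that reads the blocks $v_1,v_2,\dots$ and (a) maintains, by a subset construction over the type semigroup, the set of types of partial runs --- so that ``every run over $v_j$ satisfies the type atoms of $\tau_i$'' becomes a regular property --- and (b) uses its own $\bar T$-counters to \emph{witness} $f$: such a counter is incremented through each $v_j$ and reset at the block boundary, so that the $\bar T$-acceptance condition of $\Bb_i$ forces the sequence of its per-block values to be a $\bar T$-function, and this counter is compared against the data tracked in~(a) to check the comparison atoms $\val(\rho,\alpha,c)\comp f(j)$. I expect the main obstacle to be precisely this interaction inside $\Bb_i$: the numbers $\val(\rho,\alpha,c)$ are unbounded and so cannot be stored, and one must instead track only whether, for the \emph{worst} partial run seen so far, the relevant count has already reached the clock threshold --- which is exactly what makes the hierarchical discipline (serializing the counters) and Simon's theorem (serializing the increment segments of a single counter) indispensable.
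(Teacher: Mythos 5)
Your high-level ingredients (Ramsey for the equivalence, Simon's factorization forests plus nested clocks for the automaton, positivity of the boolean combination) are the right ones, but the skeleton you build them on has a genuine gap. You propose to induct on the number of counters by ``cutting $u$ at the (guessed) resets of counter $n$'' and reducing to the $(n-1)$-counter automaton. For a nondeterministic automaton this cut is ill-defined: condition (A) quantifies over \emph{all} runs, and different runs reset counter $n$ at different, mutually incompatible positions, so no single factorization $u=wv_1v_2\cdots$ aligns with the counter-$n$ resets of every run. The paper's machinery exists precisely to decouple the cut from the runs: a type records for each counter one of the profiles $\emptyset,\set{\cnr},\set{\lc,\rc},\set{\lc,\mc,\rc}$, and the events $\lc,\mc,\rc,\cnr$ measure the increments before the first reset, between consecutive resets, and after the last reset \emph{inside an arbitrary block of the cut}, so that one fixed cut $D$ can describe runs whose resets fall anywhere. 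The equivalence of (A) and (B) is then \emph{not} an induction on counters and does not use Simon's theorem at all: it is the vector-valued Ramsey statement (Lemma~\ref{lemma:ramsey-C}) applied to the sets of vectors $\val(\rho)$ over all runs between two cut points, yielding a strong description, followed by the stability analysis (Lemma~\ref{lemma:stability}) that lets one replace the strong description (which an automaton cannot check, as it involves the $T$-function $g$ and overlapping intervals) by a weak one, and by Lemmas~\ref{lemma:strong-rejection} and~\ref{lemma:rejection} for the two directions. Your proposal has no counterpart of stability and no mechanism producing the single $\bar T$-function $f$ of (ii) uniformly over all runs; that is the heart of the matter, not a detail.

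Two further points. First, when counters are eliminated one at a time, the paper peels off the \emph{lowest} counter $1$, not the top one: the relation $\eqr c$ (agreement outside the segment relevant to event $(1,c)$) leaves the actions of counters $\geq 2$ untouched, which is what lets the remaining specification be re-evaluated on the modified runs; peeling the top counter does not have this property in a hierarchical automaton. Second, your description of $\Bb_i$ (``this counter is compared against the data tracked in (a)'') is not realizable, since the automaton cannot read its counters. The paper instead guesses a Simon decomposition and a hint, checks the hint is $\comp K$ with one hierarchical counter per Simon level, and replaces each atom $\val(\rho,1,c)\comp K$ by a purely \emph{regular} property $L_c$ of the hinted run (Lemma~\ref{lemma:hinting-generic}); crucially, the correctness direction of that lemma only concludes $\val(\rho,1,c)\comp f(K)$ when \emph{every} $\eqr c$-equivalent run is captured, an asymmetry that forces the local and global run transformations in Lemmas~\ref{lemma:hinting-S} and~\ref{lemma:hinting-B} and which your proposal does not account for.
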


First we note that the above proposition implies
Theorem~\ref{theorem:complementation}, since property (B) can be
recognized by an $\omega \bar T$-automaton.  This is thanks to the
``Moreover...'' and the closure of
$\omega \bar T$-automata under finite
union and the prefixing of a regular language.

The rest of this paper is devoted to showing the proposition.  In
Sections~\ref{subsection:ramsey} and~\ref{subsection:description} we
show the first part, i.e., the equivalence of (A) and (B).
Section~\ref{subsection:ramsey} develops extensions of Ramsey's
theorem.  Section~\ref{subsection:description} uses these results to
show the equivalence.  In
Sections~\ref{subsection:verifying_single_constraints}
and~\ref{subsection:verifying-weak-description} we prove the
``Moreover...''  part.
Section~\ref{subsection:verifying_single_constraints} contains the
construction for a single counter, while
Section~\ref{subsection:verifying-weak-description} \hhl{extends} this
construction to multiple counters. The difficulty in the ``Moreover...''
part is that (ii) talks about ``every run $\rho$'', and therefore the
construction has to keep track of many simultaneous runs.


\subsection{Ramsey's theorem and extensions}
\label{subsection:ramsey}

Ramsey-like statements (as we consider them in our context) are
statements of the form \emph{``there is an infinite
  set~$D\subseteq\Nat$ and some index~$i \in I$ such that the property
  $P_i(x,y)$ holds for any~$x<y$ in~$D$''}. This statement is relative
to a family of properties $\set{P_i}_{i \in I}$. In
general, the family~$\set{P_i}_{i \in I}$ may be infinite.
The classical theorem of \oldhl{Ramsey, as
stated below, follows this scheme, but for  a family of two properties: $P_1=R$ and} \ttc{$P_2= \Nat^2\setminus R$.}
\begin{thm}[Ramsey]
\oldhl{  Given~$R\subseteq \Nat^2$ and an infinite set $E \subseteq \Nat$,  there is an
  infinite set~$D\subseteq E$
  such that;}
\begin{itemize}
\item for all~$x<y$ in~$D$, $(x,y)\in R$, or;
\item for all~$x<y$ in~$D$, $(x,y)\not\in R$.
\end{itemize}
\end{thm}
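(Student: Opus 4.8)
The plan is to run the classical ``shrinking infinite set'' construction that produces a homogeneous subset directly, without any clever trick specific to this paper.

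First I would build, by induction, a strictly increasing sequence $x_0 < x_1 < \cdots$ of elements of $E$, a decreasing chain of infinite sets $E = E_0 \supseteq E_1 \supseteq \cdots$, and a sequence of bits $c_0, c_1, \ldots \in \set{0,1}$, maintaining the invariants $x_i = \min E_i$, $E_{i+1} \subseteq E_i \setminus \set{x_i}$, and: for every $y \in E_{i+1}$, $(x_i,y) \in R$ iff $c_i = 1$. The inductive step is immediate: given $E_i$, set $x_i = \min E_i$ and split $E_i \setminus \set{x_i}$ into $\set{y : (x_i,y) \in R}$ and $\set{y : (x_i,y) \notin R}$; at least one of these two sets is infinite, so take $E_{i+1}$ to be (one of) the infinite one(s) and let $c_i$ record which case occurred. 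Note that the $x_i$ are automatically distinct and increasing, since for $j > i$ we have $x_j \in E_{i+1} \subseteq E_i \setminus \set{x_i}$, hence $x_j > \min E_i = x_i$.

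Next I would observe that by construction the ``color'' of a pair depends only on its smaller index: for any $i < j$, $x_j \in E_j \subseteq E_{i+1}$, so $(x_i, x_j) \in R$ iff $c_i = 1$. Then I would apply the (finite) pigeonhole principle to the infinite $0$--$1$ sequence $c_0, c_1, \ldots$: there exist a bit $c$ and an infinite set $I \subseteq \Nat$ with $c_i = c$ for all $i \in I$. Setting $D = \set{x_i : i \in I}$ gives an infinite subset of $E$, and for any $x_i < x_j$ in $D$ (with $i < j$, $i,j \in I$) we get $(x_i,x_j) \in R$ iff $c_i = 1$ iff $c = 1$. Thus all pairs from $D$ are in $R$ when $c = 1$, and none are when $c = 0$, which is exactly the claimed dichotomy.

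I do not expect any real obstacle: this is the textbook proof of the infinite Ramsey theorem for pairs. The only points needing a word of care are the (harmless) appeal to dependent choice in selecting the chain $(E_i)$ — at each step we pick one of possibly two infinite halves — and the final bookkeeping check that restricting to indices in $I$ genuinely makes $D$ homogeneous, which follows from the ``color depends only on the smaller index'' observation above.
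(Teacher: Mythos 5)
Your proof is the standard, correct textbook argument for the infinite Ramsey theorem for pairs (decreasing chain of infinite sets, color recorded at each minimal element, pigeonhole on the resulting bit sequence), and all the steps check out, including the monotonicity of the $x_i$ and the reduction of homogeneity to ``the color depends only on the smaller index.'' The paper itself states this as the classical Ramsey theorem and gives no proof, so there is nothing to compare against; your argument is exactly what one would supply.
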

\oldhl{In the original statement of Ramsey's
theorem, the set $E$ is not used (i.e.~$E=\Nat$).  We use the more
general, but obviously equivalent, formulation to \emph{compose} Ramsey-like
statements as follows.}  Assume that
there  are two Ramsey-like statements, one using
properties~$\set{P_i}_{i \in I}$, and the other using $\set{Q_j}_{j
  \in J}$. We can apply the two in cascade and obtain a new statement
of the form ``there is an infinite set~$D\subseteq\Nat$ and
indexes~$i\in I,j \in J$ such that both $P_i(x,y)$ and~$Q_j(x,y)$ hold
for any~$x<y$ in~$D$''.  This is again a Ramsey-like statement.  This
composition technique is heavily used below.  We will simply refer to
it as the \emph{compositionality of Ramsey-like statements} and
shortcut the corresponding part of the proofs.

The following lemma is our first Ramsey-like statement which uses an
infinite (even uncountable) number of properties.
\begin{lem}\label{lemma:ramsey-bounds}
  For any $h: \Nat^2 \to \Nat$ there is an infinite
  set~$D\subseteq\Nat$ such that either
\begin{itemize}
\item \oldhl{there} is a natural number~$M$ such that $h(x,y)\leq M$ holds for
  all~$x<y\in D$, or;
\item \oldhl{there} is an S-function~$g$ such that $h(x,y)> g(x)$ holds for
  all~$x<y\in D$.
\end{itemize}
\end{lem}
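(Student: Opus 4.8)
The plan is to combine countably many applications of the classical Ramsey theorem through a diagonalization. First I would build a decreasing chain of infinite sets $\Nat = D_0 \supseteq D_1 \supseteq D_2 \supseteq \cdots$ as follows: given an infinite set $D_M$, apply Ramsey with $E := D_M$ and the relation $R_M := \set{(x,y) : h(x,y) \le M}$ to obtain an infinite $D_{M+1}\subseteq D_M$ that is homogeneous for $R_M$. Thus, for each $M$, either $h(x,y)\le M$ holds for all $x<y$ in $D_{M+1}$, or $h(x,y) > M$ holds for all $x<y$ in $D_{M+1}$.

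Now I would distinguish two cases. If for some $M$ the set $D_{M+1}$ is homogeneous of the first kind, i.e.~$h(x,y)\le M$ for all $x<y$ in $D_{M+1}$, then the first alternative of the lemma holds with $D := D_{M+1}$ and this $M$, and we are done. Otherwise, for every $M$ the set $D_{M+1}$ is homogeneous of the second kind, i.e.~$h(x,y) > M$ for all $x<y$ in $D_{M+1}$; in this case I aim for the second alternative.

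In that second case, extract a diagonal sequence $d_0 < d_1 < d_2 < \cdots$ with $d_M \in D_{M+1}$ for every $M$; this is possible since each $D_{M+1}$ is infinite and the chain is decreasing, so one picks $d_0 \in D_1$ and then inductively $d_{M} \in D_{M+1}$ larger than $d_{M-1}$. Put $D := \set{d_0, d_1, d_2, \dots}$. The key observation is that for $i<j$ both $d_i$ and $d_j$ lie in $D_{i+1}$ — indeed $d_i \in D_{i+1}$ by construction, and $d_j \in D_{j+1}\subseteq D_{i+1}$ since $j\ge i$ — so homogeneity of $D_{i+1}$ of the second kind gives $h(d_i,d_j) > i$. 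Finally define $g:\Nat\to\Nat$ by $g(n) = \max\set{i : d_i \le n}$ (with $g(n)=0$ when $n < d_0$). This $g$ is non-decreasing and unbounded, hence strongly unbounded, i.e.~an $S$-function; moreover $g(d_i) = i$, so for all $x<y$ in $D$, writing $x=d_i$ and $y=d_j$ with $i<j$, we get $h(x,y) = h(d_i,d_j) > i = g(d_i) = g(x)$, as required.

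The argument is largely routine; the one point needing care is the diagonalization in the second case — one must take $d_M$ inside $D_{M+1}$, not merely inside $D_M$, so that for every $i<j$ the pair $(d_i,d_j)$ lands in the single homogeneous set $D_{i+1}$ and the estimate $h(d_i,d_j) > i$ can be read off. I do not expect any deeper obstacle.
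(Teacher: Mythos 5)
Your proof is correct and follows essentially the same route as the paper: a nested chain of Ramsey-homogeneous sets $D_0\supseteq D_1\supseteq\cdots$ for the relations $h(x,y)\le M$, with the bounded alternative read off from a homogeneous set of the first kind and the strongly unbounded alternative obtained by diagonalizing (the paper takes $D=\set{\min D_i}$ after trimming minima, you take an explicit increasing diagonal sequence, which is the same idea). No gaps.
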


Note that in the above statement, only values $h(x,y)$ for $x < y$ are
relevant. This will be the case in the other Ramsey-like statements below.

\begin{proof}
  By induction we construct a sequence of sets of natural numbers $
  D_0 \supseteq D_1 \supseteq \cdots$.  The set~$D_0$ is \oldhl{defined
    to be $\Nat$. For $n>0$, the set 
  $D_{n}$ is defined to be  the infinite set $D$ obtained by applying Ramsey's
  theorem to~$E=D_{n-1}\setminus \set{\min D_{n-1}}$ with the binary property
  $R=h(x,y)\leq n$.}

  Two cases may happen. 
  Either for some~$n$, \oldhl{the value of  $h(x,y)$ is at most $n$ for
    all~$x<y$ taken from $D_n$}. In this case the first disjunct in the conclusion
  of the lemma holds (with~$D=D_n$ and $M=n$).
  Otherwise, \oldhl{for all $n$, the value  $h(x,y)$ is greater than $n$
    for all~$x<y$ taken from  $D_n$}.  In this
  case, we set~$D$ to be $\{\min D_i\,:\,i\in\Nat\}$
  and $g$ to satisfy \oldhl{$g(\min D_i)=i$.}
  The second conclusion of the lemma
  holds.
\end{proof}

\begin{defi}
  A \intro{separator} is a pair $(f,g)$ where~$f$ is a~$\bar
  T$-function and~$g$ is a~$T$-function.
\end{defi}

The following lemma restates the previous one in terms of separators.

\begin{lem}\label{lemma:ramsey-B}
  For any $h: \Nat^2 \to \Nat$ there is an infinite
  set~$D\subseteq\Nat$ and a separator~$(f,g)$ such that either;
\begin{itemize}
\item for all~$x<y\in D$, $h(x,y)\comp f(x)$, or;
\item for all~$x<y\in D$, $h(x,y)\notcomp g(x)$.
\end{itemize}
\end{lem}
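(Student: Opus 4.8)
The plan is to derive Lemma~\ref{lemma:ramsey-B} directly from Lemma~\ref{lemma:ramsey-bounds} by a case analysis on the type $T$, together with a couple of elementary shifts of the functions involved. The point is that Lemma~\ref{lemma:ramsey-bounds} already produces exactly the desired dichotomy when $T=S$, and when $T=B$ it produces the complementary dichotomy, which is turned into the one we want by raising the relevant thresholds by $1$.

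First I would dispose of the case $T=S$. Here $\comp$ is $\leq$, $\notcomp$ is $>$, and $\bar T=B$, so a separator $(f,g)$ consists of a bounded $f$ and a strongly unbounded $g$. Apply Lemma~\ref{lemma:ramsey-bounds} to $h$ to obtain an infinite $D$. If the first alternative holds, with $h(x,y)\leq M$ for all $x<y\in D$, take $f$ to be the constant function $M$ (which is bounded, hence a $\bar T$-function) and let $g$ be any $S$-function; then $h(x,y)\comp f(x)$ on $D$. If the second alternative holds, with $h(x,y)>g(x)$ for an $S$-function $g$, take $f$ to be any $B$-function; then $h(x,y)\notcomp g(x)$ on $D$. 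In either case $(f,g)$ is a separator with the required property.

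Next I would treat $T=B$, where $\comp$ is $\geq$, $\notcomp$ is $<$, $\bar T=S$, and a separator $(f,g)$ consists of a strongly unbounded $f$ and a bounded $g$. Apply Lemma~\ref{lemma:ramsey-bounds} again. If the first alternative holds, with $h(x,y)\leq M$ on $D$, set $g$ to be the constant function $M+1$ (bounded) and $f$ any $S$-function; then $h(x,y)<g(x)$, i.e., $h(x,y)\notcomp g(x)$ on $D$. If the second alternative holds, with $h(x,y)>g_0(x)$ for an $S$-function $g_0$, set $f(x)=g_0(x)+1$ (still strongly unbounded, hence an $S$-function) and $g$ any $B$-function; then $h(x,y)\geq f(x)$, i.e., $h(x,y)\comp f(x)$ on $D$. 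This exhausts all cases.

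I do not expect a real obstacle here: the entire Ramsey content is already packaged in Lemma~\ref{lemma:ramsey-bounds}, and Lemma~\ref{lemma:ramsey-B} is essentially a bookkeeping reformulation. The only points requiring a little care are matching the strict versus non-strict inequalities correctly under the two instantiations of $\comp$ and $\notcomp$ (which is exactly what the $+1$ shifts handle) and checking that the exhibited functions genuinely have the correct asymptotic type — constants are bounded, and adding $1$ to a strongly unbounded function keeps it strongly unbounded — while freely padding the unused component of the separator with an arbitrary function of the right type.
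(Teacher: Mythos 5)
Your proof is correct and is exactly the argument the paper has in mind: the paper gives no explicit proof, merely remarking that Lemma~\ref{lemma:ramsey-B} ``restates'' Lemma~\ref{lemma:ramsey-bounds} in terms of separators, and your case analysis on $T$ with the $+1$ shifts is precisely the bookkeeping that remark leaves implicit.
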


\hhl{Lemma~\ref{lemma:ramsey-C} below generalizes
Lemma~\ref{lemma:ramsey-B}: instead of having a single
element $h(x,y)$ for each $x < y$, we have a set $E_{x,y}$ of vectors. The conclusion of
the lemma describes which components of the input vectors from
$E_{x,y}$ are bounded or unbounded simultaneously. }

When applied to complementing automata, \ttc{each set $h(x,y)$ will
gather information relative to the possible runs of the}
automaton over the part of a word that
begins in position $x$ and ends in position $y$. Since the automaton
is nondeterministic, $h(x,y)$ contains not a single element, but a set
of elements, one for each \ttc{possible} run. Since the automaton has many counters,
and a counter may come with several events, elements of $h(x,y)$ are
vectors, with each coordinate corresponding to a single event.

Before stating the lemma, we introduce some notation.  Let~$C$ be a
finite set, which will be used for coordinates in vectors.  Given a
vector~$v\in\Nat^C$, a set of coordinates~$\sigma\subseteq C$ and a
natural number~$M$, the expression $v\comp_\sigma M$ means that
$v(\alpha)\comp M$ holds for all coordinates~$\alpha\in\sigma$. We use
a similar notation for~$\notcomp$, i.e.~$v \notcomp_\sigma M$ means
that $v(\alpha) \notcomp M$ holds for all coordinates $\alpha \in
\sigma$\ .  Note that~$\not\comp_\sigma$ is different
from~$\notcomp_\sigma$. The first says that $\notcomp_{\{\alpha\}}$
holds for some coordinate~$\alpha\in\sigma$, while the second says
that $\notcomp_{\{\alpha\}}$ has to hold for all
coordinates~$\alpha\in\sigma$.

\begin{lem}\label{lemma:ramsey-C}
  Let $C$ be a finite set, and for every natural numbers~$x<y\in\Nat$,
  let $E_{x,y} \subseteq \Nat^C$ be a finite nonempty set of vectors.
  There is a family of coordinate sets
  $\Theta\subseteq\mathcal{P}(C)$, an infinite set~$D\subseteq\Nat$
  and a separator $(f,g)$ such that for all~$x<y$ in~$D$,
\begin{itemize}
\item[(1)] for all $v\in E_{x,y}$, there is a coordinate
  set~$\sigma\in\Theta$ such that $v\comp_\sigma f(x)$, and;
\item[(2)] for all coordinate sets $\sigma\in\Theta$, there is $v\in
  E_{x,y}$ such that $v\comp_\sigma f(x)$ and
  $v\notcomp_{C\setminus \sigma}g(x)$.
\end{itemize}
\end{lem}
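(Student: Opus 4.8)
The plan is to reduce the statement to controlling, for each $\sigma\subseteq C$, one numerical quantity by means of the Ramsey-like Lemma~\ref{lemma:ramsey-B}, and then to take $\Theta$ to be the family of ``cleanly achievable'' patterns. By symmetry it suffices to treat $T=S$ (the case $T=B$ is obtained by reversing every inequality, equivalently by exchanging $\max$ and $\min$ below), so $\comp$ reads as $\leq$, $\notcomp$ as $>$, a separator $(f,g)$ consists of a bounded $f$ and a strongly unbounded $g$, and I will call a value $n$ \emph{small at $x$} if $n\comp f(x)$ and \emph{large at $x$} if $n\notcomp g(x)$.

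For a nonempty $\sigma\subseteq C$ set $B_\sigma(x,y)=\max_{v\in E_{x,y}}\min_{\alpha\in\sigma}v(\alpha)$. First I would apply Lemma~\ref{lemma:ramsey-B} to each of the finitely many functions $B_\sigma$ and combine the outcomes by the compositionality of Ramsey-like statements: taking the pointwise maximum of the resulting $\bar T$-functions (still bounded) and the pointwise minimum of the resulting $T$-functions (still strongly unbounded) produces an infinite set together with a single separator $(f,g)$ such that, on that set, every $B_\sigma$ is either \emph{small-bounded} ($B_\sigma(x,y)\comp f(x)$ whenever $x<y$) or \emph{large-bounded} ($B_\sigma(x,y)\notcomp g(x)$ whenever $x<y$). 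Shrinking this set to an infinite $D$, I may moreover assume that $f(x)\comp g(x)$ for $x\in D$ (this fails for only finitely many $x$, since $f$ is bounded and $g$ strongly unbounded), and that the map which sends a pair $x<y$ to the set of $\sigma\subseteq C$ for which some $v\in E_{x,y}$ has all $\sigma$-coordinates small at $x$ and all $(C\setminus\sigma)$-coordinates large at $x$ --- a map into the finite set $\mathcal P(\mathcal P(C))$ --- is constant; let $\Theta$ be this constant value. Call $\sigma$ \emph{clean for $v$ at $x$} when $v$'s $\sigma$-coordinates are small and its $(C\setminus\sigma)$-coordinates are large at $x$; thus $\Theta$ is the $(x,y)$-independent set of those $\sigma$ that are clean for some member of $E_{x,y}$.

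With these choices, condition (2) is immediate from the definition of $\Theta$. For condition (1), fix $x<y$ in $D$ and $v\in E_{x,y}$ and put $\sigma_v=\{\alpha\in C:v(\alpha)\comp f(x)\}$; I must produce $\tau\in\Theta$ with $\tau\subseteq\sigma_v$. Among the nonempty family of those $w\in E_{x,y}$ with $\sigma_w\subseteq\sigma_v$ (it contains $v$) choose $w$ minimizing $|\sigma_w|$, and write $S=\sigma_w$. If $S$ is clean for $w$ we are done, since then $S\in\Theta$ and $S\subseteq\sigma_v$. Otherwise $w$ has some coordinate $\beta\notin S$ with $w(\beta)$ not large at $x$; in particular $S\neq C$, so $C\setminus S$ is a nonempty set to which the dichotomy above applies. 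Every coordinate of $w$ outside $S$ satisfies $w(\alpha)\notcomp f(x)$ by definition of $\sigma_w$, hence $B_{C\setminus S}(x,y)\ge\min_{\alpha\in C\setminus S}w(\alpha)\notcomp f(x)$; so $B_{C\setminus S}$ is not small-bounded, hence large-bounded, and therefore some $v_0\in E_{x,y}$ has $v_0(\alpha)\notcomp g(x)$ for every $\alpha\in C\setminus S$. Using $f(x)\comp g(x)$, each such $\alpha$ is then not small for $v_0$, so $\sigma_{v_0}\subseteq S\subseteq\sigma_v$; minimality of $|\sigma_w|$ forces $\sigma_{v_0}=S$, and this together with $v_0(\alpha)\notcomp g(x)$ on $C\setminus S$ says exactly that $S$ is clean for $v_0$. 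Either way $S$ is clean for some member of $E_{x,y}$, so $S\in\Theta$, and $S\subseteq\sigma_v$; this proves (1).

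The one step that uses more than a routine appeal to the earlier lemmas is the verification of (1): the point is that a single Ramsey-controlled quantity $B_\sigma$ per subset already forces a clean pattern to sit below the small-set of \emph{every} vector, not merely of the extremal ones. This is precisely what the minimality choice of $w$ --- and the resulting equality $\sigma_{v_0}=S$ --- accomplish, and it is where I expect the main care to be needed when writing the argument out in full (together with the bookkeeping for the $T=B$ case and for keeping the finitely many separators and the final constancy-shrinking of $D$ compatible).
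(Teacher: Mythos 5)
Your proof is correct, but it takes a genuinely different route from the paper's in the key combinatorial step. The paper feeds Lemma~\ref{lemma:ramsey-B} the quantities $h_\Theta(x,y)={\min}_\comp\{M : \forall v\in E_{x,y}.\ \exists\sigma\in\Theta.\ v\comp_\sigma M\}$, one for each \emph{family} $\Theta\subseteq\mathcal{P}(C)$ viewed as a DNF formula; it then takes $\Theta$ minimal (under DNF implication) among the families for which the covering alternative holds, so that condition (1) is immediate, and derives (2) by perturbing $\Theta$ into the family $\Theta'$ obtained by replacing $\sigma$ with its one-element extensions $\sigma\cup\set{\beta}$ and using minimality to force $\Theta'$ into the other alternative. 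You do the mirror image: you control only the singly-indexed quantities $B_\sigma$ (a max--min over the vectors, hence exponentially rather than doubly exponentially many applications of Lemma~\ref{lemma:ramsey-B}), you define $\Theta$ outright as the Ramsey-stabilized set of clean patterns so that (2) is immediate, and you prove (1) by a minimality argument over \emph{vectors} (the $w$ with smallest $\sigma_w\subseteq\sigma_v$) bootstrapped through the large-boundedness of $B_{C\setminus S}$, which correctly yields $\sigma_{v_0}=S$ and hence a clean pattern below $\sigma_v$. Both arguments rely on the same two ingredients---compositionality of Ramsey-like statements and the normalization $f(x)\comp g(x)$ on a cofinite part of $D$---and both are sound; yours pays for its simpler controlled quantities with one extra application of ordinary finite-color Ramsey (to make the clean-pattern map constant on pairs), which the paper avoids because its $\Theta$ is chosen globally rather than extracted pairwise.
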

\begin{proof}
  If we only take~(1) into account, we can see~$\Theta$ as a
  disjunction of conjunctions of boundedness constraints, i.e., a DNF
  formula. The property (1) says that each vector satisfies one of the
  disjuncts.  Keeping this intuition in mind, for two coordinate sets
  $\sigma,\sigma' \subseteq C$ we write~$\sigma\Rightarrow \sigma'$
  if~$\sigma\supseteq\sigma'$.  Given two families of coordinate sets
  $\Theta,\Theta' \subseteq \mathcal P (\mathcal C)$, we
  write~$\Theta\Rightarrow\Theta'$ if for every disjunct~$\sigma
  \in\Theta$ there is a disjunct~$\sigma'\in\Theta'$ such
  that~$\sigma\Rightarrow\sigma'$.  This notation corresponds to
  the following property: if (1) holds for~$\Theta$
  and~$\Theta\Rightarrow\Theta'$, then~(1) also holds
  for~$\Theta'$.  There is a minimum element for this preorder which is $\emptyset$
  (the empty disjunction, equivalent to false),
  \hhl{and} a maximum element~$\{\emptyset\}$
  (a single empty conjunction, equivalent to true).
  The preorder~$\Rightarrow$ induces an equivalence
  relation~$\Leftrightarrow$, which corresponds to logical equivalence
  of DNF formulas.

  Let $\Theta \subseteq \mathcal{P}(C)$ be a nonempty family of
  coordinate sets. For two natural numbers $x<y$, we define
  $h_\Theta(x,y) \in \Nat$ to be
\begin{align*}
h_\Theta(x,y)&={\min}_\comp\left\{M~:~\forall~v\in E_{x,y}.~
	\exists~\sigma\in\Theta.
	~v\comp_\sigma M\right\}\ .
\end{align*}
By applying Lemma~\ref{lemma:ramsey-B} to the
property~$h_\Theta(x,y)$, we obtain an infinite set~$D \subseteq \Nat$
and a separator $(f,g)$ such that either;
\begin{itemize}
\item[(a)] for all~$x<y\in D$, $h_\Theta(x,y)\comp f(x)$,\\
	i.e., for all~$v\in E_{x,y}$ there exists~$\sigma\in\Theta$ such that~$v\comp_\sigma f(x)$,
	or;
\item[(b)] for all~$x<y\in D$, $h_\Theta(x,y)\notcomp g(x)$,\\
	i.e., there exists~$v\in E_{x,y}$ such that for all~$\sigma\in\Theta$, $v\not\comp_\sigma g(x)$.
\end{itemize}
Using compositionality of Ramsey-like statements, we can assume that
the separator $(f,g)$ works for all possible families $\Theta$
simultaneously. Note however, that the choice of item (a) or (b) may
depend on the particular family $\Theta$.
Furthermore remark that if to~$\Theta$ corresponds property~(a),
and $\Theta\Rightarrow\Theta'$ \hhl{holds,
then property~(a) also corresponds to~$\Theta'$.}
By removing a finite
number of elements of~$D$, we can further assume that for any~$x$
in~$D$ we have $f(x)\comp g(x)$.

Let~$\Theta \subseteq \mathcal{P}(C)$ be a family of coordinate sets
that satisfies property (a), but is minimal in the sense that for
every family $\Theta'$ satisfying (a), the implication $\Theta
\Rightarrow \Theta'$ holds.  The family $\Theta$ exists since
$\{\emptyset\}$ satisfies (a) for any separator~$(f,g)$ (it is even
unique up to~$\Leftrightarrow$).  Without loss of generality, we
assume that $\Theta$ does not contain two coordinate sets $\sigma'
\subseteq \sigma$, since otherwise we can remove the larger coordinate
set $\sigma$ and still get an equivalent family.

We will show that the properties~(1) and~(2) of the lemma hold for
this family $\Theta$. Property~(1) directly comes from~(a). Let us prove (2).
Fix a coordinate set~$\sigma$ in $\Theta$,
as well as $x<y$ in $D$.  We need to show that for some~$v\in
E_{x,y}$, both $v\comp_\sigma f(x)$ and $v\notcomp_{C\setminus\sigma}g(x)$.
Let $\Theta'$ be the family of coordinate sets obtained from
$\Theta$ by removing $\sigma$ and adding all coordinate sets of the
form $\sigma \cup \set \beta$, for $\beta$ ranging over $C\setminus\sigma$.
It is easy to see that $\Theta' \Rightarrow \Theta$,
while $\Theta \iff \Theta'$ does not hold. In particular, by minimality
of $\Theta$, the family $\Theta'$ cannot satisfy property (a), and
therefore it must satisfy property (b).  Let~$v$ be the vector
of~$E_{x,y}$ existentially introduced by property~(b) applied
to~$\Theta'$.  We will show that this vector satisfies both
$v\comp_\sigma f(x)$ and $v\notcomp_{C\setminus\sigma}g(x)$.

First, we show $v \comp_\sigma f(x)$. Since the
family~$\Theta$ satisfies property~(a), there must be a coordinate
set~$\sigma'\in \Theta$ such that~$v\comp_{\sigma'}f(x)$.  We claim
that $\sigma'=\sigma$.  Indeed, otherwise $\sigma'$ would belong to
$\Theta'$, and by~(b) we would have $v\not\comp_{\sigma'} g(x)$.
This is in contradiction with~$v\comp_{\sigma'}f(x)$
since $f(x)\comp g(x)$.

Second, we show $v\notcomp_{C\setminus\sigma}g(x)$. Let
then~$\beta$ be a coordinate in $C\setminus\sigma$.  By definition
of~$\Theta'$, the coordinate set $\sigma\cup\{\beta\}$ belongs to
$\Theta'$.  By~(b) for~$\Theta'$, we get $v(\alpha)\notcomp g(x)$ for
some~$\alpha\in \sigma\cup\{\beta\}$.  Since above we have
shown~$v\comp_\sigma f(x)$ and~$f(x)\comp g(x)$, it follows
that~$\alpha=\beta$ and~$v(\beta)\notcomp g(x)$.  Since
this is true for all~$\beta\in C\setminus\sigma$,
we have~$v\notcomp_{C\setminus\sigma}g(x)$.
\end{proof}

\subsection{Descriptions}
\label{subsection:description}

Recall the mapping $val(\rho,\alpha,c)$, which \hhl{describes} the number of
increments that run $\rho$ does on counter $c$ in the event
$\alpha$. By fixing a run $\rho$ of type $t$, we can
define~$val(\rho)$ as a mapping from the set of
$t$-events~$\constraints(t)$ to~$\Nat$, i.e., a vector of natural
numbers.  This vector measures the number of increments in the
run~$\rho$ for each event and each counter, keeping track of the
numbers which are the \emph{worst} for the acceptance condition.  Note
that the coordinates of $val(\rho)$ depend on the type $t$ of
$\rho$---more precisely on $\constraints(t)$---and therefore the
vectors $val(\rho)$ and $val(\rho')$ may not be directly comparable
for runs $\rho$, $\rho'$ of different types.  The key property of
$val$ is that given a sequence of finite runs~$\rho_1,\rho_2, \ldots$,
it is sufficient to know their respective \hhl{types~$t_1,t_2,\dots$ and} the
vectors $val(\rho_1),val(\rho_2),\ldots$ in order to decide whether or
not the infinite run $\rho_1 \rho_2 \ldots$ satisfies the acceptance
condition.  The descriptions defined below gather this information in
a finite object.

\begin{defi}[description]
  A \intro{description} is a set of pairs~$(t,\gamma)$ where~$t$ is a
  type and~$\gamma$ a set of $t$-events.
\end{defi}
\oldhl{The intuition is that $\gamma$ is the set of events where the
  counter values are bad for the automaton's acceptance condition,
  i.e..~small in the case when $T=S$ and large in the case when
  $T=B$.}

A \emph{cut} is any infinite set of natural numbers~$D$, which are
meant to be word positions.  We also view a cut as a sequence of
natural numbers, by ordering the numbers from $D$ in increasing order.
Given a cut $D=\set{d_1 < d_2 < \cdots}$ and an $\omega$-word $w \in
\Sigma^\omega$, we define $w|_D$ to be the infinite sequence of finite
words obtained by cutting the word at all positions in~$D$:
\begin{equation*}
w|_D =   w[d_1,\ldots,d_2-1],~w[d_2,\ldots,d_3-1],~\ldots
\end{equation*}
\hhl{Note that the prefix $w[0,\ldots,d_1-1]$  of $w$ up to position
  $d_1-1$  is not used here; } as in the
proof of B\"uchi, it is treated separately.

\begin{defi}[strong description]
  Let $w \in \Sigma^\omega$ be an  input  $\omega$-word, $\tau$ a
  description and~$D \subseteq \Nat$ a cut.  We say that $\tau$
  \intro{strongly describes}~$w|_D$ if there is a separator~$(f,g)$
  such that for every~$x<y\in D$ \hhl{the following conditions hold.}
\begin{itemize}
\item for every partial run~$\rho$ of type~$t$ \hhl{over} $w$ from position~$x$
  to position~$y$, there is a pair $(t,\gamma)\in\tau$ such
  that~$val(\rho)\comp_{\gamma}f(x)$, and;
\item for every pair~$(t,\gamma)\in\tau$, there is a run~$\rho$ \hhl{over} $w$
  of type $t$ from position~$x$ to position~$y$ such \hhl{that~$val(\rho)\comp_{\gamma}f(x)$
  and~$val(\rho)\notcomp_{\constraints(t)\setminus\gamma} g(x)$.}
\end{itemize}
\end{defi}

\begin{lem}\label{lemma:strong-description}
  For every $\omega$-word~$w \in \Sigma^\omega$, there is a
  description $\tau$ and a cut $D$ such that $\tau$ strongly
  describes~$w|_D$.
\end{lem}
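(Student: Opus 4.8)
The plan is to apply Lemma~\ref{lemma:ramsey-C} with the right choice of "coordinate set" and "input vectors". First I would fix the $\omega$-word $w$. For each pair of positions $x < y$ in $\Nat$ and each type $t$, let $R_{x,y}^t$ be the (finite) set of partial runs of $\Aa$ over $w$ from position $x$ to position $y$ that have type $t$. The obstacle is that the values $\val(\rho)$ for runs $\rho$ of different types $t$ live in different coordinate spaces $\constraints(t)$, so we cannot directly feed them into a single Ramsey statement; I would handle this by working with a single large coordinate set $C$ that encodes all possible events across all types.

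Concretely, I would take the coordinate set to be $C = \{(t,\alpha,c) : t \text{ a type}, (\alpha,c) \in \constraints(t)\}$, which is finite since there are finitely many types. For $x < y$, define $E_{x,y} \subseteq \Nat^C$ to be the set of vectors obtained as follows: for each type $t$ and each run $\rho \in R_{x,y}^t$, put into $E_{x,y}$ the vector $v_\rho$ whose $(t,\alpha,c)$-coordinate is $\val(\rho,\alpha,c)$ and whose coordinates $(t',\cdot,\cdot)$ for $t' \neq t$ are set to the "best" value (i.e.\ $+\infty$, or in practice a sufficiently large constant, with respect to $\comp$; since we only ever need finitely many values one can in fact take a uniform bound, but it is cleaner to reason with the idealized "best" value). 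The effect of this padding is that a coordinate $(t',\alpha,c)$ with $t' \neq t$ never contributes a "bad" (small-for-$S$, large-for-$B$) constraint for a run of type $t$. If some $E_{x,y}$ were empty I would just shrink the cut to positions where runs exist; but since $\Aa$ has runs of the trivial empty type over every infix, $E_{x,y}$ is always nonempty. Now apply Lemma~\ref{lemma:ramsey-C} to $C$ and $\{E_{x,y}\}$: we obtain a family $\Theta \subseteq \mathcal P(C)$, an infinite set $D \subseteq \Nat$, and a separator $(f,g)$ satisfying conditions (1) and (2) of that lemma.

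From $\Theta$ I would read off the description $\tau$. For a coordinate set $\sigma \in \Theta$, say $\sigma$ is \emph{coherent with type $t$} if $\sigma \subseteq \constraints(t)$ (viewing $\constraints(t)$ as the subset $\{(t,\alpha,c) : (\alpha,c)\in\constraints(t)\}$ of $C$); because of the padding, the only witnesses realizing a given $\sigma$ via (2) are runs whose type $t$ makes $\sigma$ coherent, so each relevant $\sigma$ determines its type. Then set $\tau = \{(t,\gamma) : \sigma \in \Theta,\ \sigma \text{ coherent with } t,\ \gamma = \sigma \text{ viewed inside } \constraints(t)\}$. I would then verify the two bullets in the definition of "$\tau$ strongly describes $w|_D$" using the same separator $(f,g)$. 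The first bullet follows from condition (1): any partial run $\rho$ of type $t$ from $x$ to $y$ yields $v_\rho \in E_{x,y}$, so some $\sigma \in \Theta$ has $v_\rho \comp_\sigma f(x)$; by the padding, $\sigma$ cannot use a coordinate $(t',\cdot,\cdot)$ with $t' \neq t$ as a genuine constraint — more precisely one checks $\sigma$ must be coherent with $t$ — whence $(t,\gamma) \in \tau$ with $\val(\rho) \comp_\gamma f(x)$. The second bullet follows from condition (2): for $(t,\gamma) \in \tau$, coming from $\sigma \in \Theta$, there is $v \in E_{x,y}$ with $v \comp_\sigma f(x)$ and $v \notcomp_{C \setminus \sigma} g(x)$; this $v$ is $v_\rho$ for some run $\rho$, and since $v$ has a "bad" ($\notcomp g(x)$) coordinate outside $\sigma$ while the padded coordinates are all "good", $\rho$ must have type exactly $t$, and then $\val(\rho) \comp_\gamma f(x)$ and $\val(\rho) \notcomp_{\constraints(t)\setminus\gamma} g(x)$ as required.

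The main obstacle is the bookkeeping around the heterogeneous coordinate spaces: one must argue carefully that the padding trick makes coordinates of foreign types inert, so that membership of $v_\rho$ in a $\comp_\sigma$-ball forces $\sigma$ to be coherent with the type of $\rho$, and conversely that a witness vector with a genuinely bad coordinate pins down the type. Once this correspondence between $\Theta$ and $\tau$ is nailed down, the rest is a direct translation of Lemma~\ref{lemma:ramsey-C}(1)--(2) into the two bullets of the strong-description definition, with the separator carried over unchanged.
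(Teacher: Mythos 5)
Your overall strategy---feed the vectors $\val(\rho)$ into Lemma~\ref{lemma:ramsey-C} and read the description off the resulting family $\Theta$---is indeed the one the paper uses, but your device for merging the heterogeneous coordinate spaces into a \emph{single} application of that lemma has a genuine gap. For the padding to be inert, a padded coordinate must satisfy $P \notcomp g(x)$ for every $x$ in the cut, where $(f,g)$ is the separator that Lemma~\ref{lemma:ramsey-C} \emph{produces}. In the case $T=S$ the relation $\comp$ is $\leq$ and $g$ is an $S$-function, so ``inert'' means $P>g(x)$ with $g$ unbounded: no finite $P$ works, the value $+\infty$ is not permitted by the lemma as stated (the sets $E_{x,y}$ must lie in $\Nat^C$), and choosing $P$ after seeing $g$ is circular, since $g$ depends on the input sets. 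Concretely, for $x\in D$ with $g(x)\geq P$, condition (2) of Lemma~\ref{lemma:ramsey-C} forces every $\sigma\in\Theta$ to contain \emph{all} foreign coordinates of its witness vector (these equal $P\leq g(x)$, hence fail $\notcomp g(x)$, hence cannot lie outside $\sigma$), which in turn forces $f(x)\geq P$; the resulting $\sigma$'s are incoherent with any single type and your extraction of $\tau$ collapses. (The trick does essentially work for $T=B$, where padding with $0$ is permanently inert; the failure is exactly in the direction where ``good'' means ``large''.)

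The paper avoids this by not merging the types at all: it first applies ordinary Ramsey's theorem (with compositionality) to obtain a cut $D_0$ on which the set $A$ of types realized between any two cut positions is constant, and then applies Lemma~\ref{lemma:ramsey-C} once \emph{per type} $t\in A$, with coordinate set $\constraints(t)$ and with $E^t_{x,y}$ the vectors of runs of type $t$ only; compositionality of Ramsey-like statements yields a single common cut and separator, and the description is $\{(t,\gamma):t\in A,\ \gamma\in\Theta_t\}$. This per-type application also disposes of your nonemptiness worry (for $t\in A$ the sets $E^t_{x,y}$ are nonempty by the choice of $A$; your appeal to ``runs of the trivial empty type'' is unjustified, since nothing guarantees that a partial run exists between two arbitrary positions). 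If you insist on a single application, you would have to redo Lemmas~\ref{lemma:ramsey-bounds}--\ref{lemma:ramsey-C} over $\Nat\cup\{\infty\}$; the per-type route is the cheaper fix.
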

\begin{proof}
  Using Ramsey's theorem and its compositionality, we find a cut $D_0$
  and a set of types $A$ such that for all $x<y$ in $D_0$ the
  following conditions are equivalent;
\begin{itemize}
\item there is a partial run of type $t$ between positions $x$ and $y$, and;
\item the type $t$ belongs to $A$.
\end{itemize}
The rest follows by applying Lemma~\ref{lemma:ramsey-C} for all types~$t$
in~$A$, and using  compositionality of Ramsey-like statements.
\ign{
Fix a type~$t$ in~$A$.  For each $x<y \in D_0$ we set~$E^t_{x,y}$ to
be the set of vectors~$val(\rho)$, with~$\rho$ ranging over finite
run of type~$t$ that start in position~$x$ and end in position~$q$.
Using Lemma~\ref{lemma:ramsey-C} and compositionality of Ramsey-like
statements, we obtain an infinite cut~$D\subseteq D_0$, a
separator~$(f,g)$ and a family $\set{\Theta_t}_{t \in A}$ such that
for any type $t$ in~$T$, the conclusions of Lemma~\ref{lemma:ramsey-C}
hold with~$E_{x,y}$ replaced by~$E^t_{x,y}$ and~$\Theta$ replaced
by~$\Theta_t$.  It is not difficult to see that the description
$\{(t,\gamma)~:~t\in A,~\gamma\in\Theta_t\}$ strongly describes
$w|_D$.}
\end{proof}

The problem with strong descriptions is that an $\omega\bar
T$-automaton cannot directly check if a description strongly describes
some cut word~$w|_D$. There are two reasons for this. First, we need
to check the description for each~$x<y$ in~$D$, and therefore deal
with the overlap between the finite words $w[x,\ldots,y]$. Second, the
second condition of strong description involves guessing the
$T$-function~$g$, and this cannot be done using a~$\omega\bar
T$-automaton.

Hence, we reduce the property to checking weak descriptions
(see Definition~\ref{df:weak-description}) which are
stable (see Definition~\ref{df:stability}). Lemma~\ref{lemma:stability}
shows that this approach makes sense.

\begin{defi}[weak description]\label{df:weak-description}
  Given a type $t$, a set of events $\gamma$ and a natural number~$N$,
  a finite run~$\rho$, is called \intro{consistent with~$(t,\gamma)$
    under~$N$} if~$\rho$ has type~$t$ and $\val(\rho)\comp_\gamma N$.
  Given a description~$\tau$, a run $\rho$ is called \intro{consistent
    with~$\tau$ under~$N$} if it is consistent with
  some~$(t,\gamma)\in\tau$ under~$N$.

  Let $w \in \Sigma^\omega$ be an  input $\omega$-word, $\tau$ a
  description and~$D$ a cut.  We say that
  $\tau$ \intro{weakly describes}~$w|_D$ if there is
  a~$\bar T$-function~$f$
  such that for all~$i\in\Nat$, every run~$\rho$ \oldhl{over the $i$-th
    word in  $w|_D$}
  is consistent  with~$\tau$ under~$f(i)$.
\end{defi}
\hhl{A weak} description is a weakening of strong descriptions for two
reasons: only runs between consecutive elements of the cut are
considered, and only the first constraint of the strong description is kept.

The following lemma shows that weak descriptions can be \hhl{expressed} by
specifications.
\begin{lem}\label{lemma:descriptions-simulated-by-specifications}
  For every weak description there is an equivalent specification. In
  other words, for every description $\tau$ there is a specification
  $\tau'$ such that the following are equivalent for an
  $\omega$-word~$w$ and a cut $D$\,;
  \begin{itemize}
  \item the description $\tau$ weakly describes $w|_D$, and;
  \item there is a $\bar T$-function $f$ such that for all  $i \in \Nat$, every
    run $\rho$ over \oldhl{the $i$-th word in $w|_D$ satisfies $\tau'$ under $f(i)$}.
  \end{itemize}
\end{lem}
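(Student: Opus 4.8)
The plan is to write down the required specification $\tau'$ explicitly, as a positive boolean combination mirroring the shape of the description $\tau$, and then to observe that the predicate ``$\rho$ is consistent with $\tau$ under $N$'' and the predicate ``$\rho$ satisfies $\tau'$ under $N$'' are literally the same property of a finite run $\rho$ and a threshold $N\in\Nat$. Once this is done, conditions (1) and (2) of the statement become the same assertion, since (1) is by definition the existence of a $\bar T$-function $f$ making every run over the $i$-th word of $w|_D$ consistent with $\tau$ under $f(i)$, for all $i$.

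Concretely, for each pair $(t,\gamma)\in\tau$ I would set
$$\tau'_{(t,\gamma)} \ :=\ (\text{the run has type } t)\ \wedge\ \bigwedge_{(\alpha,c)\in\gamma}\big(\val(\rho,\alpha,c)\comp K\big),$$
and take $\tau':=\bigvee_{(t,\gamma)\in\tau}\tau'_{(t,\gamma)}$. First I would check that this is a legitimate specification: only $\vee$ and $\wedge$ occur, so the combination is positive; each leaf is one of the two allowed kinds of atomic specification; and all the atoms share the single free parameter $K$, as the definition of a specification requires. One point deserves a sentence: although the atom $\val(\rho,\alpha,c)\comp K$ only makes literal sense on a run whose type contains the event $(\alpha,c)$, inside $\tau'_{(t,\gamma)}$ it is conjoined with ``the run has type $t$'' and $\gamma\subseteq\constraints(t)$; hence on a run not of type $t$ the conjunct is already false, while on a run of type $t$ the counter profile $t(\alpha)$ records enough resets for $\val(\rho,\alpha,c)$ to be defined, so no convention for ill-typed atoms is ever invoked.

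Next I would verify the key equivalence by unwinding the definitions: for every finite run $\rho$ and every $N\in\Nat$, $\rho$ satisfies $\tau'$ under $N$ iff it satisfies some disjunct $\tau'_{(t,\gamma)}$ under $N$, iff $\rho$ has type $t$ and $\val(\rho,\alpha,c)\comp N$ for all $(\alpha,c)\in\gamma$, iff $\rho$ has type $t$ and $\val(\rho)\comp_\gamma N$, iff $\rho$ is consistent with $(t,\gamma)$ under $N$ for some $(t,\gamma)\in\tau$, iff $\rho$ is consistent with $\tau$ under $N$. Specializing $N$ to $f(i)$ and quantifying over $i$ and over $\bar T$-functions $f$ then identifies (1) with (2), which is the claimed equivalence. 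I do not anticipate a genuine obstacle here: the lemma is essentially a transcription, and the only step calling for care is the well-definedness remark above, dispatched by noting that each atomic $\val$-specification is guarded by a type constraint.
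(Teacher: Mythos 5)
Your proposal is correct and follows the same route as the paper, whose entire proof is the one-line observation that the conditions in the definition of weak description can be expressed by a specification; you have simply written out the disjunction-of-guarded-conjunctions explicitly and verified that ``satisfies $\tau'$ under $N$'' coincides with ``consistent with $\tau$ under $N$''. The well-definedness remark about guarding each $\val$-atom by its type constraint is a reasonable extra precaution but not a point the paper needed to address.
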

\begin{proof}
  All the conditions in the definition of weak descriptions can be
  expressed by\hhl{ a specification.}
\end{proof}

In Definition~\ref{df:stability}, we present the notion of a stable
description. The basic idea is to mimic the notion of idempotency
used in the case of B\"uchi.

\begin{defi}[stability]\label{df:stability}
  A description~$\tau$ is \intro{stable} if there is a \hhl{$T$-function}
  $h$ such that for all natural numbers $N$ and all finite
  runs~$\rho=\rho_1\dots\rho_k$, if each $\rho_i$ is consistent
  with~$\tau$ under~$N$ for all~$i=1\dots k$, then $\rho$ is
  consistent with~$\tau$ under~$h(N)$.
\end{defi}

To illustrate the above definition, we present an example. In this
particular example, the description will be stable for $T=B$, but it
will not be stable for $T=S$.  \oldhl{Let then $T=B$ and consider} an automaton with one counter
$\alpha=1$, and one state $q$. We will show that for
$t=(q,\set{\cnr},q)$ \hhl{(we write $\{\cnr\}$ instead of  the mapping which to counter~$1$ associates~$\{\cnr\}$)}
and $\gamma=\set{(1,\cnr)}$, the description
$\tau=\set{(t,\gamma)}$ is stable.  In the case of $\tau$, the
function $h$ from the definition of stability will be the identity
function $h(N)=N$. To show stability of $\tau$, consider any finite
run $\rho$ decomposed as $\rho_1 \cdots \rho_k$, with each $\rho_i$
consistent with $\tau$ under $N$.  To show stability, we need to show
that
\begin{eqnarray*}
  \val(\rho) \comp_\gamma h(N)=N \ .
\end{eqnarray*}
Since $T=B$, the relation $\comp$ is $\ge$, so we need to show that
$\val(\rho)$ is at least $N$ on all events in $\gamma$. Since $\gamma$
has only one event $(1,\cnr)$, this \hhl{boils} down to proving $\val(\rho
,1,\cnr) \ge N$. But this is simple, because
\begin{eqnarray*}
  \val(\rho
,1,\cnr)  = \val(\rho_1,1,\cnr) + \cdots + \val(\rho_k,1,\cnr)  \ ,
\end{eqnarray*}
and each $\rho_i$ satisfies $\val(\rho_i,1,\cnr) \ge N$ by
assumption. Note that this reasoning would not go through with $T=S$,
which is the reason why the above description is stable only in the
case $T=B$.

\medskip We now show that strong descriptions are necessarily stable.
\begin{lem}\label{lemma:stability}
  If~$\tau$ is a strong description of some~$w|_D$ then $\tau$ is
  stable.
\end{lem}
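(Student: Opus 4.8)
The plan is to extract from ``$\tau$ strongly describes $w|_D$'' a purely combinatorial \emph{closure property} of $\tau$ under concatenation of runs, and then read off stability. Fix a separator $(f,g)$ witnessing the strong description. A preliminary remark I would make: the restriction of $(f,g)$ to any infinite $D'\subseteq D$ still witnesses that $\tau$ strongly describes $w|_{D'}$, so we are free to thin out $D$, and (after discarding finitely many elements of $D$) may assume $f(x)\comp g(x)$ for all $x\in D$. Note also that, by the two clauses in the definition of strong description, the set of types occurring in $\tau$ is exactly the set of run-types realizable in $w$ between two elements of $D$, and this set is closed under the concatenation product of types (concatenate a run from $x$ to $y$ with one from $y$ to $z$, for $x<y<z$ in $D$); in particular $\mathrm{type}(\rho_1\cdots\rho_k)$ occurs in $\tau$ whenever each $\mathrm{type}(\rho_i)$ does.

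The key step I would establish is the following closure claim: for every list $(t_1,\gamma_1),\dots,(t_k,\gamma_k)\in\tau$ there is a pair $(t,\gamma)\in\tau$ with $t=t_1\cdots t_k$ such that, for every event $(\alpha,c)\in\gamma$, the standard formula computing $\val(\cdot,\alpha,c)$ of a concatenation $\rho'_1\cdots\rho'_k$ of runs of the respective types $t_l$ from its pieces --- a sum of certain per-piece counts when $c\in\{\cnr,\lc,\rc\}$, and a $\comp$-minimum over the between-resets gaps (within pieces or straddling them) when $c=\mc$ --- uses only per-piece quantities $\val(\rho'_l,\alpha,c')$ with $(\alpha,c')\in\gamma_l$, and uses at most a constant number $c_0$ of them nontrivially, $c_0$ depending only on $\tau$ and $(f,g)$. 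To prove this I would pick cut points $x_0<x_1<\cdots<x_k$ in $D$ with $x_0$ chosen large; by the \emph{witness} clause each slot $[x_{l-1},x_l]$ carries a run $\sigma_l$ of type $t_l$ with $\val(\sigma_l)\comp_{\gamma_l}f(x_{l-1})$ and $\val(\sigma_l)\notcomp_{\constraints(t_l)\setminus\gamma_l}g(x_{l-1})$; concatenating gives $\sigma=\sigma_1\cdots\sigma_k$ of type $t:=t_1\cdots t_k$, and the \emph{covering} clause yields $(t,\gamma)\in\tau$ with $\val(\sigma)\comp_\gamma f(x_0)$. Then, for $(\alpha,c)\in\gamma$: any per-piece quantity feeding the formula with $(\alpha,c')\notin\gamma_l$ is $\notcomp g(x_{l-1})$, yet in the $\cnr,\lc,\rc$ cases it is also $\comp\val(\sigma,\alpha,c)\comp f(x_0)$, which contradicts $f(x_0)\comp g(x_{l-1})$ once $x_0$ is large enough, so all contributing quantities are $\gamma_l$-controlled; for the count bound one uses that, when $T=S$ (so $\comp$ is $\leq$ and $f\leq M$ for some $M$), a sum of nontrivial non-negative integers bounded by $M$ has at most $M$ terms, and for $c=\mc$ one makes the same argument about the single gap realizing the $\comp$-minimum for $\sigma$. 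When $T=B$ (so $\comp$ is $\geq$) the count bound is obtained differently: sums preserve the lower bound regardless of the number of terms, and one instead chooses $x_0$ large enough (depending on the given $k$) that $f(x_0)$ exceeds $k$ times a global bound on $g$, which rules out a $\gamma$-event being ``covered'' purely by spuriously accumulated, boundedly-valued off-$\gamma_l$ quantities.

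Finally, I would derive stability: given $\rho=\rho_1\cdots\rho_k$ with each $\rho_i$ consistent with $\tau$ under $N$ via $(t_i,\gamma_i)$ --- so $\val(\rho_i)\comp_{\gamma_i}N$ --- apply the closure claim to get $(t,\gamma)\in\tau$ with $t=\mathrm{type}(\rho)$. Since $\rho$'s pieces have the same types $t_l$ as the $\sigma_l$, the same formula computes $\val(\rho,\alpha,c)$ from the $\val(\rho_l,\alpha,c')$; all contributing $(\alpha,c')$ lie in $\gamma_l$, so each contributing quantity is $\comp N$, and at most $c_0$ of them are combined by sums and $\comp$-minima, whence $\val(\rho,\alpha,c)\comp h(N)$ for $h(N):=c_0\cdot N$ when $T=S$ (and one may take $h(N)=N$ when $T=B$, as in the example following Definition~\ref{df:stability}). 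Thus $\rho$ is consistent with $\tau$ under $h(N)$, i.e.\ $\tau$ is stable. I expect the closure claim to be the main obstacle, and within it the $c=\mc$ case: there $\val(\cdot,\alpha,\mc)$ is a $\comp$-minimum over between-resets gaps, many of which straddle several pieces, so one must argue both that the minimizing gap is assembled only from $\gamma_l$-controlled pieces and that it touches only boundedly many pieces, so that the loss from $N$ to $h(N)$ stays independent of $k$.
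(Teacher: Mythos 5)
Your proof is correct and takes essentially the same route as the paper's: you use the witness clause of the strong description to manufacture mimicking runs $\sigma_l$ in consecutive slots of $D$ (chosen far enough out that $f(x_0)\comp g(x_{l-1})$), apply the covering clause to their concatenation to obtain $(t,\gamma)\in\tau$, use the separator gap to force every per-piece event contributing to a $\gamma$-event into the corresponding $\gamma_l$, and transfer the bounds back to $\rho$ via the fact that the pieces have the same types. What you package as the ``closure claim'' is exactly the paper's direct case analysis on $c\in\{\cnr,\lc,\mc,\rc\}$, yielding the same constants ($h(N)=N(M+2)$ for $T=S$ and $h(N)=N$ for $T=B$), with only a cosmetic difference in how the counting is done in the $T=B$ case.
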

\begin{proof}
  We \oldhl{prove the statement for~$T=S$ first, and then for~$T=B$.}

  {\bf Case~$T=S$.} In this case~$\comp$ is~$\leq$.
\oldhl{Let~$\rho_1,\dots,\rho_k$ be  runs such that~$\rho_i$ is
  consistent with~$(t_i,\gamma_i)\in\tau$ under~$N$ for
  all~$i=1,\dots,k$. We need to show that the composition
  $\rho=\rho_1\cdots \rho_k$ of these runs is consistent with some
  $(t,\gamma) \in \tau$ under $h(N)$, for some $S$-function $h$ independent of
  $\rho_1,\ldots,\rho_k$. The function $h$ will be a  linear function,
  with the linear constant taken from the assumption that $\tau$ strongly
  describes $w|_D$.  Let then~$(f_0,g_0)$ be}\ttc{ the }\oldhl{separator
  obtained by unraveling the definition of  $w|_D$ being strongly
  described by~$\tau$.} We can
  assume without loss of generality that~$f_0$ is constant, equal
  to~$M$. Since~$g_0$ tends toward infinity, we can chose a natural
  number $n$ such that $g_0(i)\geq M$ holds for all~$i \ge n$.

  We now mimic each run~$\rho_i$ by a similar run~$\pi_i$
  over \oldhl{the}~$(i+n)$-th word in the sequence $w|_D$.  By definition of a
  strong description, one can find for~$i=1,\dots,k$ a run~$\pi_i$
  over the $(i+n)$-th word in $w|_D$ such that $\val(\pi_i)\leq_{\gamma_j}M$
  and $val(\pi_i)>_{\constraints(t_i)\setminus\gamma_i}g_0(i+n)$.
  From the last inequality together with~$g_0(i+n)\geq M$, we obtain:
  \begin{align*}
  \text{for all}~(\alpha,c)\in\constraints(t_i),\qquad\val(\pi_i,\alpha,c)\leq M\quad\text{implies}\quad(\alpha,c)\in\gamma_i\,.
  \tag{$\sharp$}
  \end{align*}
  Let $\pi$ be~$\pi_1\dots\pi_k$.
  Since~$\tau$ strongly describes~$w|_D$,
  the run $\pi$ is consistent with~$(t,\gamma)$ under~$M$
  for some~$(t,\gamma)$ in~$\tau$. Formally,
  \begin{align*}
  \val(\pi)\leq_\gamma M\qquad(\sharp2) \ .
  \end{align*}

  We will show that $\rho$ is consistent with~$(t,\gamma)$
  under~$N(M+2)$, which establishes the stability of~$\tau$, under the
  linear function $h(N)=N(M+2)$.  Let then~$(\alpha,c)$ be an event
  in~$\gamma$.  We have to prove
  \begin{align*}
  \val(\rho,\alpha,c)\leq N(M+2).
  \end{align*}
  This is done by a case distinction depending on~$c$.
\begin{enumerate}[leftmargin=13mm]
\item[$c=\cnr$] This means that $\pi$
      does not contain any reset of~$\alpha$.
      Let~$I \subseteq \set{1,\ldots,k}$ be the set
      of \oldhl{indexes}~$j$ for which $\pi_j$
      contains an increment of counter $\alpha$.
      By~($\sharp2$), and since~$(\alpha,\cnr)$ belongs to $\gamma$,
      the number of increments of counter $\alpha$ in $\pi$
      is at most $M$.
      In particular\oldhl{, the set $I$ contains at most $M$ indexes.}

      Let now~$i\in I$. Still by~($\sharp2$), the run~$\pi$, \oldhl{and
        hence also $\pi_i$,}
      contains at most~$M$ increments of~$\alpha$.
      By ($\sharp$), this means that~$(\alpha,\cnr)$ belongs to~$\gamma_i$.
      Hence, since~$\rho_i$ is consistent with~$(t_i,\gamma_i)$
      under~$N$, $\rho_i$ contains at most~$N$ increments of~$\alpha$.
      Furthermore, for~$i\not\in I$, $\pi_i$ does not increment~$\alpha$,
      and~$\pi_i$ has the same type as~$\rho_i$. Hence~$\rho_i$ does not
      increment the counter~$\alpha$ either. Summing up: at most $M$
      runs among $\rho_1,\ldots,\rho_k$ increment counter $\alpha$,
      and those that do, do so at most $N$ times.

      Overall there are at most~$MN$ increments of~$\alpha$ in~$\rho$,
      i.e., $\val(\rho,\alpha,c)\leq MN$.
    \smallskip
    \item[$c=\lc$] Let~$j$ be the first index such that~$\rho_j$
      contains a reset of~$\alpha$.
      Using the previous case, but for $k=j-1$, we infer that the
      prefix~$\rho_1\dots \rho_{j-1}$ contains at most~$MN$
      increments of~$\alpha$.

      By ($\sharp2$), there are at most~$M$ increments of~$\alpha$
      before the first reset in~$\pi$. Since this
      first reset occurs in~$\pi_j$, the same holds for~$\pi_j$.
      By~($\sharp$) we obtain that~$(\alpha,\lc)$ belongs to~$\gamma_j$.
      Finally, since~$\rho_j$ is consistent with~$(t_j,\gamma_j)$ under $N$,
      there are at most~$N$ increments of~$\alpha$ before the first reset
      in~$\rho_j$.

      Overall there are at most~$N(M+1)$ increments of~$\alpha$
      before the first reset in~$\rho$.
      \smallskip
\item[$c=\rc$] As in the previous case.
      \smallskip
\item[$c=\mc$] As previously, but we need the bound $N(M+2)$ since
  both \hhl{ends} of the interval have to be considered.
\end{enumerate}

{\bf Case~$T=B$.}  In this case~$\comp$ is~$\geq$.
Let~$\rho=\rho_1\dots\rho_k$ be a run such that~$\rho_i$ is consistent
with~$(t_i,\gamma_i)\in\tau$ under~$N$ for all~$i=1,\dots,k$. We will
show that $\rho$ is consistent with $\tau$ under $N$, i.e.~$h$ is the
identity function.  \oldhl{Let then~$(f_0,g_0)$ be}\ttc{ the }\oldhl{separator
obtained by unraveling the definition of  $w|_D$ being strongly described
by~$\tau$.} We can
assume without loss of generality that~$g_0$ is constant, equal
to~$M$. As previously, since~$f_0$ tends toward infinity, we can chose
a natural number $n$ such that $f_0(n+1)\geq M(N+2)$.

As in the case of $T=S$, we mimic each run~$\rho_i$ by a similar
run~$\pi_i$ over the \oldhl{$(i+n)$-th word in~$w|_D$}.  By definition of a strong
description, one can find for~$i=1,\dots,k$ a run~$\pi_i$ over the
\oldhl{$(i+n)$-th word in~$w|_D$} such that 
\begin{eqnarray*}
  \val(\pi_i)\geq_{\gamma_j}f_0(i+n) \quad \mbox{and} \quad 
val(\pi_i)<_{\constraints(t_i)\setminus\gamma_i} M\ .
\end{eqnarray*}
  From the last
inequality we obtain:
  \begin{align*}
  \text{for all}~(\alpha,c)\in\constraints(t_i),\qquad\val(\pi_i,\alpha,c)\geq M\quad\text{implies}\quad(\alpha,c)\in\gamma_i\,.\tag{$\sharp$}
  \end{align*}

  Let $\pi$ be~$\pi_1\dots\pi_k$.
  Since~$\tau$ strongly describes~$w|_D$,
  the run $\pi$ is consistent with~$(t,\gamma)$ under~$f_0(n+1)$
  for some~$(t,\gamma)$ in~$\tau$. In combination with~$f_0(n+1)\geq M(N+2)$, we have:
  \begin{align*}
  \val(\pi)\geq_\gamma M(N+2)\,.\tag{$\sharp2$}
  \end{align*}

  We will show that $\rho$ is consistent
  with~$(t,\gamma)$ under~$N$,
  which establishes the stability of~$\tau$.
  For this, let~$(\alpha,c)$ be an event in~$\gamma$, we have to prove
  \begin{align*}
  \val(\rho,\alpha,c)\geq N.
  \end{align*}
  This is done by a case distinction depending on~$c$.
\begin{enumerate}[leftmargin=13mm]
\item[$c=\cnr$] In this case the run $\pi$
      does not contain any reset of~$\alpha$.
      Let~$I \subseteq \set{1,\ldots,k}$ be the set
      of \oldhl{indexes}~$j$ for which $\pi_j$
      does an increment of counter $\alpha$.
      By~($\sharp2$) applied to~$(\alpha,\cnr)$,
      there are at least~$MN$ increments (actually, at least $M(N+2)$ increments,
      but we only need $MN$ here)  of $\alpha$ in $\pi$.
      Two cases can happen; either~\oldhl{$I$ contains at least $N$ indexes},
      or there is \oldhl{some~$j \in I$} such that $\rho_j$
      contains at least~$M$ increments of~$\alpha$.

      Consider first the case when~\oldhl{$I$ has at least $N$ indexes}. Since there is at least one increment of~$\alpha$
      in every~$\rho_i$ for~$i\in I$,
      there are at least~$N$ increments of~$\alpha$ in~$\rho$.

      Otherwise there is some~$j \in I$ such that~$\pi_j$ contains at least~$M$
      increments of~$\alpha$.
      By~($\sharp$), this means that~$(\alpha,\cnr)$
      belongs to~$\gamma_j$. Finally, since~$\rho_j$
      is consistent with~$\gamma_j$
      under~$N$, we obtain that $\rho_j$, and by consequence also $\rho$,
      contains at least~$N$ increments of~$\alpha$.

      Overall there are at least~$N$ increments of~$\alpha$ in~$\rho$,
      i.e., $\val(\rho,\alpha,c)\geq N$.
    \smallskip
    \item[$c=\lc$] Let~$j$ be the first index for which~$\pi_j$
      contains a reset of~$\alpha$.
      By~($\sharp2$), there are at least~$M(N+1)$  increments (again,
      we do not need to use $M(N+2)$ here) of~$\alpha$
      before the first reset in~$\pi$.
      Two cases  \oldhl{are possible}: either~$MN$ increments of~$\alpha$
      happen in~$\pi_1\dots\pi_{j-1}$, or $\pi_j$ contains $M$
      increments of~$\alpha$ before the first reset of~$\alpha$.

      In the first case we use the same argument as for~$c=\cnr$
      (note that we were just using a bound of~$MN$ in this case)
      over the run~$\pi_1\dots\pi_{j-1}$.
      We obtain that there are at least~$N$ increments of~$\alpha$
      in~$\rho_1\dots\rho_{j-1}$.

      Otherwise, there are $M$
      increments of~$\alpha$ in~$\pi_j$ before the first reset.
      Using~($\sharp$) we deduce that $(\alpha,\lc)$
      belongs to~$\gamma_j$.
      Since~$\rho_j$ is consistent with~$(t_j,\gamma_j)$ under $N$,
      we deduce that there are at least~$N$ increments of~$\alpha$ in~$\rho_j$
      before the first reset.

      Overall there are at least~$N$ increments of~$\alpha$ in~$\rho$
      before the first reset,
      i.e., $\val(\rho,\alpha,c)\geq N$.
      \smallskip
\item[$c=\rc$] As in the previous case.
      \smallskip
\item[$c=\mc$] As previously (this time using the bound~$M(N+2)$).
\qedhere
\end{enumerate}
\end{proof}

We will now show how to tell if a word is rejected by inspecting one
of its descriptions.  This notion of rejection will be parametrized by
the set~$S$ of states in which the cut may be reached.
Note that rejecting loops  only make sense for stable descriptions.

\begin{defi}[rejecting loop]\label{def:rejecting-loop} We say a state $q$ is a \emph{rejecting
    loop} in a description $\tau$, if for all $(t,\gamma) \in \tau$
  where the source and target state of the type $t$ is $q$ we have:

  {\bf Case $T=S$.} There exists a counter~$\alpha$ such that either:
\begin{itemize}
\item $t(\alpha)=\emptyset$ or~$t(\alpha)=\set\cnr$, or;
\item $(\alpha,\mc)\in\gamma$, or;
\item Both $(\alpha,\lc)$ and $(\alpha,\rc)$ belong to $\gamma$.
\end{itemize}

{\bf Case $T=B$.} There exists a counter~$\alpha$ such that either:
\begin{itemize}
\item $t(\alpha)=\emptyset$ or~$t(\alpha)=\set\cnr$, or;
\item $(\alpha,c)\in\gamma$ for some $c\in\{\lc,\mc,\rc\}$.
\end{itemize}
\end{defi}

The idea behind the above definition is as follows: if the description
$\tau$ weakly describes a cut word $w|_D$, $\rho$ is a run that
assumes state $q$ in every position from $D$, then \oldhl{the fact that
  $q$ is a rejecting loop implies that $\rho$ not
  accepting}.  Since every infinite run can be decomposed into loops,
this is the key information when looking for a witness of rejection.

Given a description $\tau$ and a state $p$, we write $p\tau$ to denote
the set of states $q$ such that for some $(t,\gamma) \in \tau$, the
source state of $t$ is $p$ and the target state of $t$ is $q$. This
notation is extended to a set of states $P\tau$ in the natural
manner.

If $D$ is a cut and $w$ is an $\omega$-word, then the
\emph{$D$-prefix} of $w$ is defined to be the prefix of $w$ that leads
to the first position in $D$. Every $\omega$-word $w$ is decomposed
into its $D$-prefix, and then the concatenation of words from $w|_D$.

\begin{lem}\label{lemma:strong-rejection}
  Let $w|_D$ be a cut $\omega$-word strongly described by $\tau$. Let
  $P$ be the states reachable after reading the $D$-prefix of $w$.  If
  $w$ is rejected, then every state in $P\tau$ is a rejecting loop.
\end{lem}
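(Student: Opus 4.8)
The plan is to prove the contrapositive: assuming some state $q\in P\tau$ is \emph{not} a rejecting loop in $\tau$, I will build an accepting run of $\Aa$ over $w$, which contradicts the hypothesis that $w$ is rejected.

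First I would unfold the definitions. Since $q\in P\tau$, there are a state $p_0\in P$ and a pair $(t_0,\gamma_0)\in\tau$ whose type has source $p_0$ and target $q$; and since $p_0\in P$ there is a finite partial run $\pi$ of $\Aa$ over the $D$-prefix of $w$ leading from $q_I$ to $p_0$. Since $q$ is not a rejecting loop, there is a pair $(t,\gamma)\in\tau$ whose type has both source and target equal to $q$ and which violates the condition of Definition~\ref{def:rejecting-loop}; unfolding that negation, every counter $\alpha$ has $t(\alpha)\in\set{\set{\lc,\rc},\set{\lc,\mc,\rc}}$ (so any run of type $t$ resets $\alpha$ at least once), and moreover: if $T=S$ then $(\alpha,\mc)\notin\gamma$ and at least one of $(\alpha,\lc),(\alpha,\rc)$ is not in $\gamma$; if $T=B$ then none of $(\alpha,\lc),(\alpha,\mc),(\alpha,\rc)$ is in $\gamma$. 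Now I fix a separator $(f,g)$ witnessing that $\tau$ strongly describes $w|_D$, write $D=\set{d_1<d_2<\cdots}$, and use the second clause of the definition of strong description to choose a run $\rho_0$ of $\Aa$ of type $t_0$ over $w[d_1,\ldots,d_2-1]$ and, for every $i\geq 1$, a run $\rho_i$ of $\Aa$ of type $t$ over $w[d_{i+1},\ldots,d_{i+2}-1]$ with $\val(\rho_i)\notcomp_{\constraints(t)\setminus\gamma}g(d_{i+1})$. The source and target states match up, so $\rho:=\pi\,\rho_0\,\rho_1\,\rho_2\cdots$ is a genuine infinite run of $\Aa$ over $w$ beginning in $q_I$.

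The remaining --- and main --- task is to verify that $\rho$ is accepting; this is a case analysis on counter profiles in which one matches the $\alpha$-segments of $\rho$ (the maximal blocks of positions between two consecutive resets of $\alpha$) against the numbers $\val(\rho_i,\alpha,c)$. Fix a counter $\alpha$. Each $\rho_i$ with $i\geq 1$ resets $\alpha$, so $\rho$ resets $\alpha$ infinitely often and $\alpha(\rho)$ is infinite; moreover the prefix $\pi\rho_0$ and the one $\alpha$-segment straddling the $\rho_0$--$\rho_1$ boundary contribute only finitely many, finite, values, so they are harmless both for boundedness and for strong unboundedness, and it suffices to control the $\alpha$-segments lying inside some $\rho_i$ ($i\geq 1$) or straddling a $\rho_i$--$\rho_{i+1}$ boundary ($i\geq 1$); such a straddling segment has length $\val(\rho_i,\alpha,\rc)+\val(\rho_{i+1},\alpha,\lc)$. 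If $T=B$ (so $\comp$ is $\geq$ and $g\leq M$ for some $M$), then each $c\in\set{\lc,\mc,\rc}$ occurring in $t(\alpha)$ satisfies $(\alpha,c)\notin\gamma$, whence $\val(\rho_i,\alpha,c)<g(d_{i+1})\leq M$; this bounds every $\alpha$-segment inside a $\rho_i$ by $M$ and every straddling one by $2M$, so $\alpha(\rho)$ is bounded. If $T=S$ (so $\comp$ is $\leq$, $\notcomp$ is $>$, and $g$ is an $S$-function), then when $t(\alpha)=\set{\lc,\mc,\rc}$ we have $(\alpha,\mc)\notin\gamma$, so $\val(\rho_i,\alpha,\mc)>g(d_{i+1})$ and every middle $\alpha$-segment of $\rho_i$ exceeds $g(d_{i+1})$; and, since $(\alpha,\rc)\notin\gamma$ or $(\alpha,\lc)\notin\gamma$, the straddling segment between $\rho_i$ and $\rho_{i+1}$ has length exceeding $g(d_{i+1})$ or $g(d_{i+2})$. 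As $g(n)\to\infty$ and $d_i\to\infty$, all of these segment sizes tend to infinity, so $\alpha(\rho)$ is strongly unbounded. In both cases $\rho$ satisfies the acceptance condition of the $\omega T$-automaton $\Aa$, contradicting that $w$ is rejected.

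The only genuinely delicate point is this last case analysis, and within it the $\lc$/$\rc$ asymmetry between the two values of $T$: for $T=S$ it suffices that one of the two endpoint values of a boundary segment be large, whereas for $T=B$ all three of $\lc,\mc,\rc$ must be small --- exactly the asymmetry encoded in Definition~\ref{def:rejecting-loop}. Everything else is routine bookkeeping.
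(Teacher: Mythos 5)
Your proof is correct and follows essentially the same route as the paper's: pass to the contrapositive, extract a non-rejecting pair $(t,\gamma)$ from $\tau$, use the second clause of strong description to obtain witness runs over the blocks of $w|_D$, stitch them together behind a run over the $D$-prefix, and check acceptance counter by counter. The only difference is that you spell out both cases $T=S$ and $T=B$ (the paper treats only $T=S$ and declares $T=B$ similar) and you are slightly more explicit about the segments straddling block boundaries; the substance is identical.
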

\begin{proof}
  We only do the proof for~$T=S$, the case of $T=B$ being similar.
\hhl{Let $D=\set{d_1,d_2,\ldots}$.}

  \hhl{To obtain a contradiction}, suppose that $q \in P\tau$ is not a
  rejecting loop.  By definition, there must be a pair $(t,\gamma)$ in
  $\tau$---with $q$ the source and target of $t$---such that for every
  counter~$\alpha$, none of the conditions from
  Definition~\ref{def:rejecting-loop} hold. That is, the value
  $t(\alpha)$ contains $\lc$ and $\rc$, the event $(\alpha,\mc)$ is
  outside $\gamma$, and one of the events~$(\alpha,\lc),(\alpha,\rc)$
  is outside $\gamma$.  Without loss of generality, let us assume
  $(\alpha,\lc)$ is outside~$\gamma$.

  Let $(f,g)$ be the separator appropriate to $w|_D$ obtained from the
  definition of strong descriptions.  For each natural number~$i$ there
  is a run~$\pi_i$ of type $t$ between positions $d_i$ and $d_{i+1}$
  such that~$val(\pi_i,\alpha,c)>{g(d_i)}$ holds for all events
  $(\alpha,c)$ not in $\gamma$.

  Since~$t(\alpha)$ contains $\lc$ and $\rc$,
  the counter~$\alpha$ is reset at least once in~$\pi_i$.
  Furthermore, since $(\alpha,\mc)$ is outside $\gamma$,
  every two consecutive resets of~$\alpha$ in~$\pi_i$
  are separated by at least~$g(d_i)$ increments.
  Finally, since
  $(\alpha,\lc)$ is outside $\gamma$,
  there are at least $g(d_i)$ increments of~$\alpha$ before the first reset in~$\pi_i$. Since this holds for every counter,
  we obtain that the run~$\pi_1\pi_2\dots$
  satisfies the accepting condition.

  By assumption on $q \in P\tau$, there is some state $p \in P$ and a
  type in $\tau$ that has source state~$p$ and target state~$q$. In
  particular, the state $q$ can be reached in the second position of
  the cut $D$: by first reaching $p$ after the $D$-prefix, and then
  going from $p$ to $q$.  From $q$ in the second position of $D$, we
  can use the run $\pi_2\pi_3\ldots$ to get an accepting run over the
  word~$w$. This contradicts our assumption that $w$ was rejected by
  the automaton.
\end{proof}

The following lemma gives the converse of
Lemma~\ref{lemma:strong-rejection}. The result is actually stronger
than just the converse,
since we use weaker assumptions (the description need only be weak and
stable, which is true for every strong description, thanks to
Lemma~\ref{lemma:stability}).
\begin{lem}\label{lemma:rejection}
  Let $w$ be an $\omega$-word, $D \subseteq \Nat$ a cut, and assume
  that the word sequence $w|_D$ is weakly described by a stable
  description $\tau$. Let $P$ be the states reachable after reading
  the $D$-prefix of $w$.  If every state in $P\tau$ is a rejecting
  loop, then $w$ is rejected.
\end{lem}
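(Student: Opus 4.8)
The plan is to prove the contrapositive: assuming $w$ is accepted, I will produce a state in $P\tau$ that is not a rejecting loop. So fix an accepting run $\sigma$ of $\Aa$ over $w$. Since there are finitely many states, some state $q$ occurs in $\sigma$ at infinitely many positions of the cut $D$; write $D = \set{d_1 < d_2 < \cdots}$ and pass to an infinite subcut $D' = \set{d_{i_1} < d_{i_2} < \cdots}$ of positions where $\sigma$ is in state $q$. (Passing to a subcut is harmless here: a weak description of $w|_D$ need not restrict directly to $w|_{D'}$, but we only need it along consecutive elements of $D$, and the runs of $\sigma$ between consecutive elements of $D'$ are concatenations of runs between consecutive elements of $D$, which is exactly where stability enters.) Note $q$ is reachable at position $d_{i_1}$, so after the $D$-prefix one reaches some state in $P$ and then $q$; hence $q \in P\tau$ provided we also exhibit a pair $(t,\gamma) \in \tau$ with source and target $q$ — which we get below.

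Next I would extract type information. Each finite run $\sigma_j$ of $\sigma$ between consecutive positions of $D$ has some type; by weak description, $\sigma_j$ is consistent with $\tau$ under $f(j)$ for the $\bar T$-function $f$, i.e.\ consistent with some $(t_j,\gamma_j) \in \tau$. Grouping $\sigma$ along $D'$, the run $\pi_k$ from $d_{i_k}$ to $d_{i_{k+1}}$ is a concatenation of consecutively-typed pieces each consistent with $\tau$ under $\max_j f(j)$ over the relevant window; applying stability (Definition~\ref{df:stability}) with its $T$-function $h$, $\pi_k$ is consistent with some $(t,\gamma) \in \tau$ under $h(\text{that max})$. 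By pigeonhole on the finitely many pairs in $\tau$, infinitely many $\pi_k$ are consistent with a single $(t,\gamma)$; since each $\pi_k$ starts and ends in $q$, the type $t$ has source and target $q$, so $(t,\gamma)$ witnesses $q \in P\tau$. It remains to contradict that $q$ is a rejecting loop by showing $(t,\gamma)$ violates every clause of Definition~\ref{def:rejecting-loop}. Here I use that $\sigma$ is accepting, so for every counter $\alpha$ the sequence $\alpha(\sigma)$ is infinite and is a $T$-function. That $\alpha(\sigma)$ is infinite forces $\alpha$ to be reset infinitely often, so along the subsequence of pieces $\pi_k$ of type $t$ we must have $t(\alpha) \supseteq \set{\lc,\rc}$ (it cannot be $\emptyset$ or $\set{\cnr}$) — killing the first clause. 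For the remaining clauses one argues by the $T$-function property: being a $T$-function means the relevant run-internal increment counts are eventually $\acomp$-large (large if $T=S$, since $\comp=\leq$; large if $T=B$ likewise), so $\val(\pi_k,\alpha,c)$ is eventually $\notcomp$ the $\bar T$-function's values and hence $(\alpha,c)\notin\gamma$ for the relevant $c$, contradicting $(\alpha,\mc)\in\gamma$ or $(\alpha,\lc),(\alpha,\rc)\in\gamma$ (case $T=S$), resp.\ $(\alpha,c)\in\gamma$ for $c\in\set{\lc,\mc,\rc}$ (case $T=B$).

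The two cases $T=S$ and $T=B$ are handled by the same skeleton with $\comp$ instantiated; I would do $T=S$ in detail and remark the $B$ case is symmetric, as the paper does elsewhere. The delicate point to get right is the interplay between the witness $\bar T$-function $f$ (from weak description), the stability $T$-function $h$, and the acceptance $T$-function(s) $\alpha(\sigma)$: one needs that $h\circ(\text{sup of }f\text{ over a finite window})$ is still dominated, along $D'$, by the strongly-unbounded (resp.\ bounded) acceptance data so that the rejecting-loop clauses genuinely fail. This is exactly where we exploit that $f$ is a $\bar T$-function while the acceptance quantities are $T$-functions: for $T=S$, $f$ is bounded by a constant while $\alpha(\sigma)$ tends to infinity, so $h$ applied to a constant is a constant, still eventually beaten; for $T=B$, $f$ tends to infinity while the relevant non-$\gamma$ counts stay bounded, and the argument mirrors the proof of Lemma~\ref{lemma:stability}. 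I expect this reconciliation of the three functions — and the bookkeeping of passing from the cut $D$ to the subcut $D'$ through stability — to be the main obstacle; everything else is pigeonholing and unwinding definitions.
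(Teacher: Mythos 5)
Your skeleton is the paper's own argument read contrapositively: decompose the run along the cut, use the weak description on the $D$-pieces and stability on their concatenations, extract a single pair $(t,\gamma)\in\tau$ governing infinitely many composite pieces, and play the $T$-function acceptance data against the $\bar T$-function bound. The reconciliation of $f$, $h$ and the acceptance quantities that you flag as the delicate point is handled in the paper exactly as you predict: it defines $g(k)$ as the $\comp$-maximum of the tail $f(k),f(k+1),\dots$, checks $g$ is still a $\bar T$-function, and for $T=S$ the composed bound $h(M)$ is eventually constant and beaten by the strongly unbounded increment counts (dually for $T=B$).

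There is, however, a genuine gap: you extract the uniform pair $(t,\gamma)$ by pigeonhole over the \emph{consecutive} pieces $\pi_k$ of the subcut $D'$, whereas the paper applies Ramsey's theorem to \emph{pairs} $i<j$, colouring each pair by the type and event set of the subrun $\rho_{i,j}$ from $d_i$ to $d_j$. Pigeonhole only yields an infinite set $K$ of indices with $\pi_k$ consistent with a fixed $(t,\gamma)$; the pieces in between are uncontrolled. This breaks your argument in two places. First, the claim that acceptance forces $t(\alpha)\supseteq\set{\lc,\rc}$ fails: the counter $\alpha$ may be reset infinitely often yet only inside pieces $\pi_k$ with $k\notin K$, so the selected pair may have $t(\alpha)=\emptyset$ or $\set{\cnr}$ and hence satisfy the first clause of Definition~\ref{def:rejecting-loop}, making it useless as a witness that $q$ is not a rejecting loop. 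Second, refuting the clause ``both $(\alpha,\lc)$ and $(\alpha,\rc)$ in $\gamma$'' requires the selected pieces to be \emph{adjacent}, so that $\val(\pi_k,\alpha,\rc)+\val(\pi_{k+1},\alpha,\lc)$ bounds an actual reset-to-reset interval of $\sigma$; for non-adjacent selected pieces the straddling interval contains uncontrolled material. Re-cutting along $K$ and pigeonholing again just reproduces the same problem one level up, whereas Ramsey on pairs resolves it in one step, since then \emph{all} $\rho_{x,y}$ for $x<y$ in the extracted set share the same $(t,\gamma)$, and in particular consecutive ones do. (A smaller slip: to place $q$ in $P\tau$ you need a pair of $\tau$ with source in $P$ and target $q$, obtained by applying stability to the segment from $d_1$ to the first position of $D'$; a pair with source and target both equal to $q$ does not do this unless $q\in P$.)
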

\begin{proof}
  Let~$\rho$ be a run of the automaton over~$w$. We will show that
  this run is not accepting.

  Let~$q_i$ be the state used by~$\rho$ at position~$d_i$.  For~$i<j$,
  we denote by~$\rho_{i,j}$ the subrun of~$\rho$ that starts in
  position $d_i$ and ends in position $d_j$. Let~$t_{i,j}$ be the type
  of this run.

  Consider now a run $\rho_{i,j}$. This run can be decomposed as 
  \begin{eqnarray*}
    \rho_{i,j}=\rho_{i,i+1} \rho_{i+1,i+2} \cdots \rho_{j-1,j}\ .    
  \end{eqnarray*}
  Let~$f$ be the $\bar T$-function from the assumption that $\tau$
  weakly describes $w|_D$. By recalling the definition of weak
  descriptions, there must be sets of events
  $\gamma_i,\ldots,\gamma_{j-1}$ such that
\begin{eqnarray*}
    (t_{i,i+1},\gamma_i) \in \tau &\cdots &(t_{j-1,j},\gamma_{j-1}) \in \tau  \\
 val(\rho_{i,i+1})\comp_{\gamma_i} f(d_i)
    &\quad \cdots \quad&
    val(\rho_{j-1,j})\comp_{\gamma_{j-1}} f(d_{j-1})\ .
\end{eqnarray*}
Let $g$ be a function, which to every $k \in \Nat$ assigns the
maximal, with respect to $\comp$, value among $f(k),f(k+1),\ldots$ We
claim that not only $g$ is well defined, but it is also a $\bar T$
function. Indeed, when $\bar T$ is $B$ then there are finitely many
values of $f$, so a maximal one exists, and $g$ has also finitely many
values. If, on the other hand, $\bar T$ is $S$, then $\comp$ is $\ge$.
In this case, $g(k)$ is the least---with respect to the standard
ordering $\ge$ on natural numbers---number among $f(k),f(k+1),\ldots$.
This number is well defined, furthermore, $g$ is an $S$-function since
$f$ is an $S$-function.

Since $ f(d_k) \comp g(d_i)$ holds for any $k \ge i$, we also have 
\begin{eqnarray*}
 val(\rho_{i,i+1})\comp_{\gamma_i} g(d_i)
    &\quad \cdots \quad&
    val(\rho_{j-1,j})\comp_{\gamma_j} g(d_i)\ .
\end{eqnarray*}
  By assumption on stability of~$\tau$, there is an $S$-function $h$,
  such that for all~$i<j$, there is
  \begin{align*}
     (t_{i,j}, \gamma_{i,j}) \in \tau \qquad \mbox{such that} \qquad
     val(\rho_{i,j})\comp_{\gamma_{i,j}} h(g((d_i)) \ .
  \end{align*}

  Using the Ramsey theorem in a standard way, we can assume without
  loss of generality that all the $t_{i,j}$ are equal to the same
  type~$t$, and all $\gamma_{i,j}$ are equal to the same $\gamma$. If
  $t(\alpha)=\{\cnr\}$ holds for some counter $\alpha$, then this
  counter is reset only finitely often, so the run $\rho$ is rejecting
  and we are done. Otherwise, $t(\alpha)=\{\lc,\mc,\rc\}$ holds for
  all counters $\alpha$.
  
  For the rest of the proof, we only consider the case~$T=S$, with $B$
  being treated in a similar way.  The function $g$, by its
  definition, assumes some value $M$ for all but finitely many
  arguments.  Since~$\tau$ is rejecting there exists a
  counter~$\alpha\in\Gamma$ such that either~$(\alpha,\mc)\in\gamma$
  or both $(\alpha,\lc)$ and $(\alpha,\rc)$ belong to $\gamma$.  In
  the first case, an infinite number of times there are at most~$h(M)$
  increments of~$\alpha$ between two consecutive resets of~$\alpha$.
  In the second case, the same happens, but this time with at
  most~$2h(M)$ increments. In both cases the run is not accepting.
\end{proof}

We are now ready to establish the main lemma of this section.
\begin{lem}\label{lemma:ramsey-description}
  Let~$\Aa$ be an $\omega T$-automaton. There exist
  \ttc{regular languages
  $L_1,\ldots,L_n$ and stable
  descriptions $\tau_1,\ldots,\tau_n$}
  such that for every $\omega$-word $w$ the following \ttc{items} are equivalent:
\begin{itemize}
\item $\Aa$ rejects~$w$, and;
\item There is some $i=1,\ldots,n$ and a cut $D$ such that the
      $D$-prefix of $w$ belongs to $L_i$, and $\tau_i$ weakly
      describes~$w|_D$.
\end{itemize}
\end{lem}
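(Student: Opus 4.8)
The plan is to assemble Lemmas~\ref{lemma:strong-description}, \ref{lemma:stability}, \ref{lemma:strong-rejection} and~\ref{lemma:rejection}; this lemma is essentially a repackaging of them. Since $Q$ and $\Gamma$ are finite there are only finitely many types, hence only finitely many descriptions, and in particular finitely many \emph{stable} ones: fix an enumeration $\sigma_1,\ldots,\sigma_n$ of them. For $v\in\Sigma^*$ write $P_v\subseteq Q$ for the set of states reachable from $q_I$ by a run of $\Aa$ over $v$; by the subset construction each set $\set{v\in\Sigma^* : P_v = P}$ is a regular language, so for each stable description $\sigma$ the language
\[
  L_\sigma \;=\; \set{v\in\Sigma^* : \text{every state in } P_v\,\sigma \text{ is a rejecting loop in } \sigma}
\]
is regular as well (it is a finite union of such sets). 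I claim that the pairs $(L_{\sigma_1},\sigma_1),\ldots,(L_{\sigma_n},\sigma_n)$ witness the lemma.

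I first note that every strong description is in particular a weak one. Suppose $\tau$ strongly describes $w|_D$ via a separator $(f,g)$, with $D=\set{d_1<d_2<\cdots}$. For each $i$, the first clause of the definition of strong description, applied with $x=d_i$ and $y=d_{i+1}$, shows that every run over the word $w[d_i,\ldots,d_{i+1}-1]$ is consistent with $\tau$ under $f(d_i)$; since these words are exactly the words of $w|_D$ and $i\mapsto f(d_i)$ is again a $\bar T$-function (a subsequence of a bounded sequence is bounded, and a subsequence of a strongly unbounded one is strongly unbounded), $\tau$ weakly describes $w|_D$.

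Now for the equivalence. If $\Aa$ rejects $w$, then Lemma~\ref{lemma:strong-description} gives a description $\tau$ and a cut $D$ with $\tau$ strongly describing $w|_D$; by Lemma~\ref{lemma:stability}, $\tau$ is stable, say $\tau=\sigma_j$, and by the observation above $\tau$ also weakly describes $w|_D$. Writing $v_0$ for the $D$-prefix of $w$, the set of states reachable after reading $v_0$ is $P_{v_0}$, and Lemma~\ref{lemma:strong-rejection} says that every state in $P_{v_0}\tau$ is a rejecting loop, that is $v_0\in L_{\sigma_j}$; so $i=j$ and the cut $D$ satisfy the second condition. Conversely, suppose the $D$-prefix $v_0$ of $w$ lies in some $L_i=L_{\sigma_i}$ and $\sigma_i$ weakly describes $w|_D$. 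By definition of $L_{\sigma_i}$, every state in $P_{v_0}\sigma_i$ is a rejecting loop, and $P_{v_0}$ is exactly the set of states reachable after the $D$-prefix of $w$; since $\sigma_i$ is stable, Lemma~\ref{lemma:rejection} gives that $w$ is rejected. The real content is entirely in the four cited lemmas; the only points requiring a little care are the strong-to-weak passage above and the bookkeeping that links the regular prefix language $L_\sigma$ to the reachable-state set appearing in Lemmas~\ref{lemma:strong-rejection} and~\ref{lemma:rejection}, so I do not expect a genuine obstacle.
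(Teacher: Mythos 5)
Your proof is correct and follows essentially the same route as the paper's: both index the pairs by (reachable-state set, stable description) with the regular prefix language obtained from the subset construction, derive the forward direction from Lemmas~\ref{lemma:strong-description}, \ref{lemma:stability} and~\ref{lemma:strong-rejection}, and the backward direction from Lemma~\ref{lemma:rejection}. Your explicit verification that a strong description is also a weak one (via the subsequence $i\mapsto f(d_i)$ of the $\bar T$-function) is a detail the paper only asserts in passing, and it checks out.
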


\begin{proof}
  We need to construct a finite set of pairs~$(L_i,\tau_i)$.  Each
  such pair~$(L_P,\tau)$ comes from a set of states~$P$ and a
  description~$\tau$ that is stable and rejects all loops in~$P\tau$.
  The language~$L_P$ is the set of finite words~$v$ that give exactly
  states $P$ \hhl{(as far as reachability from the initial state is
    concerned)} after being read by the automaton.

  The bottom-up implication is a direct application of
  Lemma~\ref{lemma:rejection}, and therefore only the top down
  implication remains.  Let~$w$ be an $\omega$-word rejected by $\Aa$. By
  Lemma~\ref{lemma:strong-description}, there exists a cut~$D$ and a
  strong---and therefore also weak---description~$\tau$ of~$w|_D$.
  Let $P$ be the set of states reached\ttc{ }after reading the $D$-prefix
  of $w$. Clearly the $D$-prefix of $w$ belongs to $L_P$. By
  Lemma~\ref{lemma:stability}, the description~$\tau$ is stable, and
  hence Lemma~\ref{lemma:strong-rejection} can be applied to show that
  all loops in $P\tau$ are rejecting.
\end{proof}

The first part of Proposition~\ref{prop:main-compl} follows, since
weak descriptions are captured by specifications thanks to
Lemma~\ref{lemma:descriptions-simulated-by-specifications}. Finally,
we need to show that our construction is effective:
\begin{lem}
It is decidable if a description is stable.
\end{lem}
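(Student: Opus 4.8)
The plan is to reduce the stability of a description $\tau$ to an emptiness question for an effectively constructible $\omega BS$-regular language, and then invoke Fact~\ref{fact:emptiness-decidable}. This is not circular: the argument uses only the definitions, the automaton/expression equivalence (Theorem~\ref{theorem:equivalence}), the closure properties collected in Fact~\ref{fact:basic-BS-closure} and Corollary~\ref{corollary:intersection}, and decidability of emptiness — nothing from the complementation theorem itself.

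\emph{First I would reformulate stability as a boundedness statement.} For a finite run $\rho$ whose type occurs as a first component of some pair in $\tau$, let $\ell_\tau(\rho)$ be its optimal consistency level with respect to $\tau$: for $T=S$ this is the least $N$ with $\rho$ consistent with $\tau$ under $N$, and for $T=B$ it is the largest such $N$ (possibly $\infty$, when the witnessing $\gamma$ is empty); $\ell_\tau(\rho)$ is undefined when no pair of $\tau$ matches the type of $\rho$. Since consistency under $N$ is monotone in $N$ — upward for $T=S$, downward for $T=B$ — unwinding Definition~\ref{df:stability} yields a clean dichotomy. For $T=S$: $\tau$ is not stable iff there is one bound $N$ and an infinite family of factored runs $\rho^{(j)}=\rho^{(j)}_1\cdots\rho^{(j)}_{k_j}$ all of whose pieces are consistent with $\tau$ under $N$, while $\ell_\tau(\rho^{(j)})\to\infty$. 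For $T=B$: $\tau$ is not stable iff there is such a family in which the pieces of $\rho^{(j)}$ are consistent with $\tau$ under $N_j$ with $N_j\to\infty$, while the levels $\ell_\tau(\rho^{(j)})$ stay bounded (including the case that some $\rho^{(j)}$ is never consistent with $\tau$). Using closure of $BS$-regular languages under subsequences (Fact~\ref{fact:basic-BS-closure}) together with a Ramsey argument in the style of Section~\ref{subsection:ramsey}, I would normalise the family so that all $\rho^{(j)}$ share one type and every relevant quantity that ``tends to infinity'' does so strongly.

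\emph{Then I would encode such a family as a single $\omega$-word together with a matching $\omega BS$-automaton.} Work over the alphabet $\Delta\cup\set{\#,\square}$, where $\Delta$ is the transition set of $\Aa$: the blocks between consecutive $\#$'s are the runs $\rho^{(j)}$, and inside a block the $\square$'s mark the piece boundaries. That an $\omega$-word of this shape encodes a genuine family of factored runs of $\Aa$ is a B\"uchi condition on transitions and separators. Each quantity $\val(\cdot,\alpha,c)$ of a block or of a piece is realised by a dedicated counter of the $\omega BS$-automaton that increments exactly on increments of $\alpha$, with resets placed to implement the meaning of $c$: no reset for $\cnr$; reset only up to the first reset of $\alpha$ for $\lc$; reset after a nondeterministically guessed last reset of $\alpha$ for $\rc$; one reset per reset of $\alpha$ for $\mc$, the $\min$/$\max$ reading being built into the counter semantics. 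The assertion ``every piece is consistent with $\tau$ under a common bound'' ($T=S$), respectively ``under bounds tending to infinity'' ($T=B$), is a conjunction over the finitely many pairs $(t,\gamma)\in\tau$ and a disjunction over the events of the chosen $\gamma$, hence enforced with $B$-counters, respectively $S$-counters, reset at every $\square$; the degradation of $\ell_\tau(\rho^{(j)})$ is, symmetrically, a finite Boolean combination of value constraints, enforced with $S$-counters, respectively $B$-counters, reset at every $\#$. Conjunctions become products (Corollary~\ref{corollary:intersection}), disjunctions and the choices of type and of witnessing events become nondeterminism, and the result is an effectively constructible $\omega BS$-automaton; by Theorem~\ref{theorem:equivalence} it defines an $\omega BS$-regular language $\mathcal L_\tau$ that is nonempty exactly when $\tau$ is not stable. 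Emptiness of $\mathcal L_\tau$ is decidable by Fact~\ref{fact:emptiness-decidable}, which proves the lemma.

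\emph{The main obstacle} is the second step: translating the two nested extremal operators hidden in $\ell_\tau$ — the universal one over all $(t,\gamma)\in\tau$ and the existential one over the events of a single $\gamma$ — into the primitives of the model while staying inside the $\omega BS$-regular class. A ``$\max\comp N$'' must be rendered as a product of single-event constraints, a ``$\min\comp N$'' as a nondeterministic choice of the witnessing event, and one must verify that the extra alternation in the $T=B$ case (``there is a $C$ such that for every $N$\ldots'') is faithfully captured by combining bounded $B$-counters for $\ell_\tau\comp C$ with strongly unbounded $S$-counters for the pieces. The subsequence normalisation of the first step is exactly what makes ``unbounded'' replaceable by ``strongly unbounded'', so that plain $S$-counters suffice and no auxiliary device is needed.
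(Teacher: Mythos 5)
Your proposal is correct in its overall architecture, but it takes a genuinely different route from the paper's. The paper disposes of this lemma in one line: stability is expressed as the truth of a sentence of monadic second-order logic over $(\Nat,\le)$, hence decidable by B\"uchi's theorem. That buys extreme brevity and independence from the automaton machinery of Section~\ref{section:automata}, but it leaves to the reader the non-trivial task of seeing how the quantified thresholds and the counting conditions $\val(\rho,\alpha,c)\comp N$ --- which plain MSO over $(\Nat,\le)$ cannot render by naive cardinality comparison --- are actually encoded. Your reduction to emptiness of an effectively constructed $\omega BS$-regular language sidesteps that issue entirely: the $B$- and $S$-counters are tailor-made for exactly the asymptotic counting that stability is about, and the ingredients you invoke (Theorem~\ref{theorem:equivalence}, Corollary~\ref{corollary:intersection}, Fact~\ref{fact:emptiness-decidable}) are indeed available without circularity; the price is the Ramsey normalisation and the careful counter placement. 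Two details deserve attention. First, your claim that for the event $\mc$ the ``$\min$/$\max$ reading is built into the counter semantics'' is only half true: a counter reset at every reset of $\alpha$ constrains \emph{all} gaps (hence the max for a $B$-counter, the min for an $S$-counter), whereas the piece-consistency checks need ``some gap is small'' (for $T=S$) and ``some gap is large'' (for $T=B$); these mismatched directions require the same nondeterministic selection of a witnessing gap that you already use for the witnessing event, so this is repairable but should be said explicitly. Second, your negation of stability for $T=B$ silently reads the ``$T$-function $h$'' of Definition~\ref{df:stability} as a strongly unbounded function; this is the reading forced by how stability is used in Lemma~\ref{lemma:rejection} and established in Lemma~\ref{lemma:stability}, so it is the right one, but you are correcting the letter of the definition and should say so.
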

\begin{proof}
  Stability can be verified by a formula of monadic second-order logic
  over $(\Nat, \le)$.
\end{proof}


\subsection{Verifying single events}
\label{subsection:verifying_single_constraints}

We now begin the part of the complementation proof  where we show
that specifications can be recognized by automata.  Our goal is as
follows: given a specification~$\tau$, we \ttc{want to construct} a hierarchical
sequence~$\bar T$-automaton that accepts the word sequences that are
consistent with~$\tau$. Recall that a specification is a positive
\hhl{boolean combination} of two types of atomic conditions.  In this
section we concentrate solely on atomic specifications \oldhl{where the
  boolean combination  consists of only one atomic condition} of the form:
\begin{quote}
  (2) The run satisfies~$\val(\rho,\alpha,c) \comp K$.
\end{quote}

\newcommand{\states}{\ensuremath{M}}

\oldhl{In particular,} only one event $(\alpha,c)$
is involved in the specification. Furthermore, we also assume that the
counter~$\alpha$ is the lowest-ranking counter~$1$.  However, the
result is \hhl{ stated so that it can then be generalized
to any specification.}  

\medskip

\subsubsection{Preliminaries and definitions}

In this section, we define transition graphs---which are used to
represent possible runs of an automaton---and then we present a
decomposition result for transition graphs, which follows from a
result of Simon on factorization forests~\cite{simon}.

\subsubsection*{Transition graph}

For the rest of  Section~\ref{subsection:verifying_single_constraints}, we fix a finite set of states
$\states$. This set~$\states$ is possibly different from the set of
states of the automaton we are complementing. The reason is that in
Section~\ref{subsection:verifying-weak-description}, we will increase
the state space of the complemented automaton inside \hhl{an  induction.}

The first concept we need is an explicit representation of the
configuration graph of an automaton reading a word, called here an
$\states$-transition graph.
Fix a finite set~$\lab$ of \intro{transition labels}, and a finite set
of \emph{states} $\states$.
An
\emph{\states-transition graph~$G$ (labeled by~$\lab$) of length $k
  \in \Nat$} is a directed edge
\hhl{labeled} graph, where the nodes---called \emph{configurations} of the
graph---are pairs $M \times  \set{0,\ldots,k}$
and the edge labels\ttc{ are of the form~$((q,i),l,(r,i+1))$
for~$q,r$ in~$M$, $0\leq i<k$ and~$l\in\lab$.
The vertexes of the graph are called \emph{configurations},
their first component is called the \emph{state}, while
the second is called the \emph{position}.}
The \oldhl{edges of the graph} are called the
\intro{transitions}.  We define the \intro{concatenation} of
$\states$-transition graphs in a natural way.
\oldhl{A \emph{partial run} in a transition graph is just a path in the
  graph.} 
A \intro{run} in a transition graph is   \oldhl{a path in the graph that
  begins in a configuration at the first position $0$ and ends in a
  configuration at the last position.}  The \intro{label} of a run
is the  sequence \oldhl{of labels on edges of the  path.}

A transition graph of length~$k$
can also be seen as a word of length~$k$ over the alphabet
\begin{equation*}
  \mathcal{P}(\states\times\lab\times\states)\ .
\end{equation*}
In this case, the concatenation of transition graphs coincides with
the standard concatenation of words.
When speaking of regular sets of transition graphs, we refer to this
representation.

Given a \hhl{$\omega T$-automaton~$\Aa$} over the alphabet~$\lab$ of states~$Q$, the
\intro{product of an $M$-transition graph~$G$ with~$\Aa$}---noted
$G\times \Aa$---is the $(M\times Q)$-transition graph\oldhl{ which has an
  $l$-labeled edge from $((p,q),i)$ to $((p',q'),i+1)$ whenever $G$
  has an $l$-labeled edge from $(p,i)$ to $(p',i+1)$ and~$(q,l,q')$} is a
transition of the automaton~$\Aa$.  Furthermore, in the product graph
only those configurations are kept that can be reached via a run that
begins in a  \oldhl{ configuration where the second component of the
  state is the} initial state of
$\Aa$. 

\subsubsection*{Factorization forest theorem of Simon, and decomposed transition graphs}

From now, we will only consider transition graphs where the first
component of the labeling \ttc{ranges over the}
actions of a hierarchical automaton over counters, i.e.,
the label alphabet is of the form $\actions \times \lab$, with
\begin{align*}
	\actions\ &=\set{\e,I_1,R_1,I_2,\dots,R_n} \ .
\end{align*}
The type of a run is defined as in
the previous section, i.e., a type gives the source and target states,
as well as a mapping from the set of counters
to~$\set{\emptyset,\set{\cnr},\set{\lc,\rc},\set{\lc,\mc,\rc}}$.
Given an $\states$-transition graph, its \intro{type} is the element \hhl{of}
\begin{align*}
S\ &=\mathcal{P}(M\times\set{\emptyset,\set{\cnr},\set{\lc,\rc},\set{\lc,\mc,\rc}}^\Gamma\times M)\,,
\end{align*}
which contains \hhl{those types~$t$ such that there is a run over~$G$ of type~$t$}.  This set
$S$ can be seen as a semigroup, \ttc{when equipped with the product defined by}:
\begin{eqnarray*}
  s_1 \cdot s_2 = \set{ t_1 \cdot t_2~:~t_1 \in s_1,
      t_2 \in s_2}\ .
\end{eqnarray*}
In the above, the concatenation of two types $t_1 \cdot t_2$ is
defined in the natural way: the source state of $t_2$ must agree with
the target state of $t_1$, and the counter operations are
concatenated. \hhl{(An example of how counter operations are
  concatenated is: 
  \begin{eqnarray*}
    \set{\lc , \rc}\cdot \set{\lc, \rc} =\set{\lc,\mc,\rc}\ .
  \end{eqnarray*}}
In particular, the mapping \oldhl{that} assigns the type  to a counter
transition graph is a semigroup morphism, with \hhl{transition}
graphs interpreted as words.
Two~$\states$-transition graphs~$G,H$ are \oldhl{called} \intro{equivalent} if
they have \hhl{the} same type.
This equivalence is clearly a congruence of finite index with respect
to concatenation.
A transition graph~$G$ is \intro{idempotent} if the concatenation~$G G$
is equivalent to~$G$.

We now define a complexity measure on graphs, which we call their
\emph{Simon} level.  A transition graph~$G$ has {Simon level~$0$} if
it is of length~$1$.  A transition graph~$G$ has {Simon  level \hhl{at
    most}~$k+1$}
if it can be decomposed as a concatenation
\begin{equation*}
  G=H G_1\cdots G_n H'\ ,
\end{equation*}
where all the transition graphs~$H,G_1,\dots,G_n,H'$ have Simon level
at most~$k$ and~$G_1,\dots,G_n$ are equivalent and idempotent.
The following theorem, which in its \hhl{original statement} concerns
semigroups, is presented here in a form adapted to our
context.
\begin{thm}[Simon~\cite{simon}, \ttc{and}~\cite{colcombetFCT07}
  \ttc{for the bound $|S|$}]\label{theorem:simon}
  Given a finite set of states~$\states$, the Simon level of
  $\states$-transition graphs is bounded by $|S|$.
\end{thm}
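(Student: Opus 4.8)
The plan is to reduce the statement to the abstract factorization forest theorem for finite semigroups, using the fact that the map sending an $\states$-transition graph to its type is a morphism onto a finite semigroup $S$. First I would recall the abstract statement: for every morphism $\phi\colon A^+ \to S$ onto a finite semigroup, every word $w \in A^+$ admits a factorization forest of height at most $3|S|$ (Simon's original bound; the improved linear bound $|S|$, or more precisely a bound close to it, is the content of the cited companion paper~\cite{colcombetFCT07}), where a factorization forest is a tree whose leaves are the letters of $w$ read left to right, whose internal nodes are labelled by the corresponding factor of $w$, and where every internal node is either binary or else has all its children mapped by $\phi$ to the same idempotent element of $S$. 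Here the alphabet $A$ is $\mathcal{P}(\states\times\lab\times\states)$ — the letter representation of transition graphs of length $1$ — and $\phi$ is the type morphism, with image the subsemigroup $S$ described in the excerpt.

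Next I would translate the height of a factorization forest into the Simon level as defined in the paper. A leaf corresponds to a length-$1$ transition graph, which has Simon level $0$ — this matches the base case. For an internal node: if it is binary with children of Simon level at most $k$, then the node is a concatenation of two graphs of level at most $k$, which is subsumed by the definition with $n=0$ (taking $H$ and $H'$ to be the two children and no idempotent blocks in between), so it has level at most $k+1$. If the node has many children all of the same idempotent type, then it is a concatenation $G_1 \cdots G_n$ of equivalent idempotent graphs each of level at most $k$ — again exactly the shape in the definition (with $H = H' = \varepsilon$, or absorbing the first/last child into $H,H'$), so level at most $k+1$. Hence a forest of height $d$ yields Simon level at most $d$, and Simon's bound gives the claimed bound $|S|$ (citing~\cite{colcombetFCT07} for this sharper constant rather than Simon's original $3|S|$ or $9|S|$). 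One small point to check is the degenerate case where the original transition graph has length $0$ or the word is empty; but transition graphs here have length $k \in \Nat$ and the relevant objects in the application always have positive length, so this does not arise, or can be handled trivially.

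The main obstacle is not conceptual but bookkeeping: one must verify that the paper's recursive definition of Simon level genuinely coincides with (an upper bound on) forest height, paying attention to two subtleties — (a) the definition allows the extra ``border'' graphs $H, H'$ around the idempotent block, which is harmless since they only ever help, and (b) the definition's idempotent case requires $G_1,\dots,G_n$ to be both equivalent \emph{and} idempotent, which is exactly what a Simon node guarantees since the common value $\phi(G_i)$ is idempotent and equal for all $i$. Once this correspondence is pinned down, the theorem is immediate from the cited external results; no genuinely new argument is needed here, which is why the statement is attributed to Simon and to~\cite{colcombetFCT07}.
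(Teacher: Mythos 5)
The paper gives no proof of this theorem at all: it is imported wholesale from the literature, attributed to Simon for the statement and to the cited factorization-forests paper for the bound $|S|$. Your reduction --- viewing the type map as a morphism from transition-graphs-as-words onto the finite semigroup $S$, invoking the abstract factorization forest theorem, and checking that forest height bounds the paper's recursively defined Simon level (binary nodes via $n=0$, idempotent nodes matching the ``equivalent and idempotent'' clause) --- is precisely the intended justification and is correct.
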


The Simon level is defined in terms of a nested decomposition of the
graph into factors.  We will sometimes need to refer explicitly to
such decompositions; for this we will use symbols~$(,),|$ and write
$G$ as~$(H|G_1|\dots|G_n|H')$, and so on recursively for the graphs
$H,G_1,\ldots,G_n,H'$.  Theorem~\ref{theorem:simon} shows that each
graph admits a decomposition where the nesting of parentheses in this
notation is bounded by $|S|$.
We refer to the transition graphs written in this format as
\intro{decomposed transition graphs}.  It is not difficult to see that
the set of decomposed~$\states$-transition graphs is a regular
language: a finite automaton can
check that the symbols~$(,),|$ indeed describe a Simon decomposition
\oldhl{(thanks to Simon's theorem, the automaton does not need to count
  too many nested parentheses).}


A \intro{hint} over a decomposed graph~$G$ is a subset of the
positions labeled~$|$ in the decomposition of~$G$.  We call those
positions \intro{hinted positions}. (Note that a hint is relative not
just to a transition graph $G$, but also to some decomposition of this
transition graph.) For~$K \in \Nat$, a hint is said
to be~$\geq K$ if at the same nesting level of the decomposition,
every two distinct hinted positions are separated by at least~$K$
non-hinted symbols~$|$.  Similarly, a hint is~$\leq K$ if sequences
of consecutive non-hinted positions at a given Simon level
have length at most~$K-1$.

Let $G$ be a transition graph \hhl{of length $k$ (in the word representation)} (with labels $\actions \times \lab$),
along with a decomposition and a hint $h$. Let
\begin{eqnarray*}
  G_1 \cdots G_k \in   (\mathcal{P}(\states\times \actions \times \lab \times\states))^*
\end{eqnarray*}
be the interpretation of this graph as word.  In the \hhl{proofs below,} it will be
convenient to decorate the graph---by expanding the transition
labels---so that the label of each run contains information about the
decomposition and the hint. The decorated graph is denoted by $(G,h)$
(we do not include a name for the decomposition in this notation,
since we assume that the \hhl{hint $h$  also contains the information
  on the Simon 
decomposition}).  The graph $(G,h)$ has the same configurations, states
and transitions as $G$; only the labeling of the transitions changes.
Instead of having a label in $\actions \times \lab$, as in the graph
$G$, a transition in the graph $(G,h)$ has a label in
\begin{align*}
\lab_S\ &=\actions\times \lab \times \{\bot,1,\dots,|S|\}\times \{0,1\}\,.
\end{align*}
The first two coordinates are inherited from the graph $G$. The other
coordinates are explained below. \hhl{To understand the encoding, we
  need two properties. First, in the decomposition, there {is exactly}
  one symbol~$|$ between any two successive letters $G_i,G_{i+1}$ of
  the $\states$-transition graph seen as a word. Second, the
  decomposition is entirely described by the nesting depths of those
  symbols \hhl{with respect to the }parentheses. Recall also that
  these nesting depths are bounded by $|S|$.}  For \oldhl{a} transition
in the graph $G_i$, the coordinate $\set{\bot, 1,\ldots,|S|}$ stores
the nesting depth \hhl{(with respect to the parentheses)} of the
symbol $|$ preceding the transition graph $G_i$; \hhl{by the first
  property mentioned above,} the undefined value $\bot$ is used only
for the first transition in transition graph. Finally, the coordinate
$\set{0,1}$ says if the symbol $|$ is included in the hint.

\subsubsection*{Runs over decomposed graphs}
As remarked above, the transition graph $(G,h)$ only changes the
labels of transitions in the graph $G$. Therefore with each run $\rho$
in $G$ we can associate the unique corresponding run \hhl{over} $(G,h)$,
which we denote by $(\rho,h)$. By abuse of notation, we will sometimes
write $(\rho,h) \in L$, where $L \subseteq (\lab_S)^*$. The intended meaning is that the labeling of the run
$(\rho,h)$ belongs to the language $L$.

\newcommand{\eqr}[1]	{\equiv_{#1}}

Recall that we are only going to be verifying properties for the
counter $\alpha=1$ in this section.
We consider two runs over the same transition graph to be~$\eqr
\rc$-equivalent, \hhl{if they  agree before
   the last~$1$-reset (i.e.~use the same transitions on all positions
   up to and including the last $1$-reset).  In the notation, the index shows where the runs can be different. Similarly, two runs are
$\eqr \lc$-equivalent if they agree    after
  the first $1$-reset.}
Two runs are $\eqr\mc$-equivalent if they agree before the first
$1$-reset, after the last $1$-reset, and over all $1$-resets (but do
not necessarily agree between two successive $1$-resets).
Finally, two runs are~$\eqr \cnr$-equivalent if both increment
counter~$1$ but do not reset it.
The fundamental property \oldhl{of $\eqr c$-equivalence, for $c \in
\set{\rc, \lc, \mc, \cnr}$},
is that two
$\eqr c$-equivalent runs are indistinguishable
in terms of resets and increments
of counters greater or equal to~$2$,
or in terms of events~$(1,c')$
with~$c'\neq c$. This means that as long
we are working inside a $\eqr c$-equivalence
class, the values $\val(\rho,\alpha,c')$
for $(\alpha,c')\neq(1,c)$ are \oldhl{constant}.
(This \oldhl{also} is the reason why we  only work with
counter~$1$ in a hierarchical automaton.)

The key lemmas in this section are Lemmas~\ref{lemma:hinting-S}
and~\ref{lemma:hinting-B}.  These talk about complementing $S$-automata
and $B$-automata respectively. They both follow the same structure,
which can be uniformly expressed in the following lemma, using the
$\comp$ order (which is~$\le$ for complementing $S$-automata and~$\ge$
for complementing $B$-automata):

\newcommand{\commonlemma}{ Let~$c$ be one of~$\lc,\mc,\rc$ or~$\cnr$.
  There are two strongly unbounded functions~$f$ and~$g$, and a
  regular language~$L_c \subseteq (\lab_S)^*$ such that for every
  natural number~$K$,
  every run~$\rho$ over every~$\states$-transition graph labeled by~$\lab$
  whose type contains the event~$(1,c)$ satisfies:}

\begin{lem}\label{lemma:hinting-generic}
  \commonlemma
\begin{itemize}
\item (correctness) if a hint~$h$ in the graph is~$\comp K$ and every
      run~$\pi \eqr c \rho$ satisfies~$(\pi,h)\in L_c$,
      then~$\val(\rho,1,c)\comp f(K)$, and;
\item (completeness) if a hint~$h$ in the graph is~$\acomp g(K)$ and
      $\val(\rho,1,c)\comp K$ then~$(\rho,h)$ belongs to~$L_c$.
\end{itemize}
\end{lem}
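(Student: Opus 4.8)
The plan is to prove Lemma~\ref{lemma:hinting-generic} by induction on the Simon level of the $\states$-transition graph, which is legitimate since Theorem~\ref{theorem:simon} bounds this level by $|S|$. The language $L_c$ is built as a finite union, over levels $0,\dots,|S|$, of auxiliary languages constructed at each level; it is regular because a finite automaton reading the decoration $\lab_S$ can recover the Simon decomposition (the nesting depths of the bars are bounded by $|S|$) and the hint. The base case, Simon level $0$, is a length-$1$ graph, where any run contains at most one increment, so the statement holds with $f(K)=g(K)=1$ and $L_c$ the full language.

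For the inductive step, fix a decomposed graph $G=(H\mid G_1\mid\dots\mid G_n\mid H')$ with $G_1,\dots,G_n$ equivalent, idempotent and of strictly smaller Simon level, and a run $\rho=\rho_H\rho_1\cdots\rho_n\rho_{H'}$ over $G$ whose type contains $(1,c)$. The top-level hint marks some of the bars between consecutive $G_i$'s, splitting $1,\dots,n$ into maximal hint-free blocks $G_p\cdots G_q$. I would define $L_c$ at level $k{+}1$ recursively to require: (a) the restriction of the decorated run to each of $H$, $G_1,\dots,G_n$, $H'$ lies in the level-$k$ language $L_c$; and (b) a \emph{local} condition stating that within each maximal hint-free block $G_p\cdots G_q$ the run contributes at most a fixed constant number of increments to counter~$1$ \emph{for the event $c$}, the constant depending only on the semigroup~$S$, not on $q-p$.

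The core is a rerouting lemma: since $G_1,\dots,G_n$ are equivalent and idempotent, any run over a factor $G_p\cdots G_q$ can be replaced by an $\eqr c$-equivalent run in which the counter-$1$ increments relevant to $c$ are concentrated inside a single block $G_r$ (or else their total already exceeds the constant of~(b)). One uses idempotency to cut and reglue subruns while keeping the type fixed and, crucially, keeping every reset of counter~$1$ and every operation outside the event $c$ in place --- exactly what $\eqr c$-equivalence permits. The case $c=\mc$ is the delicate one, since the relevant span lies between two consecutive resets and the whole reset pattern must be preserved; $c=\lc,\rc$ use the prefix/suffix up to the first/last reset, and $c=\cnr$ has no resets. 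Granting this, correctness follows: if every $\pi\eqr c\rho$ satisfies $(\pi,h)\in L_c$ and the hint is $\comp K$, then each hint-free block has at most $K-1$ top-level sub-blocks (and bounded nesting below), so summing the per-block contributions and applying the induction hypothesis inside each $G_i$ yields $\val(\rho,1,c)\comp f(K)$, where the level-$(k{+}1)$ function $f$ is the level-$k$ one multiplied by a fixed polynomial in $K$; iterating $|S|$ times keeps $f$ polynomial in $K$, hence strongly unbounded. For completeness, if the hint is $\acomp g(K)$ and $\val(\rho,1,c)\comp K$, then the few increments of $\rho$ cannot overflow the constant of~(b) once each hint-free block is long enough, which is precisely what $\acomp g(K)$ guarantees for a suitable $g$ (again polynomial of degree $|S|$ in $K$, assembled along the induction), so $(\rho,h)\in L_c$.

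The main obstacle is the rerouting lemma and its bookkeeping: pinning down how idempotency lets one redistribute the increments of a single event while respecting $\eqr c$-equivalence, and choosing the per-level constants together with $f$ and $g$ so that the two directions match at the thresholds $\comp K$ and $\acomp g(K)$. A secondary point is checking that condition~(b), read off the decoration $\lab_S$ recording Simon depths and hinted bars, is recognized by a finite automaton --- this is where the uniform bound $|S|$ of Theorem~\ref{theorem:simon} is used both to keep $L_c$ regular and to make the induction terminate.
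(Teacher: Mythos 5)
Your overall frame --- induction on the Simon level, regularity of $L_c$ via the bounded nesting depth of the decomposition, and a rerouting argument based on idempotency of the $G_i$ --- matches the paper. But the language you define does not work, and the gap is in condition~(b). You ask that within each maximal hint-free block the run perform at most a \emph{fixed constant} number of increments of counter~$1$, the constant depending only on the semigroup and not on $K$. This breaks completeness: in the case $T=S$ the hypothesis is only $\val(\rho,1,c)\leq K$, and such a run may concentrate all $K$ of its increments inside a single hint-free block; since $L_c$ must be one regular language that works for \emph{every} $K$, any hard-wired threshold is exceeded once $K$ is large, and making the blocks longer (hint $\acomp g(K)$) does not help because you must accept the given run $\rho$ itself, not a rerouted variant. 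The paper's construction (Lemma~\ref{lemma:hinting-S}, clause 2(b)) replaces counting by the purely qualitative condition ``between any two hinted positions, at least one intermediate factor performs no increment of counter~$1$''; the quantitative content is carried entirely by the hint, and in the correctness direction the inc-segment/pigeonhole argument bounds the number of incrementing factors by $2K|\states|$ --- a bound that \emph{does} depend on $K$, not only on the semigroup --- yielding $f(K)=(2K|\states|+2)f'(K)$. Relatedly, your clause~(a) recurses with the same event $c$ on every factor, whereas the $\mc$-value of a gap straddling several factors decomposes as the $\rc$-value of the first factor, the $\cnr$-values of the middle ones and the $\lc$-value of the last, so the recursion must mix the four events.

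The second gap is that you treat the two orders uniformly. The paper proves the lemma twice (Lemmas~\ref{lemma:hinting-S} and~\ref{lemma:hinting-B}) because the two statements are not mirror images: for $T=S$ one takes $g=\mathrm{id}$ and $f$ a polynomial whose degree is the Simon level, while for $T=B$ one takes $f=\mathrm{id}$ and $g$ roughly an iterated square root; moreover the clauses defining ``captured'' differ structurally (for $S$ a conjunction over the factors of the gap, for $B$ a disjunction, and the $B$-analogue of your~(b) is ``all factors between two hinted positions increment'', which is again not a bounded count). Your condition ``at most a constant number of increments'' is only meaningful for the $S$ direction, so the $B$ half of the lemma is not addressed at all. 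Your rerouting lemma is the right idea in spirit --- it corresponds to the paper's local and global transformations together with clause 2(a), which deliberately places the ``suboptimal'' (reroutable) runs inside $L_c$ so that the correctness hypothesis, being quantified over the whole $\eqr c$-class, can be applied to the rerouted run and the resulting bound transferred back to $\rho$ --- but as stated it is being asked to support a language definition that cannot satisfy both clauses of the lemma simultaneously.
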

We would like to underline here that the regular language is a
regular language of finite words \ttc{in the usual sense}, i.e., no
counters are involved.  

The above lemma says that the language~$L_c$ ``approximates'' the runs~$\rho$
satisfying $\val(\rho,1,c)\comp K$. This ``approximation'' is
\hhl{given} by the two functions~$f$ and~$g$.  There is however a
dissymmetry in this \oldhl{statement.  The} completeness clause states that if a
run \oldhl{has a bad value on event $(1,c)$, i.e.~it satisfies~$\val(\rho,1,c)\comp K$,} then it is detected by~$L_c$ for
all sufficiently good hints. \oldhl{On the other hand, in the} correctness clause, all
$\eqr c$-equivalent runs must be detected by~$L_c$ in order to \oldhl{say
  that the value of $\rho$ is bad on event $(1,c)$}.  The reason for
\oldhl{the weaker correctness clause is that we will not be able to check
  the value on event $(1,c)$ for every run; we will  only do it} for runs of a special simplified form. And the
simplification process happens to transform each run in a~$\eqr
c$-equivalent one.

The proof of this lemma differs significantly depending on~$T=S$
or~$T=B$.
\oldhl{The} two cases correspond to
Lemmas~\ref{lemma:hinting-S} and~\ref{lemma:hinting-B} which are
instantiations of Lemma~\ref{lemma:hinting-generic}.

\medskip

\subsubsection{Case of complementing an~$\omega S$-automaton}
We consider first the case of complementing an~$S$-automaton.
Therefore, the order~$\comp$ is~$\leq$. To  \hhl{aid reading, below we
  restate  Lemma~\ref{lemma:hinting-generic} for the case when $T=S$.}

\begin{lem}\label{lemma:hinting-S}
\commonlemma
\begin{itemize}
\item (correctness) if a hint~$h$ in the graph  is~$\leq K$ and every run~$\pi \eqr c
  \rho$ satisfies~$(\pi,h)\in L_c$, then~$\val(\rho,1,c)\leq
  f(K)$, and;
\item (completeness) if a hint~$h$ in the graph is~$\geq g(K)$ and
  $\val(\rho,1,c)\leq K$ then~$(\rho,h)$ belongs to~$L_c$.
\end{itemize}
\end{lem}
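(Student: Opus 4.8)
The plan is to construct $L_c$ by a recursion that follows the nested parenthesis structure of decomposed $\states$-transition graphs, using Simon's theorem (Theorem~\ref{theorem:simon}) to keep the recursion depth bounded by $|S|$. The functions $f$ and $g$ will depend only on $|S|$ --- roughly $g$ slightly super-linear and $f$ of the form $K\mapsto K^{|S|}$, both strongly unbounded, as required --- and $L_c$ will be an ordinary regular language of finite words over $\lab_S$ involving no counters at all: a finite automaton reading the decorated run $(\rho,h)$ can track the current Simon level (there are at most $|S|$ of them, so this is finite state), whether it has just crossed a hinted $|$ at the relevant level, and the counter-$1$ actions $I_1,R_1$ that it reads. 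First I would fix, for each $c\in\set{\cnr,\lc,\mc,\rc}$, the \emph{relevant segment} of a run on which $\val(\rho,1,c)$ counts increments: the whole run for $\cnr$; the prefix up to the first $1$-reset for $\lc$; the suffix after the last $1$-reset for $\rc$; and each gap between two consecutive $1$-resets for $\mc$, where $\val$ is the minimum of the per-gap counts. On each relevant segment counter $1$ is never reset, and by the remark preceding the statement the quantities $\val(\rho,\alpha,c')$ with $(\alpha,c')\neq(1,c)$ are constant throughout a $\eqr c$-equivalence class, so it is legitimate to argue ``inside'' such a class.

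Then, working on a decomposed graph written $(H\mid G_1\mid\cdots\mid G_n\mid H')$ with $G_1,\dots,G_n$ idempotent and equivalent, I would split a run $\rho$ over it as $\rho_H\rho_1\cdots\rho_n\rho_{H'}$ and define $L_c$ recursively: the segments over $H$ and $H'$ must lie in the lower-level $L_c$ (restricted to the part of their type where $(1,c)$ still makes sense), and among $G_1,\dots,G_n$ the placement of counter-$1$ increments in the relevant segment is tied to which of the separating $\mid$'s are hinted, with each incrementing $G_i$ again required to lie in the lower-level $L_c$. The point where idempotency enters is that, given an arbitrary run over $G_1\cdots G_n$ that does not reset counter $1$, the equality of types coming from idempotency lets one build a $\eqr c$-equivalent run over $G_1\cdots G_n$ whose counter-$1$ increments are concentrated near the beginning; this is exactly why a purely finite-state condition on $(\rho,h)$ can capture the (a priori unbounded) quantity $\val(\rho,1,c)$, and why the correctness clause must quantify over all $\pi\eqr c\rho$ rather than just over $\rho$. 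The case $c=\mc$ is then reduced to the analysis of the other three cases applied separately to each inter-reset gap, the $\comp$-minimum being taken over gaps; this is the fiddliest part of the bookkeeping since both ends of every gap must be controlled.

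The completeness direction is the easy one: if $\val(\rho,1,c)\le K$ and the hint $h$ is $\ge g(K)$, then every maximal non-hinted block at every Simon level spans at least $g(K)$ equivalent idempotent factors while the relevant segment of $\rho$ performs at most $K$ increments in total, so for $g$ chosen to grow fast enough the increments of $\rho$ automatically fall into the pattern recognised by $L_c$, whatever the particular sparse hint is, whence $(\rho,h)\in L_c$. I expect the genuine obstacle to be the correctness direction: from the hypothesis that $(\pi,h)\in L_c$ for \emph{every} $\pi\eqr c\rho$ and that the hint is $\le K$, one must derive $\val(\rho,1,c)\le f(K)$. The hint being $\le K$ bounds by $K-1$ the number of non-hinted factors between consecutive checkpoints at each level, and iterating this over the $\le|S|$ levels of the Simon decomposition bounds the relevant segment by roughly $K^{|S|}$ letters --- but turning this structural bound into a bound on the increments of $\rho$ \emph{itself} (not of some more convenient representative) is precisely where the universally quantified $\eqr c$-hypothesis and the idempotent ``unpumping'' argument of the previous paragraph are needed: if $\val(\rho,1,c)$ exceeded $f(K)$, one would produce a witnessing run $\pi\eqr c\rho$ placing an increment in a position forbidden by $L_c$, contradicting the hypothesis. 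Folding $\cnr,\lc,\rc,\mc$ into this single uniform argument, and keeping $f$ and $g$ independent of $K$ and of the graph, is the main technical effort.
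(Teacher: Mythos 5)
Your plan follows essentially the same route as the paper's proof: $L_c$ is built by induction on the Simon level (bounded via Theorem~\ref{theorem:simon}), it is a plain regular language over $\lab_S$, completeness is the easy direction with $g$ essentially the identity, and correctness hinges on exploiting idempotency to replace $\rho$ by a normalized $\eqr c$-equivalent representative --- which is exactly why the correctness clause must quantify over all $\pi\eqr c\rho$. The one point where your sketch drifts from what actually works is the normalization itself: the paper does not ``concentrate the increments near the beginning'' (the fixed boundary states of the subruns make such a rearrangement impossible in general), but instead \emph{maximizes} increments, filling every gap whose boundary states admit an incrementing loop in the idempotent $G_1$; in the resulting run the incrementing factors form at most $|\states|$ contiguous segments, each spanning at most $2K$ factors by the hint bound, which yields the recursion $f(K)=(2K|\states|+2)f'(K)$ and hence the polynomial bound of degree the Simon level that you anticipated.
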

Slightly ahead of time, we remark that~$g$ will be the identity
function, while~$f$ will be a polynomial, whose degree is the maximal
Simon level of~$\states$-transition graphs, taken from
Theorem~\ref{theorem:simon}.

Before proving the lemma, we would like to give some intuition
  about the language~$L_c$. The idea is that we want to capture the
  runs which do few increments on counter~$1$.  However, this ``few''
  cannot be encoded in the state space of the automaton, since it can
  be arbitrarily large. That is why we use the hint. One can think of
  the hint as a clock: if the hint is~$\le K$, then the clock \hhl{ticks}
  quickly, and if the hint is~$\ge K$, then the clock \hhl{ticks}
  slowly. The language $L_c$ looks at a run and compares it to the
  clock.  For the sake of the explanation, let us consider a piece of
  a run without resets of counter~$1$: we want to estimate if a lot of
  increments of counter~$1$ are done (at least $K$) or not (at most
  $f(K)$) by comparing it to a suitable clock (in the first case a
  slow clock, in the second case a quick one).  The first case is when
  the counter is incremented in every position between two \hhl{ticks} of
  the clock; then the value of the counter concerned is considered
  `big', since at least $K$ increments are done if the \hhl{ticks}
  are~$\ge K$.  \hhl{For the second case,  when there are few
    increments,  we have a  more  involved
  argument that  uses the idempotents from the Simon decomposition. }

\begin{proof}
  The language~$L_c$ is defined by induction on the Simon level of
  the transition graph. We construct a language~$L^k_c$ which has
  the stated property for all~$\states$-transition graphs of Simon
  level at most~$k$. Since there is a bound on the Simon level,
  the result follows.

  For~$k=0$, the construction is straightforward, since the transition
  graphs are of length~$1$ and there is a finite number of possible
  runs to be considered.

  We now show how to define the language \oldhl{$L_c^{k+1}$} for runs in transition
  graphs of Simon level~$k+1$, based on the languages for runs in
  transition graphs of Simon level up to~$k$.

  Consider a decomposed~$\states$-transition graph
  \begin{equation*}
    G=(H|G_1|\dots|G_n|H')\ .
  \end{equation*}
  of Simon level~$k+1$. Let~$h$ be a hint for this decomposition,
  which we decompose into sub-hints~$(h_0| \cdots |h_{n+1})$ for the
  graphs~$H,G_1,\dots,G_n,H'$. In the proof, instead of writing
  $(\rho,h) \in L^{k+1}_c$, we write that~$\rho$ is
  \emph{$c$-captured}.  Let~$\rho$ be a run over~$G$.  The run~$\rho$
  can also be decomposed into
  subruns~$(\rho_0|\rho_1|\dots|\rho_{n+1})$.  Each of these runs
  $\rho_0,\ldots,\rho_{n+1}$ is \hhl{over} a transition graph whose Simon
  level is at most~$k$. By abuse of notation, we will talk about a
  subrun~$\rho_i$ being~$c$-captured, the intended meaning being that
  $(\rho_i,h_i)$ belongs to the appropriate language~$L^{k'}_c$,
  with~$k'\leq k$ being the Simon level of~$G_i$ (or $H$ if~$i=0$,
  or $H'$ if~$i=n+1$).

\oldhl{We now proceed to define the language $L_c^{k+1}$. In other words, we
 need to say when $\rho$ is $c$-captured.}
  We only do the case of~$c=\mc$, which is the most complex situation.
  The idea is that a run is~$\mc$-captured if there are two
  consecutive resets between which the run does few increments.  We
  define~$\rho$ to be $\mc$-captured if \oldhl{either:}
\begin{enumerate}
\item some~$\rho_i$ is~$\mc$-captured, or;
\item for some~$i < j$, the subrun~$\rho_i$ is~$\rc$-captured, the run
  $\rho_{j}$ is~$\lc$-captured, each
  of~$\rho_{i+1}, \dots,\rho_{j-1}$
  is~$\cnr$-captured, and one of the following holds:
	\begin{enumerate}
  \item there are~$m \le m'$ in~$\set{i,\ldots,j}$
          such that: (i) one of~$\rho_m,\rho_{m+1},\dots,\rho_{m'}$
          does not increment counter~$1$; and (ii) the source state
          of~$\rho_m$ and the target state of $\rho_{m'}$ are the same
          state $q$, and~$G_1$ admits a run from $q$ to $q$ that
          increments counter~$1$, or;
    \item between any two hinted (by hints on level~$k+1$) positions
          in~$\set{i,\ldots,j}$, at least one of the runs~$\rho_m$
          does not increment counter~$1$.
	\end{enumerate}
\end{enumerate}
It is clear that this definition corresponds to a regular property
\oldhl{$L_{\mc}^{k+1}$ of the sequence of labels in a hinted
runs (once the hints are provided, the statement above is first-order
definable).}

We try to give an intuitive description of the above conditions.  The
general idea is that a run gets~$\mc$-captured if it does few
increments, at least relatively to the size of the hint.  The first
reason why a run may do few increments between some two resets, is
that it does it inside one of the component transition graphs of
smaller Simon level. This is captured by condition 1.  The second
condition is more complicated. The idea behind 2(a) is that the run
is---in a certain sense---suboptimal and can be converted into a~$\eqr
\mc$-equivalent one that does ``more'' increments. Then the more optimal
run can be shown---using conditions 1 and 2(b)---to do few increments,
which implies \ttc{that} the original suboptimal run also did few increments.

We now proceed to show that the properties defined above satisfy the
completeness and correctness conditions in the statement of the lemma.

\medskip
\noindent{\it Completeness. }
For this, we set~$g(K)=K$.  Let~$\rho$ be a run such
that
\begin{equation*}
  \val(\rho,1,\mc)\leq K
\end{equation*}
and let~$h$ be hint over~$G$ that is~$ \geq K$. We have to show that
$\rho$ is~$\mc$-captured.

Consider a minimal subrun~$\rho_i\dots\rho_j$ that resets
counter~$1$ twice and satisfies
\begin{equation*}
  \val(\rho_i\dots\rho_j,1,\mc)\leq K \ .
\end{equation*}
In particular, the counter~$1$ is reset in the subruns~$\rho_i$ and
$\rho_{j}$.  If~$i=j$, this means that $\val(\rho_i,1,\mc)\leq K$, and hence
$\rho_i$ is~$\mc$-captured on a level below~$k+1$ by induction
hypothesis. We conclude with item~$1$.

Otherwise, we have
\begin{align*}
&\val(\rho_i,1,\rc)\leq K\,, \\
\val(\rho_{i+1},1,\cnr)&\leq K \quad \cdots \quad \val(\rho_{j-1},1,\cnr)\leq K\,,\\
\text{and}\quad
&\val(\rho_{j},1,\lc)\leq K \ .
\end{align*}

By induction hypothesis, we obtain the header part of item~$2$.  Since
the run~$\rho_i\dots\rho_j$ contains less than~$K$ increments
of~$1$, no more than~$K$ runs among~$\rho_i,\dots,\rho_j$ can
increment counter~$1$.  Since the hint~$h$ is $ \geq K$, we get
item (b).

\medskip
\noindent{\it Correctness. }
Let~$f'$ be the strongly unbounded function obtained by the induction
hypothesis for Simon level~$k$.  We set
\begin{equation*}
  f(K)=(2K|\states|+2)f'(K)
\end{equation*}
Assume now that the hint~$h$ is~$\le K$ and take a run~$\rho$ such
that every run~$\pi\eqr \mc \rho$ is~$\mc$-captured.  We need to show
that
\begin{equation}\label{eq:bounded-inequality}
  \val(\rho,1,\mc)\leq f(K) \ .
\end{equation}
As before, we decompose the run~$\rho$ into
$(\rho_0|\dots|\rho_{n+1})$.

We are going to first transform~$\rho$ into a new~$\eqr \mc$-equivalent
run~$\pi$ which, intuitively, is more likely to have many increments
on counter~$1$.  We will then show that the new run~$\pi$ satisfies
inequality (\ref{eq:bounded-inequality}); moreover we will show that
this inequality can then be transferred back to~$\rho$.

We begin by describing the transformation of~$\rho$ into~$\pi$. This
transformation\ttc{ }is decomposed into two stages.

In the first stage, which is called the \emph{local transformation},
we replace the subruns $\rho_0,\ldots,\rho_{n+1}$ with new equivalent ones of
the same type. Each such replacement step works as follows.  We take
some~$c=\lc,\mc,\rc,\cnr$ and~$i=0,\ldots,n+1$. If there is a subrun
$\pi_i \eqr c \rho_i$ that is not~$c$-captured, then we replace
$\rho_i$ with~$\pi_i$. (We want the local transformation to keep the
$\eqr \mc$-equivalence class, so we do not modify subruns before the
first or after the last reset of~$1$ in~$\rho$.) The local
transformation consists of applying the replacement steps as long as
possible. This process terminates, since the replacement steps for
different~$c$'s work on different parts of the subrun and each step
decreases the number of captured subruns.

In the second stage, which is called the \emph{global transformation},
we add increments on counter~$1$ to some subruns.  The idea is that at
the end of the global transformation, a run does not satisfy condition
2(a). Just as the local transformation, it consists of applying a
replacement step as long as possible. The replacement step works as
follows. We try to find a subrun~$\rho_m\dots\rho_{m'}$ as in
item~2(a).  By assumption 2(a) and since all the graphs~$G_i$'s are
equivalent to~$G_1$, we can find new subruns~$\pi_m,\dots,\pi_{m'}$
in~$G_m,\dots,G_{m'}$ respectively, which increment counter~$1$
without resetting it (that is, of type~$\cnr$) and go from~$q$ to~$q$.
We use these runs instead of $\rho_m \dots \rho_{m'}$.  The iteration
of this replacement step terminates, since each time we add new
subruns with increments.

Neither the local nor the global transformation change the~$\eqr
\mc$-equivalence class of the run.

Let~$\pi$ be a run obtained from~$\rho$ by applying first the local
and then the global transformation. (Since the global transformation
is nondeterministic, there may be more than one such run.) This run
cannot satisfy 2(a), since the global transformation could still be
applied.

Since~$\pi$ is~$\eqr \mc$-equivalent to~$\rho$, it must be
$\mc$-captured by assumption on~$\rho$.  There are two possible
reasons: either because of 1, or because of 2(b). We will now do a
case analysis on the reasons why this happens. In each case we will
conclude that the original run~$\rho$
satisfies~(\ref{eq:bounded-inequality}).  As before, we decompose
$\pi$ into~$(\pi_0| \cdots | \pi_{n+1})$.

Assume now item~$1$ holds for~$\pi$, i.e., some subrun~$\pi_i$ is
$\mc$-captured for some~$i$. We also know that any run~$\pi'_i \eqr
\mc \pi_i$ would also be~$\mc$-captured, since otherwise~$\pi_i$ would
be replaced in the local transformation process (the global
transformation does not touch subruns with resets on counter~$1$). In
particular, the induction hypothesis gives
\begin{equation*}
 \val(\pi_i,1,\mc) \leq  K \ .
\end{equation*}
Moreover, the runs~$\rho_i$ and~$\pi_i$ agree on the part between the
first and last reset of counter~$1$. This is because neither of the
transformation processes touched this part. Indeed, the first local
transformation process never modified~$\rho_i$ for~$c=\mc$ (since
otherwise it would cease being captured), while the global process
only modifies subruns without resets of counter~$1$.  This gives the
desired~(\ref{eq:bounded-inequality}), since
\begin{eqnarray*}
  \val(\rho,1,\mc)\leq \val(\rho_i,1,\mc) =   \val(\pi_i,1,\mc) \leq
  f'(K) \le f(K) \ .
\end{eqnarray*}

Otherwise,~$\pi$ satisfies item 2(b). Let us fix~$i,j$ as in 2(b).
Let~$I \subseteq \set{i+1,\ldots,j-1}$ be the set of those indexes~$l$
where~$\pi_l$ does at least one increment on counter~$1$. A maximal
contiguous \oldhl{(i.e.~containing consecutive numbers)} subset of~$I$ is called an \intro{inc-segment}.  According
to case 2(b), an inc-segment cannot contain two distinct hinted
positions, its size is therefore at most twice the maximal width~$K$
of~$h$.  Assume now that there are more than~$|\states|$ inc-segments.
Then two inc-segments contain runs with the same source state, say
state~$q$, at respective positions~$l$ and~$l'$ with~$l<l'$.  This
means that there is a run that goes from~$q$ to~$q$ in some graph
\ttc{$G_{l+1}\cdots G_{l'}$} and does at least one increment but no resets on
counter~$1$.  Using idempotency and the equivalence of all the graphs
$G_1,\ldots,G_n$, this implies that the graph~$G_{l'}$ admits such a
run.  Since~$\rho_{l'}$ contains no increments of~1 by definition of
an inc-segment, this means that 2(a) holds; a contradiction.
Consequently there are at most~$|\states|$ inc-segments,  each of size
at most $2K$.  It follows
that at most~$2K|\states|$ subruns among~$\pi_i,\dots,\pi_j$
\oldhl{ contain} an increment of the counter~$1$.

Let us come back to the original run~$\rho$.
As in the case of item 1, we use the induction hypothesis to show that
\begin{align*}
  \val(\rho_i,1,\rc) &= \val(\pi_i,1,\rc)\leq  f'(K)\,, \\
  \text{and}
\quad\val(\rho_j,1,\lc) &= \val(\pi_j,1,\lc)\leq  f'(K)\,.
\end{align*}
Since the global transformation only adds increments on counter~$1$,
we use the remarks from the previous paragraph to conclude that there
are at most~$2K|\states|$ subruns among $\rho_{i+1},\dots,\rho_{j-1}$
that increment counter~$1$.  Furthermore, all runs~$\eqr
\cnr$-equivalent to one of the $\rho_{i+1},\cdots,\rho_{j-1}$ are
$\cnr$-captured, since otherwise the local transformation would be
applied, and then one of the runs~$\pi_{i+1},\cdots,\pi_{j-1}$ would
not be captured.  Therefore, we can use the induction hypothesis to
show that
\begin{equation*}
  \val(\rho_{i+1},1,\cnr),\cdots,  \val(\rho_{j-1},1,\cnr) \leq f'(K)\ .
\end{equation*}
All this together witnesses the expected
\begin{equation*}
  \val(\rho,1,\mc)\leq  (2K|\states|+2)f'(K)\ .
  \tag*{\qEd}
\end{equation*}
\def\popQED{}
\end{proof}

\medskip

\subsubsection{Case of complementing an~$\omega B$-automaton}
As when complementing an $\omega S$-automaton, we restate the lemma
with $\comp$ expanded to its definition, which is $\geq$ in this case.
\begin{lem}\label{lemma:hinting-B}
\commonlemma
\begin{itemize}
\item (correctness) if a hint~$h$ is~$\geq K$ in the graph and every
      run~$\pi \eqr c \rho$ satisfies~$(\pi,h)\in L_c$,
      then~$\val(\rho,1,c)\geq f(K)$, and;
\item (completeness) if a hint~$h$ in the graph  is~$\leq g(K)$ and
 $\val(\rho,1,c)\geq K$ then~$(\rho,h)$ belongs to~$L_c$.
\end{itemize}
\end{lem}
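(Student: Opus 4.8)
The plan is to prove Lemma~\ref{lemma:hinting-B} by the same structural induction on the Simon level of the $\states$-transition graph that underlies the proof of Lemma~\ref{lemma:hinting-S}, dualizing throughout. Here the order $\comp$ is $\geq$, so the regular language $L_c$ is meant to approximate the runs that do \emph{many} increments on the event $(1,c)$ rather than few, and the roles of the ``fast'' clock (hint $\leq g(K)$) and the ``slow'' clock (hint $\geq K$) are interchanged with respect to the $S$-case. Concretely, I would build, by induction on $k$, a language $L_c^k$ correct for all graphs of Simon level at most $k$; since Theorem~\ref{theorem:simon} bounds the Simon level by $|S|$, taking the union of the $L_c^k$ finishes the argument. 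The base case $k=0$ (length-$1$ graphs) is settled by inspecting the finitely many runs and choosing $f_0,g_0$ and the conventions for small $K$ so that both clauses hold; as in the $S$-case only $c=\cnr$ carries content there, the other events forcing trivial values.

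For the inductive step, fix a decomposed graph $G=(H|G_1|\cdots|G_n|H')$ of Simon level $k+1$, a hint $h=(h_0|\cdots|h_{n+1})$, and the induced decomposition $\rho=(\rho_0|\cdots|\rho_{n+1})$ of a run. I would define ``$\rho$ is $c$-captured at level $k+1$'' by mirroring the $S$-case definition with the inequalities reversed. The delicate case is again $c=\mc$: $\rho$ is $\mc$-captured if either (1)~some $\rho_i$ is $\mc$-captured at a lower level, or (2)~there is a window $i<j$ with $\rho_i$ being $\rc$-captured, $\rho_j$ being $\lc$-captured, the intermediate subruns appropriately $\cnr$-captured, and a condition expressing that, \emph{relative to the hint}, the incrementing subruns in the window are abundant --- the dual of the $S$-case condition 2(b), which asked them to be sparse --- possibly after a normalization step that deletes removable increments on counter~$1$ using idempotency of $G_1$ (the dual of the $S$-case ``global transformation'', which \emph{added} increments). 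The purpose of the normalization is to drive a run toward a $\eqr\mc$-equivalent one that performs as \emph{few} increments as possible while remaining $\mc$-captured.

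With this definition, completeness proceeds as in the $S$-case: given a fast clock and a run $\rho$ with $\val(\rho,1,\mc)\geq K$, pick a minimal window realizing $\geq K$ increments between two consecutive resets, apply the induction hypothesis (completeness) to its pieces, and check that the abundance condition holds thanks to the fast clock; hence $\rho$ is $\mc$-captured. For correctness, given a slow clock and a run $\rho$ all of whose $\eqr\mc$-variants are captured, I would first rewrite $\rho$ --- by a local transformation (replacing each subrun by a $\eqr c$-variant with fewer increments whenever one exists) followed by the global normalization --- into a $\eqr\mc$-equivalent, increment-minimal run $\pi$; $\pi$ is still captured, and a case analysis on the reason ((1) versus the refined (2)) yields, via the induction hypothesis (correctness), a lower bound $\val(\pi,1,\mc)\geq f(K)$ which transfers to $\rho$ since $\val(\pi,1,\mc)\leq\val(\rho,1,\mc)$. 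The function $f$ at level $k+1$ is obtained from the level-$k$ one by an adjustment symmetric to the $S$-case recurrence $f(K)=(2K|\states|+2)f'(K)$; it stays strongly unbounded, and $g$ remains essentially the identity (up to a harmless shift for small $K$).

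The main obstacle will be pinning down the abundance condition in clause (2) for $c=\mc$ together with the normalization step so that the two directions meet: it must be weak enough that ``$\geq K$ increments plus a fast clock'' forces it, yet strong enough that an increment-minimal captured run forced by a slow clock genuinely performs $\geq f(K)$ increments between two consecutive resets. As in the $S$-case, the quantitative heart is the interaction between the clock width and the pigeonhole bound of $|\states|$ on the number of inc-segments (using idempotency and the equivalence of $G_1,\dots,G_n$); and ensuring that the local and global transformations stay inside the $\eqr\mc$-class while being exhaustive is the bookkeeping most likely to require care.
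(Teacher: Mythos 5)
Your overall architecture is the right one and matches the paper's: induction on the Simon level, a dualized notion of being $c$-captured, a local transformation followed by a global normalization that stays inside the $\eqr\mc$-class (here deleting increments via idempotency rather than adding them), and the pigeonhole bound of $|\states|$ on the number of inc-segments. However, there is a concrete quantitative error that would make your completeness direction fail: you have the roles of $f$ and $g$ backwards. In the $B$-case it is $f$ that is the identity and $g$ that degrades --- roughly a $k$-fold iteration of the square root; the paper takes $g(K)=\min\bigl(g'(\sqrt K),\,(\sqrt K-2)/(|\states|+1)\bigr)$. The reason $g$ cannot be ``essentially the identity'' is visible already at the first inductive step: a run with $\val(\rho,1,\mc)\geq K$ either concentrates $\geq\sqrt K$ increments in a single subrun $\rho_i$ (detecting this by the induction hypothesis requires the sub-hint to be $\leq g'(\sqrt K)$, not merely $\leq K$), or spreads single increments over $\geq\sqrt K$ subruns, in which case after the pigeonhole on inc-segments one segment has size only about $(\sqrt K-2)/(|\states|+1)$, and the hint must be at least that fine for the segment to contain two hinted positions and trigger the abundance clause. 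With a hint that is only $\leq K$, neither subcase goes through. (Conversely $f=\mathrm{id}$ suffices for correctness: clause 2(b) with a hint $\geq K$ directly yields $m'-m\geq K$ incrementing subruns, and clause 1 transfers the lower bound from a single piece to the whole.)

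A second, smaller issue is that your clause (2) keeps the conjunctive header of the $S$-case ($\rho_i$ $\rc$-captured \emph{and} the middles $\cnr$-captured \emph{and} $\rho_j$ $\lc$-captured). In the $B$-case the capture conditions must be disjunctive: since being captured now means doing \emph{many} increments, a single piece doing many increments on its relevant event already forces $\val(\rho,1,\mc)$ to be large, and requiring all pieces to be captured would make completeness fail for a run whose $K$ increments sit entirely in, say, the middle subruns while $\rho_i$ does almost nothing after its last reset. The paper accordingly splits the definition into ``some $\rho_i$ is $\mc$-captured, or some window between two resets has one of its pieces $\rc$-, $\cnr$-, or $\lc$-captured'' and, separately, a window clause whose only precondition is that counter $1$ is reset at both ends, carrying the 2(a)/2(b) alternatives. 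Both points are repairable within your plan, but as written the induction does not close.
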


We remark here that $f$ will be the identity function, while $g$ will
be more or less a $k$-fold iteration of the square root, with $k$ the
maximal Simon level of~$\states$-transition graphs, taken from
Theorem~\ref{theorem:simon}.

\begin{proof}
  The structure of the proof follows the one in
  Lemma~\ref{lemma:hinting-S}.  As in that lemma, the language~$L_c$
  is defined via an induction on the Simon level of the transition
  graph.  We also only treat the case of~$c=\mc$. The other cases can
  be treated with the same technique.

  Consider a decomposed transition graph
  \begin{equation*}
    G=(H|G_1|G_2|\dots|G_n|H')
  \end{equation*}
  of Simon level~$k+1$, along with a corresponding hint~$h$,
  decomposed as~$(h_0| \dots |h_{n+1})$. Let~$\rho$ be a run over~$G$, decomposed
  as~$(\rho_0|\dots|\rho_{n+1})$. As in
  Lemma~\ref{lemma:hinting-S}, instead of saying that~$(\rho,h)$
  belongs to the language~$L^{k+1}_c$, we say that~$\rho$ is
  $c$-captured; likewise for the runs~$\rho_0,\ldots,\rho_{n+1}$.

We define~$\rho$ to be $\mc$-captured  if either:
\begin{enumerate}
\item \oldhl{some}~$\rho_i$ is~$\mc$-captured or there is a subrun of
  the form~$\rho_i\dots\rho_j$ such that counter~$1$ is reset in
  both~$\rho_i$ and~$\rho_{j}$ but not
  in~$\rho_{i+1},\dots,\rho_{j-1}$, and either:~$\rho_i$ is
 $\rc$-captured, one of~$\rho_{i+1}, \dots, \rho_{j-1}$ is
 $\cnr$-captured, or~$\rho_j$ is~$\lc$-captured;
\item there is a subrun of the form~$\rho_i\dots\rho_j$ such that
  counter~$1$ is reset in both~$\rho_i$ and~$\rho_{j}$ but not
  in~$\rho_{i+1},\dots,\rho_{j-1}$, and either
	\begin{enumerate}
	\item there are~$m \le m'$ in~$\set{i\ldots j-1}$ such that
          there is an increment of~$1$ in one
          of~$\rho_m,\dots,\rho_{m'}$, but~$G_1$ contains a path from
          the source state of $\rho_m$ to the target state
          of~$\rho_{m'}$ without any increment nor reset, or;
      \item there are two hinted positions~$m<m'$ in~$\set{i+1 \dots
          j-1}$ such that all the subruns~$\rho_{m},\dots,\rho_{m'}$
        increment counter~$1$.
	\end{enumerate}
\end{enumerate}
As in the case of $T=S$, this definition corresponds to a regular
property \oldhl{$L_{\mc}^{k+1}$ of hinted}  runs.

The intuition is that a \oldhl{$\mc$-captured} run does a lot of increments on counter~$1$,
at least relative to the size of the hint. The first reason for doing
a lot of increments\oldhl{ is} that there are a lot of increments in a subrun
inside one of the component transition graphs, which corresponds to
item 1.  The clause 2(b) is also self-explanatory: the number of
increments is at least as big as the size of the hint.  The first
clause 2(a) is more involved. It says that the run is suboptimal in a
sense, i.e., it can be converted into one that does fewer increments.
We will then show---using 1 and 2(b)---that the run with fewer
increments also does a lot of increments.

We now proceed to show that the above definition satisfies the
completeness and correctness conditions from the statement of the
lemma.

\medskip
\noindent{\it Completeness. }
Let~$g'$ be the strictly increasing function obtained by induction
hypothesis for Simon level~$k$.  We set~$g(K)$ to be
\begin{equation*}
  \min\left(g'(\sqrt
K),  \frac{\sqrt{K}-2}{|\states|+1}\right)\ .
\end{equation*}
Assume now that the hint~$h$ is~$\leq g(K)$ and that the run~$\rho$
satisfies~$\val(\rho,1,\mc)\geq K$.  We want to show that~$\rho$ is
$\mc$-captured.

Let~$\rho_i\dots\rho_j$ be a minimal part of~$\rho$ for which
\begin{equation*}
  \val(\rho_i\dots\rho_j,1,\mc)\geq K \ .
\end{equation*}
The general idea is very simple. There are two possible cases: either
one of the subruns~$\rho_i,\ldots,\rho_j$ does more than~$\sqrt K$
increments on counter~$1$, or there are at least~$\sqrt K$ subruns
among~$\rho_i,\dots,\rho_j$ that increment counter~$1$. In either case
the run~$\rho$ will be~$\mc$-captured.

In the first case, we use the induction hypothesis and
\begin{equation*}
  g'(\sqrt K)\geq g(K)
\end{equation*}
to obtain item 1 in the definition of being~$\mc$-captured.

The more difficult case is the second one.  We will study the subruns
$\rho_{i+1},\dots,\rho_{j-1}$ \oldhl{that} do not reset counter~$1$.  As in
the previous lemma, we consider the set~$I \subseteq
\set{i+1,\ldots,j-1}$ of those indexes~$l$ where~$\rho_l$ does at
least one increment on counter~$1$.  By our assumption,~$I$ contains
at least~$\sqrt K - 2$ indexes (we may have lost~$2$ because of
$\rho_i$ and~$\rho_j$).

A maximal contiguous subset of~$I$ is called an \intro{inc-segment}.
\ttc{There are two possible cases. Either there are few (at most
  $|\states|$) inc-segments,
  in which case one of the}\ttc{ inc-segments must perform many increments
   and the run $\rho$ is $\mc$-captured by item
   2(b),}\oldhl{ or there are many
  inc-segments, in which case the run $\rho$ is $\mc$-captured by item
  2(a).  The details are spelled out below.}

 Consider first the case when there are at least
$|\states|+1$\ttc{ }inc-segments.  In this case, we can find two inc-segments that begin with \oldhl{indexes,
respectively,~$m' > m > 1$,} such that\oldhl{ the states~$q_{m-1}$ and $
q_{m'-1}$ are the same}. Since
inc-segments are maximal, \oldhl{the index}~$m-1$ does not belong to~$I$ and hence the
run~$\rho_{m-1}$ does not increment counter~$1$ and goes from \oldhl{state}
$q_{m-2}$ to \oldhl{state}~$q_{m-1}$. Since the \hhl{transition} graphs~$G_{m-1}$
and~$G_1$ are equivalent, there is such a run in~$G_1$, too.  Since
the run~$\rho_{m-1} \cdots \rho_{m'-1}$ goes from~$q_{m-2}$ to
$q_{m'-1}=q_{m-1}$, we obtain item 2(a) and thus~$\rho$ is
$\mc$-captured.

We are left with the case when there are at most~$|\states|+1$
inc-segments.  Recall that~$I$ contains at least~$\sqrt K -2$
elements. Since there are at most~$|\states| +1$ inc-segments, at
least one of the inc-segments must have size at least:
\begin{equation*}
  \frac{\sqrt{K}-2}{|\states|+1}\ .
\end{equation*}
But by our assumption on the hint~$h$, this inc-segment must contain
two hinted positions, and thus~$\rho$ is~$\mc$-captured by item 2(b).

\medskip
\noindent{\it Correctness. }
We set~$f(K)=K$. Assume that the hint~$h$ is~$\geq K$ and consider a
run~$\rho$ such that every run~$\pi \eqr \mc \rho$ is~$\mc$-captured.
We need to show that
\begin{equation}\label{eq:want-to-show-compl-S}
  \val(\rho,1,\mc)\geq f(K) = K \ .
\end{equation}

We proceed as in the previous lemma: we first transform the run~$\rho$
into an~$\eqr c$-equivalent run~$\pi$ which is more likely to have
fewer increments of counter~$1$. We then show that the run~$\pi$
satisfies property~(\ref{eq:want-to-show-compl-S}), which can then
also be transferred back to~$\rho$.

As in the previous lemma, there are two stages of the transformation:
a local one, and a global one.

The local transformation works just the same as the local
transformation in the previous lemma: if we can replace some subrun
$\rho_i$ by an equivalent one that is not captured, then we do so.

The global transformation makes sure that 2(a) is no longer satisfied.
It works as follows.  Assume that item 2(a) is satisfied, and let~$m$
and~$m'$ be defined accordingly.  Since~$G_1$ is idempotent and all
the~$G$'s are equivalent, the transition graph~$G_{m+1}\dots G_{m'}$
is equivalent to~$G_1$.  It follows that we can find a run with
\oldhl{neither increments nor resets} that can be plugged in place of
$\rho_{m+1}\dots\rho_{m'}$.  The new run obtained is~$\eqr
c$-equivalent to the original one and the value~$\val(\rho,1,\mc)$ is
diminished during this process.  We iterate this transformation until
no more such replacements can be applied (this obviously terminates in
fewer iterations than the number of increments of counter~$1$ in the
original run).

Consider now a run~$\pi$ obtained from~$\rho$ by first applying the
local and then the global transformation. Since~$\pi \eqr c \rho$
holds, our hypothesis says that~$\pi$ is~$\mc$-captured.  Since~$\pi$
does not satisfy item 2(a) by construction, it must satisfy either
item 1 or item 2(b). In case of item 1, we do the same reasoning as in
the previous lemma and use the induction hypothesis to conclude that
(\ref{eq:want-to-show-compl-S}) holds. The remaining case is item
2(b), when there are two distinct positions~$m<m'$ where each of the
runs~$\pi_m,\ldots,\pi_{m'}$ all increment counter~1. Since the global
transformation process only removed subruns that increment counter 1,
this means that all the runs~$\rho_m,\ldots,\rho_{m'}$ also increment
counter 1. But since the hint was~$\geq K$, we have~$m'-m\geq K$ and
we conclude with the
desired~(\ref{eq:want-to-show-compl-S}).\end{proof}

\subsection{Verifying a specification}
\label{subsection:verifying-weak-description}

In this section, we conclude the proof of the ``Moreover..''  part of
Proposition~\ref{prop:main-compl} (and therefore also the proof of
Theorem~\ref{theorem:complementation}). \oldhl{Recall that in the
  previous section, we did the proof only for atomic specifications,
  which talk about a single event. The purpose of this section is to
  generalize those results to all specifications, where positive
  boolean combinations of atomic specifications are involved.}

Given a specification~$\tau$, we want to construct a
sequence~$\bar T$-automaton~$\Aa_\tau$ that verifies if a sequence of
words~$v_1,v_2,\ldots$ satisfies:
\begin{quote}
  (*) For some~$\bar T$-function~$f$, in every word~$v_j$ every run
  satisfies~$\tau$ under~$f(j)$.
\end{quote}

We will actually prove the above result in  a slightly more general form, where
the specification $\tau$ can be a \emph{generalized specification}.
The generalization is twofold.

First, a generalized specification can describe runs in arbitrary
transition graphs, and not just those that describe runs of a counter
automaton. This is a generalization since transitions in a transition
graph contain not only the counter actions, but also some additional
labels (recall that labels in a transition graph \hhl{are} of the form
$\actions \times \lab$).

Second, in a generalized specification, atomic specifications of the
form ``the run has type $t$'' can be replaced by more powerful atomic
specifications of the form ``the run, when treated as a sequence of
labels in $\actions \times \lab$, belongs to a regular language $L
\subseteq (\actions \times \lab)^*$''.

This more general form will be convenient in the induction proof.

In our construction we will  speak quantitatively of
runs of counter automata.  Consider a sequence counter automaton (we
do not specify its acceptance condition) and a \intro{run}~$\rho$ of
this automaton over a finite word~$u$.  The only requirement on this
run is that it starts in an initial state, ends in a final state, and
is consistent with the transition function.  Given such a run~$\rho$
and a natural number~$K$, we say that it is \intro{$(\geq
  K)$-accepting} (\oldhl{respectively, }  \intro{$(\leq K)$-accepting}) if for each
counter, any two distinct resets of this counter are separated by at
least~$K$ increments of it (\oldhl{respectively, at most~$K$}).  The link with the acceptance
condition of~$B$ and~$S$-sequence automata is obvious: a run sequence
$\rho_1,\rho_2,\ldots$ is accepting for a~$B$-automaton if there
exists a natural number~$K$ such that every run~$\rho_i$ is~$(\leq
K)$-accepting. Similarly, this run sequence is accepting for an
$S$-automaton if there exists a strongly unbounded function~$f$ such
that every run~$\rho_i$ is~$(\geq f(i))$-accepting.

The heart of this section is the following lemma, which is established
by repeated use of Lemma~\ref{lemma:hinting-generic}.
\begin{lem}\label{lemma:complementation-gather}
Given a set of
states~$\states$ and a generalized specification~$\tau$ over
$\states$, there exist two strongly unbounded functions $f,g$, and a
counter automaton~$\Aa_\tau$ that reads finite~$\states$-transition
graphs, such that for any~$\states$-transition graph~$G$,
\begin{itemize}
\item (correctness) if there is a~$(\comp K)$-accepting run~$\rho$
  of~$\Aa_\tau$ over~$G$ then all runs \hhl{over}~$G$ satisfy~$\tau$ under
 $f(K)$, and;
\item (completeness) if all runs in~$G$ satisfy~$\tau$ under~$K$ then there is
  a~$(\comp g(K))$-accepting run~$\rho$ of~$\Aa_\tau$ over~$G$.
\end{itemize}
\end{lem}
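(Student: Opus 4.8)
The plan is to prove Lemma~\ref{lemma:complementation-gather} by structural induction on the generalized specification $\tau$, building the automaton $\Aa_\tau$ and the pair of functions $(f,g)$ along the syntax tree, and at the leaves invoking Lemma~\ref{lemma:hinting-generic} (in its two incarnations, Lemma~\ref{lemma:hinting-S} and Lemma~\ref{lemma:hinting-B}).

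\medskip\noindent\textbf{Base cases.}
First I would handle atomic specifications. For an atomic specification of the form ``the run, read as a label sequence in $\actions\times\lab$, belongs to a regular language $L$'': here $\Aa_\tau$ need not touch any counter at all; it simply runs the DFA for $L$ on the product transition graph and accepts iff \emph{every} run is in $L$, which is a regular (in fact co-nondeterministic but determinizable) condition on the $\states$-transition graph seen as a word. Since no counters are used, every run is vacuously $(\comp K)$-accepting for every $K$, and we may take $f=g=\mathrm{id}$. For an atomic specification of the form $\val(\rho,\alpha,c)\comp K$ with $\alpha=1$: this is exactly where Lemma~\ref{lemma:hinting-generic} is applied. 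We feed the hint mechanism into the counter of $\Aa_\tau$: the automaton guesses a hint $h$ (a subset of the $|$-positions of a Simon decomposition, which is a regular object by Theorem~\ref{theorem:simon} and the remarks after it), uses one of its own counters to measure the density of that hint (so that $(\comp K)$-acceptance of $\Aa_\tau$ corresponds to the hint being $\comp K$ or $\acomp g(K)$ as needed), and simultaneously checks $(\pi,h)\in L_c$ for \emph{all} runs $\pi\eqr c\rho$. The correctness/completeness clauses of Lemma~\ref{lemma:hinting-generic} then transfer directly, with $f,g$ the polynomial/iterated-square-root functions supplied there. For atomic specifications about a counter $\alpha\neq 1$: since the automaton is hierarchical, one relativizes by collapsing the lower counters and viewing the sub-transition-graphs between resets of $\alpha$ as single letters in an enlarged state space $\states'$; this is exactly the remark at the end of Section~\ref{subsection:verifying_single_constraints} that one ``may always work with counter~$1$'', and it explains why $\states$ in the lemma is allowed to differ from the automaton's own state set.

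\medskip\noindent\textbf{Inductive steps: $\wedge$ and $\vee$.}
For $\tau=\tau_1\wedge\tau_2$, take the product automaton $\Aa_{\tau_1}\times\Aa_{\tau_2}$, sharing counters in the hierarchical fashion (using Corollary~\ref{cor:counter-type-prefix} to make the counter types comparable first). A $(\comp K)$-accepting run of the product yields $(\comp K)$-accepting runs of both factors, hence by induction all runs satisfy $\tau_1$ under $f_1(K)$ and $\tau_2$ under $f_2(K)$, so all runs satisfy $\tau$ under $\max_\comp(f_1(K),f_2(K))$; conversely completeness gives a $(\comp g_1(K))$- and a $(\comp g_2(K))$-accepting run, and by running them on shared counters one gets a $(\comp \min_\comp(g_1(K),g_2(K)))$-accepting run of the product. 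For $\tau=\tau_1\vee\tau_2$, use the nondeterministic union: $\Aa_\tau$ guesses which disjunct to verify. Here there is one subtlety worth flagging: the choice of disjunct must be made \emph{uniformly along the word sequence}, since condition~(*) fixes a single function $f$ and quantifies ``for every $j$'' only afterwards --- but this is handled because the outer construction (the ``Moreover...'' in Proposition~\ref{prop:our-main-compl}) wraps everything in a single sequence $\bar T$-automaton whose nondeterminism is resolved once per factor; within one factor $v_j$ the specification $\tau$ is verified for all runs over $v_j$, and ``all runs satisfy $\tau_1\vee\tau_2$'' does \emph{not} follow from ``all runs satisfy $\tau_1$ or all satisfy $\tau_2$''. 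So for $\vee$ one must instead observe that different runs over the same $v_j$ may land in different disjuncts, meaning $\Aa_\tau$ cannot simply pick one disjunct; rather, as in the single-event case, it must check a condition of the shape ``every run lies in $L_{\tau_1}\cup L_{\tau_2}$ for the guessed hints'', which is again regular once the hints are fixed. This is precisely why Lemma~\ref{lemma:hinting-generic} was stated with the asymmetric correctness clause (``every $\pi\eqr c\rho$''): it composes under disjunction.

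\medskip\noindent\textbf{The main obstacle.}
I expect the hard part to be exactly this interaction between \emph{disjunction in the specification} and \emph{universal quantification over runs}, together with the bookkeeping of hints across different events. A single transition graph carries one Simon decomposition, but different atomic sub-specifications may want hints of incomparable densities (one needs a hint $\le K$ to detect ``few increments'', another needs $\ge g(K)$ to detect ``many increments'' of a different counter), and a disjunction $\tau_1\vee\tau_2$ over all runs forces these to coexist. The resolution is to let $\Aa_\tau$ guess \emph{one global hint} (equivalently, one clock regime) and observe --- this is the content of the ``according to these remarks, being short is a fair approximation of length'' paragraph in Section~\ref{subsection:verifying_single_constraints} --- that a single hint that is $\acomp g(K)$ is simultaneously $\comp g(K)$-good for completeness of every sub-specification, while for correctness one only ever needs an \emph{upper} bound on density; the two-sided slack is absorbed into the final $f$ and $g$, which are obtained by composing (polynomially, resp. by iterated square roots) the functions coming from the subformulas, the number of compositions bounded by the size of $\tau$ times the maximal Simon level $|S|$ from Theorem~\ref{theorem:simon}. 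Finally, checking that $\Aa_\tau$ is genuinely a $\bar T$-automaton (and hierarchical) rather than a $\bar T\bar T'$-mixed one: all counters introduced in the base case are clocks of type $\bar T$ (they measure hint density, which must be bounded when $\bar T=B$ and strongly unbounded when $\bar T=S$), and the inductive product/union steps preserve this, using Corollary~\ref{cor:counter-type-prefix} to keep the hierarchy well-formed. Composing Lemma~\ref{lemma:complementation-gather} with the regular language $\mathit{Loop}_q$-machinery and Lemma~\ref{lemma:descriptions-simulated-by-specifications} then yields the ``Moreover...'' part of Proposition~\ref{prop:our-main-compl}, completing the proof of Theorem~\ref{theorem:complementation}.
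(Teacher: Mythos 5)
There is a genuine gap in the organization of your induction, and it surfaces exactly at the point you flag yourself. You induct on the boolean structure of $\tau$, handling atomic specifications at the leaves and $\wedge,\vee$ at internal nodes. The $\wedge$ case is fine (universal quantification over runs commutes with conjunction), but the $\vee$ case does not go through: as you correctly observe, ``all runs satisfy $\tau_1\vee\tau_2$'' is not ``all runs satisfy $\tau_1$, or all runs satisfy $\tau_2$''. Your proposed repair --- check that ``every run lies in $L_{\tau_1}\cup L_{\tau_2}$ once the hints are fixed'' --- is circular: the inductive hypothesis hands you two \emph{counter automata} $\Aa_{\tau_1},\Aa_{\tau_2}$ with quantitative acceptance thresholds, not regular languages of hinted runs, so there is no $L_{\tau_i}$ to take a union of unless every quantitative atom inside $\tau_i$ has already been replaced by a regular condition. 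That replacement is precisely what the induction was supposed to produce. The correct organization (and the one the paper uses) is to induct not on the boolean structure but on the number of events $(\alpha,c)$ for which an atom $\val(\rho,\alpha,c)\comp K$ occurs in $\tau$: in one step the automaton guesses a Simon decomposition and a hint $h$, checks with its own hierarchical counters that $h$ is $\comp K$, replaces \emph{every} occurrence of the atom $\val(\rho,1,c)\comp K$ throughout $\tau$ by the regular condition ``$(\rho,h)\in L_c$'', and recurses on the resulting generalized specification $\lift\tau$, which has one fewer counted event but is otherwise the same positive boolean combination. The boolean connectives are never decomposed; they are carried along intact, which is what makes the universal quantification over runs harmless.

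A second, related omission is the correctness argument. The correctness clause of Lemma~\ref{lemma:hinting-generic} only lets you conclude $\val(\rho,1,c)\comp f(K)$ when \emph{every} $\pi\eqr c\rho$ satisfies $(\pi,h)\in L_c$; a given run $\rho$ may fail this and still satisfy $\tau$ through another disjunct. The paper resolves this with a case split that you do not reproduce: if all $\pi\eqr c\rho$ land in $L_c$, the atom may be replaced by \emph{true} (using positivity of the boolean combination); otherwise some $\pi\eqr c\rho$ is outside $L_c$, and since that $\pi$ still satisfies $\lift\tau$, it satisfies $\tau$ with the atom replaced by \emph{false} --- and because no other atomic specification can distinguish $\eqr c$-equivalent runs, the same holds for $\rho$. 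Without this two-case analysis (which is where positivity and the $\eqr c$-invariance are actually consumed), the correctness direction of the lemma is not established. Your base-case and hint-density observations are sound ingredients, but the induction has to be restructured around event elimination for the proof to close.
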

In other words, acceptance by~$\Aa_\tau$ and being captured by the
specification are asymptotically the same thing.  Before establishing
this lemma, we show how it completes the proof. We only do the case of
$T=S$, the other \hhl{case} is done in a similar manner.

We want to verify if a sequence of words~$v_1,v_2,\ldots$ satisfies
the property (*).  Since a~$B$-function is essentially a constant~$K$
and~$\comp$ is~$\le$, this \hhl{boils} down to verifying that there is some
$K$ such that all runs in~$v_1,v_2,\ldots$ satisfy~$\tau$ under~$\le
K$.
We take the automaton~$\Aa_\tau$ from
Lemma~\ref{lemma:complementation-gather} and set the acceptance
condition so that all of its counters are bounded. The automaton
$\Aa_\tau$ works over transition graphs, while property (*) talks
about input words for the complemented automaton~$\Aa$, but this is
not a problem: the automaton~$\Aa_\tau$ can be modified so that it
treats an input letter as the appropriate transition graph, taken from
the transition function of~$\Aa$.  We claim that this is the desired
automaton for property (*). Indeed, if the automaton~$\Aa_\tau$
accepts a sequence~$v_1,v_2 \dots$ then its counters never exceed some
value~$K$. But then by Lemma~\ref{lemma:complementation-gather}, all
runs of~$\Aa$ in all words~$v_j$ satisfy~$\tau$ under~$\le g(K)$.
Conversely, if all runs of~$\Aa$ in all words~$v_j$ satisfy~$\tau$
under~$\le K$, then by Lemma~\ref{lemma:complementation-gather}, the
automaton~$\Aa_\tau$ has an accepting run where the counters never
exceed the value~$f(K)$.

\smallskip

\newcommand{\lift}[1]	{{#1}{\uparrow}}

\begin{proof}[Proof of Lemma~\ref{lemma:complementation-gather}]
Let us fix a generalized specification~$\tau$, for which we want to  construct the automaton $\Aa_\tau$ of Lemma~\ref{lemma:complementation-gather}.
The proof is by induction on the number of pairs $(\alpha,c)$
such that the atomic specification $\val(\rho,1,c) \comp K$ appears in $\tau$.
If there are no such pairs, satisfying $\tau$ under~$K$
does not depend anymore on~$K$ and can be checked by a standard
finite automaton, without counters.
Up to a renumbering of counters, we assume that the minimum counter appearing
in the generalized specification is~$1$, and we set~$c$ to be such that
$\val(\rho,1,c) \comp K$ appears is~$\tau$.
Our objective is to get rid of all occurrences of $\val(\rho,1,c) \comp K$
in $\tau$.

Given as input the~$\states$-transition graph~$G$, the
automaton~$\Aa_\tau$ we are constructing works as follows. To aid
reading, we present~$\Aa_\tau$ as a cascade of nondeterministic
\hhl{automata}, which is implemented using the standard Cartesian
product construction.
\begin{enumerate}
\item \label{step:decomposition} First,~$\Aa_\tau$ guesses a Simon
      decomposition of the transition graph~$G$, as well as a hint $h$
      over this decomposition (the intention is that the hint $h$ is
      $\comp K$, this will be verified in the next step). Note that
      neither the composition nor the hint need \hhl{to} be unique. The
      automaton $\Aa_\tau$ produces the relabeled transition graph
      $(G,h)$, which is used as input of the next steps.
\item \label{step:hint} In this step, the automaton~$\Aa_\tau$  checks
      that the hint  is~$\comp K$.  This step requires
  (hierarchical) counters---the automaton follows the structure of the
  decomposition and uses a counter for each level to verify that the
  number of \hhl{hinted positions is consistent with~$\comp K$.}
\item \label{step:induction} The automaton  $\Aa_\tau$ accepts the graph~$G$
 if the $\states$-transition graph $(G,h)$
 is accepted by $\Aa_{\lift\tau}$ in which $\lift\tau$ is a new generalized specification
 constructed from~$\tau$ as follows:
 \begin{itemize}[leftmargin=7mm]
\item[(a)] The regular languages in the atomic specifications are
      adapted to the new larger transition alphabet (which is $\lab_S$
      instead of $\lab$). In other words, every atomic specification
      of the kind: ``the labeling of the run belongs to a regular
      language $L$'' is replaced by: ``if the additional coordinates
      from $\lab_S$ are removed, the labeling of the run belongs to
      $L$''.
\item[(b)] Every atomic specification $\val(\rho,1,c)\comp K$ is
      replaced by an atomic specification ``the labeling of the run
      belongs to $L_c$'', with $L_c$ the regular language from
      Lemma~\ref{lemma:hinting-generic}. Note that this way we remove
      all atomic specifications that involve $\val(\rho,1,c)$, and
      therefore the induction assumption can be applied to the
      generalized specification $\lift\tau$.
 \end{itemize}
\end{enumerate}

\smallskip We now proceed to show that this automaton $\Aa_\tau$
satisfies the statement of Lemma~\ref{lemma:complementation-gather}.
This proof is by an  induction parallel to the induction
used in constructing $\Aa_\tau$.


\medskip

\emph{Completeness.}  Let~$g_1$ be the strongly unbounded function
obtained from the completeness clause in
Lemma~\ref{lemma:hinting-generic}, as applied to the event $(1,c)$
that we are eliminating.  By applying the induction assumption to the
smaller specification $\lift \tau$, but a larger set of transition
labels, we know there is some strongly unbounded function $g_2$ such
that if all runs in a graph~$(G,h)$ satisfy~$\lift\tau$ under~$K$ then
there is a~$(\comp g_2(K))$-accepting run of~$\Aa_{\lift \tau}$
over~$(G,h)$.

Let $g$ be the coordinate-wise maximum (with respect to $\comp$) of the
functions $g_1,g_2$; this function is clearly strongly unbounded. We
claim that the completeness clause of
Lemma~\ref{lemma:complementation-gather} holds for the function $g$.
Indeed, let $G$ be a graph where all runs satisfy~$\tau$ under~$K$. We
need to show \hhl{that there is  a} run of~$\Aa_\tau$ that is ($\comp g(K)$)-accepting.
The hint $h$ guessed \oldhl{in} step (2) of the construction is \oldhl{chosen so that
  every two hinted positions are separated by exactly $g_1(K)$
non-hinted positions at the same level}.  Since $g$ is greater than
$g_1$ under $\acomp$, step (2) can be done in a run that is  $(\comp
g(K))$-accepting.  Thanks to the assumption on $g_1$ and the definition of $\lift
\tau$, if a run $\rho$ \hhl{over} a transition graph $G$ satisfies $\tau$
under $K$, then the run $(\rho,h)$ satisfies the
specification~$\rho_{\lift \tau}$ under $K$.  In particular, by \oldhl{the}
assumption that every run $\rho$ \hhl{over} $G$ satisfies $\tau$ under $K$, we
can use the completeness for $\Aa_{\lift \tau}$ to infer that
$\Aa_{\lift \tau}$ has a run over $(G,h)$ that is $(\comp
g_2(K))$-accepting, which gives \oldhl{an accepting} run of
$\Aa_\tau$.

\medskip

\emph{Correctness.}
The correctness proof essentially follows the same scheme,
but requires \hhl{more} care, because of the stronger assumptions in  the correctness clause of
Lemma~\ref{lemma:hinting-generic}.
Let~$G$ be an~$\states$-transition graph and let~$K$ be a natural
number such that there is a~$(\comp K)$-accepting run of~$\Aa_\tau$
over~$G$. This means that the hint~$h$ from step~(2)
is~$\comp K$, and that $(G,h)$ is~$(\comp K)$-accepted by~$\Aa_{\lift\tau}$.

By induction hypothesis, every run in~$(G,h)$
satisfies~$\lift\tau$ under $f(K)$, for some strongly unbounded function~$f$.
In particular, for every run~$\rho$ \hhl{over}~$G$, $(\rho,h)$
satisfies~$\lift\tau$ under $f(K)$.

Let~$\rho$ be a run \hhl{over}~$G$. We will prove that~$\rho$
satisfies~$\tau$ under $f(K)$.
Two cases can happen.
\begin{itemize}
\item For all~$\pi\eqr c\rho$, the run $(\pi,h)$ belongs to $L_c$.
      Then by Lemma~\ref{lemma:hinting-generic}, we get $\val(\rho,1,c)\comp
      f(K)$.  Since the boolean combination in $\tau$ is positive,
      this means that $\rho$ satisfies $\tau$ if it satisfies the
      specification $\tau'$  obtained from $\tau$ by replacing each
      \oldhl{occurrence} of~$\val(\rho,1,c)\comp f(K)$ with {\it true}.
      However, by our assumption\oldhl{,} the run $(\rho,h)$ satisfies the
      specification $\lift \tau$ under $f(K)$, so also $\rho$
      satisfies $\tau'$ under $f(K)$.
\item Otherwise, for some~$\pi\eqr c\rho$, the run $(\pi,h)$ does not
      belong to $L_c$.  Since all runs $(\rho,h)$ in $(G,h)$
      satisfy~$\lift\tau$ under~$f(K)$, this run $\pi$ must satisfy
      the generalized specification $\tau'$ obtained from~$\tau$ by replacing
      every occurrence of~$\val(\rho,1,c)\comp K$ with {\it false}. But
      a property of the $\eqr c$-equivalence is that no atomic
      specification different from~$\val(\rho,1,c)\comp K$ can see the
      difference between two $\eqr c$-equivalent runs.  In particular,
      $\rho$ also satisfies~$\tau'$ under~$f(K)$. By consequence,
      $\rho$ satisfies~$\tau$ under~$f(K)$.
\qedhere
\end{itemize}
\end{proof}


\section{Future work}
\label{section:conclusions}
We conclude the paper with some open questions.

The first \oldhl{set of } questions concerns \oldhl{our proofs}.  In
our proof of Theorem~\ref{theorem:equivalence}, the translation from
non-hierarchical automata to hierarchical ones is very costly, in
particular it uses $\omega BS$-regular expressions as an intermediate
step. Is \oldhl{there} a better and more direct construction?
Second, our complementation proof is very complicated.  In particular,
our construction is non-elementary (it is elementary if the number of
counters is fixed).  \oldhl{It seems that a  more efficient construction
is possible since the, admittedly simpler,  but certainly
related, limitedness problem
for nested distance desert
automata is in  PSPACE}~\cite{kirsten05}.

The second set of questions concerns the model presented in this
paper. We provide here a raw list of such questions.  Are the
$\ombs$-automata (resp. $\omega B$, resp. $\omega S$) equivalent to
their deterministic form?  (We expect a negative answer.)  Is there a
natural form of deterministic automata capturing $\omega
BS$-regularity (in the same way deterministic parity automata describe
all $\omega$-regular languages)?  Are $\omega BS$-automata equivalent
to their alternating form? (We expect a positive answer, at least for
the class of $\omega B$ and $\omega S$-automata.)  Does the number of
counters induce a strict hierarchy of languages? (We expect a positive
answer.)  Similarly, does the nesting of $B$-exponents
(resp. $S$-exponents) induce a strict hierarchy of languages? (We
conjecture a positive answer.)  Is there an algebraic model for
$\omega BS$-regular languages (resp. $\omega B$, $\omega S$-regular
languages), maybe extending $\omega$-semigroups?  Other questions
concern decidability.  Is it decidable if an $\omega BS$-regular
language is $\omega$-regular (resp. $\omega B$-regular, resp. $\omega
S$-regular)? (We think that, at least, it is possible, given
an~$\omega B$ or $\omega S$-regular language, to decide whether it is
$\omega$-regular or not.) Are the hierarchies concerning the number of
counters, and the number of nesting of exponents decidable?

Other paths of research concern the possible
extensions of the model.
As we have defined them, $\omega BS$-regular languages are not closed under complementation.
Can we find a larger class that is?
What are the appropriate automata?
Such an extension would lead to the decidability
of the full logic MSOLB.
Last, but not least, is it possible to extend our results to trees?




\end{document}